\newcommand{\maxh}[1]{}
\newcommand{\snote}[1]{}
\newcommand{\mnote}[1]{}
\newtheorem{theorem}{Theorem}[section]
\newtheorem{corollary}[theorem]{Corollary}
\newtheorem{proposition}[theorem]{Proposition}
\newtheorem{lemma}[theorem]{Lemma}
\newtheorem{definition}[theorem]{Definition}
\newtheorem{claim}[theorem]{Claim}
\newcommand{\R}{\mathbb{R}}
\newcommand{\id}[1]{\mathbbm{1}_{#1}}
\newcommand{\eps}{\varepsilon}
\newcommand{\val}{\text{val}}
\newcommand{\A}{\mathcal{A}}
\DeclareMathOperator{\poly}{poly}
\DeclareMathOperator*{\pE}{\widetilde{\mathbb{E}}}
\DeclareMathOperator*{\e}{\mathbb{E}}
\newcommand{\E}[2]{\mathbb{E}_{{#1}}\left[{#2}\right]}
\newcommand\uinner[2]{\langle #1, #2 \rangle}
\newcommand{\ind}{\mathbbm{1}}
\newcommand\norm[1]{\left\lVert#1\right\rVert}
\newcommand{\pe}[1]{\widetilde{\mathbb{E}}\left[ {#1} \right]}
\newcommand{\tO}{\widetilde{O}}
\title{High Dimensional Expanders: Eigenstripping, Pseudorandomness, and Unique Games}
\author{
   Mitali Bafna\thanks{Department of Computer Science, Harvard University, MA 02138. Email: \texttt{mitalibafna@g.harvard.edu}. Supported in part by the Simons Investigator Award to Madhu Sudan.}
   \and
   Max Hopkins\thanks{Department of Computer Science and Engineering, UCSD, CA 92092. Email: \texttt{nmhopkin@eng.ucsd.edu}. Supported by NSF Award DGE-1650112.}
   \and
     Tali Kaufman\thanks{Department of Computer Science, Bar-Ilan University. Email: \texttt{kaufmant@mit.edu}. Supported by ERC and BSF.}
     \and
     Shachar Lovett\thanks{Department of Computer Science and Engineering, UCSD, CA 92092. Email: \texttt{slovett@cs.ucsd.edu}. Supported by NSF Award CCF-1953928.}
}
\begin{document}
\maketitle

\begin{abstract}
    Higher order random walks (HD-walks) on high dimensional expanders (HDX) have seen an incredible amount of study and application since their introduction by Kaufman and Mass (ITCS 2016), yet their broader combinatorial and spectral properties remain poorly understood. We develop a combinatorial characterization of the spectral structure of HD-walks on two-sided local-spectral expanders (Dinur and Kaufman FOCS 2017), which offer a broad generalization of the well-studied Johnson and Grassmann graphs. Our characterization, which shows that the spectra of HD-walks lie tightly concentrated in a few combinatorially structured strips, leads to novel structural theorems such as a tight $\ell_2$-characterization of edge-expansion, as well as to a new understanding of local-to-global graph algorithms on HDX.
    
    Towards the latter, we introduce a novel spectral complexity measure called \textit{Stripped Threshold Rank}, and show how it can replace the (much larger) threshold rank as a parameter controlling the performance of algorithms on structured objects. Combined with a sum-of-squares proof for the former $\ell_2$-characterization, we give a concrete application of this framework to algorithms for unique games on HD-walks, where in many cases we improve the state of the art (Barak, Raghavendra, and Steurer FOCS 2011, and Arora, Barak, and Steurer JACM 2015) from nearly-exponential to polynomial time (e.g.\ for sparsifications of Johnson graphs or of slices of the $q$-ary hypercube). Our characterization of expansion also holds an interesting connection to hardness of approximation, where an $\ell_\infty$-variant for the Grassmann graphs was recently used to resolve the 2-2 Games Conjecture (Khot, Minzer, and Safra FOCS 2018). We give a reduction from a related $\ell_\infty$-variant to our $\ell_2$-characterization, but it loses factors in the regime of interest for hardness where the gap between $\ell_2$ and $\ell_\infty$ structure is large. Nevertheless, our results open the door for further work on the use of HDX in hardness of approximation and their general relation to unique games.
    
\end{abstract}

\newpage

\tableofcontents

\newpage

\section{Introduction}\label{sec:intro}
Since their introduction by Kaufman and Mass \cite{kaufman2016high} in 2016, higher order random walks (HD-walks) on high dimensional expanders (HDX) have seen an explosion of research and application throughout theoretical computer science, perhaps most famously in approximate sampling \cite{anari2019log,alev2020improved,anari2020spectral,chen2020rapid,chen2021optimal,chen2021rapid,feng2021rapid,jain2021spectral,liu2021coupling,blanca2021mixing}, but also in CSP-approximation \cite{alev2019approximating}, error correction \cite{dinur2019list,dikstein2020locally,jeronimo2020unique,jeronimo2021near}, and agreement testing \cite{dinur2017high,dikstein2019agreement,kaufman2020local}. These breakthroughs, while evidence of the importance of HD-walks, tend to have a fairly narrow focus in their analysis of the walks themselves---most rely only on proving (often one-sided) bounds on spectral expansion. On the other hand, there are many reasons to study combinatorial and spectral structure beyond the second eigenvalue. Such results are often useful, for instance, when designing graph algorithms (e.g.\ techniques relying on threshold rank \cite{barak2011rounding,guruswami2011lasserre,gharan2013new}), and have even shown up in recent breakthroughs in hardness of approximation \cite{khot2017independent,dinur2018towards,dinur2018non,barak2018small,khot2018small,subhash2018pseudorandom}, where such an analysis of the Johnson and Grassmann graphs (themselves HD-walks) proved crucial for the resolution of the 2-2 Games Conjecture. Despite so many interesting connections and the recent flurry of work on HDX, these structures remain relatively unexplored and poorly understood.

Building on work of \cite{kaufman2020high,dikstein2018boolean}, we make progress on this problem, building a combinatorial characterization of the spectral structure of HD-walks on \textit{two-sided local-spectral expanders}, a variant of HDX introduced by Dinur and Kaufman \cite{dinur2017high} to generalize the Johnson and Grassmann schemes\footnote{The Johnson Scheme consists of matrices indexed by $k$-sets of $[n]$ which depend only on intersection size. HD-walks on the complete complex are exactly the non-negative elements of the Johnson Scheme. The Grassmann scheme is an analogous object over vector spaces.} (and their corresponding agreement tests). We show how our characterization leads to a new understanding of local-to-global algorithms on HDX, including the introduction of a novel spectral parameter generalizing threshold rank, and give a concrete application to efficient algorithms for Unique Games on HD-walks. While the structural theorems we develop lie in a different regime than the one needed to recover hardness results like the proof of 2-2 Games, we open the door for future work connecting HDX and hardness of approximation.
\subsection{Contributions}
We start with an informal overview of our contributions, broken down into three main sections. We give a more thorough exposition of these results in \Cref{sec:results} along with some relevant background, and overview their proofs in \Cref{sec:proof-overview}. The remainder of the paper is devoted to background, proof details, and further discussion.

\paragraph{Eigenstripping and ST-Rank:} We develop the interplay of spectral and combinatorial structure on HD-walks, and introduce a spectral parameter called \textit{Stripped Threshold Rank} (ST-Rank) that generalizes threshold rank and controls the performance of local algorithms on such graphs. In more detail, we prove that the spectrum of any HD-walk can be decomposed into small, disjoint intervals we call ``eigenstrips,'' where the corresponding eigenvectors in each strip share the same combinatorial structure.
\begin{theorem}[Eigenstripping (informal)]
Let $M$ be an HD-walk on $k$-sets of an $\gamma$-HDX. Then the spectrum of $M$ lies in $k+1$ eigenstrips of width $O_{k,M}(\gamma^{1/2})$,\footnote{This was recently improved to $O_k(\gamma)$ by Zhang \cite{Zhang2020}.} each corresponding combinatorially to a level in the underlying HDX.\footnote{There are $k+1$ rather than $k$ strips since $0$ is a dimension of the complex (corresponding to the empty set).}
\end{theorem}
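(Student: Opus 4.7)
The plan is to build on the harmonic decomposition of $C^k$ (functions on top-level faces) into approximate spherical-harmonic subspaces $\tilde V_0 \oplus \cdots \oplus \tilde V_k$ and to show that every HD-walk $M$, being a fixed polynomial in up/down operators, acts as a scalar plus $O_{k,M}(\sqrt{\gamma})$ noise on each piece. This should mirror the exact diagonalization of the Johnson scheme (the complete-complex case), using the two-sided local-spectral hypothesis only to control commutator errors.

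Concretely, let $U_i, D_{i+1}$ denote the averaging operators between levels $C^i$ and $C^{i+1}$. The first step, following Kaufman--Oppenheim and Dikstein--Dinur--Filmus--Harsha, is to establish the approximate commutation relation
\begin{equation*}
D_{i+1}U_i = \alpha_i I + \beta_i U_{i-1} D_i + E_i,
\end{equation*}
with $\alpha_i,\beta_i$ the exact constants from the complete complex and $\|E_i\|_{\mathrm{op}} = O(\gamma)$; this is the only place the HDX hypothesis enters. Second, define approximate harmonics $\tilde V_i = U_{k-1}\cdots U_i(\text{approximate kernel of } D_i)$ and verify, using the commutator bound, that they are mutually near-orthogonal and together span $C^k$. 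Third, since $M$ is a fixed polynomial in the $U_j, D_j$, push it through the commutation relation to write $M|_{\tilde V_i} = \lambda_i(M)\cdot I + F_i$, where $\lambda_i(M)$ is the scalar value $M$ takes on level $i$ of the complete complex and $\|F_i\|_{\mathrm{op}} = O_{k,M}(\sqrt{\gamma})$. Finally, a Weyl-type perturbation argument places the spectrum of $M$ into $k+1$ intervals of width $O_{k,M}(\sqrt{\gamma})$ centered at the $\lambda_i(M)$, giving exactly the claimed eigenstrips.

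The main obstacle is the error control in the third step. HD-walks are naturally degree-$2k$ polynomials in the $U_j, D_j$, and each application of the commutator relation introduces an $O(\gamma)$ error, so a naive bound blows up exponentially in $k$. The cleanest way to avoid this is to reason symbolically in the algebra generated by $U$ and $D$ modulo the exact relations of the complete complex, reducing $M$ to a normal form with constantly many terms (depending only on $k$ and $M$) before substituting numerical bounds. The power $\sqrt{\gamma}$ rather than $\gamma$ in the final strip width is expected: the $E_i$'s have operator norm $O(\gamma)$, but the deviation of $M$'s \emph{eigenvectors} from lying cleanly in a single $\tilde V_i$ is governed by a Davis--Kahan/$\sin\Theta$-type bound which takes a square root of the commutator error. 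A secondary subtlety is verifying that the approximate harmonic spaces span $C^k$, which requires showing that no significant mass is lost to the error terms at any level---an induction on $i$ driven by the two-sided expansion of the link spectra.
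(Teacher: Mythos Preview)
Your skeleton is exactly the paper's: the decomposition in step~2 is the HD-Level-Set Decomposition of Dikstein--Dinur--Filmus--Harsha, and step~3 (showing each $\tilde V_i$ is an approximate eigenspace for $M$) is the paper's \Cref{prop:hdx-eig-vals}. Two corrections, however, concerning where the square root lives and why.

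First, in step~3 the bound is $\|F_i\|_{\mathrm{op}} = O_{k,M}(\gamma)$, not $O(\sqrt{\gamma})$: pushing $D$'s through $U$'s via the commutator relation incurs only a linear-in-$\gamma$ error, and Davis--Kahan plays no role (there is no reference operator $M_0$ whose eigenspaces are the $\tilde V_i$). Second, step~4 is where the square root actually enters, and it is \emph{not} Weyl: you cannot write $M = M_0 + E$ with self-adjoint $M_0$ diagonalized by the $\tilde V_i$, because the $\tilde V_i$ are only approximately orthogonal. The paper's argument (\Cref{thm:approx-ortho}) instead takes an eigenvector $\phi$ of $M$ with eigenvalue $\mu$, decomposes $\phi = \sum_i \phi_i$ with $\phi_i \in \tilde V_i$, and bounds the weighted average $\sum_i \|\phi_i\|^2 |\mu - \lambda_i|^2$ by expanding $\big\|\sum_i (\mu - \lambda_i)\phi_i\big\|^2 = \big\|\sum_i (M\phi_i - \lambda_i\phi_i)\big\|^2$. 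The diagonal contribution is $O(k\gamma^2)$, but the off-diagonal cross-terms $(\mu-\lambda_i)(\mu-\lambda_j)\langle\phi_i,\phi_j\rangle$ are controlled only through the approximate-orthogonality bound $|\langle\phi_i,\phi_j\rangle| \leq O(\gamma/|\lambda_i-\lambda_j|)\|\phi_i\|\|\phi_j\|$ (itself a consequence of self-adjointness of $M$, \Cref{lemma:approx-orthog}); this cross-term is linear in $\gamma$, so after taking the square root of the averaged squared deviation one gets strip width $O_{k,M}(\sqrt{\gamma})$. Zhang's improvement to $O_k(\gamma)$ is precisely a sharper recursive handling of these cross-terms, not a change elsewhere in the argument.
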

The ST-Rank of an HD-walk (\Cref{def:ST-rank-intro}) then measures the number of spectral strips with eigenvalues above some smallness threshold. This generalizes the concept of standard threshold rank, which measures the \textit{total} number of such eigenvalues, to coarser eigendecompositions. Threshold rank itself is well known to control the performance of many graph algorithms \cite{arora2015subexponential,barak2011rounding,guruswami2011lasserre,gharan2013new,hopkins2020subexponential}. At their core, these techniques often boil down to various methods of enumerating over eigenvectors with large eigenvalues. On structured objects like HD-walks, we argue one should instead enumerate over \textit{strips} of eigenvectors with matching combinatorial structure, and therefore that \textit{ST-rank} is the relevant parameter. Since many important objects share similar combinatorial and spectral characteristics to HD-walks (e.g. noisy hypercube, $q$-ary hypercube...), we expect ST-Rank to have far-reaching applications beyond those studied in this work.

\paragraph{Edge-expansion in HD-walks:} The combinatorial and spectral machinery we develop allows us to characterize a fundamental graph property on HD-walks: edge-expansion. Given a graph $G_M=(V,E)$ (corresponding to a random walk $M$), the edge expansion of a subset $S \subseteq V$ measures the expected probability of leaving $S$ after a single application of $M$. It is hard to overstate the importance of edge-expansion throughout theoretical computer science. While the most widely used characterizations of expansion are for families where all sets expand (expander graphs) or all \textit{small} sets expand (small set expanders), more involved characterizations for objects such as the Johnson and Grassmann graphs have seen recent use for both hardness of approximation \cite{khot2017independent,dinur2018towards,dinur2018non,barak2018small,subhash2018pseudorandom} and algorithms \cite{bafna2020playing}. We prove that HD-walks exhibit similar expansion characteristics to the Johnson graphs in two key aspects: the expansion of \textit{locally-structured} sets called links, and the \textit{structure} of non-expanding sets.

Applying our machinery to the former reveals a close connection between local expansion and ST-rank: 
\begin{theorem}[Local Expansion vs. Eigenstripping (informal)]\label{thm:intro-local-vs-global}
The (non)-expansion of any $i$-link\footnote{Links are grouped by level of the complex. The set of links at each level give increasingly finer decompositions of the HDX into local parts. See \Cref{sec:results} for details.} is almost exactly the eigenvalue corresponding to the $i$th level's eigenstrip.
\end{theorem}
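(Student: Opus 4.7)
The plan is to compute the non-expansion of the $i$-link $S_\tau = \{\sigma \in X(k) : \tau \subseteq \sigma\}$ directly as a return probability of the HD-walk $M$ and identify it with the idealized eigenvalue $\lambda_i$ at the center of the $i$th eigenstrip, with the gap controlled by the strip width $O(\gamma^{1/2})$ from the Eigenstripping Theorem.

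First, I would express the Rayleigh quotient $\langle \mathbf{1}_{S_\tau}, M \mathbf{1}_{S_\tau}\rangle / \|\mathbf{1}_{S_\tau}\|^2$ as the probability that $M$, started at a uniformly random $k$-face in $S_\tau$, returns to $S_\tau$. Writing $M$ as a composition of up and down operators, each up step preserves $\tau$-containment automatically, while each down step deletes a uniformly random element and thus preserves all of $\tau$ with a probability that depends only on the current face size and $i=|\tau|$. In the complete simplicial complex the up operators also sample uniformly, so composing the steps yields a transparent closed-form expression in $k$, $i$, and the step pattern of $M$.

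Second, I would identify this complete-complex return probability with the idealized $\lambda_i$. By construction in the Eigenstripping Theorem, $\lambda_i$ is the eigenvalue of $M$ on the $i$th harmonic subspace $V_i$ in the complete complex. A direct calculation --- for instance, by building an explicit representative eigenfunction in $V_i$ as a signed combination of link indicators via the inclusion-exclusion underlying the harmonic decomposition, applying $M$, and reading off the eigenvalue --- produces the same closed form as the return probability above. Both quantities are counting, in different guises, how often the walk preserves $i$ distinguished elements.

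Third, I would lift from the complete complex to a general two-sided $\gamma$-local-spectral expander. The down operators still delete elements uniformly, so the only deviation from the complete-complex computation comes from the up operators, whose conditional distributions are $O(\gamma^{1/2})$-close to uniform by the two-sided local-spectral property. Compounding the error over the constant number of up/down steps of $M$ gives a total error of $O_{k,M}(\gamma^{1/2})$, which is exactly the eigenstrip width. The main obstacle is the combinatorial-spectral identity in the second step: the Eigenstripping Theorem alone only locates the strips without pinning down their centers, so the match between the return probability and the complete-complex eigenvalue on $V_i$ must be established separately, either via the explicit eigenfunction construction above or by computing both sides as step-by-step products over $M$ and reconciling the two closed-form expressions.
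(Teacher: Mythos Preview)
There is a genuine gap in your third step. The claim that up operators' conditional distributions are ``$O(\gamma^{1/2})$-close to uniform'' is false on general two-sided $\gamma$-local-spectral expanders: local-spectral expansion bounds the second eigenvalue of each link's random walk, not the distance of its stationary distribution from uniform. In bounded-degree constructions (Ramanujan complexes, coset-complex expanders) each face has only $O(1)$ up-neighbors, so the up-step distribution is supported on constantly many points and is nowhere near uniform. What local-spectral expansion does yield is the operator-norm relation $\|D_{i+1}U_i - \tfrac{1}{i+1}I - \tfrac{i}{i+1}U_{i-1}D_i\| \le \gamma$, which is not a pointwise distributional statement and cannot drive a step-by-step return-probability perturbation. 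Your direct computation does work \emph{exactly} (no error, on any complex) for the canonical walks $N_k^j$, since there up steps preserve $\tau$-containment and only the uniform down steps matter; but for any HD-walk that descends below level $k$ and must later re-add elements, the return probability genuinely depends on the complex and your argument does not control it.

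The paper instead shows, via the exact $N_k^1$ computation compared against the HD-Level-Set expansion, that $\mathbbm{1}_{X_\tau}$ lies almost entirely in the single component $V_k^i$. The result for arbitrary $M$ then follows from the approximate-eigenvector bound $\|Mf - \lambda_i(M)f\| \le O_{k,M}(\gamma)\|f\|$ for $f \in V_k^i$, yielding error $O_{k,M}(\gamma)$ --- tighter than the strip width $O(\gamma^{1/2})$ you aim for.
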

This connection is particularly useful in the construction and analysis of spectral `local-to-global' graph algorithms, where it helps tie performance guarantees to ST-rank rather than threshold rank. Surprisingly, to our knowledge, \Cref{thm:intro-local-vs-global} novel even for the Johnson graphs (though in this case it follows easily from known results). In fact, this special case alone already gives a better understanding of recent local-to-global algorithm for unique games \cite{bafna2020playing}.

A corollary of this result, on the other hand, extends a very well known fact on the Johnson graphs: \textit{locally structured sets expand poorly}. This raises a natural question: are \emph{all} (small) non-expanding sets explained by local structure? Before answering, we address a subtle point: what exactly does it mean to be ``explained'' by local structure? There are a number of reasonable ways to formalize this notion, each with its own use. Following \cite{bafna2020playing}, we resolve an \textit{$\ell_2$-variant} of this conjecture stating: 
\begin{theorem}[Characterizing Expansion in HD-walks (informal $\ell_2$-variant)]
Any non-expanding set in an HD-walk has \textbf{high variance} across low-level links.
\end{theorem}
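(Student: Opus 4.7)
My plan is to combine the eigenstripping decomposition with the local-to-global correspondence of \Cref{thm:intro-local-vs-global}. Start by writing $\mathbf{1}_S = \sum_{i=0}^k f_i$, where $f_i$ is the component of $\mathbf{1}_S$ in the $i$-th eigenstrip, so that $\|\mathbf{1}_S\|^2 = \sum_i \|f_i\|^2$. By the eigenstripping theorem, each $f_i$ is (approximately) an eigenvector of $M$ with eigenvalue $\lambda_i^M$ up to a $O_{k,M}(\gamma^{1/2})$ additive error, and the $\lambda_i^M$ decrease with $i$. The non-expansion hypothesis, written spectrally as $\langle \mathbf{1}_S, M\mathbf{1}_S\rangle \gtrsim (\mu(S) + \Omega(1))\|\mathbf{1}_S\|^2$, therefore forces
\[
\sum_{i=1}^{k} \lambda_i^M \|f_i\|^2 \;\gtrsim\; \Omega(1)\cdot \|\mathbf{1}_S\|^2 \;-\; (\text{eigenstripping error}),
\]
after subtracting the trivial top-strip ($i=0$) contribution $\mu(S)\|\mathbf{1}_S\|^2$. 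Since the $\lambda_i^M$ shrink geometrically with $i$, the surviving mass must concentrate on some small level $i^\star$, i.e.\ on a low-level strip.

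The second step is to identify $\ell_2$-mass on low strips with variance across low-level links. For each level $i$, let $D_i$ be the averaging map sending a function on top-level faces to its averages over $i$-links, and $U_i$ its (weighted) adjoint. The natural object $U_i D_i \mathbf{1}_S$ is the link-averaged indicator lifted back to the top, and its squared norm (minus $\mu(S)^2$) is exactly the variance of the link-density of $S$ at level $i$. The key structural input is that the image of $U_i D_i$ is approximately the span of the bottom $i+1$ eigenstrips; equivalently, each strip $f_j$ with $j \leq i$ is an approximate eigenvector of $U_i D_i$ with eigenvalue close to $1$, while strips with $j > i$ are nearly in its kernel. Combining this with the previous step gives
\[
\mathrm{Var}_{\tau \in X(i)}\!\bigl[\,\Ebb[\mathbf{1}_S \mid X_\tau]\,\bigr] \;\gtrsim\; \sum_{j=1}^{i} \|f_j\|^2,
\]
so that the non-expansion mass, which lives on low strips, forces high variance across low-level links.

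\paragraph{Main Obstacle.} The hardest step will be quantifying the ``approximately the same'' in the identification between the bottom eigenstrips and the image of $U_i D_i$. The eigenstrips are only approximate eigenspaces of these down-up operators, with error controlled by the HDX parameter $\gamma$, and these errors must be propagated through two consecutive spectral conversions — first from non-expansion to low-strip mass, then from low-strip mass to link variance — without collapsing the final bound. A secondary subtlety is calibrating the level threshold $i^\star$: it must be chosen small enough to be genuinely ``low-level'' (so that the conclusion is meaningful relative to ST-rank) yet large enough that $\lambda_{i^\star+1}^M$ is below the non-expansion threshold. I expect both difficulties to be addressable by using the structural fact that the strips are invariant under the full family of $U_i D_i$ walks simultaneously, which lets one decouple the two error sources and optimize $i^\star$ against the ST-rank of $M$.
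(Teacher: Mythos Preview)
Your approach is essentially the paper's: decompose $\mathbf{1}_S$ along the HD-Level-Set strips, convert non-expansion into mass on low strips via the approximate eigenvalues $\lambda_i(M)$, and then read off that low-strip mass as link variance by analyzing the walk $U^k_i D^k_i$. The calibration of the threshold level against ST-rank is also exactly what the paper does.

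There is, however, one concrete error in your second step. You assert that each strip $f_j$ with $j \leq i$ is an approximate eigenvector of $U_i D_i$ with eigenvalue \emph{close to $1$}. This is false: the approximate eigenvalue of $U^k_i D^k_i$ on the $j$-th strip is $\binom{i}{j}/\binom{k}{j}$, which is $1$ only at $j=0$ and drops to $1/\binom{k}{i}$ at $j=i$. Consequently the inequality you write,
\[
\mathrm{Var}_{\tau \in X(i)}\bigl[\mathbb{E}[\mathbf{1}_S \mid X_\tau]\bigr] \;\gtrsim\; \sum_{j=1}^{i} \|f_j\|^2,
\]
is too strong; the correct relation is $\mathrm{Var}(D^k_i \mathbf{1}_S) \approx \sum_{j=1}^{i} \tfrac{\binom{i}{j}}{\binom{k}{j}}\langle \mathbf{1}_S, f_j\rangle$, so that bounding any single projection costs you a factor of $\binom{k}{i}$. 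This $\binom{k}{i}$ dependence is not an artifact --- the paper shows it is tight (an $i$-link is itself roughly $\binom{k}{i}^{-1}$-$\ell_2$-pseudorandom while being non-expanding), and indeed a $k$-independent bound in the $\ell_2$-regime is provably impossible. Once you replace the ``close to $1$'' claim with the actual eigenvalues, the rest of your plan goes through and matches the paper's argument.

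A minor secondary point: you invoke geometric decay of the $\lambda_i^M$, but only monotonic decay (which the paper proves for complete walks) plus the threshold $\lambda_{r+1}(M) \leq \delta$ is needed; no concentration at a single level $i^\star$ is required.
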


Similar results used in hardness of approximation \cite{khot2017independent,dinur2018towards,dinur2018non,barak2018small,khot2018small,subhash2018pseudorandom}, on the other hand, rely on an $\ell_\infty$-variant that replaces variance with \textit{maximum}. While our bound is exactly tight in the former regime, it (necessarily) loses a factor in the latter in cases where there is a significant gap between $\ell_2$ and $\ell_\infty$ structure. Proving a tight bound directly on the $\ell_\infty$-variant for HD-walks remains an important open question given their close connections to the structures used in hardness of approximation.

\paragraph{Playing Unique Games:} Unique Games are a well-studied class of 2-CSP that underlie a central open question in computational complexity and algorithms called the \textit{Unique Games Conjecture} (UGC). The UGC stipulates that distinguishing between almost satisfiable (value $\geq 1-\eps$) and highly unsatisfiable (value $\leq \eps$) instances of Unique Games is NP-hard. It implies optimal hardness of approximation for a host of optimization problems such as Max-Cut~\cite{khot2007optimal}, Vertex Cover~\cite{khot2008vertex}, and in fact all CSPs~\cite{raghavendra2008optimal}, but is still not known to be true or false. In a recent breakthrough, Khot, Minzer, and Safra \cite{subhash2018pseudorandom} (following a long line of work \cite{khot2017independent,dinur2018towards,dinur2018non,barak2018small,khot2018small}) used an $\ell_\infty$ characterization of expansion on the Grassmann graphs to prove a weaker variant known as the 2-2 Games Conjecture: it is NP-hard to distinguish $(\frac{1}{2} - \varepsilon)$-satisfiable instances from $\epsilon$-satisfiable instances of unique games. On the other hand, in contrast to hard CSPs such as 3SAT, a long line of research establishes sub-exponential time approximation algorithms for unique games~\cite{arora2015subexponential} and furthermore shows that approximating unique games is easy on various restricted classes of graphs such as expanders~\cite{makarychev2010play}, perturbed random graphs~\cite{KMM11}, certifiable small-set expanders and Johnson graphs~\cite{bafna2020playing}. 

While our $\ell_2$-characterization of expansion may not be as useful for hardness of approximation as the $\ell_\infty$-variant, it gets its chance to shine in the latter context: \textit{algorithms}. For most classes of constraint graphs, the best known algorithms for unique games depend on the threshold rank \cite{arora2015subexponential,barak2011rounding,guruswami2011lasserre} of the graph. Using a sum-of-squares variant of our $\ell_2$-characterization and the connection between local expansion and stripped eigenvalues, we build a local-to-global algorithm for unique games on HD-walks (based off the recent paradigm of \cite{bafna2020playing}) that depends instead on ST-rank.
\begin{theorem}[Playing Unique Games on HD-walks (informal)]\label{intro-intro:UG}
For any $\varepsilon \in (0,.01)$, there exists an algorithm $\mathcal{A}$ with the following guarantee. If $\mathcal{I}$ is a $(1-\varepsilon)$-satisfiable instance of unique games whose constraint graph is an HD-walk $M$\footnote{Note that self-adjoint random walks can be equivalently viewed as undirected weighted graphs, and therefore also as underlying constraint graphs for unique games.} on $k$-sets of an HDX, then:
\begin{enumerate}
    \item $\mathcal{A}$ outputs a $\text{poly}(k^{-r},\varepsilon)$-satisfying solution.
    \item $\mathcal{A}$ runs in time at most $N^{\tilde{O}\left({k \choose r}\varepsilon^{-1}\right)}$
\end{enumerate}
where $r = R_{1 - O(\eps)}(M)$ is the ST-rank of $M$ above threshold $1 - O(\eps)$, and there are $N$ total $k$-sets.
\end{theorem}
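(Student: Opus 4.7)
I would follow the local-to-global SoS paradigm of \cite{bafna2020playing} for Johnson-type graphs, extending it to arbitrary HD-walks via our eigenstripping theorem and its SoS $\ell_2$-characterization of expansion. First solve the degree-$D$ Lasserre relaxation of the unique games instance for $D = \tilde O(\binom{k}{r}\varepsilon^{-1})$; since the instance is $(1-\varepsilon)$-satisfiable, the resulting pseudo-distribution $\pE$ has objective value at least $1-\varepsilon$, and the relaxation is computable in time $N^D$. The bulk of the proof is then the design and analysis of the rounding.

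\textbf{Rounding via link conditioning.} The rounding iteratively conditions $\pE$ on the pseudo-labels of a chosen vertex (or low-dimensional face of the underlying HDX), each time re-checking a potential measuring pseudo-variance of label-marginals across low-level links. By \Cref{thm:intro-local-vs-global}, small link-variance at levels $\leq r$ is equivalent to the top $r$ eigenstrips no longer supporting any residual non-expansion visible to the pseudo-distribution; once that potential is small enough, a standard independent local rounding from the conditioned marginals satisfies a $\text{poly}(k^{-r},\varepsilon)$ fraction of constraints, analogously to \cite{bafna2020playing}. When the potential is not yet small, the SoS version of our $\ell_2$-characterization of expansion exhibits a face $\tau$ such that conditioning on $\tau$'s pseudo-labels provably decreases the potential by a noticeable amount. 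Because only $r$ eigenstrips lie above the threshold $1-O(\varepsilon)$ (this is precisely ST-rank), a potential/counting argument bounds the required conditioning depth by $\tilde O(\binom{k}{r}\varepsilon^{-1})$, which matches the SoS degree we have budget for and explains why only ST-rank (not full threshold rank) enters the exponent.

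\textbf{Main obstacle.} The hardest step is formulating and proving the SoS version of the $\ell_2$-characterization of expansion advertised in the abstract: a degree-$\tilde O(\binom{k}{r}\varepsilon^{-1})$ SoS identity certifying that any pseudo-indicator whose $M$-expansion is small must carry non-trivial pseudo-variance across one of the first $r$ levels of links. This amounts to lifting our eigenstrip variance decomposition to a low-degree polynomial identity with sum-of-squares witnesses, then composing it with the quantitative eigenstripping bounds; the degree budget forces every step of the decomposition to be implementable by polynomials of degree at most a constant times $\binom{k}{r}/\varepsilon$. A secondary but nontrivial step is the potential argument confirming that conditioning only over faces of dimension $\leq r$ is enough to kill all residual non-expansion, which leverages the fact that each relevant eigenstrip is combinatorially tied to a single level of the complex via \Cref{thm:intro-local-vs-global}, together with a careful verification that independent rounding on the fully-conditioned pseudo-distribution achieves the stated $\text{poly}(k^{-r},\varepsilon)$ satisfaction guarantee.
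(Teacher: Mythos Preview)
Your high-level framing is right (SoS relaxation plus a local-to-global argument tied to ST-rank, following \cite{bafna2020playing}), but the concrete rounding mechanism you describe is not the one the paper uses, and as written it has a real gap.

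The paper does \emph{not} run a BRS-style ``condition until a global pseudo-variance potential is small, then independently round'' loop. Instead it uses the BBKSS \emph{Iterated Condition\&Round}: at each iteration one proves (via a degree-$2$ SoS structure theorem for HD-walks, \Cref{thm:structure-SoS}, combined with BBKSS's shift-partition potential) that some $r$-link $X_\tau$ has large Condition\&Round value, rounds a good partial assignment \emph{on that link}, and then moves on to a fresh link. The reason the pieces glue together is \Cref{thm:local-vs-global}: because $\lambda_r(M)\ge 1-O(\varepsilon)$, every $r$-link has expansion $O(\varepsilon)$, so few constraints cross between links and the patched global assignment retains $\text{poly}(k^{-r},\varepsilon)$ value. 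The high SoS degree $\tilde O(\binom{k}{r}/\varepsilon)$ comes from the polynomial approximations inside the shift-partition potential, \emph{not} from the $\ell_2$ expansion characterization, which is a degree-$2$ SoS identity.

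Your proposal instead tracks a ``pseudo-variance of label-marginals across low-level links'' potential, conditions on faces to drive it down, and rounds globally once it is small. Two steps here are unjustified. First, you assert that when the potential is large the SoS $\ell_2$-characterization ``exhibits a face $\tau$ such that conditioning on $\tau$'s pseudo-labels provably decreases the potential''; but the $\ell_2$ theorem only says a non-expanding (pseudo-)set has large variance across links --- it gives no mechanism by which conditioning on a face's labels reduces that variance, and in general conditioning on a single vertex's label need not kill an entire eigenstrip. Second, your counting argument that the conditioning depth is $\tilde O(\binom{k}{r}/\varepsilon)$ ``because only $r$ eigenstrips lie above the threshold'' is exactly the crux that separates ST-rank from threshold rank, and you have not supplied it: a naive potential-decrease analysis would see the full threshold rank (which can be $n^{\Omega(k)}$), and the paper avoids this precisely by \emph{never} trying to globally kill eigenstrips but instead by localizing to a single non-expanding link at a time. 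If you want your variant to work you would need a new argument showing that conditioning on an $r$-face collapses pseudo-variance across all of $V_k^0\oplus\cdots\oplus V_k^r$ in one shot; nothing in the paper provides this.
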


When $k = O(1)$, the ST-rank above any threshold $\tau$ is substantially smaller than the threshold rank above $\tau$ for (non-expanding) HD-walks, and \Cref{intro-intro:UG} therefore obtains a strict improvement over previous results based on threshold rank for this family. In many cases, such as sparsifications of (constant-level) Johnson graphs or related objects like slices of the noisy $q$-ary hypercube, our algorithm reduces the best-known runtime from nearly-exponential to polynomial. Besides its independent algorithmic interest, the result carries a number of connections to other open problems both in and outside of the study of unique games. The algorithm sheds some (limited) light,\footnote{The result certainly does not rule out reductions using such structure, but states that one must be careful in doing so that the resulting constraint graphs fall outside the parameter regime for our algorithm.} for instance, on requisite structure for candidate hardness reductions that use direct product testing \cite{khot2016candidate} or potential attempts to use agreement tests based on local-spectral expanders \cite{dinur2017high}. It gives some hope for better unique games algorithms on graphs like the hypercube~\cite{AKKT15} as well, which can be viewed as an HD-walk on level $k = O(\log N)$ on a weaker (one-sided) notion of high dimensional expansion than we study. Outside of unique games the result has some further connections to error correcting codes, where approximation algorithms for general CSPs on HDX underlie recent breakthroughs in efficient decoding for locally testable codes \cite{jeronimo2020unique,jeronimo2021near}. In fact, algorithms specifically for unique games over expanders have already seen similar use for efficient decoding of direct-product codes based on HDX structure \cite{dinur2021list}.



\section{Our Results}\label{sec:results}
We now move to a more in-depth exposition of our main results. First, however, we cover some background regarding local-spectral expanders and higher order random walks. For a full treatment of these and related objects, see \Cref{sec:prelims}. Local-spectral expansion is a robust notion of local connectivity on \textbf{pure simplicial complexes} introduced by Dinur and Kaufman \cite{dinur2017high}. A $d$-dimensional pure simplicial complex is the downward closure of a $d$-uniform hypergraph, that is a collection:
\[
X = X(0) \cup \ldots \cup X(d)
\]
where $X(d) \subseteq {[n] \choose d}$ is a $d$-uniform hypergraph, and $X(i)$ consists of all $\tau \in {[n] \choose i}$ such that $\tau \subset T$ for some $T \in X(d)$ (and $X(0)=\{\emptyset\}$). We note that this notation is off-by-one from much of the HDX literature, where $X(i)$ is instead given by sets of size $i+1$, a notation rooted in the topological view of simplicial complexes. The former definition is more common in combinatorial works (e.g. the sampling literature \cite{anari2020spectral}), and more natural for our purposes.

Simplicial complexes come equipped with natural local structure called \textbf{links}. For every $i$ and $\tau \in X(i)$, the link of $\tau$ is the restriction of the complex to faces containing $\tau$, that is:
\[
X_\tau = \{ \sigma~:~ \sigma \cap \tau = \emptyset, \ \sigma \cup \tau \in X\}.
\]
We call the link of an i-face an $i$-link, and when clear from context, will also use $X_\tau$ to denote the set of faces at a given level (e.g.\ in $X(d)$) containing $\tau$. Following Dinur and Kaufman \cite{dinur2017high}, we say a complex is a \textbf{two-sided $\gamma$-local-spectral expander} if the graph underlying every link is a $\gamma$-spectral expander (that is all non-trivial eigenvalues are smaller than $\gamma$ in absolute value). Finally, it is worth noting that we actually study the more general set of \textbf{weighted pure simplicial complexes} (see \Cref{sec:prelims}), but for simplicity restrict to the unweighted case for the moment as it requires essentially no modification.

Higher order random walks \cite{kaufman2016high} are analogues of the walk associated with standard graphs (given by the normalized adjacency matrix) that moves from vertex to vertex via an edge. In a higher order random walk, one applies a similar process at any level of the complex---moving for instance between two edges via a triangle, or two triangles via a pyramid. We call walks between $k$-faces of a complex \textbf{$k$-dimensional HD-walks} (see \Cref{def:intro-HD-walk} for formal definition), and study a broad set of walks that capture important structures such as sparsifications of the Johnson and Grassmann schemes.\footnote{We note that the Grassmann scheme comes from applying HD-walks to the Grassmann poset, which is not a simplicial complex.} For simplicity, throughout the introduction we will often focus on two natural classes of HD-walks which see the most use in the literature: the \textbf{canonical walks $N_k^i$} which walk between $k$-faces via a neighboring $(k+i)$-face, and the \textbf{partial-swap walks $S_k^{i}$} which do the same but only move between $k$-faces of fixed intersection size $k-i$. While this latter class may seem less natural at first, notice that on the complete complex they are exactly the well-studied Johnson graph $J(n,k,k-i)$ (the graph on ${[n] \choose k}$ where $(v,w) \in E$ iff $|v \cap w| = k-i$).

\subsection{Eigenstripping and ST-Rank}
With background out of the way, we start our results in earnest with a more in-depth discussion of eigenstripping and ST-rank: the spectral and combinatorial structure of HD-walks.
We prove that the spectra of $k$-dimensional HD-walks lie in $k+1$ tightly concentrated ``eigenstrips'', where the $i$th strip corresponds combinatorially to functions lifted from the $i$th level of the complex by averaging.
\begin{theorem}[Spectrum of HD-Walks (Informal \Cref{thm:approx-ortho} + \Cref{prop:hdx-eig-vals} + \Cref{prop:eig-decrease})]\label{thm:new-intro-hdx-spectra}
Let $M$ be an HD-walk on the $k$th level of a two-sided $\gamma$-local-spectral expander. Then the spectra of $M$ is highly concentrated in $k+1$ strips:
\[
\text{Spec}(M) \in \{1\} \cup \bigcup_{j=1}^k \left[\lambda_i(M) - e, \lambda_i(M) + e \right]
\]
where the $\lambda_i(M)$ are decreasing constants depending only on $M$, and the error term satisfies $e \leq O_{k,M}(\sqrt{\gamma})$. Moreover, the span of eigenvectors with eigenvalues in the strip $\lambda_i(M) \pm e$ are (approximately) functions in $X(i)$ lifted to $X(k)$ by averaging.
\end{theorem}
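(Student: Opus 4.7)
The plan is to establish an approximate eigendecomposition of the space of functions on $X(k)$ into $k+1$ combinatorially meaningful subspaces $V_0, V_1, \ldots, V_k$, and then show that any HD-walk $M$ acts on each $V_i$ as an approximate scalar multiplication by some constant $\lambda_i(M)$. I would begin by introducing the standard up and down operators $U_i : C^i \to C^{i+1}$ and $D_{i+1} : C^{i+1} \to C^i$, where $U_i$ lifts a function on $i$-faces to $(i+1)$-faces by averaging over containing faces, and $D_{i+1}$ is its adjoint with respect to the complex's weighted inner product. Every HD-walk $M$ on $X(k)$ factors, by definition, as a (positive) combination of compositions of these up and down operators (for canonical walks $N_k^i$, simply as $D^i U^i$, and analogous but slightly more elaborate expressions capture the partial-swap walks and other HD-walks).

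Next, I would define the level-$i$ lifted subspace $V_i \subseteq C^k$ as the image of $H_i \subseteq C^i$ under the iterated up operator, where $H_i$ is the orthogonal complement of $\operatorname{Im}(U_{i-1})$ in $C^i$---the space of ``genuinely new'' functions first appearing at level $i$. Following \cite{kaufman2020high,dikstein2018boolean}, the key structural input is that on a two-sided $\gamma$-local-spectral expander these $V_i$ are \emph{approximately} mutually orthogonal and together (approximately) span $C^k$, with the deviation controlled polynomially in $\gamma$. This ``approximate orthogonality'' is proven by showing that $D_{i+1} U_i$ acts as a scalar (depending only on $i$) plus an $O(\gamma)$-small operator error on each $H_j$, a statement reducible to the spectral-expansion hypothesis in each link.

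With the decomposition in hand, I would then show that $M v \approx \lambda_i(M) v$ for every $v \in V_i$, where $\lambda_i(M)$ is the exact eigenvalue that the same formal combination of up/down operators attains on $V_i$ in the idealized complete-complex (Johnson/Grassmann) setting. This follows by expanding $M$ in up/down operators and repeatedly applying the near-scalar action of $DU$ on pure level-$j$ functions, so that the cumulative operator-norm error on $V_i$ is a polynomial in $\gamma$ whose degree depends only on $k$ and on the factorization of $M$. Converting these approximate-eigenvector bounds into honest concentration of $\operatorname{Spec}(M)$ in $k+1$ strips is then a standard argument: any unit eigenvector with eigenvalue $\mu$ outside every strip would, by the approximate block-diagonality of $M$ with respect to the $V_i$, have to have vanishing projection onto every $V_i$, contradicting that the $V_i$ approximately span $C^k$. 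The monotonicity of the $\lambda_i(M)$ in $i$ follows from the explicit combinatorial formula in the complete-complex case.

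The main obstacle is the bookkeeping of error terms: one must control not only the per-level scalar error but also how these errors compound when composing many up and down operators to realize $M$, and then translate the resulting operator-norm errors into additive eigenvalue bounds. The factor of $\sqrt{\gamma}$ (rather than $\gamma$) in the strip width arises precisely at this last step, when passing from $\|Mv - \lambda_i(M)v\|$ to an additive bound on Rayleigh quotients via Cauchy--Schwarz, and is exactly the loss later removed by Zhang \cite{Zhang2020}. A secondary subtlety is that the $V_i$ are not honest invariant subspaces of $M$, so one must be careful to argue the eigenvalue concentration via the approximate-orthogonality statement rather than attempt an exact block-diagonalization.
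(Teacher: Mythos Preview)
Your proposal is essentially the paper's proof: the HD-Level-Set decomposition $C_k = \bigoplus V_k^i$ with $V_k^i = U_i^k \ker(D_i)$, the verification that each $V_k^i$ is an approximate eigenspace of $M$ via repeated use of the relation $D_{i+1}U_i \approx \frac{1}{i+1}I + \frac{i}{i+1}U_{i-1}D_i$, and then a general linear-algebraic eigenstripping lemma (the paper's \Cref{thm:approx-ortho}) turning an approximate eigendecomposition into honest concentration of $\mathrm{Spec}(M)$. Two small points where your sketch is looser than the paper's argument are worth noting.

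First, your eigenstripping step (``an eigenvector outside every strip would have vanishing projection onto every $V_i$'') is not quite the mechanism: the $V_k^i$ span $C_k$ \emph{exactly}, so the contradiction cannot come from spanning. The paper instead decomposes an eigenvector $\phi = \sum_i \phi_i$ and bounds the weighted average $\sum_i |\mu-\lambda_i|^2 \|\phi_i\|^2 / \sum_j \|\phi_j\|^2$ by splitting $\|\sum_i (\mu-\lambda_i)\phi_i\|^2$ into a main term (controlled by the approximate-eigenvector bound) and cross terms (controlled by approximate orthogonality, \Cref{lemma:approx-orthog}); averaging then forces some $|\mu-\lambda_i|$ to be small. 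The $\sqrt{\gamma}$ loss enters through the cross-term estimate, not through a Rayleigh-quotient step.

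Second, your monotonicity claim (``follows from the explicit combinatorial formula in the complete-complex case'') works directly for \emph{pure} walks, where each factor in the product formula of \Cref{prop:hdx-pure-eig-vals} is visibly non-increasing in $\ell$. But an HD-walk is an \emph{affine} combination of pure walks with possibly negative coefficients, so monotonicity does not follow termwise. The paper's argument (\Cref{prop:eig-decrease}) transfers $M$ to the complete complex $J(n,d)$, observes that by symmetry any valid walk there is a \emph{convex} combination of partial-swap walks $S_k^i$, and uses that the $S_k^i$ have monotone spectra; since the approximate eigenvalues are complex-independent, this yields monotonicity for $M$ on the original complex.
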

Recently, Zhang \cite{Zhang2020} provided a quantitative improvement of \Cref{thm:new-intro-hdx-spectra} such that $e \leq O_k(\gamma)$. 
\Cref{thm:new-intro-hdx-spectra} can be seen as a marriage (and generalization) of previous results studying $N_k^1$ of Kaufman and Oppenheim \cite{kaufman2020high}, and Dikstein, Dinur, Filmus, and Harsha (DDFH) \cite{dikstein2018boolean}. The former prove that $N_k^1$ exhibits spectral eigenstripping, while the latter introduce the corresponding combinatorial structure but lack the machinery to tie it directly to these strips. We fill in the gap by proving a general linear algebraic theorem of independent interest. 
\begin{theorem}[Approximate Eigendecompositions Imply Eigenstripping (Informal \Cref{thm:approx-ortho})]\label{thm:intro-approx-ortho}
Let $M$ be a self-adjoint operator over an inner product space $V$, and $V=V^1 \oplus \ldots \oplus V^k$ a decomposition satisfying $\forall 1 \leq i \leq k, f_i \in V^i$:
\[
\norm{Mf_i - \lambda_i f_i} \leq c_i\norm{f_i}
\]
for some family of constants $(\{\lambda_i\}_{i=1}^k,\{c_i\}_{i=1}^k)$. Then as long as the $c_i$ are sufficiently small, the spectra of $M$ is concentrated around each $\lambda_i$:
\[
\text{Spec}(M) \subseteq \bigcup_{i=1}^k \left [ \lambda_i - e, \lambda_i + e\right ] =  I_{\lambda_i},
\]
where $e = O_{k,\lambda}\left(\sqrt{ \max_i\{c_i\}}\right)$. This was recently improved to $e \leq O_k(\max_i\{c_i\})$ by Zhang \cite{Zhang2020}.
\end{theorem}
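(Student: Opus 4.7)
The plan is to argue by contrapositive: assume $\mu \in \mathrm{Spec}(M)$ satisfies $|\mu - \lambda_i| > e$ for every $i$, and derive a contradiction provided $e$ exceeds a constant (depending on $k$ and the $\lambda_i$) times $\sqrt{\max_i c_i}$. Pick a unit $\mu$-eigenvector $v$ and decompose $v = \sum_i f_i$ using the direct sum $V = V^1 \oplus \cdots \oplus V^k$. The first key step is establishing approximate orthogonality of the pieces:
\[
|\langle f_i,f_j\rangle| \leq \frac{c_i+c_j}{|\lambda_i - \lambda_j|}\,\|f_i\|\,\|f_j\| \qquad \text{for } i\neq j.
\]
This drops out of self-adjointness: $\langle Mf_i,f_j\rangle = \langle f_i,Mf_j\rangle$, and rewriting each side as $\lambda_i\langle f_i,f_j\rangle + \langle Mf_i - \lambda_i f_i,\,f_j\rangle$ (respectively $\lambda_j\langle f_i,f_j\rangle + \langle f_i,\,Mf_j - \lambda_j f_j\rangle$) produces $(\lambda_i-\lambda_j)\langle f_i,f_j\rangle$, which by the hypothesis has magnitude at most $(c_i+c_j)\|f_i\|\|f_j\|$.

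Next I would use $Mv = \mu v$ to write the identity
\[
\sum_i (\mu - \lambda_i)f_i \;=\; \sum_i (M f_i - \lambda_i f_i)
\]
and bound both sides. The right-hand side has norm at most $\sum_i c_i \|f_i\| \leq \sqrt{k}\,(\max_i c_i)\sqrt{\sum_i\|f_i\|^2}$ by Cauchy--Schwarz. For the left-hand side, expand the squared norm into diagonal terms $\sum_i (\mu-\lambda_i)^2\|f_i\|^2$, which are at least $e^2 \sum_i\|f_i\|^2$ by the gap assumption, plus off-diagonal cross terms $2\sum_{i<j}(\mu-\lambda_i)(\mu-\lambda_j)\langle f_i,f_j\rangle$. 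Using approximate orthogonality together with the trivial bound $|\mu-\lambda_i| \leq O_\lambda(1)$ coming from boundedness of $M$, the cross terms contribute at least $-O_{k,\lambda}(\max_i c_i)\sum_i\|f_i\|^2$, so the squared norm of the left-hand side is at least $(e^2 - O_{k,\lambda}(\max_i c_i))\sum_i\|f_i\|^2$.

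Combining the two bounds and dividing by $\sum_i\|f_i\|^2 > 0$ (positive since $v\neq 0$) gives $e^2 \leq O_{k,\lambda}(\max_i c_i) + k(\max_i c_i)^2$; for small $c_i$ the first (linear) term dominates, forcing $e \leq O_{k,\lambda}(\sqrt{\max_i c_i})$. This contradicts the choice $e > C_{k,\lambda}\sqrt{\max_i c_i}$ for sufficiently large $C_{k,\lambda}$, completing the argument. The main obstacle is handling the cross terms in the expansion of the left-hand side: the approximate-orthogonality bound carries a $1/|\lambda_i-\lambda_j|$ factor, which both forces the constant to depend on the spectral gap $\min_{i\neq j}|\lambda_i-\lambda_j|$ (absorbed into $O_{k,\lambda}$) and is precisely what degrades the expected linear dependence on $\max_i c_i$ into the weaker square-root dependence. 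Zhang's later improvement to linear $\max_i c_i$ presumably sidesteps this loss via a direct spectral-projection argument that never forms these quadratic cross terms.
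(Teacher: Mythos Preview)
Your argument is correct and is essentially the paper's own proof: the same approximate-orthogonality lemma, the same identity $\sum_i(\mu-\lambda_i)f_i=\sum_i(Mf_i-\lambda_if_i)$, and the same splitting into a diagonal part bounded by $kc_{\max}^2\sum_i\|f_i\|^2$ and cross terms handled via approximate orthogonality. The paper phrases the conclusion as an averaging argument over the distribution $P_\phi(i)\propto\|f_i\|^2$ rather than a contrapositive, but these are equivalent.

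Two small remarks. First, the ``trivial bound $|\mu-\lambda_i|\le O_\lambda(1)$ from boundedness of $M$'' is not quite trivial: $M$ is not assumed bounded a priori, and the paper spends a paragraph deriving $\|M\|\le(\lambda_{\max}+c_{\max})\sqrt{2k}$ from the approximate eigendecomposition itself (again via approximate orthogonality). Second, your guess about Zhang's improvement is off: it is not a different spectral-projection argument but a refinement of exactly this one, bounding the cross terms recursively by $\frac{2kc_{\max}}{\lambda_{\mathrm{dif}}}\sum_i|\mu-\lambda_i|^2\|f_i\|^2$ so they can be absorbed into the left-hand side rather than estimated crudely.
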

\Cref{thm:new-intro-hdx-spectra} then follows from proving that DDFH's decomposition is an approximate eigendecomposition for all HD-walks (DDFH only show this holds for $N_k^1$).

In the analysis of graphs, it is often useful to bound the number of eigenvalues above some smallness threshold. This parameter, called the \textbf{threshold rank}, often plays a key role, for instance, in graph algorithms, including many methods for generic 2CSP approximation \cite{arora2015subexponential,barak2011rounding,guruswami2011lasserre,gharan2013new, hopkins2020subexponential}. Painted in broad strokes, these algorithms usually boil down to some method of eigenvalue enumeration, whether performed directly or implicitly in the proof of key structural lemmas. Given the spectral structure of HD-walks (or generally any operator with a natural approximate eigendecomposition), we argue that it is generally more natural to enumerate over \textit{strips} with large eigenvalues instead. This motivates a natural spectral complexity measure we call \textbf{stripped threshold rank}.
\begin{definition}[Stripped Threshold Rank]\label{def:ST-rank-intro}
Let $M$ be a linear operator over a vector space $V$ with decomposition $V=\bigoplus_i V^i$ denoted $\mathscr D$, where each $V^i$ is the span of some set of eigenvectors. Given $\delta \in \mathbb{R}$, the stripped threshold rank (ST-Rank) with respect to $\delta$ and $\mathscr D$ is:
\[
R_\delta(M,\mathscr D) = \left|\{ V^i \in \mathscr D : \exists f \in V^i, Mf = \lambda f, \lambda > \delta\}\right|.
\]
In this work, $\mathscr D$ will always correspond to the eigenstrips given by \Cref{thm:new-intro-hdx-spectra} so we drop it from the notation.
\end{definition}
The ST-Rank of any (non-expanding) $k$-dimensional HD-walk is always substantially smaller than its standard threshold rank (which is at least $\text{poly}(n)$, and often as large as $n^{\Omega(k)}$). As a general algorithmic paradigm, if one can leverage the combinatorial structure of eigenstrips to replace eigenvalue enumeration, it is possible to achieve substantially better performance on structured graphs. In \Cref{sec:results-UG}, we discuss a concrete instantiation of this framework to unique games, where we gain substantial improvements over state of the art algorithms \cite{barak2011rounding,arora2015subexponential} over such constraint graphs.

Before moving on to such results, however, it is natural to ask whether we can give a finer-grained characterization of ST-rank for HD-walks. In \Cref{sec:hdx-spectra} we show how to explicitly compute the approximate eigenvalue $\lambda_i(M)$ corresponding to each strip based upon the structure of $M$. While the full result requires further background, its specification to canonical and partial-swap walks has a nice combinatorial interpretation. Recall that the canonical walk $N_k^i$ walks between $k$-faces through a shared $(k+i)$-face. We say it has \textbf{depth} $i/(k+i)$ since it traverses $i$ of the $k+i$ relevant levels of the complex. Similarly, we say the partial-swap walk $S_k^i$ has depth $i/k$, since it swaps $i$ out of $k$ elements (up to factors in $\gamma$, this can also be viewed as walking between $k$ faces via a shared $k-i$ face, where one traverses $i$ out of the $k$ relevant levels). We prove that the stripped eigenvalues corresponding to the canonical and partial-swap walks decay exponentially fast with a base rate dependent on depth.
\begin{theorem}[ST-rank of HD-walks (\Cref{prop:canon-spectra-HDX} + \Cref{cor:swap-HDX-spectra})]\label{thm:intro-hdx-swap-canon}
Let $M$ be a canonical or partial-swap walk of depth $0 \leq \beta \leq 1$ on a sufficiently strong two-sided local-spectral expander. Then the eigenvalues corresponding to the eigenstrips of $M$ decay exponentially fast:
\[
\lambda_i(M) \leq e^{-\beta i}.
\]
The ST-Rank $R_\delta(M)$ is then at most:
\[
R_{\delta}(M) \leq \frac{\ln\left(\frac{1}{\delta}\right)}{\beta}
\]
\end{theorem}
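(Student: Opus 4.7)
The plan is to invoke the exact closed-form expressions for the stripped eigenvalues $\lambda_i(M)$ supplied by \Cref{prop:canon-spectra-HDX} and \Cref{cor:swap-HDX-spectra}. On a sufficiently strong two-sided local-spectral expander these agree with the complete-complex (Johnson-scheme) values up to the $O_k(\gamma)$ additive error inherited from \Cref{thm:new-intro-hdx-spectra}, so I will reduce the problem to bounding products of Johnson-scheme form factor-by-factor, then apply $1-x\le e^{-x}$ to extract the exponential decay. The ST-rank statement is then a one-line consequence of inverting the bound through \Cref{def:ST-rank-intro}.

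First I would handle the partial-swap walk $S_k^s$ of depth $\beta = s/k$, whose level-$i$ stripped eigenvalue takes the form
$$\lambda_i(S_k^s) = \frac{\binom{k-s}{i}}{\binom{k}{i}} = \prod_{l=0}^{i-1}\frac{k-s-l}{k-l}.$$
Each factor satisfies $(k-s-l)/(k-l) = 1 - s/(k-l) \le 1 - s/k = 1-\beta$, so multiplying the $i$ terms yields $\lambda_i(S_k^s) \le (1-\beta)^i \le e^{-\beta i}$.

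The canonical walk $N_k^s$ of depth $\beta = s/(k+s)$ is handled by the dual product representation
$$\lambda_i(N_k^s) = \frac{\binom{k+s-i}{s}}{\binom{k+s}{s}} = \prod_{l=1}^{s}\Bigl(1 - \frac{i}{k+l}\Bigr),$$
in which every factor is at most $1 - i/(k+s) = 1 - i\beta/s$. Hence $\lambda_i(N_k^s) \le (1 - i\beta/s)^s \le e^{-\beta i}$ by $(1-x/s)^s \le e^{-x}$ with $x = i\beta$. The ST-rank bound now follows directly: unpacking \Cref{def:ST-rank-intro}, $R_\delta(M)$ counts strips with $\lambda_i(M) > \delta$, and any index $i > \ln(1/\delta)/\beta$ already has $\lambda_i(M) \le e^{-\beta i} < \delta$, so $R_\delta(M) \le \ln(1/\delta)/\beta$.

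The main obstacle is not algebraic but one of error propagation: the $O_k(\gamma)$ additive slack from \Cref{thm:new-intro-hdx-spectra} must not overwhelm the exponential bound near the cutoff $i \approx \ln(1/\delta)/\beta$, where $e^{-\beta i}$ and $\delta$ coincide. The ``sufficiently strong'' hypothesis affords enough freedom in $\gamma$ (chosen as a function of $\delta$ and $k$) to absorb this error with constant slack, so the asymptotic count is unaffected. A secondary subtlety is merely verifying that the formulas emitted by \Cref{prop:canon-spectra-HDX} and \Cref{cor:swap-HDX-spectra} do reduce to the Johnson-scheme products above, but this is purely bookkeeping given the combinatorial interpretation already established in those propositions.
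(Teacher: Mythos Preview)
Your proposal is correct and matches the paper's approach: the paper itself simply remarks that \Cref{prop:canon-spectra-HDX} and \Cref{cor:swap-HDX-spectra} together prove \Cref{thm:intro-hdx-swap-canon} for sufficiently small $\gamma$, leaving the elementary product bounds implicit. Your factor-by-factor argument via $1-x\le e^{-x}$ is exactly the intended computation, and your discussion of absorbing the $O_k(\gamma)$ error into the ``sufficiently strong'' hypothesis correctly identifies the only non-algebraic point.
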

In other words, walks that reach deep into the complex (e.g. $N^{k/2}_k$) have constant ST-Rank, whereas shallow walks like $N^1_k$ have ST-rank $\Omega(k)$.
\subsection{Characterizing (non)-Expansion in HD-Walks}
We now move to our main structural application, characterizing edge expansion in HD-walks. Edge expansion is a fundamental combinatorial property of graphs with applications across many areas of theoretical computer science, including (as we will soon discuss) both hardness and algorithms for unique games. For a subset of vertices $S \subseteq V$, edge expansion measures the (normalized) fraction of edges which leave $S$.
\begin{definition}[Edge Expansion]
Given a graph $G(V,E)$, the edge expansion of a subset $S \subset V$ is:
\[
\phi(G,S) = \frac{E(S,V \setminus S)}{E(S,V)}
\]
where $E(S,T)$ counts the number of edges crossing from $S$ to $T$ (double-counting edges in the intersection). When convenient, we denote non-expansion, $1-\phi(G,S)$, as $\bar{\phi}(G,S)$, and drop $G$ from the notation when clear from context.
\end{definition}
It is often useful to characterize exactly which sets in a graph expand. The two most widely used characterizations are when all\footnote{By this we really mean all sets of size at most $|V|/2$.} sets expand (expanders), or when all \textit{small} sets expand (small-set expanders). However, many important structures fall outside such a simple characterization. The Johnson and Grassmann graphs, for instance, are well known to have small non-expanding sets. Characterizing the structure of non-expansion in these graphs was crucial not only for the resolution of the 2-2 Games Conjecture \cite{subhash2018pseudorandom}, but also for recent algorithms for unique games \cite{bafna2020playing}.

Using the spectral machinery developed in the previous section, we give a tight
characterization of expansion in HD-walks. As discussed, we focus mainly on two key aspects: the expansion of \emph{links}, and the structure of non-expanding sets. We start with the former, where links have long been the prototypical example of small, non-expanding sets for the Johnson graphs. In fact, we prove this stems from a much stronger connection between local expansion and spectral structure in HD-walks.
\begin{theorem}[Local (non)-Expansion = Global Spectrum (Informal \Cref{thm:local-vs-global})]\label{thm:new-intro-local-vs-global}
Let $M$ be a $k$-dimensional HD-walk on a sufficiently strong $d$-dimensional two-sided $\gamma$-local-spectral expander with $d>k$,\footnote{The lower bound on non-expansion still holds for $d=k$, which is sufficient for most applications of this result. It's also worth noting that the condition $d>k$ itself is generally trivially satisfied, since all known two-sided local-spectral expanders are in fact cutoffs of larger complexes.} and let $\lambda_i(M)$ be approximate eigenvalues corresponding to $M$'s $k+1$ eigenstrips. For all $0 \leq i \leq k$ and $\tau \in X(i)$, let $X_\tau$ denote the set of $k$-faces that contain $\tau$. Then $X_\tau$ is expanding if and only if $\lambda_i(M)$ is small:
\[
\bar{\phi}(X_\tau) \in \lambda_{i}(M) \pm O_k(\gamma).
\]
\end{theorem}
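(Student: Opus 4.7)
The plan is to evaluate the non-expansion directly via its Rayleigh quotient representation and then appeal to the eigenstripping of \Cref{thm:new-intro-hdx-spectra}. With $\langle \cdot,\cdot\rangle$ and $\norm{\cdot}$ denoting the inner product and norm with respect to the stationary measure on $X(k)$, we have
\[
\bar{\phi}(X_\tau) \;=\; \frac{\langle \id{X_\tau}, M \id{X_\tau}\rangle}{\norm{\id{X_\tau}}^2}.
\]
The crucial observation is that $\id{X_\tau}$ is (up to normalization) the lift of the point-mass $\delta_\tau \in C(X(i))$ to $X(k)$ via the averaging operator that sends a function on $X(i)$ to its mean over the $i$-subfaces of a given $k$-face. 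By the combinatorial description in \Cref{thm:new-intro-hdx-spectra}, $\id{X_\tau}$ therefore lies in the direct sum of the first $i{+}1$ eigenstrips $V^0 \oplus \cdots \oplus V^i$.

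I would next decompose $\delta_\tau$ at level $i$ into pure-degree pieces $\delta_\tau = \sum_{j=0}^i \widetilde{h}_j$, where each $\widetilde{h}_j$ is the lift of a function $h_j \in C(X(j))$ that is orthogonal to all lifts from strictly lower levels. Averaging each term up to $X(k)$ preserves this pure-degree structure, so $\id{X_\tau}$ decomposes as a sum of one function in each strip $V^j$ for $0 \leq j \leq i$. Combining this with the approximate-eigenvector guarantee of \Cref{thm:new-intro-hdx-spectra} (including a routine Cauchy--Schwarz bound on cross-strip terms) yields
\[
\bar{\phi}(X_\tau) \;=\; \sum_{j=0}^i \lambda_j(M)\,\alpha_j \;\pm\; O_k(\gamma),
\]
where $\alpha_j$ is the fraction of $\norm{\id{X_\tau}}^2$ carried by the $V^j$-component. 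Since the $\lambda_j(M)$ lie in $[-1,1]$ and $\sum_j \alpha_j = 1$, we have $|\bar{\phi}(X_\tau) - \lambda_i(M)| \leq 2(1 - \alpha_i) + O_k(\gamma)$, so it suffices to show $\alpha_i = 1 - O_k(\gamma)$.

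This last dominance claim is where the main work lies, and it is what I expect to be the principal obstacle. My plan is two-fold: first show that the lift operator approximately preserves $\ell_2$ norm on pure-degree functions---an iterated consequence of two-sided local-spectral expansion, one level at a time---so that $\alpha_j$ is controlled by $\norm{h_j}^2$ at level $j$; second use a M\"obius-style inversion on $\delta_\tau = \sum_j U_j^i h_j$ to bound $\norm{h_j}$ for $j < i$ in terms of density ratios $\pi_{X(j)}(\tau')/\pi_{X(i)}(\tau)$ over $j$-subfaces $\tau' \subseteq \tau$. In the complete complex these ratios take an exact binomial form that makes the lower components negligible; the challenge will be propagating the estimate through a two-sided $\gamma$-HDX level by level so that the accumulated perturbation remains $O_k(\gamma)$. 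Because the Rayleigh quotient is an equality rather than a one-sided bound, both halves of $\bar{\phi}(X_\tau) \in \lambda_i(M) \pm O_k(\gamma)$ drop out of this single analysis.
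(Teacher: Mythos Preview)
Your reduction to showing $\alpha_i = 1 - O_k(\gamma)$ is exactly the paper's strategy (this is their \Cref{lemma:link-projection}). Where you diverge is in how you propose to prove that concentration claim, and the paper's route is both different and considerably cleaner.

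The paper never touches the components $h_j$ or any density ratios. Instead it picks one specific walk, $N_k^1 = D_{k+1}U_k$, as a \emph{probe} and computes $\bar\phi(N_k^1, X_\tau)$ two ways. Combinatorially: from any $\sigma \in X_\tau$ the walk adds a random element and then drops one of the $k{+}1$ elements uniformly, so it stays in $X_\tau$ exactly when none of the $i$ elements of $\tau$ is dropped, giving $\bar\phi = (k{+}1{-}i)/(k{+}1)$ on the nose (this is where $d>k$ enters). Spectrally: $\bar\phi = \sum_{j\le i}\tfrac{k+1-j}{k+1}\,\alpha_j \pm O_k(\gamma)$. Since $\sum_j\alpha_j=1$, each $\alpha_j \ge -O_k(\gamma)$ by approximate orthogonality, and every coefficient $\tfrac{k+1-j}{k+1}$ with $j<i$ strictly exceeds the value $\tfrac{k+1-i}{k+1}$ that the weighted average must hit, one gets $\alpha_j = O_k(\gamma)$ for all $j<i$ in one stroke. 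The concentration then transfers to arbitrary $M$ because the $\alpha_j$ do not depend on the walk.

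Your M\"obius-inversion plan is riskier than it looks. In a general two-sided $\gamma$-HDX the HD-Level-Set pieces $h_j \in \ker(D_j)$ are defined only implicitly (recursively, as in \cite{dikstein2018boolean}), not by a closed M\"obius formula, and the raw density ratios $\pi_{X(j)}(\tau')/\pi_{X(i)}(\tau)$ are \emph{not} small in $\gamma$---in bounded-degree constructions they are $\Theta_k(1)$. What makes $\norm{h_j}$ small for $j<i$ is cancellation forced by the kernel constraint $D_j h_j = 0$, not smallness of individual entries, so a bound ``in terms of density ratios'' will not close without essentially re-deriving a substantial chunk of the DDFH perturbation machinery. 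The probe-walk trick bypasses all of this: it never computes any $h_j$, and the exact combinatorial evaluation of $\bar\phi(N_k^1,X_\tau)$ does in one line what your second step was meant to do.
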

Since HD-walks (like the Johnson graphs) are generally poor spectral expanders, \Cref{thm:new-intro-local-vs-global} implies the existence of small, non-expanding sets (namely $1$-links\footnote{Since we generally consider the regime where $|X(1)| \gg k$, a basic averaging argument gives that there must exist a $o(1)$-size $1$-link. One may also note that in a $\gamma$-local-spectral expander, no $1$-link can have more than $O_k(\gamma)$ mass, so all links are small if $\gamma \leq o(1)$.}). Our stronger characterization of local expansion is of independent interest beyond this simple corollary, however, due to its important connections to local-to-global algorithms and ST-rank. In particular, essentially all work on high-dimensional expanders relies on a strategy known as the \textit{local-to-global paradigm}, where a global property (e.g. mixing, agreement testing, etc.) is reduced to examining a local version on links. These arguments often reduce to showing that, in some relevant sense, interaction \textit{between} links is minimal. \Cref{thm:new-intro-local-vs-global} offers exactly such a statement for levels of the complex which correspond to eigenspaces with large eigenvalues---since links at these levels are non-expanding, they don't have much interaction. This gives a holistic local-to-global approach for spectral graph algorithms on HDX, since the eigenspaces corresponding to \textit{small} eigenvalues are traditionally easy to handle through other means. Later we will see an algorithmic application of this approach to unique games, where the non-expansion of links allows us to patch together local solutions, and the connection to global spectra ties performance of the algorithm to ST-rank.

Returning to the topic of expansion, \Cref{thm:new-intro-local-vs-global} shows that as long as $M$ is not a spectral expander, there always exist small, non-expanding sets in the form of links. This raises a natural question: are \textit{all} non-expanding sets explained by links? Before answering, let's spend a little time formalizing this. Given a subset $S \subset X(k)$, let $L_{S,i}$ be a function on $i$-faces measuring the deviation of $S$ from its average across links at level $i$:
\[
\forall \tau \in X(i): L_{S,i}(\tau) = \underset{X_\tau}{\mathbb{E}}[\mathbbm{1}_S]-\mathbb{E}[\mathbbm{1}_S].
\]
The statement ``non-expansion is explained by links'' then really amounts to saying that we expect $L_{S,i}$ to be far from $0$ in some sense if $S$ is non-expanding. There are a number of reasonable ways to formalize this notion, each with its own use. Following \cite{bafna2020playing}, we mainly focus on an $\ell_2$-variant that is particularly useful for designing local-to-global algorithms: if $S$ is non-expanding, then $\norm{L_{S,i}}_2^2$ (or equivalently the \textit{variance} of $S$ across $i$-links) must be large. It is perhaps easier to think about this result in terms of its contrapositive. Call a set \textbf{$\ell_2$-pseudorandom} if its variance across $i$-links is \textit{small}.
\begin{definition}[$\ell_2$-Pseudorandom Sets]
Let $X$ be a pure simplicial complex. We call a subset $S \subset X(k)$ $(\varepsilon,\ell)$-$\ell_2$-pseudorandom if
\[
\forall i \leq \ell: \norm{L_{S,i}}_2^2 \leq \varepsilon\mathbb{E}[\mathbbm{1}_S]
\]
\end{definition}
We remark briefly on the choice of normalization by $\mathbb{E}[\id{S}]$ on the right. While mostly a formality, we will see this is in fact a natural choice both in comparison to the $\ell_\infty$-variant (which differs in some natural sense by at least a factor of $\mathbb{E}[\id{S}]$), and when considering expansion which is itself normalized by $\mathbb{E}[\id{S}]$. Indeed with this in mind, we prove that $(\varepsilon,\ell)$-$\ell_2$-pseudorandom sets expand near-perfectly. 
\begin{theorem}[$\ell_2$-Pseudorandom Sets Expand (Informal \Cref{thm:hdx-expansion})]\label{intro:new-l2-pr-sets-expand}
Given a $k$-dimensional HD-walk $M$ on a sufficiently strong two-sided $\gamma$-local-spectral expander and small constants $\alpha,\delta,\varepsilon>0$, we have that any $(\varepsilon,R_{\delta}(M))$-pseudorandom set $S$ of density $\alpha$ expands near-perfectly:
\[
\phi(S) \geq 1 - \alpha - \delta - O_k(\varepsilon) - O_k(\gamma)
\]
\end{theorem}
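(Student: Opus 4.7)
The plan is to expand $\mathbbm{1}_S$ along the eigenstrips of $M$ guaranteed by \Cref{thm:new-intro-hdx-spectra} and control each piece separately. Under the natural probability inner product on $X(k)$ we have $\|\mathbbm{1}_S\|^2 = \mathbb{E}[\mathbbm{1}_S] = \alpha$ and $\bar\phi(S) = \alpha^{-1}\langle \mathbbm{1}_S, M\mathbbm{1}_S\rangle$. Writing $\mathbbm{1}_S = \sum_{i=0}^k f_i$ with $f_i$ in the $i$th (DDFH-type) strip, the eigenstripping bound gives
\[
\bar\phi(S) \;\leq\; \sum_{i=0}^{k}\bigl(\lambda_i(M) + O_k(\gamma)\bigr)\,\frac{\|f_i\|^2}{\alpha},
\]
using the improved $e \leq O_k(\gamma)$ error of \cite{Zhang2020}.

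I would then split the sum into three regimes. The $i=0$ strip is the constant component ($\lambda_0 = 1$, $f_0 \equiv \alpha$, $\|f_0\|^2 = \alpha^2$) and contributes exactly $\alpha$. The ``tail'' strips $i > r := R_\delta(M)$ all satisfy $\lambda_i(M) \leq \delta$, and approximate orthogonality of the decomposition combined with $\sum_{i>r}\|f_i\|^2 \leq \|\mathbbm{1}_S\|^2 = \alpha$ shows that this regime contributes at most $\delta + O_k(\gamma)$. The ``head'' strips $1 \leq i \leq r$ are the heart of the argument: I need to establish $\|f_i\|^2 \leq O_k(\varepsilon\alpha)$ for each such $i$, which after summing over the (at most $r$) head strips and absorbing $r$ into the $O_k(\cdot)$ yields an $O_k(\varepsilon)$ contribution to $\bar\phi(S)$.

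For the head-strip bound I would appeal to the explicit structure of the DDFH decomposition established for \Cref{thm:new-intro-hdx-spectra}: each $f_i$ is the lift to $X(k)$ of a function $h_i$ on $X(i)$, where $h_i$ agrees with the deviation function $L_{S,i}$ up to corrections arising from the lower-level components $h_0,\dots,h_{i-1}$. Since lifting is an approximate isometry on each strip (a direct consequence of two-sided local-spectral expansion), an induction on $i$ converts the pseudorandomness hypothesis $\|L_{S,j}\|_2^2 \leq \varepsilon\alpha$ for all $j \leq r$ into $\|f_i\|^2 \leq O_k(\varepsilon\alpha)$. Combining the three regimes then gives
\[
\bar\phi(S) \;\leq\; \alpha + \delta + O_k(\varepsilon) + O_k(\gamma),
\]
which is exactly the stated bound after rearrangement.

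The main obstacle is the inductive head-strip estimate: faithfully tracking how the deviation functions $L_{S,j}$ at lower levels propagate into the higher-level components $f_i$ requires carefully unwinding the DDFH decomposition and showing the lifting map preserves $\ell_2$-mass up to $O_k(\gamma)$ errors. This is also the step where the normalization by $\mathbb{E}[\mathbbm{1}_S]$ in the definition of $\ell_2$-pseudorandomness is essential, since it is exactly what makes the lifted norm scale correctly against $\|\mathbbm{1}_S\|^2 = \alpha$ so that the final bound closes.
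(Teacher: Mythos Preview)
Your overall architecture is correct and matches the paper's: write $\bar\phi(S)=\alpha^{-1}\langle \mathbbm{1}_S,M\mathbbm{1}_S\rangle$, expand along the HD-Level-Set decomposition, and split into the constant piece, the tail strips $i>r=R_\delta(M)$ (handled by $\lambda_i\le\delta$ and approximate orthogonality), and the head strips $1\le i\le r$. The point of departure is how you control the head.

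You propose an inductive argument: since $g_i$ (the level-$i$ component with $f_i=U^k_ig_i$) can be recovered from $L_{S,i}$ after subtracting the contributions of $g_1,\dots,g_{i-1}$, bounds on the lower $g_j$'s propagate upward. This does work, but the constants compound at each step and you only recover $\|f_i\|^2\le O_k(\varepsilon\alpha)$ with a rather large $O_k(\cdot)$. The paper avoids the induction entirely by a one-line spectral trick: since $D^k_i$ and $U^k_i$ are adjoint,
\[
\text{Var}(D^k_i\mathbbm{1}_S)=\langle \mathbbm{1}_S,\,(U^k_iD^k_i)\,\mathbbm{1}_S\rangle-\alpha^2,
\]
and $U^k_iD^k_i$ is itself an HD-walk whose approximate eigenvalue on $V^k_j$ is $\tfrac{\binom{i}{j}}{\binom{k}{j}}$. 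Expanding immediately gives
\[
\text{Var}(D^k_i\mathbbm{1}_S)\;\approx\;\sum_{j=1}^{i}\tfrac{\binom{i}{j}}{\binom{k}{j}}\langle \mathbbm{1}_S,f_j\rangle\;\ge\;\tfrac{1}{\binom{k}{i}}\langle \mathbbm{1}_S,f_i\rangle-O_k(\gamma)\alpha,
\]
so $\langle \mathbbm{1}_S,f_i\rangle\le\binom{k}{i}\varepsilon\alpha+O_k(\gamma)\alpha$ directly, with the sharp constant $\binom{k}{i}$ (which is what makes the formal statement tight and is later reused in the SoS proof). Your inductive route reaches the same informal conclusion but loses this constant and is harder to port to the sum-of-squares setting.

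One minor wording issue: lifting is not an ``approximate isometry'' but rather satisfies $\|U^k_ig_i\|^2\approx\binom{k}{i}^{-1}\|g_i\|^2$; your argument still goes through with this correction.
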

We note that (the formal version of) this result is exactly tight (in the limit of $\gamma \to 0$). Passing back to the original (now contrapositive) form of the statement, \Cref{intro:new-l2-pr-sets-expand} immediately implies that any non-expanding set must have large variance across links at a level determined by the ST-rank of $M$. On the algorithmic side, this informally translates to the statement that most interesting structure lies on low-level links, which is crucial to any local-to-global algorithm.

Before formalizing this algorithmic intuition in the case of unique games, it's worth taking a moment to consider the implication of \Cref{intro:new-l2-pr-sets-expand} to a different formalization of this problem with recent applications to hardness of approximation: the $\ell_\infty$-variant. In this characterization, the (squared) $\ell_2$-norm of $L_{S,i}$ is replaced with its maximum. In other words, the $\ell_\infty$-variant characterization posits that every non-expanding set must be highly concentrated in some \textit{individual} link. We prove that \Cref{intro:new-l2-pr-sets-expand} actually holds in this regime as well by a simple reduction. If we analogously define $\ell_\infty$-pseudorandom sets (see \Cref{def:pseudorandom-infty}), it is not hard to show that any $\ell_\infty$-pseudorandom set must also be $\ell_2$-pseudorandom. This results in a tight version of \Cref{intro:new-l2-pr-sets-expand} for the $\ell_\infty$-regime, but only when $\max(L_{S,i})$ is close to $\frac{1}{\mathbb{E}[\id{S}]}\norm{L_{S,i}}^2$. Practically, the interesting regime in hardness of approximation is when these two quantities are far apart. In this case our $\ell_\infty$ to $\ell_2$ reduction necessarily loses important factors, so we cannot recover any results from the hardness of approximation literature.

Nevertheless, it is worth stating that as an immediate corollary of our $\ell_2$-based analysis we get an $\ell_\infty$-characterization of expansion in HD-walks: any non-expanding set must be non-trivially concentrated in a link.
\begin{corollary}[Non-expanding Sets Correlate with Links (Informal \Cref{cor:hdx-non-expansion})]\label{cor:new-non-expansion}
Let $M$ be a $k$-dimensional HD-walk on a sufficiently strong two-sided $\gamma$-local-spectral expander. Then if $S \subset X(k)$ is a set of density $\alpha$ and expansion at most:
\[
\phi(S) < 1 - \alpha - \delta - O_k(\gamma)
\]
for some $\delta>0$, $S$ must be non-trivially correlated with some $i$-link for $1 \leq i \leq r=R_{\delta/2}(M)$:
\[
\exists 1 \leq i \leq r, \tau \in X(i): \underset{X_\tau}{\mathbb{E}}[\mathbbm{1}_S] \geq \alpha + \frac{c_{\delta,r}}{{k \choose i}}
\]
where $c_{\delta,r}$ depends only on $\delta$ and the ST-rank $r$.
\end{corollary}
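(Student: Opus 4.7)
The plan is to derive this corollary directly from the contrapositive of the $\ell_2$-pseudorandom expansion theorem (\Cref{intro:new-l2-pr-sets-expand}). Suppose $S \subset X(k)$ has density $\alpha$ and $\phi(S) < 1 - \alpha - \delta - O_k(\gamma)$. Choosing $\varepsilon = \Theta(\delta)$ (with any $k$-dependent slack absorbed into implicit constants, and therefore ultimately into $c_{\delta,r}$), the hypothesis of \Cref{intro:new-l2-pr-sets-expand} must fail, so $S$ is not $(\varepsilon, r)$-$\ell_2$-pseudorandom. That is, there exists some level $1 \leq i \leq r$ with
\[
\|L_{S,i}\|_2^2 > \varepsilon \alpha.
\]
We may restrict to $i \geq 1$ because $L_{S,0} \equiv 0$ by construction.

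Next, I would convert this $\ell_2$ bound into a pointwise lower bound on the positive deviation $\max_\tau L_{S,i}(\tau)$. Since $L_{S,i}$ is mean-zero on $X(i)$ and bounded below pointwise by $-\alpha$, writing $L_{S,i} = L^+ - L^-$ for its positive and negative parts and using $\mathbb{E}[L^+] = \mathbb{E}[L^-]$ gives
\[
\mathbb{E}[L_{S,i}^2] = \mathbb{E}[L^+ L_{S,i}] + \mathbb{E}[L^-(-L_{S,i})] \leq \bigl(\|L^+\|_\infty + \alpha\bigr)\, \mathbb{E}[L^+] \leq \bigl(\|L^+\|_\infty + \alpha\bigr)\, \|L^+\|_\infty.
\]
Combined with the lower bound on $\mathbb{E}[L_{S,i}^2]$, this forces $\|L^+\|_\infty$ to be bounded below by an explicit function of $\varepsilon$ and $\alpha$, and therefore produces a $\tau \in X(i)$ with $\mathbb{E}_{X_\tau}[\mathbbm{1}_S] - \alpha = L_{S,i}(\tau)$ strictly positive and of controlled size.

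The main obstacle will be matching the precise form claimed, in particular extracting the $1/\binom{k}{i}$ scaling. I expect this factor to come out of the normalizations baked into the formal statement of \Cref{intro:new-l2-pr-sets-expand} (\Cref{thm:hdx-expansion})---either from how admissible $\varepsilon$ depends on $k$ and $i$, or from relating the natural $\ell_2$ norm on $X(i)$ to maximum deviations once one accounts for the $\binom{k}{i}$ many $i$-subfaces of each $k$-face contributing to each conditional expectation. I would trace these constants carefully through the contrapositive step and the $\ell_2$-to-$\ell_\infty$ conversion, making sure to absorb all $k$-dependent losses into the final constant $c_{\delta,r}$ so that it depends only on $\delta$ and $r$ as claimed.
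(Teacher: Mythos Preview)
Your high-level plan---contrapositive of the $\ell_2$-expansion theorem, then an $\ell_2$-to-pointwise conversion---is exactly the paper's route, and you are right that the $\binom{k}{i}^{-1}$ factor comes straight from the formal version (\Cref{thm:hdx-expansion}): its bound reads $\phi(S) \geq 1-\alpha-(1-\alpha)\delta'-c\gamma-\sum_{i=1}^r(\lambda_i-\delta')\binom{k}{i}\varepsilon_i$, so the contrapositive plus pigeonhole over the $r$ terms gives $\binom{k}{i}\varepsilon_i \geq \Omega(\delta/r)$ for some $i$, placing $\binom{k}{i}$ exactly where claimed.

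The gap is in your $\ell_2$-to-max step. From $(m+\alpha)m > \varepsilon\alpha$ (with $m=\|L^+\|_\infty$) you can only extract $m \gtrsim \sqrt{\varepsilon\alpha}$ when $\alpha \ll \varepsilon$; the quadratic lower bound on $m$ genuinely degrades to zero as $\alpha\to 0$, so your final constant would depend on $\alpha$, not just on $\delta$ and $r$. The paper's reduction (\Cref{lem:reduction}, sketched at the end of \Cref{sec:overview-expansion}) is both simpler and sharp: since $D^k_i\mathbbm{1}_S \geq 0$,
\[
\text{Var}(D^k_i\mathbbm{1}_S) \;=\; \mathbb{E}\!\left[(D^k_i\mathbbm{1}_S)^2\right] - \alpha^2 \;\leq\; \bigl(\max D^k_i\mathbbm{1}_S\bigr)\cdot\mathbb{E}[D^k_i\mathbbm{1}_S] - \alpha^2 \;=\; (\alpha+m)\alpha - \alpha^2 \;=\; m\alpha,
\]
i.e.\ $\varepsilon_i^{\ell_2}\leq m$ with no loss. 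Then $\varepsilon_i^{\ell_2}\geq c/\binom{k}{i}$ immediately gives $m\geq c/\binom{k}{i}$. (Equivalently: apply \Cref{thm:hdx-expansion} in its $\ell_\infty$ form from the start, so the contrapositive lands directly on $\max_\tau L_{S,i}(\tau)$.) One more point: $c_{\delta,r}$ is required to be $k$-independent, so ``absorbing $k$-dependent losses'' into it is not an option; the explicit $\binom{k}{i}^{-1}$ is the only place $k$ enters.
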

Proving a $k$-independent $\ell_\infty$-characterization for HD-walks, either directly or via a stronger $k$-dependent reduction to the $\ell_2$-regime, remains an interesting open problem. Such a characterization is known on the Johnson graphs (and a number of related objects \cite{keevash2019hypercontractivity,filmus2020hypercontractivity}), and may shed light on similar structure in the Grassmann graphs used to prove the 2-2 Games Conjecture \cite{subhash2018pseudorandom}. We discuss some additional subtleties in this direction in \Cref{sec:q-eposet}.

\subsection{Playing Unique Games}\label{sec:results-UG}
One motivation for studying spectral structure and non-expansion in HD-walks stems from a simple class of 2-CSPs known as unique games, a central object of study in hardness-of-approximation since Khot's introduction of the Unique Games Conjecture (UGC) \cite{khot2002power} nearly 20 years ago. We study affine unique games which are known to be as hard as unique games~\cite{khot2007optimal}:

\begin{definition}[Affine Unique Games]\label{def:ug}
An instance $I = (G,\mathcal{S})$ of affine unique games over alphabet $\Sigma = \{0,\ldots,m-1\}$ is a weighted undirected graph $G(V,E)$, and set of affine shifts $\mathcal{S} =\{s_{uv} \in \Sigma\}_{(u,v) \in E}$. The value of $I$, $\text{val}(I)$, is the maximum fraction of satisfied constraints over all possible assignments $\Sigma^V$:
\[\max_{X \in \Sigma^V}\underset{(u,v) \sim E}{\Pr}[X_{u} - X_v = s_{uv} (\text{mod } |\Sigma|)],
\]
where edges are drawn corresponding to their weight. For an individual assignment $X$, we refer to this expectation as $\text{val}_I(X)$. Any weighted undirected graph $G$ is uniquely associated with a self-adjoint random-walk matrix $M$ where $M(u,v) = \Pr_E[(u,v)]/\sum_{w}\Pr_E[(u,w)]$ (see \Cref{sec:ug-rwalk}). We will usually view the constraint graph in this manner instead, referring to unique games instances over random-walks as $I = (M,\Pi)$.
\end{definition}
Informally, the UGC states that for sufficiently small constants $\varepsilon,\delta$, there exists an alphabet size such that distinguishing between instances of unique games with value $1-\varepsilon$ and $\delta$ is NP-hard. A positive resolution to the UGC would resolve the hardness-of-approximation of many important combinatorial optimization problems, including CSPs \cite{raghavendra2008optimal}, vertex-cover \cite{khot2008vertex}, and a host of others such as \cite{khot2002power,khot2007optimal,guruswami2008beating,chitta2011approximate,khot2014characterization,khot2015unique}.

Building on the recent framework of Bafna, Barak, Kothari, Schramm, and Steurer \cite{bafna2020playing}, we show how our structural theorems combine with the notion of ST-rank to give efficient algorithms for unique games over HD-walks.
\begin{theorem}[Playing Unique Games on HD-walks (Informal \Cref{thm:unique-games})]\label{thm:informal-intro-unique-games}
For any $\varepsilon \in (0,.01)$, there exists an algorithm $\mathcal{A}$ with the following guarantee. If $\mathcal{I} = (M,\mathcal{S})$ is an instance of affine unique games with value at least $1-\varepsilon$ over $M$, a complete $k$-dimensional HD-walk on a $d$-dimensional two-sided $\gamma$-local-spectral expander with $\gamma \ll o_k(1)$ and $d>k$,\footnote{Formally, we note $\gamma$ also has dependence on $M$ (see \Cref{thm:unique-games}), though this can be removed in special cases like the Johnson scheme. We also note again that $d>k$ is generally a trivial condition for strong two-sided local-spectral expanders.} then $\mathcal{A}$ outputs a $\poly(\tau)$-satisfying assignment in time $|X(k)|^{\poly(1/\tau)}$, where $\tau = \left(\frac{\varepsilon}{{k \choose r(\varepsilon)}}\right)$ and $r(\varepsilon)=R_{1-O(\varepsilon)}(M)$ is the ST-rank of $M$.
\end{theorem}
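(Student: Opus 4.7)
The plan is to adapt the sum-of-squares (SoS) framework of Bafna, Barak, Kothari, Schramm, and Steurer for unique games on small-set expanders, whose running time depends on the threshold rank of the constraint graph. The central contribution will be to replace threshold rank in this analysis with the much smaller ST-rank by exploiting the eigenstripping structure and the local-to-global expansion of HD-walks established earlier in the paper. Concretely, given a $(1-\varepsilon)$-satisfiable instance $\mathcal{I} = (M, \mathcal{S})$, I would solve a level-$D$ SoS relaxation for $D = \tO(\binom{k}{r}\varepsilon^{-1})$ and run a conditioning-plus-rounding scheme that either outputs a $\poly(\tau)$-satisfying assignment directly, or identifies a face of dimension at most $r(\varepsilon)$ on which to condition and recurse.

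The key structural input is a sum-of-squares certified version of the $\ell_2$-pseudorandom expansion theorem. For a pseudo-distribution $\pE$ over labelings $X : X(k) \to \Sigma$, the associated pseudo-level sets $S_c = \{v : X_v = c\}$ fall into two cases. When every $S_c$ is $(\varepsilon, r(\varepsilon))$-$\ell_2$-pseudorandom, the SoS certificate implies that the pseudo-expected fraction of satisfied constraints is $1 - \poly(\tau)$, and standard independent rounding from $\pE$ then produces an actual assignment of value $\poly(\tau)$. When some $S_c$ fails to be $\ell_2$-pseudorandom, the corollary on non-expanding sets correlating with links locates a face $\sigma \in X(i)$ with $i \le r(\varepsilon)$ whose conditional pseudo-density of $S_c$ exceeds the global value by a factor $\Omega(1/\binom{k}{i})$. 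Conditioning the pseudo-distribution on a suitable labeling of $\sigma$ and restricting the HD-walk to the link $X_\sigma$—which is itself a complete HD-walk on a two-sided $\gamma$-local-spectral expander of one lower dimension—yields a smaller unique games instance to which the same analysis recurses, terminating in the pseudorandom branch after at most $O(r(\varepsilon))$ rounds.

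The runtime bound follows from two components: the SoS solve at level $D$, costing $|X(k)|^{O(D)}$, and the enumeration over at most $O(r(\varepsilon))$ conditioning rounds, each ranging over roughly $|X|\cdot|\Sigma|$ face-label pairs; these combine into the claimed $|X(k)|^{\tO(\binom{k}{r}\varepsilon^{-1})}$ exponent. The main obstacle, and the technical heart of the proof, will be the SoS translation of the $\ell_2$-pseudorandom expansion theorem. Its natural proof passes through the approximate eigendecomposition of the HD-walk spectrum theorem and the local-to-global identity between link expansion and stripped eigenvalues, neither of which is manifestly a sum-of-squares statement; one must re-derive these facts via constant-degree polynomial identities whose degree depends only on $r(\varepsilon)$ rather than $k$, so that the SoS relaxation need only reach level $\poly(1/\tau)$. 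A secondary difficulty is verifying that conditioning on a face-label pair does not degrade the pseudo-distribution's SoS value by more than the $\poly(\tau)$ slack we can afford, which pins down the correlation threshold used in the non-pseudorandom branch and, together with the enumeration cost, determines the exact form of $\tau$.
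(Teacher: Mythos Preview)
Your proposal correctly identifies the overall BBKSS template and the central role of an SoS-certified $\ell_2$-pseudorandom expansion theorem, but the iteration mechanism you describe has a real gap.

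The step ``restricting the HD-walk to the link $X_\sigma$---which is itself a complete HD-walk on a two-sided $\gamma$-local-spectral expander of one lower dimension---yields a smaller unique games instance to which the same analysis recurses'' is not justified and is not what the paper does. The induced subgraph of $M$ on the vertex set $X_\sigma \subset X(k)$ is \emph{not} the walk you get by applying the same combinatorial definition of $M$ to the link complex; there is no reason it should again be an HD-walk, let alone one with the same ST-rank structure. So the recursion you propose has no obvious inductive invariant. Moreover, the bound of $O(r(\varepsilon))$ rounds is unsupported: even if the recursion made sense, conditioning on a single face-label pair does not drop the dimension of the complex by one in any useful way.

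What the paper actually does is stay on the \emph{same} constraint graph throughout. The Iterated Condition\&Round procedure repeatedly (i) locates an $r$-link $X_\tau$ on which the BBKSS shift-partition potential is provably large---this is where the SoS structure theorem is used, and it is a degree-$2$ statement, not degree-$r$---(ii) rounds a partial assignment on $X_\tau$ via Condition\&Round, and (iii) neutralizes those vertices in the pseudodistribution and continues. The reason these partial solutions can be glued into a good global assignment is precisely \Cref{thm:new-intro-local-vs-global}: every $r$-link has expansion $O(\varepsilon)$, so most edges incident to a solved link stay inside it. Your proposal never invokes this link-non-expansion property, which is the second of the two structural ingredients the paper isolates (property (2b) in the proof overview) and is indispensable for the patching step. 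Finally, your ``pseudorandom case'' claim---that if every pseudo-level set $S_c$ is $\ell_2$-pseudorandom then independent rounding succeeds---is not how the analysis proceeds; the argument goes through the approximate shift-partition potential rather than a direct dichotomy on the $S_c$, and an additional technical step (\Cref{lem:vert-edge-exp}) showing that link expansion is nearly uniform vertex-by-vertex is needed to transfer global potential to induced potential on the link.
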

It is worth giving a quantitative comparison of this result to the best known algorithms based on threshold rank \cite{arora2015subexponential,barak2011rounding,guruswami2011lasserre}. These algorithms run in time roughly exponential in the $(1 - O(\eps))$-threshold rank of $G$ to get similar soundness guarantees when $k=O(1)$.\footnote{This corresponds to the setting in which the dimension of the HD-walk is fixed, and the number of vertices in the complex grows.} Due to the poor threshold rank of HD-walks, this generally amounts to nearly exponential time ($2^{|X(k)|^{\poly(\eps)}}$), whereas our algorithm runs in time roughly $|X(k)|^{2^{O(k)}}$. In the regime of constant $k$, \Cref{thm:informal-intro-unique-games} therefore gives a polynomial time algorithm for such instances achieving an exponential improvement over threshold-rank based algorithms. We describe a few examples of such families below.

Since \Cref{thm:informal-intro-unique-games} can be a bit hard to interpret without additional knowledge of the high dimensional expansion literature, we end our discussion of algorithms for unique games with a few concrete examples. Perhaps the most basic example generalizing the Johnson graphs that fits into our framework are slices of the $q$-ary noisy hypercube. For constant-level slices, we improve over BRS from nearly exponential to polynomial. Further, the result is robust in the sense that it continues to hold even when the underlying complex is perturbed. This results in algorithms, for instance, for dense random sparsifications of these slices, or slices taken from a negatively correlated distribution over $\mathbb{Z}_q^n$ (or more generally slices of any spectrally-independent spin-system \cite{anari2020spectral}). Another interesting class of graphs we see significant speed-ups on are algebraic sparsifications of the Johnson graphs stemming from bounded-degree constructions of two-sided local-spectral expanders such as (cutoffs of) Lubotsky, Samuels, and Vishne's \cite{lubotzky2005explicit} Ramanujan complexes, or Kaufman and Oppenheim's \cite{kaufman2018construction} coset complex expanders. The resulting speedups on this class of graphs is somewhat more surprising than the above, since they exhibit substantially different structure in other aspects (they are, for instance, of bounded degree unlike the Johnson graphs).

\subsection{Connections to Hardness and the Grassmann Graphs}\label{sec:q-eposet}
We finish the discussion of our results by taking a deeper look at connections with recent progress on the UGC, and argue that our framework opens an avenue for further progress. The resolution of the 2-2 Games Conjecture \cite{subhash2018pseudorandom} hinged on a characterization of non-expanding sets on the Grassmann graph not dissimilar to what we have shown for two-sided local-spectral expanders. While we have focused above on HDX which are simplicial complexes, our work extends to a broader set of objects introduced by DDFH \cite{dikstein2018boolean} called expanding posets which includes class of objects includes expanding subsets of the Grassmann poset we call $q$-eposets (we refer the reader to \cite{dikstein2018boolean} for definitions and further discussion). 


\Cref{intro:new-l2-pr-sets-expand} and \Cref{cor:new-non-expansion} extend naturally to HD-walks on $q$-eposets. We state the latter result here since it follows without too much difficulty from the same arguments as in this paper, but the full details (and further generalizations to expanding posets) will appear in a companion paper.
\begin{corollary}[Non-expansion in $q$-eposet]\label{cor:intro-non-expansion-grassmann}
Let $(X,\Pi)$ be a two-sided $\gamma$-$q$-eposet with $\gamma$ sufficiently small, $M$ a $k$-dimensional HD-walk on $(X,\Pi)$. Then if $S \subset X(k)$ is a set of density $\alpha$ and expansion:
\[
\phi(S) < 1 - \alpha - O_{q,k}(\gamma) - \delta
\]
for some $\delta>0$ and $r=R_{\delta/2}(M)$, $S$ must be non-trivially correlated with some $i$-link for $1 \leq i \leq r$:
\[
\exists 1 \leq i \leq r, \tau \in X(i): \underset{X_\tau}{\mathbb{E}}[\mathbbm{1}_S] \geq \alpha + \frac{c_{\delta,r}}{{k \choose i}_q}
\]
where $c_{\delta,r}$ depends only on $\delta$ and the ST-rank $r$, and ${k \choose i}_q$ is the standard q-binomial coefficient:
\[
{k \choose i}_q = \prod\limits_{j=0}^{i-1}\frac{q^{k-j}-1}{q^{i-j}-1},
\]
\end{corollary}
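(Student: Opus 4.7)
The approach is to port the proof of Corollary~\ref{cor:new-non-expansion} (the simplicial-complex case) to the $q$-eposet setting by reusing the $\ell_2$-characterization of expansion and tracking how each ordinary binomial coefficient gets replaced by its $q$-analog. The template has three steps: (i) establish eigenstripping and the $\ell_2$-pseudorandom-sets-expand theorem for $q$-eposets, (ii) take the contrapositive to pin down a low-level index $i$ where the deviation function $L_{S,i}$ has large $\ell_2$ mass, and (iii) pigeonhole over $\tau \in X(i)$ to extract a single non-expanding link, with the $q$-binomial arising naturally from the $q$-eposet counting structure.

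Step (i) is mostly verification. The linear-algebraic approximate eigendecomposition theorem (Theorem~\ref{thm:intro-approx-ortho}) is setting-independent, so what one needs is that DDFH's decomposition $L^2(X(k)) = \bigoplus_{i=0}^k V^i$, which is constructed for general expanding posets including $q$-eposets, is an approximate eigendecomposition of any HD-walk $M$ on a two-sided $\gamma$-$q$-eposet with error $o_{q,k}(1)$ as $\gamma \to 0$. This reduces to bounding $\lVert M U_i^k \tilde f - \lambda_i(M) U_i^k \tilde f \rVert$ on harmonic functions $\tilde f \in L^2(X(i))$ using the $q$-weighted up and down operators, and parallels the simplicial computation up to $q$-dependent constants. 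Eigenstripping and a well-defined ST-rank $R_\delta(M)$ then follow, and re-running the proof of Theorem~\ref{intro:new-l2-pr-sets-expand} verbatim yields its $q$-eposet analog: any $(\varepsilon, R_\delta(M))$-$\ell_2$-pseudorandom set expands to at least $1 - \alpha - \delta - O_{q,k}(\varepsilon) - O_{q,k}(\gamma)$.

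For step (ii), take the contrapositive: if $\phi(S) < 1 - \alpha - O_{q,k}(\gamma) - \delta$, then for some $i \leq r = R_{\delta/2}(M)$ we have $\lVert L_{S,i} \rVert_2^2 \geq c_\delta \alpha$. For step (iii), produce a single $\tau \in X(i)$ with $\E_{X_\tau}[\id{S}] - \alpha \geq c_{\delta,r}/\binom{k}{i}_q$. The $q$-binomial enters because in a $q$-eposet the number of $i$-subfaces of a given $k$-face is exactly $\binom{k}{i}_q$, and this count governs the norm-shifting identity between the projection of $\id{S}$ onto the $i$-th eigenstrip (a function on $X(k)$) and the harmonic function on $X(i)$ that lifts to it under $U_i^k$. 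This is precisely the role played by $\binom{k}{i}$ in the simplicial pigeonhole step, and the same computation goes through once each ordinary binomial is replaced by its $q$-analog.

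The main obstacle is quantitative bookkeeping rather than conceptual novelty. One must verify that the approximate-eigendecomposition error, the constants in the $q$-eposet $\ell_2$ theorem, and the norm-shifting identity in the final pigeonhole step all track the correct $q$-dependence, and that the hidden $k$-dependence in $O_{q,k}(\cdot)$ remains acceptable. Since $q$-binomials grow exponentially in their arguments, looseness here could degrade the bound substantially, and care is required when adapting the expressions involving $\lambda_i(M)$ from Theorem~\ref{thm:intro-hdx-swap-canon} to their $q$-weighted counterparts. As the authors note, these checks are routine but detail-heavy, and the full treatment is deferred to a companion paper.
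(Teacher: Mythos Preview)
Your proposal is correct and matches the paper's intended approach: the paper does not actually prove this corollary in the body, stating only that it ``follows without too much difficulty from the same arguments as in this paper'' and deferring full details to a companion work. One small bookkeeping remark: the $q$-binomial $\binom{k}{i}_q$ enters already in step~(ii) via the $q$-analog of the projection bound $\langle \mathbbm{1}_S,\mathbbm{1}_{S,i}\rangle \le \binom{k}{i}_q \operatorname{Var}(D^k_i\mathbbm{1}_S)$ (the $q$-version of \Cref{lem:low-level-weight}/\Cref{cor:ell2-proj}), rather than in the final averaging step~(iii), which is the bare $\ell_\infty$-to-$\ell_2$ reduction $\max(D^k_i\mathbbm{1}_S-\alpha)\ge \operatorname{Var}(D^k_i\mathbbm{1}_S)/\alpha$ and carries no binomial factor on its own.
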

Since the Grassmann graphs are simply partial-swap walks on the Grassmann poset,\footnote{Seeing that they are HD-walks is non-trivial, and follows from the $q$-analog of work in \cite{alev2019approximating}.} \Cref{cor:intro-non-expansion-grassmann} provides a direct connection to the proof of the 2-2 Games Conjecture \cite{subhash2018pseudorandom}. Unfortunately, due to the dependence on $k$, this result is too weak to recover the proof. 

The dependence of both \Cref{cor:new-non-expansion} and \Cref{cor:intro-non-expansion-grassmann} on $k$ is a subtle but important point, so we'll finish the section by discussing it in a bit more detail. The particular dependence we get in the $\ell_2$-regime, ${k \choose i}$ for the Johnson and ${k \choose i}_q$ for the Grassmann, is tight and can be understood by examining the variance of $i$-links. While $i$-links are the prototypical example of an $\Omega(1)$-pseudorandom set in the $\ell_\infty$ regime, one can show that they are actually about $1/{k \choose i}$-pseudorandom in the $\ell_2$-regime (or $1/{k \choose i}_q$ for the Grassmann). Since links are non-expanding, any bound in the $\ell_2$-regime must have matching dependence on $k$ to make up for this fact. Indeed, one can use the same argument to show that a $k$-independent bound cannot exist in $\ell_2$-regime (even if we relax dependence on pseudorandomness, see \Cref{prop:ell2-k-dependence}). As a result, any $\ell_\infty$ to $\ell_2$ reduction like ours (which has no dependence on $k$) will always result in a final bound depending on $k$.

Finally, it's worth noting that while the $\ell_\infty$-regime escapes this particular issue since links are $\Omega(1)$-pseudorandom (and indeed that for the Johnson graphs, a $k$-independent bound is known \cite{khot2018small}), there is an additional consideration for the Grassmann graphs: there exist small, non-expanding $\ell_\infty$-pseudorandom sets \cite{dinur2018non}. The proof of the 2-2 Games Conjecture therefore relies on a finer-grained definition of local structure than links (called zoom-in zoom-outs) \cite{subhash2018pseudorandom}. While we cannot hope to apply exactly the same techniques to analyze this variant, we view our method's generality and simplicity as evidence that a deeper understanding of higher order random walks may be key to further progress on the UGC.

\section{Proof Overview}\label{sec:proof-overview}
In this section we give a proof overview of our main results. Throughout the section we will assume the complex is endowed with a uniform distribution, and will (usually) ignore error terms in the spectral parameter $\gamma$. The full details for weighted complexes and a careful treatment of the error terms is given in the main body along with a number of further generalizations.

\subsection{Eigenstripping and the HD-Level-Set Decomposition}
We start with a discussion of the techniques underlying \Cref{thm:new-intro-hdx-spectra} (Eigenstripping). At its core, this (and indeed all of) our results rely upon now-standard machinery for working on simplicial complexes called the averaging operators.
\begin{definition}[The Averaging Operators]
Let $X$ be a $d$-dimensional pure simplicial complex. For any $0 \leq k \leq d$, denote the space of functions $f: X(k) \to \mathbb{R}$ by $C_k$. The ``Up'' operator lifts $f \in C_k$ to $U_kf \in C_{k+1}$:
\[
\forall y \in X(k+1): U_kf(y) = \frac{1}{k+1}\sum\limits_{x \in X(k): x \subset y}f(x).
\]
The ``Down'' operator lowers functions $f \in C_{k+1}$ to $D_{k+1} f \in C_k$:
\[
\forall x \in X(k): D_{k+1}f(x) = \frac{1}{n-k}\sum\limits_{y \in X(k+1): y \supset x} f(y).
\]
Since it is often useful to compose these operators, we will use the shorthand $U^k_i=U_{k-1}\ldots U_i$ and $D^k_i = D_{i+1}\ldots D_{k}$ to denote the composed averaging operators which raise and lower functions between $C_i$ and $C_k$ by averaging.
\end{definition}
In fact, the averaging operators are crucial even to \textit{defining} higher order random walks. We discuss this definition in greater detail in \Cref{sec:prelims}, and for now settle for noting that a basic class of HD-walks are given simply by composing $U$ and $D$. For instance, the canonical walk $N_k^i$ is the composition $D_k^{k+i}U^{k+i}_k$, and the partial swap walks $S_k^i$ turn out to be an affine combinations of the $N_k^i$ \cite{alev2019approximating}.

The proof of \Cref{thm:new-intro-hdx-spectra} relies on a useful decomposition for functions on simplicial complexes based upon the averaging operators we call the \textit{HD-level-set Decomposition} recently introduced by Dikstein, Dinur, Filmus, and Harsha \cite{dikstein2018boolean}. The idea is to break functions into components coming from each level of the complex, lifted by averaging to the top level.
\begin{theorem}[HD-Level-Set Decomposition, Theorem 8.2 \cite{dikstein2018boolean}]\label{thm:decomp-ddfh}
Let $X$ be a $d$-dimensional two-sided $\gamma$-local-spectral expander, $\gamma < \frac{1}{d}$, $0 \leq k \leq d$, and let: 
\[
H^0=C_0, H^i=\text{Ker}(D_i), V_k^i = U^{k}_iH^i.
\]
Then:
\[
C_k = V^0_k \oplus \ldots \oplus V^k_k.
\]
In other words, every $f \in C_k$ has a unique decomposition $f=f_0+\ldots+f_k$ such that $f_i=U^{k}_ig_i$ for $g_i \in \text{Ker}(D_i)$.
\end{theorem}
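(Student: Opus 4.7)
The plan is to proceed by induction on $k$. The base case $k=0$ is trivial: $V^0_0 = U^0_0 H^0 = H^0 = C_0$ since $U^0_0$ is the empty composition. For the inductive step, suppose $C_{k-1} = V^0_{k-1} \oplus \cdots \oplus V^{k-1}_{k-1}$. The entire problem then reduces to establishing the direct sum decomposition
\[
C_k = \operatorname{Im}(U_{k-1}) \oplus \operatorname{Ker}(D_k),
\]
because $\operatorname{Ker}(D_k) = H^k = V^k_k$, while applying $U_{k-1}$ to the inductive decomposition of $C_{k-1}$ gives $\operatorname{Im}(U_{k-1}) = U_{k-1}(V^0_{k-1}) + \cdots + U_{k-1}(V^{k-1}_{k-1}) = V^0_k + \cdots + V^{k-1}_k$.

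My approach to this reduction is to show that the composition $D_k U_{k-1} : C_{k-1} \to C_{k-1}$ is invertible whenever $\gamma < 1/d$. Given invertibility, every $f \in C_k$ admits the explicit decomposition
\[
f = U_{k-1} (D_k U_{k-1})^{-1} D_k f + \bigl( f - U_{k-1} (D_k U_{k-1})^{-1} D_k f \bigr),
\]
with the first summand in $\operatorname{Im}(U_{k-1})$ and the second annihilated by $D_k$. Invertibility of $D_k U_{k-1}$ also makes $U_{k-1}$ injective, which both forces $\operatorname{Im}(U_{k-1}) \cap \operatorname{Ker}(D_k) = 0$ (turning the sum into a direct sum) and transfers uniqueness from the inductive decomposition of $C_{k-1}$ up to $C_k$.

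The crux of the argument is therefore the invertibility of $D_k U_{k-1}$. I plan to use the approximate commutation identity
\[
D_{i+1} U_i = \alpha_i U_{i-1} D_i + \beta_i I + E_i
\]
that holds in any two-sided $\gamma$-local-spectral expander, where $\alpha_i, \beta_i$ are the explicit positive constants coming from the complete complex and $E_i$ has operator norm $O(\gamma)$. Iterating this relation on a vector of the form $U^{k-1}_i g$ with $g \in H^i$ (so that $D_i g = 0$) collapses the $U_{i-1}D_i$ term whenever the downward operator reaches level $i$, leaving $D_k U_{k-1}$ acting on $V^i_{k-1}$ as a nonzero scalar $\mu_{k,i}$ modulo an accumulated error of order $O(k\gamma)$. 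The hypothesis $\gamma < 1/d$ guarantees this error stays strictly below $\min_i |\mu_{k,i}|$, so $D_k U_{k-1}$ is invertible on each inductive summand and hence on all of $C_{k-1}$.

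The main obstacle I expect is the quantitative bookkeeping in this last step: propagating the $O(\gamma)$ error through up to $d$ applications of the commutation relation and verifying that the resulting bound remains strictly smaller than every scalar $|\mu_{k,i}|$ for $0 \le i \le k-1$, which is precisely where the hypothesis $\gamma < 1/d$ must be used in an essential way rather than as a convenient simplification. Once this quantitative invertibility is in hand the induction closes immediately.
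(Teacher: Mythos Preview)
The paper does not prove this theorem; it is quoted from \cite{dikstein2018boolean}. Your inductive framework---reducing to $C_k = \operatorname{Im}(U_{k-1}) \oplus \operatorname{Ker}(D_k)$ and deducing this from invertibility of $D_kU_{k-1}$---is exactly the standard approach and is how DDFH proceed.

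Where you make life harder than necessary is in the invertibility step. You propose to iterate the commutation relation down to level $i$ on each inductive summand $V^i_{k-1}$ and then argue that the accumulated error stays below $\min_i|\mu_{k,i}|$. This is the step you flag as the obstacle, and rightly so: the smallest scalar is $\mu_{k,k-1}=1/k$, the accumulated error is of order $k\gamma$ in the $\|g_i\|$-norm, and converting between $\|g_i\|$ and $\|f_i\|$ via Lemma~\ref{lemma:hdx-fvsg-body} introduces a $\sqrt{\binom{k-1}{i}}$ factor. Getting this to close with only $\gamma<1/d$ is not obvious, and you also have to worry about combining invertibility across non-orthogonal summands.

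There is a one-line argument that avoids all of this. A \emph{single} application of the relation \eqref{eq:hdx-eposet} at level $k-1$ gives
\[
D_kU_{k-1} = \tfrac{1}{k}I + \tfrac{k-1}{k}\,U_{k-2}D_{k-1} + E,\qquad \|E\|\le\gamma.
\]
Since $U_{k-2}$ is the adjoint of $D_{k-1}$, the middle term equals $\tfrac{k-1}{k}D_{k-1}^{*}D_{k-1}\succeq 0$, and $E$ is self-adjoint (as the other three operators are). Hence $D_kU_{k-1}\succeq(\tfrac{1}{k}-\gamma)I$, which is positive definite precisely because $\gamma<1/d\le 1/k$. This is where the hypothesis $\gamma<1/d$ enters sharply, and no iteration or bookkeeping is needed.
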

The HD-Level-Set Decomposition is particularly useful not only for its rigid combinatorial structure, but also its spectral properties. Namely, one familiar with the Johnson Scheme might notice that when $X$ is the complete complex, this decomposition exactly gives the eigenspaces of the Johnson Scheme. DDFH gave an approximate extension of this result to the basic ``upper'' walk $D_{k+1}U_k$, proving that the $V_k^i$ are approximate eigenspaces.

The proof of \Cref{thm:new-intro-hdx-spectra} follows from combining an extension of this result to all HD-walks (see \Cref{prop:hdx-eig-vals}) with \Cref{thm:intro-approx-ortho}, which states that any approximate eigendecomposition strictly controls the spectrum of the underlying operator. However, since the proof of \Cref{thm:intro-approx-ortho} is purely linear algebraic (and is essentially irrelevant for understanding our core results on expansion and unique games), we defer detailed discussion of the result to \Cref{sec:linear-algebra}.
\subsection{Characterizing Expansion}\label{sec:overview-expansion}
We now take a look at how this combinatorial understanding of the spectra of higher order random walks allows us to characterize their edge-expansion. We start with a standard observation, the expansion of a set can be written as the inner product of its indicator function. In other words, if $X$ is a simplicial complex and $M$ is an HD-walk on $X(k)$, the expansion of any $S \subset X(k)$ with respect to $M$ may be written as:
\[
\phi(M,\mathbbm{1}_S) = 1 - \frac{1}{\mathbb{E}[\mathbbm{1}_S]}\langle \mathbbm{1}_S,M\mathbbm{1}_S \rangle.
\]
The key is then to notice that by the bilinearity of inner products, we can expand the right-hand side in terms of the HD-Level-Set Decomposition. In particular, writing $\mathbbm{1}_S=\sum\limits_{i=0}^k \mathbbm{1}_{S,i}$ for $\mathbbm{1}_{S,i} \in V_k^i$, we have:
\[
\phi(M,\mathbbm{1}_S) = 1 - \frac{1}{\mathbb{E}[\mathbbm{1}_S]}\sum\limits_{i=0}^k \langle \mathbbm{1}_S, M \mathbbm{1}_{S,i}\rangle.
\]
Finally, since we know each $V_k^i$ is approximately an eigenstrip concentrated around some $\lambda_i$, up to additive error in $\gamma$ we can simplify this to:
\begin{equation}\label{eq:overview-expansion}
    \phi(M,\mathbbm{1}_S) \approx 1 - \frac{1}{\mathbb{E}[\mathbbm{1}_S]}\sum\limits_{i=0}^k \lambda_i \langle \mathbbm{1}_S, \mathbbm{1}_{S,i} \rangle.
\end{equation}
Thus we see that understanding the expansion of $S$ comes down to the interplay between its projection onto each level of the complex and their corresponding approximate eigenvalues. Our characterization of expansion combines this observation with the combinatorial and spectral structure of the HD-Level-Set Decomposition. We devote the rest of the section to sketching the proofs of \Cref{thm:new-intro-local-vs-global} (local expansion vs global spectra), \Cref{intro:new-l2-pr-sets-expand} (pseudorandom sets expand), and our reduction from the $\ell_\infty$-variant to $\ell_2$-variant.

\paragraph{Proof sketch of \Cref{thm:new-intro-local-vs-global}:} Recall that \Cref{thm:new-intro-local-vs-global} shows a tight inverse relation between the expansion of links and the spectra of $M$. Given $\tau \in X(j)$, we wish to examine the indicator function $\mathbbm{1}_{X_\tau}$ of the link $X_\tau$. First, notice that since $\mathbbm{1}_{X_\tau} = {k \choose j}U^k_j\mathbbm{1}_\tau$, it's easy to see that $\mathbbm{1}_{X_\tau} \in V_k^0 \oplus \ldots \oplus V_k^j$. We prove something stronger, that $\mathbbm{1}_{X_\tau}$ in fact lies almost entirely in $V_k^j$. In particular, we show that for every $i \neq j$, $\mathbbm{1}_{X_\tau}$ has almost no projection onto $V_k^i$:
\[
\langle \mathbbm{1}_{X_\tau}, \mathbbm{1}_{X_\tau,i}\rangle \leq O_k(\gamma)\mathbb{E}[\mathbbm{1}_{X_\tau}].
\]
Since the HD-Level-Set Decomposition is approximately orthogonal \cite{dikstein2018boolean} (more generally, this is true of any approximate eigendecomposition, see \Cref{lemma:approx-orthog}), this implies that the mass on level $j$ is around $(1 \pm O_k(\gamma)) \mathbb{E}[\mathbbm{1}_{X_\tau}]$, and plugging these observations into \Cref{eq:overview-expansion} we get:
\begin{align*}
        \phi(M,\mathbbm{1}_S) &\approx 1 - \frac{1}{\mathbb{E}[\mathbbm{1}_S]}\sum\limits_{i=0}^k \lambda_i \langle \mathbbm{1}_S, \mathbbm{1}_{S,i} \rangle\\
        &\approx 1 - \frac{\lambda_i}{\mathbb{E}[\mathbbm{1}_S]}\langle \mathbbm{1}_S, \mathbbm{1}_{S,i} \rangle\\
        &\approx 1 - \lambda_i
\end{align*}
where we have ignored some $O_k(\gamma)$ error terms.
\paragraph{Proof sketch of \Cref{intro:new-l2-pr-sets-expand}:} Proving that pseudorandom functions expand is a bit more involved, but still at its core revolves around the analysis of \Cref{eq:overview-expansion}. In particular, since the approximate eigenvalues of $M$ monotonically decrease (i.e.\ $ \forall i \geq j$, $\lambda_i(M) \leq \lambda_j(M)$) then for any $\delta>0$ we can simplify \Cref{eq:overview-expansion} to:
\begin{equation}\label{eq:overview-exp-2}
    \phi(M,\mathbbm{1}_S) \lesssim 1 - \frac{1}{\mathbb{E}[\mathbbm{1}_S]}\sum\limits_{i=0}^{R_\delta(M)-1} \lambda_i \langle \mathbbm{1}_S, \mathbbm{1}_{S,i} \rangle + \delta,
\end{equation}
where we recall $R_\delta(M)$ is the ST-rank, denoting the number of eigenstrips of $M$ with eigenvalues greater than $\delta$. Using \Cref{thm:new-intro-hdx-spectra}, it is possible to show (see \Cref{prop:eig-decrease}) that essentially all HD-walks of interest satisfy this property. With this in hand, proving \Cref{intro:new-l2-pr-sets-expand} boils down to characterizing the projection of a set $S$ onto low levels of the complex by its behavior on links.
\begin{theorem}[Pseudorandomness bounds low-level weight (Informal \Cref{thm:body-local-spec-proj})]\label{thm:overview-pr-projection}
Let $X$ be a sufficiently strong $d$-dimensional $\gamma$-local-spectral expander. For any $i \leq k \leq d$ let $S \subset X(k)$ be a $(\varepsilon,i)$-$\ell_2$-pseudorandom set. Then for all $j \leq i$:
\[
\langle \mathbbm{1}_S, \mathbbm{1}_{S,j} \rangle \lesssim {k \choose i}\varepsilon\mathbb{E}[\mathbbm{1}_S]
\]
\end{theorem}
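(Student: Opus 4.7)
The plan is to bound $\|\mathbbm{1}_{S,j}\|^2$ (which by approximate orthogonality of the HD-Level-Set Decomposition equals $\langle\mathbbm{1}_S,\mathbbm{1}_{S,j}\rangle$ up to an $O_k(\gamma)$-error term) by pulling everything down to level $j$ via the adjoint relation $(U_j^k)^{*}=D_k^j$ and applying pseudorandomness there. Throughout I focus on $1\leq j \leq i$; the case $j=0$ gives the trivial mass $\mathbb{E}[\mathbbm{1}_S]^2$, which is handled separately in the main expansion calculation (as the $-\alpha$ term).

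The first step is to decompose $\mathbbm{1}_{S,j}=U_j^k g_j$ with $g_j\in H^j=\mathrm{Ker}(D_j)$ via \Cref{thm:decomp-ddfh}, and rewrite
\[
\langle\mathbbm{1}_S,\mathbbm{1}_{S,j}\rangle \;=\; \langle\mathbbm{1}_S,U_j^k g_j\rangle \;=\; \langle D_k^j\mathbbm{1}_S,\,g_j\rangle.
\]
Since $D_k^j\mathbbm{1}_S(\tau)=\mathbb{E}_{X_\tau}[\mathbbm{1}_S]$ agrees with the link-deviation $L_{S,j}(\tau)$ up to the constant $\mathbb{E}[\mathbbm{1}_S]$, and since $g_j\in\mathrm{Ker}(D_j)$ is orthogonal to constants for $j\geq 1$, the right-hand side equals $\langle L_{S,j},g_j\rangle$. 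Cauchy--Schwarz together with the $(\varepsilon,i)$-pseudorandomness hypothesis then yields
\[
\langle\mathbbm{1}_S,\mathbbm{1}_{S,j}\rangle \;=\; \langle L_{S,j},g_j\rangle \;\leq\; \|L_{S,j}\|\cdot\|g_j\| \;\leq\; \sqrt{\varepsilon\,\mathbb{E}[\mathbbm{1}_S]}\cdot\|g_j\|.
\]

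The remaining, and main, technical step is to control $\|g_j\|$ in terms of $\|\mathbbm{1}_{S,j}\|$. The key ingredient is that $D_k^j U_j^k$ acts on $H^j$ as approximately the scalar $\binom{k}{j}^{-1}$, with error $O_k(\gamma)$. This is provable by iterating the approximate commutation relation $D_{m+1}U_m = \tfrac{1}{m+1}I + \tfrac{m}{m+1}U_{m-1}D_m + O(\gamma)$ characteristic of two-sided local-spectral expanders, using $D_j g_j=0$ to terminate the recursion at the bottom level. Consequently $\|\mathbbm{1}_{S,j}\|^2 = \langle g_j, D_k^j U_j^k g_j\rangle \gtrsim \binom{k}{j}^{-1}\|g_j\|^2$, so $\|g_j\|\lesssim \sqrt{\binom{k}{j}}\cdot\|\mathbbm{1}_{S,j}\|$. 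Substituting back, replacing $\|\mathbbm{1}_{S,j}\|^2$ by $\langle\mathbbm{1}_S,\mathbbm{1}_{S,j}\rangle$ via approximate orthogonality, and dividing through by $\sqrt{\langle\mathbbm{1}_S,\mathbbm{1}_{S,j}\rangle}$ gives
\[
\langle\mathbbm{1}_S,\mathbbm{1}_{S,j}\rangle \;\lesssim\; \binom{k}{j}\,\varepsilon\,\mathbb{E}[\mathbbm{1}_S] \;\leq\; \binom{k}{i}\,\varepsilon\,\mathbb{E}[\mathbbm{1}_S],
\]
where the last inequality uses monotonicity of $\binom{k}{\cdot}$ in the relevant regime $j\leq i\leq k/2$ (the intended setting, since $i$ plays the role of the ST-rank and is expected to be small). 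The main obstacle is establishing the precise $\binom{k}{j}^{-1}$ scaling for the action of $D_k^j U_j^k$ on $H^j$ and carefully bookkeeping the $O_k(\gamma)$ errors from approximate orthogonality, approximate commutation, and approximate eigenvalues so that all remain subdominant to the final bound.
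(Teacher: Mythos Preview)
Your argument is correct and reaches the same bound, but it proceeds along a somewhat different route from the paper. The paper's proof analyzes the full quadratic form $\langle \mathbbm{1}_S, U^k_j D^k_j \mathbbm{1}_S\rangle$ as an HD-walk, expands it over the entire HD-Level-Set Decomposition, and computes the approximate eigenvalue $\binom{j}{\ell}/\binom{k}{\ell}$ of $U^k_j D^k_j$ on every level $V_k^\ell$; this yields a two-sided near-equality $\mathrm{Var}(D^k_j\mathbbm{1}_S)\approx\sum_{\ell=1}^j \tfrac{\binom{j}{\ell}}{\binom{k}{\ell}}\langle \mathbbm{1}_S,\mathbbm{1}_{S,\ell}\rangle$, from which the one-sided bound follows by dropping all but the $\ell=j$ term (using approximate orthogonality to lower-bound the discarded terms). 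Your argument is more direct: you bypass the full eigenvalue computation by writing $\langle \mathbbm{1}_S,\mathbbm{1}_{S,j}\rangle=\langle L_{S,j},g_j\rangle$, applying Cauchy--Schwarz, and invoking only the single scaling $\|g_j\|^2\approx\binom{k}{j}\|\mathbbm{1}_{S,j}\|^2$ (which is exactly \Cref{lemma:hdx-fvsg-body} in the paper). The paper's approach buys a tight two-sided relation that is reused for the tightness analysis and, more importantly, for the Sum-of-Squares version in \Cref{thm:structure-SoS}; your Cauchy--Schwarz step would not obviously port to a low-degree SoS proof without further work. Conversely, your route is leaner for the bare informal statement and avoids computing eigenvalues on levels $\ell<j$. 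One minor point: your final step upper-bounding $\binom{k}{j}$ by $\binom{k}{i}$ needs $i\leq k/2$, which you note; the paper's formal version (\Cref{cor:ell2-proj}) simply states the level-$j$ bound with $\binom{k}{j}$ and does not need this monotonicity.
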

\Cref{intro:new-l2-pr-sets-expand} follows immediately from plugging this result into \Cref{eq:overview-exp-2}. Perhaps surprisingly, \Cref{thm:overview-pr-projection} itself follows without too much difficulty from combining analysis of the averaging operators and our spectral analysis of the HD-Level-Set Decomposition (namely \Cref{thm:new-intro-hdx-spectra}). In particular, given a set $S \subset X(k)$, the idea is to examine the lowered indicator function $D^k_j\mathbbm{1}_S$, which exactly gives the expectation of $S$ over $j$-links, that is for any $\tau \in X(j)$:
\[
D^k_j\mathbbm{1}_S(\tau) = \underset{X_\tau}{\mathbb{E}}[\mathbbm{1}_S].
\]
Proving \Cref{intro:new-l2-pr-sets-expand} then corresponds to lower bounding $\text{Var}(D^k_j\mathbbm{1}_S)$ by some function of the projection of $S$ onto level $j$. In fact, it is possible to exactly understand this connection. The idea is to reduce the problem to a spectral analysis of HD-walks using the adjointness of $D^k_j$ and $U^k_j$ (see e.g.\ \cite{dikstein2018boolean}):
\begin{align*}
    \text{Var}(D^k_j\mathbbm{1}_S) &= \langle D^k_j\mathbbm{1}_S, D^k_j\mathbbm{1}_S \rangle - \mathbb{E}[D^k_j\mathbbm{1}_S]^2\\
    &= \langle \mathbbm{1}_S, U^k_jD^k_j\mathbbm{1}_S\rangle - \mathbb{E}[\mathbbm{1}_S]^2\\
    &=\sum\limits_{\ell=0}^k \langle \mathbbm{1}_S, U^k_jD^k_j\mathbbm{1}_{S,\ell} \rangle - \mathbb{E}[\mathbbm{1}_S]^2\\
    &=\sum\limits_{\ell=1}^k \langle \mathbbm{1}_S, U^k_jD^k_j\mathbbm{1}_{S,\ell} \rangle.
\end{align*}
Since $U^k_jD^k_j$ is an HD-walk, \Cref{thm:new-intro-hdx-spectra} implies that $U^k_jD^k_j\mathbbm{1}_{S,\ell} \approx \lambda_\ell \mathbbm{1}_{S,\ell}$ for some approximate eigenvalue $\lambda_\ell$. In the formal version of \Cref{thm:new-intro-hdx-spectra} we compute the approximate eigenvalues of such walks, and in this case a simple computation shows that $\lambda_\ell = \frac{{j \choose \ell}}{{k \choose \ell}}$. As a result, we get a tight connection between $\text{Var}(D^k_j\mathbbm{1}_S)$ and the projection of $S$ onto low levels of the complex:
\begin{align*}
    \text{Var}(D^k_j\mathbbm{1}_S)
    &\approx \sum\limits_{\ell=1}^k \frac{{j \choose \ell}}{{k \choose \ell}}\langle \mathbbm{1}_S,\mathbbm{1}_{S,\ell} \rangle
\end{align*}
where we have ignored factors in $\gamma$. The proof then follows from noting that the right-hand side is approximately lower bounded by $\frac{1}{{k \choose j}}\langle \mathbbm{1}_S, \mathbbm{1}_{S,j} \rangle$ by near orthogonality of the HD-Level-Set decomposition \cite{dikstein2018boolean}. 

\paragraph{$\ell_\infty$ to $\ell_2$ reduction:}
We end with a simple observation that generalizes our results to the $\ell_\infty$-variant characterization via reduction. The idea is that the 2-norm may be re-written as a (normalized) expectation over a modified distribution:
\[
\frac{1}{\mathbb{E}[\mathbbm{1}_S]}\text{Var}(D^k_j\mathbbm{1}_S)
= \underset{\Pi_S}{\mathbb{E}}[D^k_j \mathbbm{1}_S] - \mathbb{E}[\mathbbm{1}_S] \leq \max(D^k_j\mathbbm{1}_S - \mathbb{E}[\mathbbm{1}_S])
\]
where $\Pi_S$ is the distribution on $X(j)$ naturally induced by restricting $\Pi$ to the support of $S$ (see \Cref{sec:pseudorandomness} for further details). It immediately follows that any $(\varepsilon,\ell)$-$\ell_\infty$-pseudorandom set is also $(\varepsilon,\ell)$-$\ell_2$-pseudorandom.

\subsection{Playing Unique Games}\label{sec:ug-proof-overview}
We end the section with a discussion of \Cref{thm:informal-intro-unique-games}, an application of our structural theorems to algorithms for unique games. We follow the algorithmic framework of \cite{bafna2020playing} for playing unique games on Johnson graphs, a special case of HD-walks on the complete complex. We generalize their algorithm and its analysis to HD-walks over all (sufficiently strong) two-sided local spectral expanders by abstracting the algorithm into the broader local-to-global paradigm for HDX:
\begin{enumerate}
    \item For some $r$, break the UG instance down into sub-instances over $r$-links of the complex.
    \item Solve the local sub-instances and patch together the solutions into a good global solution.\footnote{This is only informally speaking, as the actual algorithm is iterative and repeats these two steps many times to obtain a good solution for the whole graph}
\end{enumerate}
For their underlying algorithmic machinery, BBKSS rely heavily on a well-studied paradigm known as the Sum-of-Squares (SoS) semidefinite programming hierarchy. We won't go into too much detail about this paradigm here (see \Cref{sec:SoS-background} for details), and for the moment it suffices to say that analysing SoS algorithms relies on porting the proofs of relevant inequalities (specific to the problem at hand) into the Sum-of-Squares proof system, a restricted proof system for proving polynomial inequalities. We leverage the BBKSS framework to solve unique games on HDX by observing that their algorithm's analysis can be generalized to rely on two core structural properties true of the graphs underlying HD-walks:
\begin{enumerate}
    \item A low-degree Sum-of-Squares proof that non-expanding sets have high variance in size across links.
    \item For every small enough $\eps$, the existence of some $r=r(\varepsilon)$ such that:
    \begin{enumerate}
        \item The $(r+1)$-st largest (distinct) stripped-eigenvalue of the constraint graph is small:
        \[ 
        \lambda_{r}\leq 1-\Omega(\varepsilon)
        \]
        \item The expansion of any $r$-link is small as well:
        \[
        \forall \tau \in X(r): \Phi(X_\tau) \leq O(\varepsilon).
        \]
    \end{enumerate}
\end{enumerate}
In essence, properties (1) and (2a) ensure that there are good local solutions at level $r$, and property (2b) ensures that these solutions can be patched together without too much loss. Given the above properties, the proof of \Cref{thm:informal-intro-unique-games} follows the BBKSS analysis framework, but is more technical as properties that hold for Johnson graphs generally only hold approximately for HD-walks. 

The novelty in our analysis lies mostly in proving these two properties. Luckily, we have already done most of the work! Property (1) is simply an SoS variant of \Cref{intro:new-l2-pr-sets-expand}, which we prove by developing SoS versions of the now standard machinery for the HD-Level-Set Decomposition from \cite{dikstein2018boolean}. The proof then follows essentially as discussed in \Cref{sec:overview-expansion}. The second property is slightly more subtle. This parameter, found in \cite{bafna2020playing} by direct computation on the Johnson graphs, determines both the soundness and runtime of their algorithm. In fact, our framework completely demystifies its existence: since $\lambda_r$ and the expansion of $r$-links are inversely correlated by \Cref{thm:new-intro-local-vs-global}, $r$ is exactly the ST-Rank of the underlying constraint graph! As a result, using these properties and generalizing the analysis of BBKSS results in \Cref{thm:informal-intro-unique-games}: an algorithm for unique games on HD-walks with approximation and runtime guarantees dependent on ST-Rank.

\section*{Organization}
In \Cref{sec:related-work} we discuss related work on HD-walks, Unique Games, and CSP approximation. In \Cref{sec:prelims} we give requisite background and notation. In \Cref{sec:linear-algebra} we prove that approximate eigendecompositions tightly control spectral structure. In \Cref{sec:hdx-spectra}, we show that the combinatorial decomposition of \cite{dikstein2018boolean} is an approximate eigendecomposition for any HD-walk and compute its corresponding approximate spectra, thereby determining the spectral structure of HD-walks. In \Cref{sec:pseudorandomness} we prove \Cref{thm:overview-pr-projection}, showing how pseudorandomness controls projections onto any HD-walks eigendecomposition. In \Cref{sec:expansion}, we use this fact to give a tight characterization of edge expansion in HD-walks. Finally, in \Cref{sec:unique-games} we give a sum-of-squares variant of our results and show they imply efficient algorithms for unique games over HD-walks.

\section{Related Work}\label{sec:related-work}

\paragraph{Higher Order Random Walks:}
The spectral structure of higher order random walks has seen significant study in recent years, starting with the work of Kaufman and Oppenheim \cite{kaufman2020high} who proved bounds on the spectra of $N^1_k$ on one-sided local-spectral expanders. Their result lead not only to the resolution of the Mihail-Vazirani conjecture \cite{anari2019log}, but to a number of further breakthroughs in sampling algorithms via a small but consequential improvement on their bound by Alev and Lau \cite{alev2020improved}. The spectral structure of $N^1_k$ on the stronger two-sided local-spectral expanders was further studied by DDFH \cite{dikstein2018boolean} who introduced the HD-Level-Set Decomposition, and Kaufman and Oppenheim \cite{kaufman2020high} who introduced a distinct approximate eigendecomposition with the benefit of orthogonality (though this came at the cost of additional combinatorial complexity). In recent work, Kaufman and Sharakanski \cite{kaufman2020chernoff} claim that these two decompositions are equivalent on sufficiently strong two-sided $\gamma$-local-spectral expanders, but their proof relies on \cite[Theorem 5.10]{kaufman2020high} which has a non-trivial error. Indeed, it is possible to construct arbitrarily strong two-sided local-spectral expanders for which the HD-Level-Set Decomposition is not orthogonal (see \Cref{App:ortho}), so their result cannot hold.\footnote{It is worth noting that the main results of \cite{kaufman2020high,kaufman2020chernoff} are unaffected by this error, as an approximate version of \cite[Theorem 5.10]{kaufman2020high} remains true and is sufficient for their purposes.} Finally, Alev, Jeronimo, and Tulsiani \cite{alev2019approximating} showed that the HD-Level-Set Decomposition is an approximate eigendecomposition (in a weaker sense than we require) for general HD-walks, a result we strengthen in \Cref{sec:hdx-spectra}. For further information on these prior works and their applications, the interested reader should see \cite{AlevThesis}.

\paragraph{Unique Games:}
The study of unique games has played a central role in hardness-of-approximation since Khot's \cite{khot2002power} introduction of the Unique Games Conjecture. One line of work towards \textit{refuting} the UGC focuses on building efficient algorithms for unique games for restricted classes of constraint graphs based off of spectral or spectrally-related properties; these include works employing spectral expansion \cite{arora2008unique,makarychev2010play}, threshold rank \cite{arora2015subexponential,barak2011rounding,guruswami2011lasserre,kolla2011spectral}, hypercontractivity \cite{barak2012hypercontractivity}, and certified small-set-expansion or characterized non-expansion \cite{bafna2020playing}. Our work continues to expand this direction with polynomial-time algorithms for (affine) unique games over HD-walks and the introduction of ST-Rank. On the other hand, recent work towards \textit{proving} the UGC has focused on characterizing non-expanding sets in structures such as the Grassmann \cite{subhash2018pseudorandom,khot2017independent,dinur2018towards,dinur2018non,barak2018small,khot2018small, subhash2018pseudorandom} and Shortcode \cite{barak2018small,subhash2018pseudorandom} graphs. Our spectral framework based on HD-walks and the HD-Level-Set Decomposition provides a more general method to approach this direction than previous Fourier analytic machinery.

\paragraph{CSPs on HDXs:} Finally, it is worth noting a related, recent vein of work connecting high dimensional expansion, Sum of Squares, and CSP-approximation. In particular, Alev, Jeronimo, and Tulsiani \cite{alev2019approximating} recently showed that for $k>2$, certain natural $k$-CSP's on two-sided local-spectral expanders can be efficiently approximated by Sum of Squares. Conversely, Dinur, Filmus, Harsha, and Tulsiani \cite{dinur2020explicit} later used cosystolic expanders (a stronger variant) to build explicit instances of 3-XOR that are hard for SoS. Both these works focus on k-CSPs for $k \geq 3$, where the variables of the CSP are level 1 of the complex and the constraints are defined using the higher levels of the complex; therefore these works do not encompass unique games. At a high level, the techniques of \cite{alev2019approximating} are based on~\cite{barak2011rounding}, and they generalize the BRS algorithm for 2-CSPs to k-CSPs. While these works are not directly related to ours since their definition do not encompass unique games, we see a similar pattern where high dimensional expanding structure is useful both for hardness of and algorithms for CSP-approximation.

\section{Preliminaries and Notation}\label{sec:prelims}

\subsection{Local-Spectral Expanders and Higher Order Random Walks}
We now overview the theory of local-spectral expanders and higher order random walks in more formality than our brief treatment in \Cref{sec:results}.
\subsubsection{Two-Sided Local-Spectral Expanders}
Two-sided local-spectral expanders are a generalization of spectral expander graphs to weighted, uniform hypergraphs, which we will think of as simplicial complexes. 

\begin{definition}[Weighted, Pure Simplicial Complex]
A $d$-dimensional, pure simplicial complex $X$ on $n$ vertices is a subset of ${[n] \choose d}$. We will think of $X$ as the downward closure of these sets, and in particular define the level $X(i)$ as:
\[
X(i) = \left \{ s \in {[n] \choose i}~\bigg\rvert~\exists t \in X, s \subseteq t \right \}.
\]
We call the elements of $X(i)$ $i$-faces.\footnote{We differ here from much of the HDX literature where an $i$-face is often defined to have $i+1$ elements. Since our work is mostly combinatorial rather than topological or geometric, defining an $i$-face to have $i$ elements ends up being the more natural choice.} A simplicial complex is \textit{weighted} if its top level faces are endowed with a distribution $\Pi$. This induces a distribution over each $X(i)$ by downward closure:
\begin{equation}\label{eq:distr-down-clos}
\Pi_i(x) = \frac{1}{i+1}\sum\limits_{y \in X(i+1):y \supset x}  \Pi_{i+1}(y),
\end{equation}
where $\Pi_d = \Pi$.
\end{definition}
Two-sided local-spectral expanders are based upon a phenomenon called \textit{local-to-global} structure, which looks to propogate information on local neighborhoods of a simplicial complex called \textit{links} to the entire complex.
\begin{definition}[Link]
Given a weighted, pure simplicial complex $(X,\Pi)$, the \textbf{link} of an $i$-face $s \in X(i)$ is the sub-complex containing $s$, i.e.
\[
X_s = \{t \setminus s \in X~|~ t \supseteq s\}.
\]
$\Pi$ induces a distribution over $X_s$ by normalizing over top-level faces which we denote by $\Pi_s$. When considering a function on the $k$-th level of a complex, we also use $X_s$ to denote the $k$-faces which contain $s$ as long as it is clear from context, and refer to $X_s$ as an $i$-link if $s \in X(i)$.
\end{definition}
Two-sided local-spectral expansion simply posits that the graph underlying every link\footnote{The underlying graph of a simplicial complex $X$ is its $1$-skeleton $(X(1),X(2))$.} must be a two-sided spectral expander.
\begin{definition}[Local-spectral expansion]
A weighted, pure simplicial complex $(X,\Pi)$ is a two-sided $\gamma$-local-spectral expander if for every $i \le d-2$ 
and every face $s \in X(i)$, the underlying graph of $X_s$
is a two-sided $\gamma$-spectral expander.\footnote{A weighted graph $G(V,E)$ with edge weights $\Pi_E$ is a two-sided $\gamma$-spectral expander if the vertex-edge-vertex random walk with transition probabilities proportional to $\Pi_E$ has second largest eigenvalue in absolute value at most $\gamma$.}
\end{definition}

\subsubsection{Higher Order Random Walks}
Weighted simplicial complexes admit a natural generalization of the standard vertex-edge-vertex walk on graphs known as \textit{higher order random walks} (HD-walks). The basic idea is simple: starting at some $k$-set $S \subset X(k)$, pick at random a set $T \in X(k+1)$ such that $T \supset S$, and then return to $X(k)$ by selecting some $S' \subset T$. Let the space of functions $f: X(k) \to \mathbb{R}$ be denoted by $C_k$. Formally, higher order random walks are a composition of two averaging operators: the ``Up'' operator which lifts a function $f \in C_k$ to $U_kf \in C_{k+1}$:
\[
\forall y \in X(k+1): U_kf(y) = \frac{1}{k+1}\sum\limits_{x \in X(k): x \subset y}f(x),
\]
and the ``Down'' operator which lowers a function $f \in C_{k+1}$ to $D_{k+1} f \in C_k$:
\[
\forall x \in X(k): D_{k+1}f(x) = \frac{1}{k+1}\sum\limits_{y \in X(k+1): y \supset x} \frac{\Pi_{k+1}(y)}{\Pi_{k}(x)}f(y).
\]
These operators exist for each level of the complex, and composing them gives a basic set of higher order random walks we call \textit{pure} (following \cite{alev2019approximating}).

\begin{definition}[$k$-Dimensional Pure Walk]
Given a weighted, simplicial complex $(X,\Pi)$, a $k$-dimensional pure walk $Y: C_k \to C_k$ on $(X,\Pi)$ (of height $h(Y)$) is a composition:
\[
Y = Z_{2h(Y)} \circ \cdots \circ Z_{1},
\]
where each $Z_i$ is a copy of $D$ or $U$.
\end{definition}
We call an affine combination\footnote{An affine combination is a linear combination whose coefficients sum to $1$.} of pure walks which start and end on $X(k)$ a \textit{$k$-dimensional HD-walk}.
\begin{definition}[HD-walk]
Let $(X,\Pi)$ be a pure, weighted simplicial complex. Let $\mathcal Y$ be a family of pure walks $Y: C_k \to C_k$ on $(X,\Pi)$. We call an affine combination 
\[
M = \sum\limits_{Y \in \mathcal Y} \alpha_Y Y
\]
a $k$-dimensional HD-walk on $(X,\Pi)$ as long as it is self-adjoint and remains a valid walk (i.e.\ has non-negative transition probabilities).
\end{definition}
Previous work on HD-walks mainly focuses on two natural classes: canonical walks (introduced in \cite{kaufman2016high,dinur2017high}), and partial-swap walks (introduced in \cite{alev2019approximating,dikstein2019agreement}).
\begin{definition}[Canonical Walk]
Given a $d$-dimensional weighted, pure simplicial complex $(X,\Pi)$, and parameters $k+j \leq d$, the canonical walk $N^j_k$ is:
\[
N_k^j = D^{k+j}_{k}U^{k+j}_{k},
\]
where $U^k_\ell = U_{k-1}\ldots U_{\ell}$, and $D^k_\ell = D_{\ell+1}\ldots D_{k}$.
\end{definition}
In other words, the canonical walk $N^j_k$ takes $j$ steps up and down the complex via the averaging operators. Partial-swap walks are a similar process, but after ascending the complex, we restrict to returning to faces with a given intersection from the starting point.
\begin{definition}[Partial-Swap walk]\label{def:intro-HD-walk}
The partial-swap walk $S^j_k$ is the restriction of $N_k^j$ to faces with intersection $k-j$. In other words, if $|s \cap s'| \neq k-j, S_k^j(s,s')=0$, and otherwise $S_k^j(s,s')=\alpha_sN_k^j(s,s')$, where
\[
\alpha_s = \left (\sum\limits_{s':|s \cap s'|=k-j}N_k(s,s') \right )^{-1}
\]
is the appropriate normalization factor.
\end{definition}
It is not hard to see that partial-swap walk $S^t_k$ on the complete complex $J(n,d)$ (all $d$-subsets of $[n]$ endowed with the uniform distribution) is exactly the Johnson graph $J(n,k,k-t)$. While it is not immediately obvious that the partial-swap walks are HD-walks, Alev, Jeronimo, and Tulsiani \cite{alev2019approximating} showed this is the case by expressing them as an alternating hypergeometric sum of canonical walks.
\subsubsection{Expansion of HD-Walks}
In this work, we study the combinatorial edge expansion of HD-walks, a fundamental property of graphs with strong connections to many areas of theoretical computer science, including both hardness and algorithms for unique games. Given a weighted graph $G=((V,E),(\Pi_V,\Pi_E))$ where $\Pi_V$ is a distribution over vertices, and $\Pi_E$ is a set of non-negative edge weights, the expansion of a subset $S \subset V$ is the average edge-weight leaving $S$.
\begin{definition}[Weighted Edge Expansion]
Given a weighted, directed graph $G=((V,E),(\Pi_V,\Pi_E))$, the weighted edge expansion of a subset $S \subset V$ is:
\[
\phi(G,S) = \underset{v \sim \Pi_V|_S}{\mathbb{E}}\left[ E(v,V \setminus S)\right],
\]
where 
\[
E(v, V \setminus S) = \sum\limits_{(v,y) \in E: y \in V \setminus S} \Pi_E((v,y))
\]
is the total weight of edges between vertex $v$ and the subset $V \setminus S$, and $\Pi_V|_S$ is the re-normalized restriction of $\Pi_V$ to $S$. In the context of a $k$-dimensional HD-Walk $M$ on a weighted simplicial complex $(X,\Pi)$, we will always have $V=X(k)$, $\Pi_V=\Pi_k$, and $E, \Pi_E$ given by $M$. Thus when clear from context, we will simply write $\phi(S)$. 
\end{definition}
Edge expansion in a weighted graph is closely related to the spectral structure of its adjacency matrix. Given a set $S \subset V$ of density $\alpha = \mathbb{E}[\mathbbm{1}_S]$, we may write 
\[
\phi(G,S) = 1 - \frac{1}{\alpha}\langle \mathbbm{1}_S, A_G\mathbbm{1}_S \rangle_{\Pi_V},
\]
where $A_G$ is the adjacency matrix with weights given by $\Pi_E$, and $\langle f, g \rangle_{\Pi_V}$ is the expectation of $fg$ over $\Pi_V$. When considering such an inner product over a weighted simplicial complex $(X,\Pi)$, the associated distribution will always be $\Pi_k$, so we will drop it from the corresponding notation. Notice that the right-hand side of this equivalence may be further broken down via a spectral decomposition of $\mathbbm{1}_S$ with respect to $A_G$. Thus to understand the edge-expansion of HD-walks, it is crucial to understand the structure of their spectra.

\section{Approximate Eigendecompositions and Eigenstripping}\label{sec:linear-algebra}
In this section we prove a general linear algebraic result concerning the spectra of operators that admit an approximate eigendecomposition: their spectra lies tightly concentrated around the decomposition's approximate eigenvalues. Before giving the formal result, we formalize the concept of approximate eigendecompositions.
\begin{definition}
Let $M$ be an operator over an inner product space $V$. We call $V=V^1 \oplus \ldots \oplus V^k$ a $(\{\lambda_i\}_{i=1}^k,\{c_i\}_{i=1}^k)$-approximate eigendecomposition if for all $i$ and $v_i \in V^i$, the following holds:
\[
\norm{Mv_i - \lambda_i v_i} \leq c_i \norm{v_i}.
\]
\end{definition}
As long as the $c_i$ are sufficiently small, we prove each $V^i$ (loosely) corresponds to an eigenstrip, the span of eigenvectors with eigenvalue closely concentrated around $\lambda_i$.
\begin{theorem}[Eigenstripping]\label{thm:approx-ortho}
Let $M$ be a self-adjoint operator over an inner product space $V$, and $V=V^1 \oplus \ldots \oplus V^k$ a $(\{\lambda_i\}_{i=1}^k,\{c_i\}_{i=1}^k)$-approximate eigendecomposition. Let $c_{\max} = \max_i\{ c_i\}$, $\lambda_{\text{dif}}$ = $\min_{i,j}\{|\lambda_i - \lambda_j|\}$, and $\lambda_{\text{ratio}} = \frac{\max_i\{|\lambda_i|\}}{\lambda_{\text{dif}}^{1/2}}$. Then as long as $c_{\max}$ is sufficiently small:
\[
c_{\max} \leq \frac{\lambda_{\text{dif}}}{4k},
\]
the spectra of $M$ is concentrated around each $\lambda_i$:
\[
\text{Spec}(M) \subseteq \bigcup_{i=1}^k \left [ \lambda_i - e, \lambda_i + e\right ] =  I_{\lambda_i},
\]
where $e = O\left(k \cdot \lambda_{\text{ratio}} \cdot c_{\text{max}}^{1/2} \right)$.
\end{theorem}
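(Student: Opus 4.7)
The plan is to establish the result in two stages: first prove that the approximate eigendecomposition is approximately orthogonal, and then use this to show that any eigenvector of $M$ must have its mass concentrated in a single $V^i$, forcing the associated eigenvalue into the strip around $\lambda_i$.

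First, I would verify approximate orthogonality. For $v_i \in V^i$, $v_j \in V^j$ with $i \neq j$, self-adjointness gives $\langle Mv_i, v_j \rangle = \langle v_i, Mv_j \rangle$. Writing $Mv_i = \lambda_i v_i + \epsilon_i$ and $Mv_j = \lambda_j v_j + \epsilon_j$ with $\|\epsilon_i\| \leq c_i \|v_i\|$ and $\|\epsilon_j\| \leq c_j \|v_j\|$, rearranging yields
\[
(\lambda_i - \lambda_j)\langle v_i, v_j \rangle = \langle v_i, \epsilon_j \rangle - \langle \epsilon_i, v_j \rangle,
\]
so Cauchy--Schwarz gives $|\langle v_i, v_j \rangle| \leq \tfrac{2c_{\max}}{\lambda_{\text{dif}}} \|v_i\|\|v_j\|$. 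This is the workhorse identity for the rest of the argument.

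Next, given an eigenvector $Mv = \mu v$ with $v \neq 0$, I would decompose $v = \sum_i v_i$ in the given direct sum and let $j^*$ maximize $\|v_{j^*}\|$ (nonzero since $v \neq 0$). Taking the inner product of the relation $\sum_i (Mv_i - \mu v_i) = 0$ with $v_{j^*}$ and isolating the diagonal $i = j^*$ term gives
\[
(\lambda_{j^*} - \mu)\|v_{j^*}\|^2 \;=\; -\sum_{i \neq j^*} (\lambda_i - \mu)\langle v_i, v_{j^*}\rangle \;-\; \sum_i \langle Mv_i - \lambda_i v_i, v_{j^*}\rangle.
\]
Plugging in the approximate orthogonality bound for the first sum (using $\|v_i\| \leq \|v_{j^*}\|$) and the approximate eigenvalue bound $\|Mv_i - \lambda_i v_i\| \leq c_{\max}\|v_i\|$ for the second, and splitting $|\lambda_i - \mu| \leq 2\Lambda + |\lambda_{j^*} - \mu|$ where $\Lambda = \max_i|\lambda_i|$ (to avoid a separate bound on $|\mu|$), one obtains a self-improving inequality of the form
\[
|\lambda_{j^*} - \mu|\Bigl(1 - \tfrac{2(k-1)c_{\max}}{\lambda_{\text{dif}}}\Bigr) \;\leq\; \tfrac{4(k-1)\Lambda c_{\max}}{\lambda_{\text{dif}}} + k c_{\max}.
\]
The hypothesis $c_{\max} \leq \lambda_{\text{dif}}/(4k)$ makes the bracketed factor at least $1/2$, so dividing through yields $|\lambda_{j^*} - \mu| = O(k\Lambda c_{\max}/\lambda_{\text{dif}})$, which is at most $O(k\Lambda c_{\max}^{1/2}/\lambda_{\text{dif}}^{1/2}) = O(k\lambda_{\text{ratio}} c_{\max}^{1/2})$ since $c_{\max}/\lambda_{\text{dif}} \leq 1$ under the hypothesis.

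The main technical obstacle is handling the fact that $|\mu|$ is not known a priori: a naive bound would have to invoke some control on the operator norm of $M$ to bound $|\lambda_i - \mu|$. The trick that avoids this is the triangle-inequality split $|\lambda_i - \mu| \leq 2\Lambda + |\lambda_{j^*} - \mu|$, which routes the dependence on $|\mu|$ through the very quantity we are trying to bound; this is precisely why the hypothesis has a factor of $4k$ rather than simply demanding $c_{\max} < \lambda_{\text{dif}}$, since we need enough slack to absorb the resulting $|\lambda_{j^*} - \mu|$ term back to the left-hand side. A secondary subtlety is that the argument naturally produces a bound linear in $c_{\max}$, which is stronger than (and hence implies) the stated $\sqrt{c_{\max}}$ bound; stating the weaker form is harmless and is presumably done for uniformity with regimes where the decomposition's quality is characterized differently.
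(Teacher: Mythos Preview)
Your argument is correct and in fact a bit sharper than the paper's original proof, though it reaches the same statement via a different route.

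Both proofs begin with the same approximate-orthogonality lemma. The divergence is in how the eigenvalue localization is extracted. The paper takes an eigenvector $\phi$ with eigenvalue $\mu$, forms the weighted average $\sum_i |\mu-\lambda_i|^2\|\phi_i\|^2$, and bounds it via the identity $\sum_i |\mu-\lambda_i|^2\|\phi_i\|^2 = \bigl\|\sum_i(\mu-\lambda_i)\phi_i\bigr\|^2 - \sum_{i\neq j}(\mu-\lambda_i)(\mu-\lambda_j)\langle\phi_i,\phi_j\rangle$; the cross-term forces a separate (and somewhat involved) bound on $\|M\|$, and the final conclusion comes out with a $\sqrt{c_{\max}}$ loss from the averaging step. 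You instead pick the largest component $v_{j^*}$, pair against it, and use the triangle-inequality split $|\lambda_i-\mu|\le 2\Lambda+|\lambda_{j^*}-\mu|$ to route any dependence on $|\mu|$ back into the quantity being bounded, yielding a clean self-improving inequality with no need to control $\|M\|$. This buys you a bound linear in $c_{\max}/\lambda_{\mathrm{dif}}$, which is strictly stronger than the stated $\sqrt{c_{\max}}$ bound and is in spirit the same improvement the paper attributes to Zhang (who also uses a recursive/self-improving step, but within the averaging framework rather than via the max-component argument). Your closing observation about the $\sqrt{c_{\max}}$ form being weaker than what you actually prove is exactly right.
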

This result was recently improved by Zhang \cite{Zhang2020} to have dependence $e \leq O(\sqrt{k}c_{\text{max}})$, removing the dependence on the approximate eigenvalues altogether. This follows from a simple modification to our proof which we will note below. Finally, it is also worth mentioning that a version of \Cref{thm:approx-ortho} holds with no assumption on $c_{\text{max}}$, but the assumption substantially simplifies the bounds and is sufficient for our purposes. 

Before proving \Cref{thm:approx-ortho}, we note a useful property of approximate eigendecompositions of self-adjoint operators: they are approximately orthogonal.
\begin{lemma}\label{lemma:approx-orthog}
Let $M$ be a self-adjoint operator over an inner product space $V$. Further, let $V=V^1 \oplus \ldots \oplus V^k$ be a $(\{\lambda_i\}_{i=1}^k,\{c_i\}_{i=1}^k)$-approximate eigen-decomposition. Then for $i\neq j$, $V^i$ and $V^j$ are nearly orthogonal. That is, for any $v_i \in V^i$ and $v_j \in V^j$:
\[
|\langle v_i, v_j \rangle| \leq \frac{c_i+c_j}{|\lambda_i - \lambda_j|}\norm{v_i}\norm{v_j}.
\]
\end{lemma}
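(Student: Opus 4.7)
The plan is to exploit self-adjointness of $M$ by computing $\langle Mv_i, v_j\rangle$ in two different ways, once by applying $M$ on the left (where it acts approximately as $\lambda_i$) and once by applying it on the right (where it acts approximately as $\lambda_j$). The discrepancy between $\lambda_i$ and $\lambda_j$ will then be forced to sit against the small error terms guaranteed by the approximate eigendecomposition, which is exactly the shape of the bound we want.

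First, I would introduce error vectors $e_i := Mv_i - \lambda_i v_i$ and $e_j := Mv_j - \lambda_j v_j$, noting by hypothesis that $\|e_i\| \le c_i \|v_i\|$ and $\|e_j\| \le c_j \|v_j\|$. Next, using that $M$ is self-adjoint, I would write
\[
\langle Mv_i, v_j\rangle = \langle v_i, Mv_j\rangle,
\]
and then substitute the decompositions $Mv_i = \lambda_i v_i + e_i$ and $Mv_j = \lambda_j v_j + e_j$ on each side. Expanding by bilinearity and cancelling gives
\[
(\lambda_i - \lambda_j)\langle v_i, v_j\rangle \;=\; \langle v_i, e_j\rangle - \langle e_i, v_j\rangle.
\]

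Finally, I would apply Cauchy--Schwarz to each inner product on the right, yielding $|\langle v_i, e_j\rangle| \le c_j \|v_i\|\|v_j\|$ and $|\langle e_i, v_j\rangle| \le c_i \|v_i\|\|v_j\|$. Combining via the triangle inequality and dividing by $|\lambda_i - \lambda_j|$ (which is nonzero since the $\lambda_i$ are distinct, an implicit assumption since otherwise $\lambda_{\mathrm{dif}}$ in \Cref{thm:approx-ortho} would be zero) produces the claimed bound. There is no real obstacle here: the argument is a one-line manipulation once the error vectors are named, and the only subtlety is observing that the statement presumes the $\lambda_i$ are pairwise distinct so that the final division is well-defined. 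This lemma will then feed into the proof of \Cref{thm:approx-ortho} by certifying that the summands in the decomposition are nearly orthogonal, a prerequisite for concentrating the spectrum around each $\lambda_i$.
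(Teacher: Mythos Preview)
Your proposal is correct and follows essentially the same argument as the paper: both exploit self-adjointness to equate $\langle Mv_i, v_j\rangle$ with $\langle v_i, Mv_j\rangle$, expand each side using the approximate-eigenvector hypothesis, and apply Cauchy--Schwarz to the resulting error terms. The paper phrases the comparison as two overlapping intervals rather than introducing named error vectors, but the content is identical.
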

\begin{proof}
This follows from the fact that $M$ is self-adjoint, and $V^i$ and $V^j$ are approximate eigenspaces. In particular, notice that for any $v_i \in V^i$ and $v_j \in V^j$ we can bound the interval in which $\langle Mv_i, v_j \rangle=\langle v_i, Mv_j \rangle$ lies by Cauchy-Schwarz:
\begin{align*}
    \langle Mv_i, v_j \rangle \in \lambda_i \langle v_i, v_j \rangle \pm c_i\norm{v_i}\norm{v_j}
\end{align*}
and
\begin{align*}
    \langle v_i, Mv_j \rangle \in \lambda_j \langle v_i, v_j \rangle \pm c_j\norm{v_i}\norm{v_j}.
\end{align*}
Since these terms are equal, the right-hand intervals must overlap. As a result we get:
\begin{align*}
    |(\lambda_i-\lambda_j) \langle v_i, v_j \rangle | \leq (c_i+c_j)\norm{v_i}\norm{v_j},
\end{align*}
as desired.
\end{proof}
Using \Cref{lemma:approx-orthog}, we can modify \cite[Theorem 5.9]{kaufman2020high} to prove \Cref{thm:approx-ortho}. Given an eigenvalue $\mu$ of $M$, the idea is to find a probability distribution over $[k]$ for which the expectation of $|\mu-\lambda_i|$ is small, where $i \in [k]$ is sampled from the aforementioned distribution.
\begin{proof}
The proof follows mostly along the lines of \cite[Theorem 5.9]{kaufman2020high}, modifying where necessary due to lack of orthogonality. Let $\phi$ be an eigenvector of $M$ with eigenvalue $\mu$. Our goal is to prove the existence of some $\lambda_i$ such that $|\mu-\lambda_i|$ is small. To do this, we appeal to an averaging argument. In particular, denoting the component of $\phi$ in $V^i$ by $\phi_i$, we bound the expectation of $|\mu - \lambda_i|^2$ over a distribution $P_\phi$ given by the (normalized) squared norms $\norm{\phi_i}^2$:
\begin{equation}\label{eq:m-l-expectation}
\underset{i \sim P_\phi}{\mathbb{E}}\left[|\mu - \lambda_i|^2\right] = \frac{1}{\sum\limits_{j=1}^k \norm{\phi_j}^2}\sum\limits_{i=1}^k |\mu - \lambda_i|^2 \norm{\phi_i}^2.
\end{equation}
If we can upper bound this expectation by some value $c$, then by averaging there must exist $\lambda_i$ such that $|\mu-\lambda_i| \leq \sqrt{c}$, and thus the spectra of $M$ must lie in strips $\lambda_i \pm \sqrt{c}$. To upper bound \Cref{eq:m-l-expectation}, consider the result of pushing the outer summation inside the norm:
\begin{equation}\label{eq:eigen-analysis}
\sum\limits_{i=1}^k \left | \mu - \lambda_i \right |^2 \norm{\phi_i}^2 = \norm{\sum\limits_{i=1}^k \left ( \mu - \lambda_i \right )\phi_i}^2 - \sum\limits_{1 \leq i \neq j \leq k} (\mu - \lambda_i)(\mu - \lambda_j) \left \langle\phi_i, \phi_j \right \rangle.
\end{equation}
We will separately bound the two resulting terms, the former by the fact that the $\phi_i$ are approximate eigenvectors, and the latter by their approximate orthogonality. We start with the former, which follows by a simple application of Cauchy-Schwarz:
\begin{align*}
\norm{\sum\limits_{i=1}^k \left ( \mu - \lambda_i \right )\phi_i}^2 &= \norm{\mu \phi  - \sum\limits_{i=1}^k \lambda_i \phi_i}^2\\
&= \norm{M\phi  - \sum\limits_{i=1}^k \lambda_i \phi_i}^2\\
&= \norm{ \sum\limits_{i=1}^k \left (M \phi_i - \lambda_i \phi_i\right )}^2\\
&\leq k\sum\limits_{i=1}^k \norm{\left (M \phi_i - \lambda_i \phi_i\right )}^2\\
&\leq kc^2_{\text{max}}\sum\limits_{i=1}^k \norm{\phi_i}^2.
\end{align*}
The latter takes a bit more effort. Let $\lambda_{\text{max}}$ be $\max_i\{|\lambda_i|\}$, then by \Cref{lemma:approx-orthog} we have: 
\begin{align*}
 \left |\sum\limits_{1 \leq i \neq j \leq k} (\mu - \lambda_i)(\mu - \lambda_j) \left \langle \phi_i, \phi_j \right \rangle\right | &\leq \sum\limits_{1 \leq i \neq j \leq k} |\mu - \lambda_i||\mu - \lambda_j| \frac{c_i+c_j}{|\lambda_i-\lambda_j|}\norm{\phi_i}\norm{\phi_j}\\
&\leq  2c_{\text{max}}\lambda_{\text{dif}}^{-1}(\lambda_{\text{max}}+\norm{M})^2\left (\sum\limits_{i=1}^k\norm{\phi_i}\right )^2\\
&\leq  2kc_{\text{max}}\lambda_{\text{dif}}^{-1}(\lambda_{\text{max}}+\norm{M})^2\sum\limits_{i=1}^k\norm{\phi_i}^2\\
\end{align*}
Since we'd like our bound to depend only on $\lambda_i$ and $c_i$, we must further bound $\norm{M}$ which will follow similarly from approximate orthogonality. Let $v$ be a unit eigenvector with eigenvalue $\norm{M}$ and $v_i$ be $v$'s component on $V^i$, then we have:
\begin{align*}
    \norm{M} &= \norm{Mv}\\
    & = \norm{\sum\limits_{i=1}^k Mv_i - \lambda_i v_i + \lambda_i v_i}\\
    &\leq \sum\limits_{i=1}^k (\lambda_i + c_i)\norm{v_i}\\
    &\leq \left(\lambda_{\text{max}} + c_{\text{max}}\right)\sum\limits_{i=1}^k \norm{v_i}\\
    &\leq \left(\lambda_{\text{max}} + c_{\text{max}}\right)\sqrt{k\sum\limits_{i=1}^k \norm{v_i}^2}\\
    &\leq \left(\lambda_{\text{max}} + c_{\text{max}}\right)\sqrt{2k}.
\end{align*}
where the last step follows from \Cref{lemma:approx-orthog} and our assumption on $c_{\text{max}}$:
\begin{align*}
    \sum\limits_{i=1}^k \norm{v_i}^2 &= \norm{v}^2 + \sum\limits_{1 \leq i \neq j \leq k} \langle v_i,v_j \rangle\\
    &\leq 1 +\frac{2c_{\text{max}}}{\lambda_{\text{dif}}}\sum\limits_{1 \leq i \neq j \leq k} \norm{v_i}\norm{v_j} \\
&\leq 1 +\frac{2c_{\text{max}}}{\lambda_{\text{dif}}} \left (\sum\limits_{i=1}^k \norm{v_i} \right )^2\\
&\leq 1+\frac{2kc_{\text{max}}}{\lambda_{\text{dif}}}\sum\limits_{i=1}^k \norm{v_i}^2\\
&\leq 1+\frac{1}{2}\sum\limits_{i=1}^k \norm{v_i}^2.
\end{align*}

Together, these bounds imply the existence of some
$\lambda_{i'}$ such that:
\[
|\mu - \lambda_{i'}| \leq \sqrt{kc_{\text{max}}\left(c_{\text{max}}+2\lambda_{\text{dif}}^{-1}\left(\lambda_{\text{max}} + \left(\lambda_{\text{max}} + c_{\text{max}}\right)\sqrt{2k}\right)^2\right)},
\]
which implies the desired result when accounting for our assumption on $c_{\text{max}}$.
\end{proof}
Zhang's \cite{Zhang2020} improvement to our proof came from the observation that the analysis of the latter term can be simplified by a recursive strategy. In particular, this term can instead be upper bounded by $\frac{2c_{\text{max}}k}{\lambda_{\text{dif}}}\sum\limits_{i=1}^k|\mu-\lambda_i|^2 \norm{\phi_i}^2$. With the appropriate assumption on $c_{\text{max}}$, plugging this back into \Cref{eq:eigen-analysis} gives the desired result.

In either case, notice that if $c_{\text{max}}$ is sufficiently small, the intervals $I_{\lambda_i}$ are disjoint. As a result, each $V^i$ corresponds to an \textit{eigenstrip} $W^i$:
\[
W^i = \text{Span}\left\{ \phi:~M\phi=\mu\phi, \mu \in I_{\lambda_i} \right\}.
\]
The approximate eigenspaces $V^i$ are closely related to the resulting eigenstrips. Indeed, it is possible to show that most of the weight of a function in $V^i$ must lie on $W^i$, though we will not need this result in what follows. Previous works \cite{kaufman2020high,kaufman2020chernoff} make stronger claims for the specific case of the HD-Level-Set Decomposition, most notably that $V^i$ and $W^i$ are in fact equivalent on sufficiently strong two-sided local-spectral expanders. Unfortunately, these results are based off of \cite[Theorem 5.10]{kaufman2020high}, whose proof has a non-trivial error we discuss further in \Cref{App:ortho}. Indeed, were their proof correct, it would imply (due to the generality of their argument) that $V^i=W^i$ for any approximate eigendecomposition. However, it is easy to see this cannot be the case by considering a diagonal $2 \times 2$ matrix with an approximate eigendecomposition given by a slight rotation of the standard basis vectors in $\mathbb{R}^2$.

\section{The Spectra of HD-walks}\label{sec:hdx-spectra}
We now show that the HD-Level-Set Decomposition is an approximate eigendecomposition for any HD-Walk. Combined with \Cref{thm:approx-ortho}, this proves \Cref{thm:new-intro-hdx-spectra}, that the spectrum of any $k$-dimensional HD walk is tightly concentrated in $k+1$ eigenstrips. As a result, we give explicit bounds on the spectra of HD-walks, paying special attention to the canonical and partial-swap walks. Finally, we show that the approximate eigenvalues (and thus the values in their corresponding eigenstrips) of the HD-Level-Set Decomposition decrease monotonically for a broad class of HD-Walks we call \textit{complete} walks which, to our knowledge, encompass all walks used in the literature. As we will see in the following section, such decay is crucial for understanding edge expansion.

To start, we recall the definition of pure and HD-walks along with introducing some useful notation.
\begin{definition}[$k$-Dimensional Pure Walk]
Given a weighted, simplicial complex $(X,\Pi)$, a $k$-dimensional pure walk $Y: C_k \to C_k$ on $(X,\Pi)$ is a composition:
\[
Y = Z_{2h(Y)} \circ \cdots \circ Z_{1},
\]
where each $Z_i$ is a copy of $D$ or $U$, and $h(Y)$ is the \textit{height} of the walk, measuring the total number of down (or up) operators. 
\end{definition}
\begin{definition}[$k$-Dimensional HD-Walk]
Given a weighted, simplicial complex $(X,\Pi)$, a $k$-dimensional HD-walk on $(X,\Pi)$ is an affine combination of pure walks
\[
M = \sum_{Y \in \mathcal Y} \alpha_Y Y
\]
which is self-adjoint and gives a valid walk on $(X,\Pi)$ (i.e.\ has non-negative transition probabilities). We say the height of $M$, $h(M)$, is the maximal height of any $Y$ with a non-zero coefficient, and say the weight of $M$, $w(M)$, is the one norm of the $\alpha_Y$ (namely, $w(M)=\sum |\alpha_Y|$).
\end{definition}
Our proofs in this section rely mainly on a useful observation of \cite{dikstein2018boolean}, who show that the up and down operators on two-sided $\gamma$-local-spectral expanders satisfy the following relation:
\begin{align}\label{eq:hdx-eposet}
\norm{D_{i+1}U_i - \frac{1}{i+1}I - \frac{i}{i+1}U_{i-1}D_i} \leq \gamma.
\end{align}
This fact leads to a particularly useful structural lemma showing the effect of flipping $D$ through multiple $U$ operators.
\begin{lemma}[Claim 8.8 \cite{dikstein2018boolean}]\label{lemma:body-DU-UD}
Let $(X,\Pi)$ be a $d$-dimensional $\gamma$-local-spectral expander. Then for all $j< k < d$:
\[
\norm{D_{k+1}U^{k+1}_{k-j} - \frac{j+1}{k+1}U^{k}_{k-j} - \frac{k-j}{k+1} U^{k}_{k-j-1}D_{k-j}} \leq \frac{(j+1)(2k-j+2)}{2(k+1)}\gamma.
\]
\end{lemma}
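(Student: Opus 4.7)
The plan is to induct on $j$, using the basic one-step relation
\[
\norm{D_{i+1}U_i - \tfrac{1}{i+1}I - \tfrac{i}{i+1}U_{i-1}D_i} \leq \gamma
\]
(the displayed equation just before the lemma) as the atomic commutation move. The base case $j=0$ is immediate: $U^{k+1}_{k} = U_k$ and $U^{k}_{k} = I$, so the claim reduces to the one-step relation with $i=k$, and the asserted constant $\frac{(0+1)(2k-0+2)}{2(k+1)}\gamma$ simplifies to exactly $\gamma$.

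For the inductive step I would decompose
\[
D_{k+1}U^{k+1}_{k-j} \;=\; \bigl(D_{k+1}U^{k+1}_{k-j+1}\bigr)\,U_{k-j},
\]
apply the inductive hypothesis to the parenthesized factor to approximate it by $\tfrac{j}{k+1}U^{k}_{k-j+1} + \tfrac{k-j+1}{k+1}U^{k}_{k-j}D_{k-j+1}$, and then apply the one-step relation once more, with $i = k-j$, to approximate $D_{k-j+1}U_{k-j}$ by $\tfrac{1}{k-j+1}I + \tfrac{k-j}{k-j+1}U_{k-j-1}D_{k-j}$. Using $U^{k}_{k-j+1}U_{k-j} = U^{k}_{k-j}$ and $U^{k}_{k-j}U_{k-j-1} = U^{k}_{k-j-1}$, the main terms recombine as
\[
\tfrac{j}{k+1}U^{k}_{k-j} + \tfrac{1}{k+1}U^{k}_{k-j} + \tfrac{k-j}{k+1}U^{k}_{k-j-1}D_{k-j} \;=\; \tfrac{j+1}{k+1}U^{k}_{k-j} + \tfrac{k-j}{k+1}U^{k}_{k-j-1}D_{k-j},
\]
which is exactly the claimed expression.

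The only real work is bookkeeping the error. Right-multiplication by $U_{k-j}$ preserves operator norm (since $\norm{U_{k-j}}\le 1$), so the first approximation contributes the inductive bound $\frac{j(2k-j+3)}{2(k+1)}\gamma$ unchanged. The second approximation is applied inside a left-multiplication by $\tfrac{k-j+1}{k+1}U^{k}_{k-j}$, whose norm is at most $\tfrac{k-j+1}{k+1}$, so it contributes at most $\tfrac{k-j+1}{k+1}\gamma$. Adding these,
\[
\frac{j(2k-j+3) + 2(k-j+1)}{2(k+1)}\gamma \;=\; \frac{2jk - j^2 + j + 2k + 2}{2(k+1)}\gamma \;=\; \frac{(j+1)(2k-j+2)}{2(k+1)}\gamma,
\]
closing the induction.

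The single obstacle is getting these constants to telescope exactly into $(j+1)(2k-j+2)$; apart from that algebraic identity, and being careful about which norm bound governs each propagation of error (left vs.\ right multiplication), everything is a direct iteration of the one-step $DU$--$UD$ relation.
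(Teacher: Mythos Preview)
Your proposal is correct and follows essentially the same inductive approach as the paper's proof (given in the appendix as the strengthened \Cref{lemma:DU-UD}): peel off the rightmost $U$, apply the inductive hypothesis to the shorter product, then use the one-step relation at level $k-j$, and combine. The only difference is cosmetic: the paper tracks the explicit error operators $\Gamma_i$ to obtain an operator equality, whereas you pass directly to norm bounds; your error bookkeeping and the final algebraic identity are exactly right.
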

One crucial application of \Cref{lemma:body-DU-UD} lies in understanding the relation between $\norm{f_i}$, and $\norm{g_i}$, where $f_i=U^k_ig_i$. 
\begin{lemma}[Lemmas 8.10, 8.13, Theorem 4.6 \cite{dikstein2018boolean}]\label{lemma:hdx-fvsg-body}
Let $(X,\Pi)$ be a $d$-dimensional $\gamma$-local-spectral expander with $\gamma < 1/d$, $f \in C_k$ a function with HD-Level-Set Decomposition $f_0 + \ldots + f_k$. Then for all $0 \leq \ell \leq k \leq d$:
\[
\norm{f_\ell}^2 = \frac{1}{{k \choose \ell}}\left (1 \pm c_1(k,\ell)\gamma \right )\norm{g_{\ell}}^2,
\]
where $c_1(k,\ell)=O(k^2{k \choose \ell})$. 
\end{lemma}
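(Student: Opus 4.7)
The plan is to use adjointness of the averaging operators (with respect to the $\Pi$-weighted inner product) to reduce the claim to analyzing a simple ``down-up'' walk applied to $g_\ell$. Specifically, since $U^k_\ell$ and $D^k_\ell$ are adjoints, we can rewrite
\[
\norm{f_\ell}^2 = \langle U^k_\ell g_\ell, U^k_\ell g_\ell \rangle = \langle g_\ell, D^k_\ell U^k_\ell g_\ell \rangle,
\]
so by Cauchy--Schwarz it suffices to show that
\[
D^k_\ell U^k_\ell g_\ell = \tfrac{1}{\binom{k}{\ell}} g_\ell + E, \qquad \norm{E} \leq O(k^2)\,\gamma\,\norm{g_\ell},
\]
for every $g_\ell \in \ker(D_\ell)$. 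Translating this bound back through the inner product gives a relative error of $O(k^2 \binom{k}{\ell})\gamma$ on $\norm{f_\ell}^2$, which matches the claimed $c_1(k,\ell)$.

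I would prove the above approximation by induction on $k$, with base case $k = \ell$ being trivial. For the inductive step, I would factor $D^k_\ell = D^{k-1}_\ell \circ D_k$ and apply \Cref{lemma:body-DU-UD} (with the substitution $K=k$, $m=\ell$) to push $D_k$ past the \emph{entire} block $U^k_\ell$ in a single shot:
\[
D_k U^k_\ell \approx \tfrac{k-\ell}{k}\,U^{k-1}_\ell \;+\; \tfrac{\ell}{k}\,U^{k-1}_{\ell-1} D_\ell,
\]
with operator-norm error $O(k)\gamma$. The key trick is that because $g_\ell \in \ker(D_\ell)$, the second term annihilates $g_\ell$, leaving only $D_k U^k_\ell g_\ell \approx \tfrac{k-\ell}{k} U^{k-1}_\ell g_\ell$. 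Applying $D^{k-1}_\ell$ and invoking the inductive hypothesis then gives
\[
D^k_\ell U^k_\ell g_\ell \approx \tfrac{k-\ell}{k}\cdot\tfrac{1}{\binom{k-1}{\ell}}\,g_\ell = \tfrac{1}{\binom{k}{\ell}}\,g_\ell,
\]
via the identity $\binom{k}{\ell} = \tfrac{k}{k-\ell}\binom{k-1}{\ell}$.

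For the error analysis, the fact that $\|D^{k-1}_\ell\| \leq 1$ (averaging operators are contractive) means the single-step error of $O(k)\gamma\|g_\ell\|$ survives unscaled, and the inductive error contributes $\tfrac{k-\ell}{k}\,c(k-1,\ell)\gamma\|g_\ell\|$. This yields a simple recurrence $c(k,\ell) \leq c(k-1,\ell) + O(k)$, which telescopes to $c(k,\ell) = O(k^2)$, matching the target after multiplying by the $\binom{k}{\ell}$ factor introduced by the Cauchy--Schwarz step. The main obstacle is purely bookkeeping: making sure the constants in the single-shot push-through bound from \Cref{lemma:body-DU-UD} are tight enough not to spoil the final $O(k^2\binom{k}{\ell})$ bound when compounded across the $k-\ell$ inductive steps. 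Conceptually there is no new ingredient beyond the ``$D$-kills-$g_\ell$'' cancellation that makes the recursion close.
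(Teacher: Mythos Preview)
Your proposal is correct and takes essentially the same approach as the paper (which cites DDFH for this result and re-derives it in SoS form in \Cref{app:SoS}, specifically \Cref{lemma:f-vs-g}): induct on $k$, use \Cref{lemma:body-DU-UD} to push a single $D$ through $U^k_\ell$, and exploit $g_\ell \in \ker(D_\ell)$ to kill the unwanted term. The only cosmetic difference is that the paper works directly with the inner product $\langle U^{k-1}_\ell g_\ell, D_k U^k_\ell g_\ell\rangle$ at each inductive step rather than first isolating the operator statement $D^k_\ell U^k_\ell g_\ell \approx \binom{k}{\ell}^{-1} g_\ell$ and applying Cauchy--Schwarz afterward; the error recurrences and final constants match.
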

In \Cref{app:decomp}, we prove a stronger version of both \Cref{lemma:body-DU-UD} and \Cref{lemma:hdx-fvsg-body} for $\gamma \leq 2^{-\Omega(k)}$ where the dependence on the first order term $\gamma$ is polynomial rather than exponential in $k$. However, since this only provides a substantial improvement for a small range of $\gamma$, we use the simpler versions from \cite{dikstein2018boolean} throughout the body of the paper. Using \Cref{lemma:body-DU-UD} and \Cref{lemma:hdx-fvsg-body}, an inductive argument shows that the HD-Level-Set Decomposition is an approximate eigendecomposition. We show this first for the basic case of a pure walk, and then note that the general result follows immediately from the triangle inequality.
\begin{proposition}\label{prop:hdx-pure-eig-vals}
Let $(X,\Pi)$ be a two-sided $\gamma$-local-spectral expander with $\gamma \leq 2^{-\Omega(k)}$ and $Y:C_k \to C_k$ a pure walk:
\[
Y = Z_{2h(Y)} \circ \cdots \circ Z_1.
\]
Let $i_1 \leq \ldots \leq i_{h(Y)}$ denote the $h(Y)$ indices at which $Z_i$ is a down operator. Then for all $0 \leq \ell \leq k, f \in V_k^\ell$:
\[
\norm{Y f-\prod\limits_{s=1}^{h(Y)} \left(1-\frac{\ell}{\max\{\ell,i_s - 2s+k + 1\}}\right)f} \leq O\left (\gamma h(Y)(k+h(Y)){k \choose \ell}\norm{f}\right).
\]
\end{proposition}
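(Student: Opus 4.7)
The plan is a step-by-step induction along the pure walk $Y = Z_{2h(Y)} \circ \cdots \circ Z_1$, carrying along an explicit approximate form of the running function after each operator. By the definition of $V_k^\ell$, I write $f = U^k_\ell g$ with $g \in \text{Ker}(D_\ell)$, and set $Y_s := Z_s \circ \cdots \circ Z_1$ with $Y_0 = I$. Let $m_s$ denote the level on which $Y_s f$ lives, so $m_0 = k$ and $m_s = m_{s-1} \pm 1$ according to whether $Z_s$ is $U$ or $D$; a direct count shows $m_{i_s - 1} = k + i_s - 2s + 1$, exactly the quantity appearing in the statement. Define $\lambda_s := \prod_{t \leq s,\, Z_t = D}\bigl(1 - \frac{\ell}{\max\{\ell, m_{t-1}\}}\bigr)$, so that the target eigenvalue is $\lambda_{2h(Y)}$. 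The inductive claim I aim to prove is
\[
\norm{Y_s f - \lambda_s\, U^{m_s}_\ell g} \;\leq\; \#\{t \leq s : Z_t = D\} \cdot O\!\left((k + h(Y))\gamma\right) \cdot \norm{g}.
\]

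For the inductive step with $Z_{s+1} = U_{m_s}$, the identity $U_{m_s} U^{m_s}_\ell = U^{m_s+1}_\ell$ preserves the main term exactly with $\lambda_{s+1} = \lambda_s$, and $\norm{U_{m_s}} \leq 1$ carries the accumulated error. For $Z_{s+1} = D_{m_s}$ I instantiate \Cref{lemma:body-DU-UD} with parameters $k_{\text{Lem}} = m_s - 1$ and $j = m_s - \ell - 1$, giving
\[
\norm{D_{m_s} U^{m_s}_\ell - \tfrac{m_s - \ell}{m_s}U^{m_s-1}_\ell - \tfrac{\ell}{m_s}U^{m_s-1}_{\ell-1}D_\ell} \;\leq\; \tfrac{(m_s-\ell)(m_s+\ell+1)}{2 m_s}\gamma = O\!\left((k + h(Y))\gamma\right).
\]
Applied to $\lambda_s U^{m_s}_\ell g$, the hypothesis $D_\ell g = 0$ annihilates the off-diagonal term exactly, leaving the main term $\lambda_{s+1} U^{m_{s+1}}_\ell g$ with update factor $1 - \frac{\ell}{m_s}$. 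Since $|\lambda_s| \leq 1$ (each factor lies in $[0,1]$) and $\norm{D_{m_s}} \leq 1$, the invariant gains one additive increment of $O((k+h(Y))\gamma)\norm{g}$ per $D$ step. The boundary case $m_s = \ell$ is handled uniformly: $U^{m_s}_\ell g = g$, the identity $D_\ell g = 0$ makes $\lambda_{s+1} = 0$ with no error contribution from \Cref{lemma:body-DU-UD}, and from that point on $\lambda_s U^{m_s}_\ell g$ is simply $0$ regardless of whether $m_s$ dips below $\ell$, which is why the $\max\{\ell, \cdot\}$ enters the formula without any case-splitting.

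At the final step $s = 2h(Y)$, the total accumulated error is $O(h(Y)(k+h(Y))\gamma)\norm{g}$, and \Cref{lemma:hdx-fvsg-body} (whose hypothesis is met by $\gamma \leq 2^{-\Omega(k)}$, which tames the $c_1(k,\ell) = O(k^2 {k \choose \ell})$ factor) converts this to $O(h(Y)(k+h(Y))\gamma \cdot \sqrt{{k \choose \ell}})\norm{f} \leq O(h(Y)(k+h(Y))\gamma \cdot {k \choose \ell})\norm{f}$, matching the claimed (slightly lossy) bound. The main obstacle is not the induction itself, which is mechanical, but choosing the invariant $\lambda_s U^{m_s}_\ell g$ so that (i) the $D_\ell$-term from \Cref{lemma:body-DU-UD} is annihilated by $g$ at every $D$ step, (ii) the boundary $m_s \leq \ell$ is absorbed by the $\max$ with zero error contribution, and (iii) the per-step error stays $O((k+h(Y))\gamma)$ additive rather than multiplicative, so that the total grows linearly in $h(Y)$ rather than exponentially.
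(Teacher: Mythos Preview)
Your proposal is correct and takes essentially the same approach as the paper. The paper phrases the induction over the number of down operators in an unbalanced walk $Y^b_j g_\ell$ (appending the $k-\ell$ ups from $f=U^k_\ell g_\ell$ to the right of $Y$), while you induct step-by-step along $Y_s$; but both arguments apply \Cref{lemma:body-DU-UD} once per down operator, kill the $U^{\cdot}_{\ell-1}D_\ell$ term via $g\in\text{Ker}(D_\ell)$, accumulate an additive $O((k+h(Y))\gamma)\norm{g}$ per down step, and finish by converting $\norm{g}$ to $\norm{f}$ via \Cref{lemma:hdx-fvsg-body}.
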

\begin{proof}
We prove a slightly stronger statement to simplify the induction. For $b>0$, let $Y_{j}^b: C_\ell \to C_{\ell+b}$ denote an unbalanced walk with $j$ down operators, and $j+b$ up operators. If $Y_{j}^b$ has down operators in positions $i_1 \leq \ldots \leq i_j$ and $g_\ell \in H^\ell$, we claim:
\begin{equation}\label{eq:pure-b}
\norm{Y^b_j g_\ell - \prod\limits_{s=1}^{j}\left ( 1-\frac{\ell}{\max\{\ell,i_s-2s+\ell + 1\}} \right )U^{b+\ell}_\ell g_\ell} \leq \gamma j(b+j)\norm{g_\ell}.
\end{equation}
Notice that since $f \in V_k^\ell$ may be written as $U^k_\ell g_\ell$ for $g_\ell \in H^\ell$, then we may write $Yf$ as $Y^{k-\ell}_{h(Y)} g_\ell$ where $Y^{k-\ell}_{h(Y)}$ has down operators in positions $i_1 + k-\ell \leq \ldots \leq i_j + k - \ell$. Combining \Cref{eq:pure-b} with \Cref{lemma:hdx-fvsg-body} then implies the result.

We prove \Cref{eq:pure-b} by induction. The base case $j=0$ is trivial. Assume the inductive hypothesis holds for all $Y^b_i, i<j$. Notice first that if $i_1=1$, we are done since $g_\ell \in H^\ell$, and
\[
\prod\limits_{s=1}^{j}\left ( 1-\frac{\ell}{\max\{\ell,i_s-2s+\ell+1\}} \right )Y_0^bg_\ell = 0,
\]
as $i_s-2s+\ell+1 = \ell$ for $s=1$. Otherwise, it must be the case that one or more copies of the up operator appear before the first down operator, and we may therefore apply \Cref{lemma:body-DU-UD} to get:
\begin{align*}
    Y^b_j g_\ell = \left(\frac{i_1-1}{i_1+\ell-1}\right)Y^{b}_{j-1}g_\ell+\Gamma g_\ell,
\end{align*}
where we can (loosely) bound the spectral norm of $\Gamma$ by
\[
\norm{\Gamma} \leq (b+j)\gamma
\]
since at worst the first down operator $D$ passes through $b+j$ up operators. By the form of \Cref{lemma:body-DU-UD}, $Y^{b}_{j-1}$ has down operators at indices $i_2-2\leq \ldots \leq i_j-2$. Then by the fact that $i_1 + \ell - 1 > \ell$ and the inductive hypothesis:
\begin{align*}
        Y^b_j g_\ell &= \left(\frac{i_1-1}{\max\{\ell,i_1+\ell-1\}}\prod\limits_{s=1}^{j-1}\frac{i_{s+1}-2s-1}{\max\{\ell,i_{s+1}-2s+\ell-1\}}\right)Y^b_0g_\ell + \frac{i_1-1}{i_1+\ell-1}h + \Gamma g_\ell\\
         &= \left(\frac{i_1-1}{\max\{\ell,i_1+\ell-1\}}\prod\limits_{s=2}^{j}\frac{i_s-2s+1}{\max\{\ell,i_s-2s+\ell+1\}}\right)Y^b_0g_\ell + \frac{i_1-1}{i_1+\ell-1}h + \Gamma g_\ell\\
        &= \left(\prod\limits_{s=1}^{j}\frac{i_s-2s+1}{\max\{\ell,i_s-2s+\ell+1\}}\right)Y_0^b g_\ell + \frac{i_1-1}{i_1+\ell-1}h + \Gamma g_\ell,
\end{align*}
where $\norm{h} \leq \gamma (j-1)(b+j-1)\norm{g_\ell}$ and we have used the (vacuous) fact that $\max\{\ell,i_1+\ell-1\} = i_1+\ell-1$. Finally, we can bound the norm of the right-hand error term by:
\begin{align*}
\norm{\frac{i_1-1}{i_1+\ell-1}h+\Gamma g_\ell}
& \leq \norm{h} + \norm{\Gamma}\norm{g_\ell}\\
&\leq (j-1)(b+j-1)\norm{g_\ell} + (b+j)\norm{g_\ell}\\
&\leq j(b+j)\norm{g_\ell}
\end{align*}
as desired.
\end{proof}
Since HD-walks are simply affine combinations of pure walks, the triangle inequality immediately implies the result carries over to this more general setting.
\begin{corollary}\label{prop:hdx-eig-vals}
Let $(X,\Pi)$ be a two-sided $\gamma$-local-spectral expander with $\gamma \leq 2^{-\Omega(k)}$ and $M = \sum\limits_i \alpha_i Y_i$ a $k$-dimensional HD-walk on $(X,\Pi)$. Then for all $0 \leq \ell \leq k, f \in V_k^\ell$:
\[
\norm{Mf - \lambda_\ell(M) f} \leq O\left (\gamma w(M)h(M)(k+h(M)){k \choose \ell}\norm{f}\right),
\]
where
\[
\lambda_\ell(M) = \sum\limits \alpha_i \lambda_\ell(Y_i)
\]
and $\lambda_\ell(Y_i)$ is the approximate eigenvalue of $Y_i$ given in \Cref{prop:hdx-pure-eig-vals}.
\end{corollary}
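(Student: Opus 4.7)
The plan is to derive \Cref{prop:hdx-eig-vals} as an immediate consequence of \Cref{prop:hdx-pure-eig-vals} via the triangle inequality, using the fact that an HD-walk is by definition an affine combination of pure walks. More precisely, since $M = \sum_i \alpha_i Y_i$ and $\lambda_\ell(M)$ is defined to be $\sum_i \alpha_i \lambda_\ell(Y_i)$, linearity immediately gives
\[
Mf - \lambda_\ell(M) f = \sum_i \alpha_i \bigl(Y_i f - \lambda_\ell(Y_i) f\bigr).
\]
Taking norms and applying the triangle inequality reduces the problem to bounding $\sum_i |\alpha_i| \cdot \| Y_i f - \lambda_\ell(Y_i) f \|$.

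Next, for each pure walk $Y_i$ appearing with nonzero coefficient, I would invoke \Cref{prop:hdx-pure-eig-vals} to conclude
\[
\| Y_i f - \lambda_\ell(Y_i) f \| \leq O\!\left( \gamma\, h(Y_i)\bigl(k + h(Y_i)\bigr) {k \choose \ell}\, \|f\| \right).
\]
Since by definition $h(Y_i) \leq h(M)$ whenever $\alpha_i \neq 0$, and since the function $h \mapsto h(k+h)$ is monotonically increasing, each of these per-walk error bounds is uniformly dominated by $O(\gamma\, h(M)(k+h(M)){k \choose \ell}\|f\|)$. Pulling this uniform upper bound out of the sum leaves a factor of $\sum_i |\alpha_i|$, which is precisely $w(M)$ by definition. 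This yields the claimed bound
\[
\| Mf - \lambda_\ell(M) f \| \leq O\!\left(\gamma\, w(M)\, h(M)\bigl(k + h(M)\bigr) {k \choose \ell}\, \|f\|\right).
\]

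There is no real obstacle here: the only substantive content is \Cref{prop:hdx-pure-eig-vals}, and the passage from pure walks to HD-walks is purely formal. One should just confirm that the definition of $\lambda_\ell(M)$ is well-posed (i.e.\ independent of the particular representation of $M$ as an affine combination of pure walks) in the cases where this matters, but for the stated inequality any fixed representation witnessing $w(M)$ and $h(M)$ suffices, and the triangle inequality does the rest.
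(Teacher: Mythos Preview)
Your proposal is correct and matches the paper's approach exactly: the paper simply remarks that since HD-walks are affine combinations of pure walks, the triangle inequality immediately carries \Cref{prop:hdx-pure-eig-vals} over to this setting. Your write-up just spells out the details of that one-line reduction.
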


It is worth noting that the resulting approximate eigenvalues in \Cref{prop:hdx-eig-vals} are exactly the eigenvalues of $M$ when considered on a sequentially differential poset with $\vec{\delta}_i = i/(i+1)$. We discuss this generalization in more depth and give tighter bounds on the approximate spectra in our upcoming companion paper. It should be noted that this result is similar to one appearing in \cite{alev2019approximating}, where a weaker notion of approximate eigenspaces based on the quadratic form $\langle f,Mf \rangle$ is analyzed. Plugging \Cref{prop:hdx-eig-vals} into \Cref{thm:approx-ortho}, we immediately get that for small enough $\gamma$ the true spectra of HD-walks lie in strips around each $\lambda_i(M)$, and thus that that the approximate eigenvalues of the HD-Level-Set Decomposition and the spectra of HD-walks are essentially interchangeable.

For concreteness, we now turn our attention to computing the approximate eigenvalues (and thereby the true spectra) of the canonical and swap walks.
\begin{corollary}[Spectrum of Canonical Walks]\label{prop:canon-spectra-HDX}
Let $(X,\Pi)$ be a $d$-dimensional $\gamma$-local-spectral expander with $\gamma$ satisfying $\gamma \leq 2^{-\Omega(k+j)}$, $k+j \leq d$, and $f_\ell \in V_k^\ell$. Then:
\[
\norm{N^j_kf_\ell - \frac{{k \choose \ell}}{{k+j \choose \ell}}f_\ell}  \leq c(k,\ell,j)\norm{f_\ell},
\]
where $c(k,\ell,j) = O\left (\gamma j(j+k){k \choose \ell}\right )$.
Moreover:
\[
\text{Spec}(N^j_k) = \{1\} \cup \bigcup_{j=1}^{k} \left [\frac{\binom{k}{\ell}}{\binom{k+j}{\ell}} \pm 2^{O(j+k)}\sqrt{\gamma} \right ].
\]
\end{corollary}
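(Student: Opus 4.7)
The plan is to view $N^j_k = D^{k+j}_k U^{k+j}_k$ as a pure walk with $2j$ operators---the first $j$ being ups, the next $j$ being downs---and invoke Proposition \ref{prop:hdx-pure-eig-vals} to read off the approximate eigenvalues on each $V_k^\ell$. In the notation of that proposition, the down operators sit at positions $i_s = j+s$ for $s = 1, \dots, j$, so that $i_s - 2s + k + 1 = k+j-s+1$, which is $\geq \ell$ for all $s \leq j$ since $\ell \leq k$. Hence the $\max$ in the product simplifies, and the approximate eigenvalue becomes
\[
\prod_{s=1}^{j}\left(1 - \frac{\ell}{k+j-s+1}\right) = \prod_{s=1}^{j}\frac{k+j-s+1-\ell}{k+j-s+1} = \frac{k!\,(k+j-\ell)!}{(k-\ell)!\,(k+j)!} = \frac{\binom{k}{\ell}}{\binom{k+j}{\ell}},
\]
exactly matching the claimed target. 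The error from Proposition \ref{prop:hdx-pure-eig-vals} specializes, with $h(Y) = j$, to $O(\gamma j(j+k)\binom{k}{\ell})$, giving the first part of the statement.

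For the spectral claim I would then invoke the eigenstripping theorem (\Cref{thm:approx-ortho}) applied to the HD-Level-Set Decomposition $C_k = V_k^0 \oplus \cdots \oplus V_k^k$ together with the approximate eigenvalues $\lambda_\ell = \binom{k}{\ell}/\binom{k+j}{\ell}$ and errors $c_\ell = O(\gamma j(j+k)\binom{k}{\ell})$ just derived. To apply the theorem one needs (i) the smallness condition $c_{\max} \leq \lambda_{\text{dif}}/(4(k+1))$, and (ii) a bound on $\lambda_{\text{ratio}}$. For (ii), a direct computation gives
\[
\lambda_\ell - \lambda_{\ell+1} = \lambda_\ell\cdot\frac{j}{k+j-\ell} \;\geq\; \frac{j}{(k+j)\binom{k+j}{k}} \;\geq\; 2^{-O(k+j)},
\]
so $\lambda_{\text{dif}} \geq 2^{-O(k+j)}$ and $\lambda_{\text{ratio}} \leq 2^{O(k+j)}$. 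Condition (i) then holds under the hypothesis $\gamma \leq 2^{-\Omega(k+j)}$ after bounding $\binom{k}{\ell} \leq 2^k$. Plugging these into $e = O(k\cdot\lambda_{\text{ratio}}\cdot c_{\max}^{1/2})$ yields $e \leq 2^{O(k+j)}\sqrt{\gamma}$, which is the stripping bound in the statement.

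Finally, the isolated eigenvalue $1$ in the spectrum corresponds to $V_k^0$ (the constants), where $\ell = 0$ makes $\binom{k}{0}/\binom{k+j}{0} = 1$; the remaining strips are indexed by $\ell = 1,\dots,k$, matching the union in the statement. The only place requiring mild care is verifying that the eigenvalue gaps are large enough to satisfy the hypotheses of \Cref{thm:approx-ortho}, but since $\lambda_\ell$ is strictly decreasing in $\ell$ with the explicit ratio $\lambda_{\ell+1}/\lambda_\ell = (k-\ell)/(k+j-\ell) < 1$, this step reduces to the crude $2^{-O(k+j)}$ bound above; no subtler combinatorial inequality is needed. I would expect essentially no obstacle: everything is a routine specialization of \Cref{prop:hdx-pure-eig-vals} and \Cref{thm:approx-ortho}, with the one genuine computation being the telescoping product that identifies $\binom{k}{\ell}/\binom{k+j}{\ell}$.
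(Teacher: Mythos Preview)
Your proposal is correct and follows essentially the same approach as the paper: identify the down-operator positions $i_s = j+s$ in \Cref{prop:hdx-pure-eig-vals}, simplify the resulting product to $\binom{k}{\ell}/\binom{k+j}{\ell}$, and then invoke \Cref{thm:approx-ortho} for the spectral strips. The paper's own proof is actually more terse than yours---it states the product computation and then simply says the spectral bounds ``follow immediately from plugging the above into \Cref{thm:approx-ortho}''---whereas you additionally spell out the verification of $\lambda_{\text{dif}} \geq 2^{-O(k+j)}$ and the smallness condition on $c_{\max}$, which the paper leaves implicit.
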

\begin{proof}
By \Cref{prop:hdx-pure-eig-vals}, $N^j_k$ is an $\left(\{\lambda_\ell\}_{\ell=0}^k,\{c(k,\ell,j)\}_{\ell=0}^k\right)$-approximate eigendecomposition for
\[
\lambda_\ell = \prod\limits_{s=1}^{j} \left(1-\frac{\ell}{\max\{k-2s+i_s+1,\ell\}}\right),
\]
where $i_1 \leq \ldots \leq i_s$ denote the indices of down operators. By the definition of $N^j_k$ we have $i_s=j+s$, and therefore
\[
\lambda_\ell = \prod\limits_{s=1}^{j} \left(1-\frac{\ell}{k-s+j+1}\right) = \frac{{k \choose \ell}}{{k + j \choose \ell}}
\]
as desired. The bounds on Spec$(N^j_k)$ follow immediately from plugging the above into \Cref{thm:approx-ortho}.
\end{proof}
A priori, it is not obvious how to bound the spectra of the partial-swap walks, or indeed even that they are HD-walks. However, Alev, Jeronimo, and Tulsiani \cite{alev2019approximating} proved that partial-swap walks may be written as an alternating hypergeometric sum of canonical walks. 
\begin{proposition}[Corollary 4.13 \cite{alev2019approximating}]\label{prop:AJT}
Let $(X,\Pi)$ be a two-sided $\gamma$-local-spectral expander with $\gamma < 1/k$. Then for $0\leq j \leq k$:
\begin{align*}
 S_k^j = \frac{1}{{k \choose k-j}}\sum\limits_{i=0}^{j} (-1)^{j-i}{j \choose i}{k+i \choose i}
 N_k^{i}.
\end{align*}
\end{proposition}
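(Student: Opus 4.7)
The plan is to verify the identity entry-by-entry. Since both sides are linear operators on $C_k$, it suffices to show that for every pair of $k$-faces $s, s'$ the transition probabilities agree. The key structural observation is that unwinding the definitions of the averaging operators and using adjointness $\langle Uf, g\rangle_{\Pi_{k+1}} = \langle f, Dg\rangle_{\Pi_k}$ yields the closed form
\[
N_k^i(s, s') \;=\; \frac{1}{\Pi_k(s')\,\binom{k+i}{k}^{2}} \sum_{\tau \in X(k+i):\, \tau \supseteq s \cup s'} \Pi_{k+i}(\tau),
\]
which vanishes whenever $|s\cap s'| < k-i$ and otherwise depends on $(s,s')$ only through the weighted link at $s\cup s'$. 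The partial-swap walk $S_k^j(s,s')$ is likewise an explicit local quantity, supported only on pairs with $|s\cap s'|=k-j$ and expressible through the same link data.

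Fix $s, s'$ with $|s\cap s'| = k-\ell$ and consider the claimed identity at the entry $(s,s')$. For $\ell > j$, every summand on the right-hand side vanishes (the sum runs only over $i \le j < \ell$), matching the left-hand side which vanishes whenever $\ell \neq j$. For $\ell \le j$, one factors the common local term $\sum_{\tau \supseteq s\cup s'}\Pi_{k+i}(\tau)$ out of each summand and telescopes the weights across levels $k+\ell, \ldots, k+i$ using the downward-closure relation \Cref{eq:distr-down-clos}; this collapses all $\Pi$-dependence and reduces the claim to a purely combinatorial identity in $(k, j, \ell, i)$ that is independent of the weight function. That identity is a standard instance of triangular binomial inversion: the matrix $[N_k^i \text{ in terms of } S_k^\ell]$ is lower-triangular in $\ell$ with explicit binomial entries, and the coefficients $\tfrac{(-1)^{j-i}}{\binom{k}{k-j}}\binom{j}{i}\binom{k+i}{i}$ stated in the proposition are precisely the entries of its triangular inverse, which one recovers from Vandermonde's convolution (equivalently, M\"obius inversion on the Boolean interval $\{\ell,\ldots,j\}$).

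The main obstacle is careful bookkeeping: tracking $\Pi$-weights at each intermediate level, verifying that the nested $\tau$-sums genuinely telescope, and confirming that the normalization constants $\binom{k+i}{k}^{2}$ and $\Pi_k(s')$ combine correctly so that the problem reduces to a clean distribution-free binomial identity. The hypothesis $\gamma < 1/k$ is used only to ensure the relevant links at levels $\le k+j$ are non-degenerate so that all $N_k^i$ and $S_k^j$ are well-defined walks; the identity itself is an \emph{exact} combinatorial statement with no $\gamma$-dependence or error term. Once the local reduction is in place, the remaining binomial inversion is entirely routine.
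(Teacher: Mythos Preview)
The paper does not give its own proof of this statement; it is quoted as Corollary~4.13 of Alev--Jeronimo--Tulsiani \cite{alev2019approximating} and used as a black box. So there is no in-paper argument to compare against.

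That said, your plan is the correct one and is essentially how the result is proved in the cited source. A couple of small corrections and clarifications. First, the closed form should read $\Pi_k(s)$, not $\Pi_k(s')$, in the denominator (the transition probability out of $s$ is normalized by the weight at $s$). Second, the ``telescoping'' step is really the single identity $\sum_{\tau \in X(k+i):\,\tau \supseteq s\cup s'}\Pi_{k+i}(\tau)=\binom{k+i}{k+\ell}\Pi_{k+\ell}(s\cup s')$ coming directly from iterating \Cref{eq:distr-down-clos}; once you use it, all $\Pi$-dependence sits in the common factor $\Pi_{k+\ell}(s\cup s')/\Pi_k(s)$. Third, and this is the one point your sketch does not make explicit: you need that the normalization constant $\alpha_s$ in the definition of $S_k^\ell$ is \emph{independent of $s$}. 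This holds because the down-step in $N_k^\ell$ is uniform, so the probability of landing at intersection exactly $k-\ell$ is the purely combinatorial quantity $\binom{k}{\ell}/\binom{k+\ell}{k}$ regardless of the starting face. With that in hand one gets the clean forward relation $N_k^i=\sum_{\ell=0}^{i}\frac{\binom{i}{\ell}\binom{k}{\ell}}{\binom{k+i}{k}}S_k^\ell$, and standard binomial inversion (write $\binom{k+i}{k}N_k^i=\sum_\ell\binom{i}{\ell}\bigl(\binom{k}{\ell}S_k^\ell\bigr)$ and invert) yields exactly the stated coefficients. You are also right that the hypothesis $\gamma<1/k$ plays no role in the identity itself; it is an exact combinatorial statement on any weighted pure complex of sufficient dimension.
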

As a result, we can use \Cref{prop:canon-spectra-HDX} to bound their approximate eigenvalues and true spectrum.
\begin{corollary}\label{cor:swap-HDX-spectra}
Let $X$ be $d$-dimensional two-sided $\gamma$-local-spectral expander, $\gamma < 2^{-\Omega(k)}$, $k+j \leq d$, and $f_\ell \in V_k^\ell$. Then:
\[
\norm{S_k^j f_\ell - \frac{{k-j \choose \ell}}{{k \choose \ell}}f_\ell} \leq c(k)\norm{f_\ell},
\]
where $c(k) = \gamma 2^{O(k)}$. Moreover,
\[
\text{Spec}(S^j_k) = \{1\} \cup \bigcup_{j=1}^{k} \left [\frac{\binom{k-j}{\ell}}{\binom{k}{\ell}} \pm 2^{O(k)}\sqrt{\gamma} \right ].
\]
\end{corollary}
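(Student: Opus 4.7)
The natural approach is to combine the Alev--Jeronimo--Tulsiani decomposition \Cref{prop:AJT} expressing $S_k^j$ as an affine combination of canonical walks with the spectral analysis already carried out in \Cref{prop:canon-spectra-HDX}. Concretely, I would view $S_k^j = \sum_{i=0}^{j} \alpha_i N_k^i$ with $\alpha_i = \frac{(-1)^{j-i}}{\binom{k}{k-j}}\binom{j}{i}\binom{k+i}{i}$ as a bona fide $k$-dimensional HD-walk of height $h(S_k^j)=j$ and weight $w(S_k^j) = \frac{1}{\binom{k}{k-j}}\sum_{i=0}^{j}\binom{j}{i}\binom{k+i}{i} \le 2^{O(k)}$, and then invoke \Cref{prop:hdx-eig-vals} directly on $S_k^j$.

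\textbf{Key steps.} First, apply \Cref{prop:hdx-eig-vals} to conclude that the HD-Level-Set Decomposition is an approximate eigendecomposition for $S_k^j$ with approximate eigenvalue on $V_k^\ell$ equal to the affine combination
\[
\lambda_\ell(S_k^j) \;=\; \sum_{i=0}^{j} \alpha_i \cdot \lambda_\ell(N_k^i) \;=\; \frac{1}{\binom{k}{k-j}}\sum_{i=0}^{j}(-1)^{j-i}\binom{j}{i}\binom{k+i}{i}\frac{\binom{k}{\ell}}{\binom{k+i}{\ell}},
\]
and with additive error $O(\gamma \, w(S_k^j)\, h(S_k^j)(k+h(S_k^j))\binom{k}{\ell}) = \gamma \cdot 2^{O(k)}$, as required. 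Second, verify the combinatorial identity
\[
\frac{1}{\binom{k}{k-j}}\sum_{i=0}^{j}(-1)^{j-i}\binom{j}{i}\binom{k+i}{i}\frac{\binom{k}{\ell}}{\binom{k+i}{\ell}} \;=\; \frac{\binom{k-j}{\ell}}{\binom{k}{\ell}}.
\]
Third, feed the resulting $(\{\lambda_\ell(S_k^j)\}_{\ell=0}^k,\{\gamma 2^{O(k)}\}_{\ell=0}^k)$-approximate eigendecomposition into \Cref{thm:approx-ortho} (Eigenstripping): since the approximate eigenvalues are pairwise separated by at least $\Omega(1/2^{O(k)})$ (indeed, they are nonzero rational numbers with bounded denominators) and $\gamma$ is assumed $\le 2^{-\Omega(k)}$, the smallness hypothesis of \Cref{thm:approx-ortho} is satisfied, and we obtain $\mathrm{Spec}(S_k^j) \subseteq \{1\}\cup\bigcup_{\ell=1}^{k}\bigl[\tfrac{\binom{k-j}{\ell}}{\binom{k}{\ell}}\pm 2^{O(k)}\sqrt{\gamma}\bigr]$.

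\textbf{Main obstacle.} The substantive content lies entirely in the combinatorial identity of the second step; the rest is bookkeeping. Two clean routes are available. The direct route rewrites $\binom{k+i}{i}/\binom{k+i}{\ell} = \ell!(k+i-\ell)!/(k!\,i!)$, reducing the identity to a finite-difference evaluation $\sum_{i=0}^{j}(-1)^{j-i}\binom{j}{i}(k+i-\ell)!/i!$, which collapses by the standard identity $\sum_{i}(-1)^{j-i}\binom{j}{i}p(i) = j!\cdot[\Delta^j p](0)$ applied to the polynomial-like factor in $i$. The more elegant route observes that on the complete complex $J(n,d)$ (a $0$-local-spectral expander), the partial-swap walk $S_k^j$ coincides exactly with the Johnson graph $J(n,k,k-j)$, whose eigenvalues on the classical Johnson decomposition are known to be the Eberlein polynomials $\binom{k-j}{\ell}/\binom{k}{\ell}$; since the AJT decomposition and the canonical-walk eigenvalues $\binom{k}{\ell}/\binom{k+i}{\ell}$ are identities that hold in this setting with zero error, the required combinatorial identity follows as a special case and then transfers to the general HDX setting by \Cref{prop:hdx-eig-vals}.

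\textbf{Error accounting.} I expect no subtleties in the error propagation: the coefficients $|\alpha_i|$ are bounded by $2^{O(k)}$, the per-term errors from \Cref{prop:canon-spectra-HDX} are $\gamma\cdot 2^{O(k)}$ (using $\binom{k}{\ell}\le 2^k$ and $j\le k$), and the triangle inequality gives the final $\gamma\cdot 2^{O(k)}$ bound stated in the corollary. Finally, converting to the spectral statement loses a square root via \Cref{thm:approx-ortho}, yielding the $2^{O(k)}\sqrt{\gamma}$ strip width advertised.
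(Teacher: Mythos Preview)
Your proposal is correct and follows essentially the same route as the paper: apply \Cref{prop:hdx-eig-vals} to the AJT decomposition of $S_k^j$, simplify the resulting combinatorial sum (the paper uses your direct route, rewriting $\binom{k+i}{i}\binom{k}{\ell}/\binom{k+i}{\ell}=\binom{k-\ell+i}{i}$ and then applying $\sum_i(-1)^{j-i}\binom{j}{i}\binom{k-\ell+i}{i}=\binom{k-\ell}{j}$), bound $\|\vec{\alpha}\|_1\le 2^{2j+k}$, and feed into \Cref{thm:approx-ortho}. One small caveat: your remark that the approximate eigenvalues are ``nonzero rational numbers with bounded denominators'' is not quite right, since $\binom{k-j}{\ell}/\binom{k}{\ell}=0$ for $\ell>k-j$; but this is harmless, as repeated eigenvalues just mean the corresponding strips merge (the paper glosses over this as well).
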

\begin{proof}
By \Cref{prop:hdx-eig-vals}, $\bigoplus_{\ell=0}^k V_k^\ell$ is a $\left(\{\lambda_\ell\}_{\ell=0}^k,\{c'(k,\ell,j)\}_{\ell=0}^k\right)$-approximate eigendecomposition for $S_k^j$ with
\begin{align*}
    \lambda_\ell &= \frac{1}{{k \choose k-j}}\sum\limits_{i=0}^{j} (-1)^{j-i}{j \choose i}{k+i \choose i}\lambda_\ell(N_k^{i})\\
    &= \frac{1}{{k \choose k-j}}\sum\limits_{i=0}^{j} (-1)^{j-i}{j \choose i}{k+i \choose i}\frac{{k \choose \ell}}{{k+i \choose \ell}}\\
    & = \frac{1}{{k \choose k-j}}\sum\limits_{i=0}^j(-1)^{j-i}{j \choose i}{k-\ell+i \choose i}\\
    &= \frac{{k - \ell \choose j}}{{k \choose k-j}}\\
    &= \frac{\binom{k-j}{\ell}}{\binom{k}{\ell}},
\end{align*}
and
\begin{align*}
    c'(k,\ell,j) &= \gamma 2^{O(k)}.
\end{align*}
This latter fact follows from noting that
\begin{align*}
\norm{\vec{\alpha}}_1 = \sum\limits_{i=0}^j {j \choose i}{{k+i \choose i}} \leq 2^{2j+k}
\end{align*}
where $\vec{\alpha}$ consists of the hypergeometric coefficients of \Cref{prop:AJT}. The bounds on Spec($S^j_k$) then follow from \Cref{thm:approx-ortho}.
\end{proof}
Together, \Cref{prop:canon-spectra-HDX} and \Cref{cor:swap-HDX-spectra} prove \Cref{thm:intro-hdx-swap-canon} (assuming $\gamma$ is sufficiently small).

In \Cref{thm:new-intro-hdx-spectra}, we mentioned that approximate eigenvalues $\lambda_i(M)$ are monotonically decreasing for any HD-walk. In fact, to prove this we will need to restrict our original definition of HD-walks slightly, requiring that our walks are always well-defined on the complete complex.
\begin{definition}[Complete HD-Walk]\label{def:complete-walk}
Let $(X,\Pi)$ be a weighted, pure simplicial complex and $M=\sum\limits_{Y \in \mathcal Y} \alpha_Y Y$ an HD-walk on $(X,\Pi)$. We call $M$ \textit{complete} if for all $n \in \mathbb{N}$ there exist $n_0 > n$ and $d$ such that $\sum\limits_{Y \in \mathcal Y} \alpha_Y Y$ is also an HD-walk when taken to be over $J(n_0,d)$.
\end{definition}
To our knowledge, all walks considered in the literature (pure, canonical, partial-swap) are complete. We can prove that the eigenstrips of complete HD-walks corresponding to the HD-Level-Set Decomposition exhibit eigenvalue decay by noting that the approximate eigenvalues of \Cref{prop:hdx-eig-vals} are independent of the underlying complex.
\begin{proposition}\label{prop:eig-decrease}
Let $(X,\Pi)$ be a two-sided $\gamma$-local-spectral expander, $M = \sum\limits_{Y \in \mathcal Y} \alpha_Y Y$ a complete HD-walk over $(X,\Pi)$, and $\gamma$ small enough to apply the conditions of \Cref{thm:approx-ortho}. Then for all $0 \leq i < j \leq k$,
\[
\lambda_i(M) \geq \lambda_j(M)
\]
\end{proposition}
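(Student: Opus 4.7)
The plan is to use completeness to reduce to the complete complex, where non-negativity of $M$'s transition probabilities forces a decomposition into partial-swap walks with non-negative coefficients and the partial-swap eigenvalues are manifestly non-increasing in the strip index. The key observation is that \Cref{prop:hdx-eig-vals} expresses $\lambda_\ell(M) = \sum_Y \alpha_Y \lambda_\ell(Y)$ as a quantity depending only on the coefficients $\{\alpha_Y\}$ and the combinatorial structure of $M$, not on the ambient complex, so establishing monotonicity on any single complex transfers back to $(X,\Pi)$.

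For the reduction, use completeness to pick $n_0$ arbitrarily large and $d \geq 2k + h(M)$ so that $M$ is a valid $k$-dimensional HD-walk on $J(n_0,d)$ (preserving non-negativity of transition probabilities) and every partial-swap walk $S^j_k$ with $0 \leq j \leq k$ is simultaneously defined there. The averaging operators $U,D$ on $J(n_0,d)$ commute with the natural $S_{n_0}$-action, hence so does every pure walk $Y$, hence so does $M$. This forces $M(s,s')$ to depend only on $|s \cap s'|$, placing $M$ restricted to $\binom{[n_0]}{k}$ in the Bose--Mesner algebra of the Johnson scheme. The partial-swap walks $\{S^j_k\}_{j=0}^k$ form a basis of this $(k+1)$-dimensional algebra (each supported on pairs of intersection size $k-j$), so we may expand
\[
M \;=\; \sum_{j=0}^k \beta_j(n_0)\, S^j_k,
\]
and the disjointness of the supports together with $M(s,s') \geq 0$ forces $\beta_j(n_0) \geq 0$.

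Since $M$ lies in the Bose--Mesner algebra, the HD-Level-Set subspaces $V^\ell_k$ are exact eigenspaces of $M$ on $J(n_0,d)$, and by \Cref{cor:swap-HDX-spectra} the eigenvalue on $V^\ell_k$ is
\[
\mu_\ell(n_0) \;=\; \sum_{j=0}^k \beta_j(n_0)\, \frac{\binom{k-j}{\ell}}{\binom{k}{\ell}} \;=\; \sum_{j=0}^k \beta_j(n_0)\, \frac{\binom{k-\ell}{j}}{\binom{k}{j}}.
\]
Each coefficient $\binom{k-\ell}{j}/\binom{k}{j}$ is non-negative and non-increasing in $\ell$, since the ratio across consecutive values is $(k-\ell-j)/(k-\ell) \leq 1$ whenever positive. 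Thus $\mu_\ell(n_0)$ is a non-negative combination of non-increasing functions of $\ell$ and is itself non-increasing. Finally, \Cref{prop:hdx-eig-vals} applied on $J(n_0,d)$ gives $|\lambda_\ell(M) - \mu_\ell(n_0)| = O(\gamma(n_0))$, and $\gamma(n_0) \to 0$ as $n_0 \to \infty$, so for any $i < j$,
\[
\lambda_i(M) - \lambda_j(M) \;=\; \lim_{n_0 \to \infty} \bigl(\mu_i(n_0) - \mu_j(n_0)\bigr) \;\geq\; 0.
\]

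The main obstacle is arranging non-negative coefficients in a monotonic basis. A generic HD-walk admits an affine representation $M = \sum_Y \alpha_Y Y$ with possibly negative $\alpha_Y$ (witness the alternating signs in \Cref{prop:AJT}), so a term-by-term monotonicity argument on $\sum_Y \alpha_Y \lambda_\ell(Y)$ cannot work directly. Completeness is precisely what lets us change to a sufficiently symmetric complex where non-negativity of transition probabilities translates into non-negativity of coefficients in the partial-swap basis, in which monotonicity is immediate.
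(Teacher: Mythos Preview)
Your proof is correct and follows essentially the same strategy as the paper: transfer $M$ to the complete complex via completeness, use symmetry to write it as a non-negative combination of partial-swap walks (whose approximate eigenvalues are manifestly monotone in $\ell$), and let $n_0 \to \infty$ to kill the $O(\gamma)$ error and conclude monotonicity of the complex-independent quantities $\lambda_\ell(M)$. The only cosmetic difference is that the paper matches $\lambda_\ell(M)$ to $\sum_j \beta_j \lambda_\ell(S^j_k)$ by invoking \Cref{thm:approx-ortho} (the two eigenstrip systems must overlap, hence coincide in the limit), whereas you use that the HD-Level-Set spaces are exact eigenspaces on the Johnson scheme together with the approximation from \Cref{prop:hdx-eig-vals}; both routes yield the same limiting identity.
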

\begin{proof}
The proof follows from two observations. First, recall from \Cref{prop:hdx-eig-vals} that $\lambda_i(M)$ is independent of the underlying complex. Second, any HD-walk on the complete complex can be written as a non-negative sum of partial-swap walks, which satisfy the monotonic decrease property. Let $n \in \mathbb{N}$ be any parameter such that applying $\sum\limits_{Y \in \mathcal Y} \alpha_Y Y$ to $J(n,d)$ results in a valid walk (i.e.\ a non-negative matrix). By the symmetry of $J(n,d)$, the transition probabilities of this walk depends only on size of intersection, and it may thus be written as some convex combination of partial-swap walks:
\[
M=\sum\limits_{Y \in \mathcal Y} \alpha_Y Y =\sum\limits_{i=0}^k \beta_i S_k^i.
\]
Since these walks are equivalent over $J(n,d)$, their spectra must match. Then by \Cref{thm:approx-ortho}, it must be the case that for every $1 \leq \ell \leq k$ and $n$ sufficiently large, the intersection of $\sum\limits_{Y \in \mathcal Y} \alpha_Y \lambda_\ell(Y) \pm O(1/n)$ and $\sum\limits_{Y \in \mathcal Y} \beta_i \lambda_\ell(S_k^i) \pm O(1/n)$ is non-empty. Since we may take $n$ arbitrarily large, this implies the two quantities are in fact equivalent. Finally, by \Cref{cor:swap-HDX-spectra} $\lambda_\ell(S^i_k)$ decreases monotonically in $\ell$ for all $i$, which implies that the $\lambda_i(M) = \sum\limits \alpha_Y\lambda_i(Y) = \sum\limits \beta \lambda_i(S^j_k)$ decrease monotonically as desired.
\end{proof}

\section{Pseudorandomness and the HD-Level-Set Decomposition}\label{sec:pseudorandomness}
Now that we have examined the spectral structure of the HD-Level-Set Decomposition, we turn to understanding its combinatorial characteristics. In this section, we give a combinatorial characterization how arbitrary functions project onto the HD-Level-Set decomposition, proving in particular a generalization of \Cref{thm:overview-pr-projection}: pseudorandom sets have bounded projection onto corresponding levels of the complex. We discuss two variants of pseudorandomness, an $\ell_2$-variant stating that the variance across links is small, and an $\ell_\infty$-variant stating that the maximum (or $\ell_\infty$-norm for arbitrary functions) across links is small. We focus mainly on giving an exact analysis for the $\ell_2$-case, as this forms the structural core of our algorithm for unique games in \Cref{sec:unique-games}. We further discuss the impliations of our $\ell_2$ analysis to the $\ell_\infty$-variant by reduction.

As such, we'll start by analyzing the $\ell_2$-variant. First, let's extend our definition of $\ell_2$-pseudorandomness from sets (boolean functions) to arbitrary functions. 
\begin{definition}[$\ell_2$-Pseudorandom functions]\label{def:pseudorandom}
A function $f \in C_k$ is $(\varepsilon_1,\ldots,\varepsilon_\ell)$-$\ell_2$-pseudorandom if its variance across $i$-links is small for all $1 \leq i \leq \ell$:
\[
Var(D^k_if) \leq \varepsilon_i |\mathbb{E}[f]|
\]
\end{definition}
As mentioned in \Cref{sec:proof-overview}, the key to connecting $\text{Var}(D^k_if)$ and the HD-Level-Set Decomposition is to notice that by the adjointness of $D$ and $U$, we can reduce the problem to analyzing the spectral structure of HD-walks. The proof then follows immediately from arguments in the previous section.
\begin{theorem}\label{lem:low-level-weight}
Let $(X,\Pi)$ be a $\gamma$-local-spectral expander with $\gamma \leq 2^{-\Omega(k)}$ and let $f \in C_k$ have HD-Level-Set Decomposition $f=f_0+\ldots+f_k$. Then for any $\ell \leq k$, the $\text{Var}(D^k_\ell f)$ is controlled by its projection onto $V_k^0 \oplus \ldots \oplus V_k^\ell$ in the following sense: 
\begin{align*}
     \text{Var}(D^k_\ell f)
     &\in \sum\limits_{j=1}^\ell \frac{{\ell \choose j}}{{k \choose j}}\langle f, f_j \rangle \pm c\gamma \langle f, f \rangle
\end{align*}
where $c \leq 2^{O(k)}$.
\end{theorem}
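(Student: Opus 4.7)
The plan is to reduce the variance computation to the spectral analysis of the HD-walk $U^k_\ell D^k_\ell$ via adjointness, and then apply the approximate eigenvalue computation from \Cref{prop:hdx-eig-vals} term-by-term on the HD-Level-Set Decomposition. First I would write
\[
\text{Var}(D^k_\ell f) = \langle D^k_\ell f, D^k_\ell f \rangle - \mathbb{E}[D^k_\ell f]^2 = \langle f, U^k_\ell D^k_\ell f \rangle - \mathbb{E}[f]^2,
\]
using adjointness of $U$ and $D$, and the fact that the averaging operators preserve expectations. Expanding $f=\sum_{j=0}^k f_j$ via the HD-Level-Set Decomposition and invoking bilinearity reduces the problem to evaluating $\langle f, U^k_\ell D^k_\ell f_j\rangle$ for each $j$.

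Next I would apply \Cref{prop:hdx-eig-vals} (the general approximate eigenvalue formula for HD-walks) to the pure walk $U^k_\ell D^k_\ell$, which has height $h=k-\ell$ with down operators at positions $i_s=s$. A short telescoping computation of the product $\prod_{s=1}^{k-\ell}\bigl(1-\frac{j}{\max\{j,\,k-s+1\}}\bigr)$ yields the approximate eigenvalue
\[
\lambda_j \;=\; \begin{cases} \binom{\ell}{j}/\binom{k}{j} & j\le \ell,\\ 0 & j>\ell,\end{cases}
\]
where for $j>\ell$ one of the factors in the product vanishes exactly. The $j=0$ contribution is $\lambda_0\langle f, f_0\rangle = 1\cdot \mathbb{E}[f]^2$ (since $f_0$ is constant), which exactly cancels $\mathbb{E}[f]^2$ in the variance formula, so after cancellation only the $1\le j\le \ell$ terms survive in the main part.

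The main technical obstacle is bookkeeping the approximation errors to achieve the claimed $c\le 2^{O(k)}$ constant. The per-term error from \Cref{prop:hdx-eig-vals} is $\|U^k_\ell D^k_\ell f_j - \lambda_j f_j\|\le O(\gamma k^2 \binom{k}{j})\|f_j\|$, so by Cauchy-Schwarz
\[
\Bigl|\sum_{j=0}^k \langle f, U^k_\ell D^k_\ell f_j - \lambda_j f_j\rangle\Bigr| \;\le\; O(\gamma k^2)\,\|f\|\sum_{j=0}^k \binom{k}{j}\|f_j\|.
\]
Applying Cauchy-Schwarz again with $\sum_j \binom{k}{j}^2 = \binom{2k}{k}\le 4^k$, together with approximate orthogonality of the HD-Level-Set Decomposition (\Cref{lemma:approx-orthog} applied via \Cref{prop:hdx-eig-vals}, giving $\sum_j \|f_j\|^2 \le (1+O(k\gamma))\|f\|^2$), bounds the error sum by $2^{O(k)}\gamma\,\langle f,f\rangle$. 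Combining the main term and the error bound yields the claimed two-sided inclusion.
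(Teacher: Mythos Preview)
Your proposal is correct and follows essentially the same approach as the paper: reduce to $\langle f, U^k_\ell D^k_\ell f\rangle$ by adjointness, expand along the HD-Level-Set Decomposition, apply the approximate eigenvalue formula (the paper cites \Cref{prop:hdx-pure-eig-vals} directly rather than the more general \Cref{prop:hdx-eig-vals}, but this is immaterial), and subtract $\mathbb{E}[f]^2=\langle f,f_0\rangle$. The only cosmetic difference is in the error bookkeeping: the paper simply uses $\|f_j\|\le (1+2^{O(k)}\gamma)\|f\|$ to absorb each of the $k{+}1$ error terms into a single $2^{O(k)}\gamma\langle f,f\rangle$, whereas you route through an extra Cauchy--Schwarz with $\sum_j\binom{k}{j}^2=\binom{2k}{k}$; both arrive at the same $2^{O(k)}$ constant.
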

\begin{proof}
To start, notice that since $\langle D_k^\ell f, D_k^\ell f \rangle = \langle U^k_\ell D^k_\ell f, f \rangle$ it is enough to analyze the application of the HD-walk $U^k_\ell D^k_\ell$ to $f$. By \Cref{prop:hdx-pure-eig-vals}, we know that each $f_j$ is an approximate eigenvector satisfying:
\[
\norm{U^k_\ell D^k_\ell f_j - \frac{{\ell \choose j}}{{k \choose j}}f_j} \leq c_1\gamma\norm{f},
\]
where $c_1 \leq 2^{O(k)}$. Combining these observations gives:
\begin{align*}
     \left \langle D^k_\ell f,D^k_\ell f \right \rangle &= \left \langle  f,U^k_\ell D^k_\ell f \right \rangle\\
     &= \sum\limits_{j=0}^k \langle f, U^k_\ell D^k_\ell f_j \rangle\\
     &\in \sum\limits_{j=0}^\ell \frac{{\ell \choose j}}{{k \choose j}}\langle f, f_j \rangle \pm c\gamma \langle f, f \rangle
\end{align*}
where all constants $c \leq 2^{O(k)}$, and noting that $\langle f, f_0 \rangle = \mathbb{E}[f]^2$ completes the result.
\end{proof}
\Cref{thm:overview-pr-projection} follows immediately from combining this result with approximate orthogonality of the HD-Level-Set Decomposition.
\begin{lemma}[DDFH Theorem 4.6]\label{lemma:HDX-approx-ortho}
Let $(X,\Pi)$ be a $\gamma$-local-spectral expander with $\gamma \leq 2^{-\Omega(k)}$ and let $f \in C_k$ have HD-Level-Set Decomposition $f=f_0 + \ldots + f_k$. Then for all $i\neq j$:
\[
|\langle f_i,f_j \rangle | \leq c_1\gamma \langle f,f \rangle
\]
where $c_1 \leq 2^{O(k)}$.
\end{lemma}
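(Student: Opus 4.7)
The plan is to obtain the approximate orthogonality by invoking the general linear-algebraic result for approximate eigendecompositions (\Cref{lemma:approx-orthog}) applied to a concrete HD-walk, and then bootstrap from a $\|f_i\|\|f_j\|$-normalized bound to a $\langle f,f\rangle$-normalized bound.

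First, I would apply the machinery of \Cref{sec:hdx-spectra} to the simple canonical walk $N_k^1 = D_{k+1} U_k$. By \Cref{prop:canon-spectra-HDX}, the HD-Level-Set Decomposition $C_k = V_k^0 \oplus \cdots \oplus V_k^k$ is a $(\{\lambda_\ell\}, \{c_\ell\})$-approximate eigendecomposition for $N_k^1$ with $\lambda_\ell = \binom{k}{\ell}/\binom{k+1}{\ell} = (k+1-\ell)/(k+1)$ and error $c_\ell \le O(\gamma k \binom{k}{\ell}) \le 2^{O(k)}\gamma$. Since the distinct approximate eigenvalues are separated by at least $1/(k+1)$, \Cref{lemma:approx-orthog} immediately yields, for every $i \ne j$,
\[
|\langle f_i, f_j\rangle| \;\le\; \frac{c_i + c_j}{|\lambda_i - \lambda_j|}\,\|f_i\|\,\|f_j\| \;\le\; \gamma' \|f_i\|\|f_j\|,
\]
where $\gamma' = 2^{O(k)}\gamma$. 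This is the ``weak'' version of approximate orthogonality, stated relative to the norms of the individual components rather than to $\|f\|$.

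Next, I would upgrade this to the desired $\langle f,f\rangle$-normalized bound by controlling $\|f_i\|$ in terms of $\|f\|$. Expanding $\|f\|^2$ using the decomposition and the weak bound above,
\[
\|f\|^2 \;=\; \sum_\ell \|f_\ell\|^2 + \sum_{i \ne j} \langle f_i, f_j\rangle \;\ge\; \sum_\ell \|f_\ell\|^2 - \gamma'\Big(\sum_\ell \|f_\ell\|\Big)^2 \;\ge\; (1 - (k+1)\gamma')\sum_\ell \|f_\ell\|^2,
\]
using Cauchy--Schwarz in the last step. Because we are in the regime $\gamma \le 2^{-\Omega(k)}$, we can take $\gamma'(k+1) \le 1/2$, which gives $\sum_\ell \|f_\ell\|^2 \le 2\|f\|^2$, and in particular $\|f_i\|\|f_j\| \le 2\|f\|^2$. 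Substituting this back into the weak bound gives
\[
|\langle f_i, f_j\rangle| \;\le\; 2\gamma' \langle f, f\rangle \;=\; 2^{O(k)} \gamma\, \langle f, f\rangle,
\]
which is exactly the claim.

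The only potentially delicate step is the bootstrap: we need $(k+1)\gamma' < 1$ so that the inequality $\sum_\ell \|f_\ell\|^2 \le O(\|f\|^2)$ can be extracted. Since $\gamma' = 2^{O(k)}\gamma$, the hypothesis $\gamma \le 2^{-\Omega(k)}$ (with a sufficiently large implicit constant) is precisely what is required; no further subtlety arises. All remaining work is accounting for constants, which can be done by tracking the $\binom{k}{\ell}$ and $(k+1)$ factors explicitly through \Cref{prop:hdx-eig-vals} and \Cref{prop:canon-spectra-HDX}.
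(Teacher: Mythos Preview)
Your proposal is correct. Note that the paper does not actually prove this lemma---it is cited directly from DDFH (Theorem 4.6 of \cite{dikstein2018boolean})---so strictly speaking there is no ``paper's own proof'' to compare against.

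That said, the paper does prove a closely related SoS version in \Cref{app:SoS} (the second part of \Cref{lemma:f-SoS}), and it is worth contrasting the two routes. The appendix argument does not go through the abstract \Cref{lemma:approx-orthog}; instead it uses \Cref{lemma:body-DU-UD} directly to write $\langle f_i,f_j\rangle = \langle g_i,\Gamma g_j\rangle$ for an operator $\Gamma$ of small norm (exploiting $g_j\in\ker(D_j)$), then applies Cauchy--Schwarz and \Cref{lemma:f-vs-g} to get $|\langle f_i,f_j\rangle| \le c\gamma(\|f_i\|^2+\|f_j\|^2)$. Your route via \Cref{lemma:approx-orthog} applied to $N_k^1$ is cleaner and more conceptual---it makes transparent that approximate orthogonality is a generic consequence of being an approximate eigendecomposition for \emph{some} self-adjoint operator---whereas the appendix route is more hands-on but has the advantage of being manifestly degree-2 SoS. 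The bootstrap step (controlling $\sum_\ell\|f_\ell\|^2$ by $\|f\|^2$ and then feeding back) is essentially identical in both arguments.
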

\begin{corollary}\label{cor:ell2-proj}
Let $(X,\Pi)$ be a $\gamma$-local-spectral expander with $\gamma \leq 2^{-\Omega(k)}$ and let $f \in C_k$ be an $(\varepsilon_1,\ldots,\varepsilon_\ell)$-$\ell_2$-pseudorandom function. Then for any $1 \leq i \leq \ell$:
\[
|\langle f, f_i \rangle| \leq {k \choose i}\varepsilon_i|\mathbb{E}[f]| + c\gamma\langle f,f \rangle
\]
\end{corollary}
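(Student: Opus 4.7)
The plan is to extract $\langle f, f_i \rangle$ directly from \Cref{lem:low-level-weight} applied at level $\ell = i$, and then invoke the pseudorandomness hypothesis on $\text{Var}(D^k_i f)$. Specifically, \Cref{lem:low-level-weight} gives
\[
\frac{1}{\binom{k}{i}}\langle f, f_i \rangle + \sum_{j=1}^{i-1} \frac{\binom{i}{j}}{\binom{k}{j}}\langle f, f_j \rangle \;\in\; \text{Var}(D^k_i f) \pm c_0 \gamma \langle f, f \rangle,
\]
with $c_0 \leq 2^{O(k)}$. The main task is to show that the lower-order cross terms on the left cannot be so negative as to hide a large $\langle f, f_i \rangle$.

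This is where approximate orthogonality (\Cref{lemma:HDX-approx-ortho}) enters. Writing $\langle f, f_j \rangle = \|f_j\|^2 + \sum_{m \neq j} \langle f_m, f_j \rangle$ and using $\|f_j\|^2 \geq 0$ together with $|\langle f_m, f_j \rangle| \leq c_1 \gamma \langle f, f\rangle$, we obtain $\langle f, f_j \rangle \geq -kc_1\gamma \langle f, f \rangle$ for every $j$. Since $\binom{i}{j}/\binom{k}{j} \leq 1$ whenever $j \leq i \leq k$, the lower-order sum $\sum_{j=1}^{i-1}\frac{\binom{i}{j}}{\binom{k}{j}}\langle f, f_j \rangle$ is at least $-2^{O(k)} \gamma \langle f, f\rangle$. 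Combining this with the pseudorandomness bound $\text{Var}(D^k_i f) \leq \varepsilon_i |\mathbb{E}[f]|$ yields
\[
\frac{1}{\binom{k}{i}}\langle f, f_i \rangle \;\leq\; \varepsilon_i |\mathbb{E}[f]| + c'\gamma\langle f, f\rangle,
\]
so $\langle f, f_i \rangle \leq \binom{k}{i}\varepsilon_i |\mathbb{E}[f]| + \binom{k}{i} c' \gamma \langle f, f\rangle$. The matching lower bound $\langle f, f_i \rangle \geq -kc_1 \gamma\langle f, f \rangle$ is immediate from the same approximate-orthogonality computation applied at level $i$, and both bounds together give the claimed inequality after absorbing all $2^{O(k)}$ factors into a single constant $c$.

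There is no real obstacle here; the proof is essentially a one-line manipulation once \Cref{lem:low-level-weight} and \Cref{lemma:HDX-approx-ortho} are available. The only subtle point is verifying that the signs of the lower-order contributions work in our favor, which reduces to the trivial observation that $\|f_j\|^2 \geq 0$ and that $\binom{i}{j}/\binom{k}{j} \leq 1$. Notably, the argument only uses the $\ell_2$-pseudorandomness hypothesis at level $i$ itself, not at any lower level, which is why the final bound depends only on $\varepsilon_i$.
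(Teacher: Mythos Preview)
Your proposal is correct and follows essentially the same approach as the paper: both apply \Cref{lem:low-level-weight} at level $i$, use approximate orthogonality (\Cref{lemma:HDX-approx-ortho}) to show the lower-order terms $\langle f, f_j\rangle$ for $j < i$ are bounded below by $-2^{O(k)}\gamma\langle f,f\rangle$, then invoke the pseudorandomness hypothesis on $\text{Var}(D^k_i f)$ and use the same orthogonality bound for the matching lower bound on $\langle f, f_i\rangle$. Your write-up is slightly more explicit about why the cross terms cannot be too negative (via $\|f_j\|^2 \geq 0$ and $\binom{i}{j}/\binom{k}{j} \leq 1$), but the argument is identical in substance.
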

where $c \leq 2^{O(k)}$.
\begin{proof}
By \Cref{lemma:HDX-approx-ortho}, for all $j$ we have $\langle f, f_j \rangle \geq -c\gamma\langle f,f\rangle$ for some $c \leq 2^{O(k)}$. Then by \Cref{lem:low-level-weight}, for all $0 \leq i \leq k$ the variance of $D_i^k f$ is lower bounded by the projection onto $f_i$:
\begin{align*}
    \text{Var}(D_i^k f) \geq \frac{1}{{k \choose i}}\langle f,f_i \rangle - c_2\gamma\langle f,f \rangle,
\end{align*}
where $c_2 \leq 2^{O(k)}$. Finally, since $f$ is $(\varepsilon_1,\ldots,\varepsilon_\ell)$-$\ell_2$-pseudorandom, we have by definition that for all $1 \leq i \leq \ell$, $\text{Var}(D_i^k f) \leq \varepsilon_i|\mathbb{E}[f]|$. Therefore isolating $\langle f,f_i \rangle$ gives the desired upper bound:
\begin{align*}
    \langle f,f_i \rangle &\leq {k \choose i}\text{Var}(D_i^k f) + c_3\gamma\langle f,f \rangle\\
    &\leq {k \choose i}\varepsilon_i|\mathbb{E}[f]| + c_3\gamma\langle f,f \rangle
\end{align*}
where $c_3 \leq 2^{O(k)}$. Finally, we already noted that $\langle f,f_i\rangle \geq -c\gamma\langle f,f\rangle$ for some $c \leq 2^{O(k)}$, which gives the desired lower bound and completes the proof.
\end{proof}
It is worth noting that both \Cref{lem:low-level-weight} and \Cref{cor:ell2-proj} are tight. This is obvious for the former which is a near-equality, and the latter is clearly tight for subsets like $i$-links which project (almost) entirely onto level $i$ (we'll prove this formally in the next section).

We'll also see in \Cref{sec:expansion} that \Cref{cor:ell2-proj} leads to a tight $\ell_2$-characterization of edge-expansion stating that any non-expanding set must have high variance across links. Since an $\ell_\infty$-variant of this result for the Grassmann graphs was recently crucial for the resolution of the 2-2 Games Conjecture \cite{subhash2018pseudorandom}, it is natural to discuss what our results imply for this regime. First, let's formalize $\ell_\infty$-pseudorandomness for arbitrary functions.
\begin{definition}[$\ell_\infty$-Pseudorandom functions]\label{def:pseudorandom-infty}
A function $f \in C_k$ is $(\varepsilon_1,\ldots,\varepsilon_\ell)$-$\ell_\infty$-pseudorandom if for all $1 \leq i \leq \ell$ its local expectation is close to its global expectation:
\[
\left \| D^k_if - \mathbb{E}[f] \right \|_\infty \leq \varepsilon_i.
\]
\end{definition}
We will prove via reduction to the $\ell_2$-variant that a version of \Cref{cor:ell2-proj} holds in this regime as well. We proceed in two steps. First, we show that $\ell_\infty$-pseudorandom functions are also $\ell_2$-pseudorandom assuming a weak local-consistency property.

\begin{definition}
Let $(X,\Pi)$ be a weighted, pure simplicial complex. We say a function $f \in C_k$ has $\ell$-local constant sign if:
\begin{enumerate}
    \item $\mathbb{E}[f] \neq 0$,
    \item $\forall s \in X(\ell)$ s.t.\ $~\underset{X_s}{\mathbb{E}}[f] \neq 0: \text{sign}\left (\underset{X_s}{\mathbb{E}}[f]\right ) = \text{sign} \left (\mathbb{E}[f] \right )$.
\end{enumerate}
\end{definition}
Second, we'll reduce to the case of locally constant sign by noting that we can always shift a function to satisfy this property. With these definitions in hand, we can now state the $\ell_\infty$-variant of \Cref{thm:overview-pr-projection}:
\begin{theorem}\label{thm:body-local-spec-proj}
Let $(X,\Pi)$ be a $\gamma$-local-spectral expander with $\gamma \leq 2^{-\Omega(k)}$ and let $f \in C_k$ have HD-Level-Set Decomposition $f=f_0+\ldots+f_k$. If $f$ is $(\varepsilon_1,\ldots,\varepsilon_\ell)$-$\ell_\infty$-pseudorandom, then for all $1 \leq i \leq \ell$:
\[
|\langle f, f_i \rangle| \leq  \left({k \choose i} + c(k)\gamma\right)\varepsilon_i^2 + c(k)\gamma\norm{f}^2,
\]
where $c(k) \leq 2^{O(k)}$, and if $f$ has $i$-local constant sign:
\[
|\langle f, f_i \rangle| \leq  {k \choose i}\varepsilon_i|\mathbb{E}[f]| + c(k)\gamma\norm{f}^2.
\]
\end{theorem}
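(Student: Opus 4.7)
The plan is a two-step reduction from the $\ell_\infty$ hypothesis to the already-established $\ell_2$ result \Cref{cor:ell2-proj}. I would first settle the $i$-local constant sign case by converting the $\ell_\infty$ bound on $D^k_i f$ into the variance bound required by $\ell_2$-pseudorandomness, and then handle the general case by shifting $f$ by a constant function so as to reduce it to the constant sign setting.

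For the constant sign case, assume WLOG $\mathbb{E}[f] > 0$, so the hypothesis forces $D^k_i f(s) \geq 0$ pointwise. The $\ell_\infty$ bound then gives $0 \leq D^k_i f \leq \mathbb{E}[f] + \varepsilon_i$, and multiplying this inequality pointwise by the nonnegative function $D^k_i f$ and taking expectations (using that $U$ and $D$ preserve expectations, so $\mathbb{E}[D^k_i f]=\mathbb{E}[f]$) yields $\mathbb{E}[(D^k_i f)^2] \leq (\mathbb{E}[f] + \varepsilon_i)\mathbb{E}[f]$, hence $\text{Var}(D^k_i f) \leq \varepsilon_i |\mathbb{E}[f]|$. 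This is exactly the $\ell_2$-pseudorandomness condition at level $i$ required by the proof of \Cref{cor:ell2-proj}, and invoking it immediately produces the claimed bound for the constant sign case.

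For the general case, I shift. WLOG $\mathbb{E}[f] \geq 0$, and set $C := \varepsilon_i - \mathbb{E}[f]$ and $\tilde f := f + C \cdot \mathbf{1}$. Then $\mathbb{E}[\tilde f] = \varepsilon_i$ and $D^k_i \tilde f \geq (\mathbb{E}[f] - \varepsilon_i) + C = 0$, so $\tilde f$ has $i$-local constant sign. Shifting by a constant leaves $D^k_j \tilde f - \mathbb{E}[\tilde f] = D^k_j f - \mathbb{E}[f]$ unchanged for every $j$, so the $\ell_\infty$-pseudorandomness parameters of $\tilde f$ match those of $f$. Moreover $C \cdot \mathbf{1} \in V_k^0$, so the HD-Level-Set decompositions satisfy $\tilde f_i = f_i$ for $i \geq 1$; combined with $\mathbb{E}[f_i]=0$ for $i \geq 1$ (a standard consequence of $f_i \in V_k^i = U^k_i H^i$), this gives $\langle f, f_i \rangle = \langle \tilde f, \tilde f_i \rangle$. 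Applying the constant sign case to $\tilde f$ and using $|\mathbb{E}[\tilde f]| = \varepsilon_i$ then yields
\begin{equation*}
    |\langle f, f_i \rangle| = |\langle \tilde f, \tilde f_i \rangle| \leq {k \choose i}\varepsilon_i \cdot |\mathbb{E}[\tilde f]| + c(k)\gamma \norm{\tilde f}^2 = {k \choose i}\varepsilon_i^2 + c(k)\gamma \norm{\tilde f}^2.
\end{equation*}
A direct expansion gives $\norm{\tilde f}^2 = \norm{f}^2 + 2C\mathbb{E}[f] + C^2 = \norm{f}^2 + \varepsilon_i^2 - \mathbb{E}[f]^2 \leq \norm{f}^2 + \varepsilon_i^2$, and substituting matches the advertised bound.

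I expect the main subtlety to be in choosing the shift cleanly: the value $C = \varepsilon_i - \mathbb{E}[f]$ is the unique choice that both drives $D^k_i \tilde f$ all the way down to the nonnegative cone and simultaneously produces $|\mathbb{E}[\tilde f]| = \varepsilon_i$, which is exactly what trades the $|\mathbb{E}[f]|$ factor from the constant sign bound for the second $\varepsilon_i$ needed to upgrade to $\varepsilon_i^2$ in the general statement. The remaining bookkeeping (invariance of $\ell_\infty$ parameters under a constant shift, equality $\langle f,f_i\rangle = \langle \tilde f,\tilde f_i\rangle$, and the expansion of $\norm{\tilde f}^2$) is routine.
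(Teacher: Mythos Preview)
Your proposal is correct and follows essentially the same route as the paper: reduce $\ell_\infty$-pseudorandomness with $i$-local constant sign to the $\ell_2$ variance bound and invoke \Cref{cor:ell2-proj}, then handle the general case by the shift $f \mapsto f + (\varepsilon_i - \mathbb{E}[f])\mathbf{1}$, which is exactly the paper's choice. Your derivation of $\text{Var}(D^k_i f)\le \varepsilon_i|\mathbb{E}[f]|$ (multiply the pointwise bound by the nonnegative $D^k_i f$ and integrate) is the same computation the paper packages as \Cref{lem:reduction} via the reweighted distribution $P_i$; the only item you do not address is the degenerate case $\varepsilon_i=0$, which the paper disposes of by a limiting argument.
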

In dealing with expansion, we will mainly be interested in boolean-valued functions, which always have locally-constant sign and satisfy $\langle f, f \rangle = \mathbb{E}[f]$. Thus in the boolean case we have:
\[
\langle f, f_i \rangle \leq \left({k \choose i}\varepsilon_i+2^{O(k)}\gamma\right) \mathbb{E}[f],
\]
which is particularly useful since the expansion of $f$ may be written as $1 - \frac{1}{\mathbb{E}[f]}\langle f, Mf \rangle$. In the case $f$ is non-negative, we can also replace the $\ell_\infty$ norm with maximum in our definition of pseudorandomness, which is important to show non-expanding sets are locally \textit{denser} than expected. Finally, we note that one can improve the dependence on $\gamma$ by pushing exponential dependence on $k$ to the second order $\gamma^2$ term via more careful analysis of error propagation. However since the analysis is complicated and only gives a substantial improvement for a small range of relevant $\gamma$, we relegate such discussion to \Cref{app:decomp}.

We now move to the proof of \Cref{thm:body-local-spec-proj}, starting with our generic $\ell_\infty$ to $\ell_2$ reduction for functions with locally-constant sign.
\begin{lemma}\label{lem:reduction}
Let $(X,\Pi)$ be a weighted, pure simplicial complex, and $f \in C_k$ a $(\varepsilon_1,\ldots,\varepsilon_\ell)$-$\ell_\infty$-pseudorandom function with $i$-local constant sign for any $i \leq \ell$. Then f is also $(\varepsilon_1,\ldots,\varepsilon_\ell)$-$\ell_2$-pseudorandom
\end{lemma}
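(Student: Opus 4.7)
The plan is to reduce the variance of $g := D^k_i f$ to its mean and $\ell_\infty$ deviation via a pointwise non-negativity trick enabled by the locally-constant-sign hypothesis. First I would observe that the averaging operators preserve expectation (this is immediate from the downward closure definition of $\Pi_i$, equation \eqref{eq:distr-down-clos}, applied inductively), so $\mathbb{E}_{\Pi_i}[g] = \mathbb{E}_{\Pi_k}[f]$. By negating $f$ if necessary, I may assume without loss of generality that $\mathbb{E}[f] > 0$. The $i$-local constant sign condition then forces $g(s) = \mathbb{E}_{X_s}[f] \geq 0$ for every $s \in X(i)$.

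Next I would use $\ell_\infty$-pseudorandomness to obtain the pointwise upper bound $g(s) \leq \mathbb{E}[f] + \varepsilon_i$ for all $s \in X(i)$. Combining this with the non-negativity of $g$ gives the key pointwise inequality
\[
g(s)^2 \;\leq\; g(s)\bigl(\mathbb{E}[f] + \varepsilon_i\bigr).
\]
Taking expectation over $\Pi_i$, using $\mathbb{E}[g] = \mathbb{E}[f]$, yields $\mathbb{E}[g^2] \leq (\mathbb{E}[f] + \varepsilon_i)\mathbb{E}[f]$, and subtracting $\mathbb{E}[f]^2$ gives $\mathrm{Var}(g) \leq \varepsilon_i \mathbb{E}[f] = \varepsilon_i |\mathbb{E}[f]|$, which is precisely the $\ell_2$-pseudorandomness bound at level $i$. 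Running this argument for each $i \leq \ell$ completes the proof.

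There is no real obstacle here, since the statement is essentially just the elementary inequality ``for a non-negative random variable, $\mathrm{Var}(g) \leq \|g\|_\infty \cdot \mathbb{E}[g]$'' combined with the observation that the $\ell_\infty$-bound of $\varepsilon_i$ is centered at the mean. The only subtlety is justifying pointwise non-negativity of $g$, which is exactly what the $i$-local constant sign hypothesis buys. Note that without this hypothesis one would incur at best a factor-of-$2$ loss (bounding $\mathbb{E}[|g - \mathbb{E}[f]|]$ directly against $|\mathbb{E}[f]|$ only gives $2|\mathbb{E}[f]|$), which is exactly why the hypothesis appears in the statement and why booleanness (or non-negativity) will suffice downstream.
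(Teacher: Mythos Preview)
Your proof is correct and is essentially the same argument as the paper's, just packaged more directly. The paper rewrites $\frac{1}{\mathbb{E}[f]}\langle D^k_i f, D^k_i f\rangle$ as an expectation of $D^k_i f$ under a $g$-tilted distribution $P_i$ on $X(i)$ (valid precisely because of local constant sign) and then bounds that expectation by $\|D^k_i f - \mathbb{E}[f]\|_\infty$; your pointwise inequality $g(s)^2 \le g(s)\max g$ is the same computation without naming the tilted measure.
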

\begin{proof}
For ease of notation, let $\Pi_k(X_s)$ be shorthand for $\sum\limits_{t\in X_s}\Pi_k(t)$ (i.e. the normalization factor for the above restricted expectation). The trick is to notice that since $f$ has locally constant sign, we may rewrite $\norm{D^k_i f}_2^2$ as an expectation over a related distribution $P_{i}$:
\begin{align*}
    \frac{1}{\mathbb{E}[f]}\langle D^k_i f, D^k_i f \rangle  &= \sum\limits_{s \in X(i)} \Pi_i(s)\left (\frac{1}{\mathbb{E}[f]}\sum\limits_{t \in X_s}\frac{\Pi_k(t)f(t)}{\Pi_k(X_s)}\right )D^k_i f(s)\\
      &= \sum\limits_{s \in X(i)} \left (\frac{1}{\mathbb{E}[f]}\sum\limits_{t \in X_s}\frac{\Pi_k(t)f(t)}{{k \choose i}}\right )D^k_i f(s)\\
    &= \underset{P_{i}}{\mathbb{E}}[D^k_i f], \label{eq:exp-to-norm}
\end{align*}
where we have used the fact that $\Pi_k(X_s) = {k \choose i}\Pi_i(s)$ by \Cref{eq:distr-down-clos}. To understand $P_{i}(s)$ more intuitively, consider the special case when $f$ is non-negative. Here, $\Pi$ and $f$ induce a distribution $P_k$ over $X(k)$, where
\[
P_k(t) = \frac{\Pi_k(t)f(t)}{\mathbb{E}[f]}.
\]
$P_k$ then induces the distribution $P_i$ on $X(i)$ via the following process: draw a face $t \in X(k)$ from $P_k$, and then choose a $i$-face $s \subset t$ uniformly at random. Replacing the non-negativity of $f$ with the conditions in the theorem statement still leaves $P_{i}(s)$ a valid distribution, albeit one with a less intuitive description.

The result then follows from an averaging argument:
\[
\left |\frac{1}{\mathbb{E}[f]}\text{Var}(D^k_if)\right |
= \left |\underset{P_i}{\mathbb{E}}[D^k_i f] - \mathbb{E}[f] \right | \leq \|D^k_if - \mathbb{E}[f]\|_{\infty}
\]
We note that when $\mathbb{E}[f]>0$, the $\ell_\infty$-norm may be replaced with maximum in the above.
\end{proof}
To complete the proof of \Cref{thm:body-local-spec-proj}, we reduce to \Cref{lem:low-level-weight} by noting that any function may be shifted to have locally constant sign and applying our reduction.
\begin{proof}[Proof of \Cref{thm:body-local-spec-proj}]
Note that the latter bound for functions with locally constant sign is immediate since, by \Cref{lem:reduction}, $f$ is also $(\varepsilon_1,\ldots,\varepsilon_\ell)$-$\ell_2$-pseudorandom and therefore satisfies the guarantees of \Cref{cor:ell2-proj}.

For the former, assume for simplicity that $\mathbb{E}[f]\geq 0$ (the negative case follows from a similar argument) and consider the shifted function $f'=f + (\varepsilon_i - \mathbb{E}[f])\mathbbm{1}$. Notice that as long as $\varepsilon_i>0$, $f'$ has positive expectation over all $i$-links and non-zero expectation, and further that
\[
f' = f_0' + f_1 + \ldots + f_k,
\]
where $f_0' = f_0+(\varepsilon_i - \mathbb{E}[f])\mathbbm{1}$ and $f=\sum f_i$ is the original HD-Level-Set Decomposition of $f$. Since adding a constant has no effect on the $\ell_\infty$-pseudorandomness, $f'$ remains $(\varepsilon_1,\ldots, \varepsilon_\ell)$-$\ell_\infty$-pseudorandom and, by our $\ell_\infty$ to $\ell_2$ reduction, $(\varepsilon_1,\ldots, \varepsilon_\ell)$-$\ell_2$-pseudorandom as well. Applying \Cref{cor:ell2-proj} then gives:
\begin{align*}
\langle f +(\varepsilon_i - \mathbb{E}[f])\mathbbm{1}, f_i \rangle &\leq {k \choose  i}\varepsilon_i\mathbb{E} [f+(\varepsilon_i - \mathbb{E}[f])\mathbbm{1}] + c\gamma \langle f + (\varepsilon_i - \mathbb{E}[f])\mathbbm{1}, f + (\varepsilon_i - \mathbb{E}[f])\mathbbm{1} \rangle \\
&\leq \left ({k \choose  i}+c\gamma\right)\varepsilon_i^2 + c\gamma\langle f, f \rangle
\end{align*}
where we have used the fact that $f_i$ is orthogonal to $\mathbbm{1}$ for all $i>0$. We are left to deal with the case that $\varepsilon_i = 0$, which follows from a limiting argument applying the above to any $\varepsilon > 0$.
\end{proof}

\section{Expansion of HD-walks}\label{sec:expansion}
In this section we characterize the edge expansion of HD-walks, proving \Cref{thm:new-intro-local-vs-global}, \Cref{intro:new-l2-pr-sets-expand}, and \Cref{cor:new-non-expansion}. As a reminder, these results focus on two key aspects of edge-expansion: the expansion of links, and the structure of non-expanding sets. We'll start with the former, but first let's recall the definition of edge-expansion (specified to HD-walks for simplicity).
\begin{definition}[Weighted Edge Expansion]
Given a weighted simplicial complex $(X,\Pi)$, a $k$-dimensional HD-Walk $M$ over $(X,\Pi)$, and a subset $S \subset X(k)$, the weighted edge expansion of $S$ is
\[
\phi(S) = \underset{v \sim \Pi_k|_S}{\mathbb{E}}\left[ M(v,X(k) \setminus S)\right],
\]
where 
\[
M(v, X(k) \setminus S) = \sum\limits_{y \in X(k) \setminus S} M(v,y)
\]
and $M(v,y)$ is the transition probability from $v$ to $y$.
\end{definition}
We start by proving \Cref{thm:new-intro-local-vs-global}: that the expansion of $i$-links is (up to $O(\gamma)$ error) exactly controlled by the eigenvalue of the $i$th eigenstrip.
\begin{theorem}[Local Expansion vs Global Spectra]\label{thm:local-vs-global}
Let $(X,\Pi)$ be a $d$-dimensional two-sided $\gamma$-local-spectral expander with $\gamma \leq 2^{-\Omega(k)}$, and $M$ a $k$-dimensional, complete HD-walk with $k<d$. Then for all $0 \leq i \leq k$ and $\tau \in X(i)$:
\[
\phi(X_\tau) \in 1 - \lambda_{i}(M) \pm c\gamma,
\]
where $c \leq w(M) h(M)^2 2^{O(k)}$.
\end{theorem}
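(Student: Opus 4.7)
My plan is to reduce the non-expansion of $X_\tau$ to the spectral machinery of \Cref{sec:hdx-spectra}, and pin down the projections of $\mathbbm{1}_{X_\tau}$ onto the HD-Level-Set Decomposition using an \emph{exact} combinatorial identity that holds for canonical walks on any weighted complex.

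First I would reformulate via the standard identity $1-\phi(X_\tau) = \langle \mathbbm{1}_{X_\tau}, M\mathbbm{1}_{X_\tau}\rangle/\mathbb{E}[\mathbbm{1}_{X_\tau}]$. Since $\mathbbm{1}_{X_\tau}=\binom{k}{i}U^k_i\mathbbm{1}_\tau$, the indicator lies in $V_k^0\oplus\cdots\oplus V_k^i$, so its HD-Level-Set Decomposition is $\mathbbm{1}_{X_\tau}=\sum_{j=0}^{i}\mathbbm{1}_{X_\tau,j}$. Combining \Cref{prop:hdx-eig-vals} (approximate eigendecomposition of $M$) with \Cref{lemma:HDX-approx-ortho} (approximate orthogonality) yields
\[
\langle \mathbbm{1}_{X_\tau}, M\mathbbm{1}_{X_\tau}\rangle \;=\; \sum_{j=0}^{i}\lambda_j(M)\,\|\mathbbm{1}_{X_\tau,j}\|^2 \;\pm\; O\bigl(w(M)h(M)^2 2^{O(k)}\gamma\bigr)\mathbb{E}[\mathbbm{1}_{X_\tau}].
\]

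Next, for canonical walks $M=N^a_k$, the operators telescope exactly: $D^k_i N^a_k U^k_i = D^{k+a}_i U^{k+a}_i = N^{k+a-i}_i$, and a direct combinatorial computation using the iterated normalization $\sum_{\alpha\supset\tau}\Pi_{k+a}(\alpha) = \binom{k+a}{i}\Pi_i(\tau)$ shows the return probability $(N^{b}_i\mathbbm{1}_\tau)(\tau) = 1/\binom{i+b}{i}$ \emph{with no $\gamma$ dependence whatsoever}. Via adjointness between $U^k_i$ and $D^k_i$, this translates to the exact identity $1-\phi(N^a_k, X_\tau) = \binom{k}{i}\bigl(D^k_i N^a_k U^k_i\mathbbm{1}_\tau\bigr)(\tau) = \binom{k}{i}/\binom{k+a}{i} = \lambda_i(N^a_k)$ on any weighted complex. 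Combining this with the spectral formula applied to each $N^a_k$ gives, for every $a \in \{0,1,\ldots,k\}$, the Vandermonde-like constraint
\[
\sum_{j=0}^{i-1}\bigl(\lambda_j(N^a_k)-\lambda_i(N^a_k)\bigr)\|\mathbbm{1}_{X_\tau,j}\|^2 \;=\; O_k(\gamma)\,\mathbb{E}[\mathbbm{1}_{X_\tau}]
\]
on the $i$ weights $\|\mathbbm{1}_{X_\tau,j}\|^2$ for $j<i$. Since $\lambda_j(N^a_k) = \binom{k}{j}/\binom{k+a}{j}$ are distinct non-constant rational functions of $a$ for different $j$, varying $a$ over $\{1,\ldots,i\}$ yields a full-rank system; inverting it shows $\|\mathbbm{1}_{X_\tau,j}\|^2 \leq 2^{O(k)}\gamma\,\mathbb{E}[\mathbbm{1}_{X_\tau}]$ for each $j<i$, i.e.\ $\mathbbm{1}_{X_\tau}$ concentrates almost entirely on $V_k^i$. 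Plugging this concentration back into the spectral formula above immediately gives $1-\phi(X_\tau) = \lambda_i(M) \pm O\bigl(w(M)h(M)^2 2^{O(k)}\gamma\bigr)$, which is the claim.

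The main obstacle I anticipate is controlling the conditioning of the Vandermonde-like matrix $[\lambda_j(N^a_k)-\lambda_i(N^a_k)]_{a,j}$ so that its inverse has norm $2^{O(k)}$ and the $O_k(\gamma)$ errors on the right-hand side are not catastrophically amplified; I expect this to follow from an explicit LU-type factorization exploiting the binomial structure of the entries. An alternative route that avoids inversion altogether is to show directly (via iterated application of the $DU$-flipping identity of \Cref{lemma:body-DU-UD}) that any pure walk can be written as $\sum_a c_a N^a_k + E$ with $|c_a|, \|E\|_{\text{op}} \leq 2^{O(k)}$, and then extend the exact canonical identity by linearity; depending on which yields cleaner constants, the write-up may favor one or the other.
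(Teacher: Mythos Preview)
Your overall strategy is sound and your exact identity $\bar\phi(N_k^a,X_\tau)=\binom{k}{i}/\binom{k+a}{i}=\lambda_i(N_k^a)$ is correct, but the route you take to the projection bound is both more complicated than needed and introduces a hypothesis not present in the theorem. The paper reaches the same conclusion (\Cref{lemma:link-projection}) using a \emph{single} walk, $N_k^1$, rather than a family $\{N_k^a\}_{a=1}^{i}$: since $\mathbbm{1}_{X_\tau}\in V_k^0\oplus\cdots\oplus V_k^i$, the weights $q_j=\langle\mathbbm{1}_{X_\tau},\mathbbm{1}_{X_\tau,j}\rangle/\langle\mathbbm{1}_{X_\tau},\mathbbm{1}_{X_\tau}\rangle$ sum exactly to $1$ and are each $\geq -c\gamma$ by approximate orthogonality, while the eigenvalues $\lambda_j(N_k^1)=(k+1-j)/(k+1)$ are \emph{strictly decreasing} in $j$ with gap $1/(k+1)$. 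The single exact constraint $\sum_{j\le i}\lambda_j(N_k^1)q_j=\lambda_i(N_k^1)\pm O_k(\gamma)$ then forces every $q_j$ with $j<i$ to be $O_k(\gamma)$ directly, with no linear system to invert and no conditioning to control. Monotonicity plus approximate nonnegativity replaces your entire Vandermonde step.

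This simplification also fixes a genuine gap in your proposal: your system uses $N_k^a$ for $a$ up to $i$, which requires the complex to have dimension $d\ge k+i$ (and hence $d\ge 2k$ when $i=k$), whereas the theorem only assumes $d>k$. The paper's argument needs only $N_k^1$, so $d\ge k+1$ suffices. Your alternative route (rewrite $M$ as $\sum_a c_a N_k^a+E$) has the same issue whenever $h(M)$ exceeds $d-k$, and more to the point it is unnecessary: once the projection lemma is established via $N_k^1$, the general case follows by expanding $\langle\mathbbm{1}_{X_\tau},M\mathbbm{1}_{X_\tau}\rangle$ over the HD-Level-Set Decomposition and applying \Cref{prop:hdx-eig-vals} to $M$ directly, exactly as in your first displayed equation.
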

The main idea behind \Cref{thm:local-vs-global} is simply to show that
the indicator function of any $i$-link always lies almost entirely in $V_k^i$ (or equivalently, almost entirely in the $i$th eigenstrip $W_k^i$).
\begin{lemma}\label{lemma:link-projection}
Let $(X,\Pi)$ be a $d$-dimensional two-sided $\gamma$-local-spectral expander with $\gamma \leq 2^{-\Omega(k)}$. Then for all $0 \leq i \leq k<d$ and $\tau \in X(i)$, $1_{X_\tau}$ lies almost entirely in $V_k^i$. That is for all $j \neq i$:
\[
\langle 1_{X_\tau}, 1_{X_\tau,j} \rangle \leq c\gamma \langle 1_{X_\tau}, 1_{X_\tau} \rangle
\]
where $c \leq 2^{O(k)}$.
\end{lemma}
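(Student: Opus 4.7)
The plan is to reduce the claim to an application of the pseudorandomness machinery of \Cref{sec:pseudorandomness}, specifically \Cref{cor:ell2-proj}, via a pointwise bound on $D^k_j\mathbf{1}_{X_\tau}$.

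First, since $\mathbf{1}_{X_\tau} = \binom{k}{i}U^k_i\mathbf{1}_\tau$, the function $\mathbf{1}_{X_\tau}$ automatically lies in $V_k^0 \oplus \cdots \oplus V_k^i$. Uniqueness of the HD-Level-Set decomposition (\Cref{thm:decomp-ddfh}) then gives $\mathbf{1}_{X_\tau,j} = 0$ whenever $j > i$, handling that case trivially. For the harder range $j < i$, I would argue that $\mathbf{1}_{X_\tau}$ is $(\varepsilon, i-1)$-$\ell_\infty$-pseudorandom with $\varepsilon = O_k(\gamma)$, then pass to $\ell_2$-pseudorandomness via \Cref{lem:reduction} and apply \Cref{cor:ell2-proj}.

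To establish the pseudorandomness, compute $D^k_j\mathbf{1}_{X_\tau}(\sigma) = \Pi_k(X_{\sigma \cup \tau})/\Pi_k(X_\sigma)$ whenever $\sigma \cup \tau$ is a valid face (zero otherwise). This ratio equals precisely the mass of the top-level link of $\tau \setminus \sigma$ inside the sub-complex $X_\sigma$. Since $|\tau \setminus \sigma| \geq 1$ whenever $\sigma$ does not contain $\tau$ (always the case for $j < i$), and the trickle-down theorem ensures that $X_\sigma$ is itself a two-sided $O(\gamma)$-local-spectral expander of smaller dimension, the standard fact that $1$-links have mass $O_k(\gamma)$ in any such expander (see the remark in \Cref{sec:results}) yields $D^k_j\mathbf{1}_{X_\tau}(\sigma) \leq O_k(\gamma)$. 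Taking $\sigma = \emptyset$ gives the global bound $\mathbb{E}[\mathbf{1}_{X_\tau}] = \Pi_k(X_\tau) \leq O_k(\gamma)$ as well, so $\|D^k_j\mathbf{1}_{X_\tau} - \mathbb{E}[\mathbf{1}_{X_\tau}]\|_\infty \leq O_k(\gamma)$ as needed.

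Since $\mathbf{1}_{X_\tau}$ is non-negative (and therefore trivially has locally constant sign), \Cref{lem:reduction} upgrades the $\ell_\infty$-pseudorandomness to $\ell_2$-pseudorandomness with the same parameter, after which \Cref{cor:ell2-proj} gives
\[
|\langle \mathbf{1}_{X_\tau}, \mathbf{1}_{X_\tau,j}\rangle| \leq \binom{k}{j}\cdot O_k(\gamma) \cdot \mathbb{E}[\mathbf{1}_{X_\tau}] + c\gamma\|\mathbf{1}_{X_\tau}\|^2 \leq 2^{O(k)}\gamma\|\mathbf{1}_{X_\tau}\|^2,
\]
where we use $\mathbb{E}[\mathbf{1}_{X_\tau}] = \|\mathbf{1}_{X_\tau}\|^2$ for boolean functions. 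The main obstacle is verifying the uniform pointwise bound $D^k_j\mathbf{1}_{X_\tau}(\sigma) \leq O_k(\gamma)$ over all configurations of $\sigma$ (in particular when $\sigma \cap \tau$ is nontrivial but $\sigma \not\subseteq \tau$), which requires applying the local-spectral expansion inside each nested link and tracking the dependence of the constants on $k$.
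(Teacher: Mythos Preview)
Your approach is correct and complete, but it takes a genuinely different route from the paper.

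The paper's proof never invokes pseudorandomness. Instead, it computes the non-expansion of $X_\tau$ with respect to the specific walk $N_k^1$ in two ways. First, a direct combinatorial argument shows $\bar\phi(X_\tau)=\frac{k+1-i}{k+1}$: after ascending to a $(k+1)$-face, the only way to stay in $X_\tau$ is to remove one of the $k+1-i$ elements outside $\tau$. Second, expanding $\bar\phi(X_\tau)$ along the HD-Level-Set decomposition gives (up to $O_k(\gamma)$ error) a convex combination $\sum_{j=0}^i \frac{k+1-j}{k+1}\cdot\frac{\langle \mathbf{1}_{X_\tau},\mathbf{1}_{X_\tau,j}\rangle}{\langle \mathbf{1}_{X_\tau},\mathbf{1}_{X_\tau}\rangle}$. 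Since the coefficients $\frac{k+1-j}{k+1}$ strictly decrease in $j$ and the weights are approximately non-negative summing to approximately $1$, matching the value $\frac{k+1-i}{k+1}$ forces essentially all weight onto $j=i$.

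Your argument instead proves $\mathbf{1}_{X_\tau}$ is $(O_k(\gamma),i-1)$-pseudorandom via the pointwise estimate $D_j^k\mathbf{1}_{X_\tau}(\sigma)\le O_k(\gamma)$, then feeds this into \Cref{cor:ell2-proj}. Two remarks on that route: first, you do not actually need trickle-down---the definition of two-sided $\gamma$-local-spectral expansion already asserts the $1$-skeleton of every link is a $\gamma$-expander, and the ``no vertex has weight exceeding $\gamma$'' fact for simple $\gamma$-expanders (used later in the paper in the proof of \Cref{lem:uniform-exp}) then gives the bound directly. Second, you should handle $j=0$ separately, since \Cref{cor:ell2-proj} only covers $1\le j\le \ell$; this is easy via $\langle \mathbf{1}_{X_\tau},\mathbf{1}_{X_\tau,0}\rangle=\mathbb{E}[\mathbf{1}_{X_\tau}]^2\le O_k(\gamma)\mathbb{E}[\mathbf{1}_{X_\tau}]$.

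The trade-off: the paper's argument is self-contained and never needs the auxiliary fact that links are small ($\Pi_k(X_\tau)\le O_k(\gamma)$), whereas yours leans on it. On the other hand, your route is more mechanical once the pseudorandomness machinery is in place, and it makes transparent why the bound scales with $\gamma$ (because link density itself does).
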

\begin{proof}
It is enough to analyze the expansion of $X_\tau$ with respect to $N^1_k$ since the quantity can both be analyzed directly, and expressed in terms of $X_\tau$'s HD-Level-Set decomposition. For the direct analysis, recall that $N^1_k$ describes the process of moving from a $k$-face $\sigma$ to a $(k+1)$-face $T = \sigma \cup \{v\}$, then back to a $k$-face $\sigma' \subset T$. Crucially, the latter step is performed uniformly at random. Applying $N^1_k$ to any element in $X_\tau$, the probability of returning to $X_\tau$ is exactly the probability that we remove an element in $T \setminus \{\tau\}$ in the final step, which gives:
\[
\bar{\phi}(1_{X_\tau}) = \frac{k+1-i}{k+1}.
\]
On the other hand, we may also expand out $\bar{\phi}(1_{X_\tau})$ in terms of $1_{X_\tau}$'s HD-Level-Set decomposition. Using the fact that $1_{X_\tau} = {k \choose i}U^k_i1_{\tau} \in V_k^0 \oplus \ldots \oplus V_k^i$, we may write:
\begin{align*}
\bar{\phi}(1_{X_\tau}) &= \frac{1}{\langle 1_{X_\tau}, 1_{X_\tau} \rangle }\sum\limits_{j=0}^{i} \langle 1_{X_\tau}, N^1_k 1_{X_\tau,j} \rangle\\
&= \frac{1}{\langle 1_{X_\tau}, 1_{X_\tau} \rangle }\sum\limits_{j=0}^{i} \frac{k+1-j}{k}\langle 1_{X_\tau}, 1_{X_\tau,j} \rangle + \frac{1}{\langle 1_{X_\tau}, 1_{X_\tau} \rangle }\sum\limits_{s=0}^i \langle \mathbbm{1}_{X_\tau}, \Gamma_s\rangle,
\end{align*}
where $\norm{\Gamma_s} \leq 2^{O(k)}\gamma \norm{\mathbbm{1}_{X_\tau}}$ by \Cref{prop:hdx-eig-vals} and the fact that $\norm{\mathbbm{1}_{X_\tau,s}} \leq (1+2^{O(k)}\gamma)\norm{\mathbbm{1}_{X_\tau}}$. Finally, applying Cauchy-Schwarz to the error term gives:
\begin{equation}\label{eq:link-proj}
\bar{\phi}(1_{X_\tau}) \in \frac{1}{\langle 1_{X_\tau}, 1_{X_\tau} \rangle }\sum\limits_{j=0}^{i} \frac{k+1-j}{k}\langle 1_{X_\tau}, 1_{X_\tau,j} \rangle \pm c\gamma
\end{equation}
for $c \leq 2^{O(k)}\gamma$. Recall by approximate orthogonality (\Cref{lemma:HDX-approx-ortho}), the projections $\langle 1_{X_\tau}, 1_{X_\tau,j} \rangle$ cannot be too negative, that is $\langle 1_{X_\tau}, 1_{X_\tau,j} \rangle \geq -c\gamma \langle  1_{X_\tau},  1_{X_\tau} \rangle$ for some $c \leq 2^{O(k)}$. Then if there exists some $j \neq i$ such that $\langle 1_{X_\tau}, 1_{X_\tau,j} \rangle >c_2\gamma \langle 1_{X_\tau}, 1_{X_\tau} \rangle$ for large enough $c_2 \leq 2^{O(k)}$, the LHS of \Cref{eq:link-proj} lies strictly above $\frac{k+1-i}{k+1}$, giving the desired contradiction.
\end{proof}
Since the HD-Level-Set decomposition of $1_{X_\tau}$ has no dependence on the walk in question, \Cref{thm:local-vs-global} follows almost immediately.
\begin{proof}[Proof of \Cref{thm:local-vs-global}]
The expansion of $X_\tau$ may be written as:
\begin{align*}
\phi(X_\tau) &= 1 - \frac{1}{\alpha}\langle \mathbbm{1}_{X_\tau}, M \mathbbm{1}_{X_\tau} \rangle\\
&= 1 - \frac{1}{\alpha}\sum\limits_{s=0}^i \langle \mathbbm{1}_{X_{\tau}}, M \mathbbm{1}_{X_{\tau},s} \rangle,
\end{align*}
where $\alpha$ is the density of $X_\tau$ and $\mathbbm{1}_{X_\tau,s} \in V_k^s$. By \Cref{prop:hdx-eig-vals} we can simplify the sum up to an error term:
\begin{align*}
    \phi(X_\tau) &= 1 - \frac{1}{\alpha}\sum\limits_{s=0}^i \langle \mathbbm{1}_{X_{\tau}}, M \mathbbm{1}_{X_{\tau},s} \rangle\\
    & = 1 - \frac{1}{\alpha}\sum\limits_{s=0}^i \lambda_j(M)\langle \mathbbm{1}_{X_{\tau}}, \mathbbm{1}_{X_{\tau},s} \rangle + \frac{1}{\alpha}\sum\limits_{s=0}^i \langle \mathbbm{1}_{X_\tau}, \Gamma_s\rangle,
\end{align*}
where by the fact that $\norm{\mathbbm{1}_{X_\tau,s}} \leq (1+2^{O(k)}\gamma)\norm{\mathbbm{1}_{X_\tau}}$ we have $\norm{\Gamma_s} \leq O\left(w(M) h(M)(h(M)+k){k \choose s}\gamma \norm{\mathbbm{1}_{X_\tau}}\right)$. By \Cref{lemma:link-projection} for all $s \neq i$ we can absorb $\langle \mathbbm{1}_{X_{\tau}}, \mathbbm{1}_{X_{\tau},s} \rangle$ into the error term which gives:
\[
\phi(X_\tau) \in 1 - \frac{1}{\alpha}\lambda_i(M)\langle \mathbbm{1}_{X_\tau}, \mathbbm{1}_{X_\tau,i} \rangle \pm c\gamma
\]
for $c \leq w(M)h(M)^22^{O(k)}$. Finally, by \Cref{lemma:link-projection} and approximate orthogonality, we know that $\langle \mathbbm{1}_{X_\tau}, \mathbbm{1}_{X_\tau,i} \rangle$ is very close to $\alpha$:
\[
(1-c_1\gamma)\alpha \leq \langle \mathbbm{1}_{X_\tau}, \mathbbm{1}_{X_\tau,i} \rangle \leq (1+c_2\gamma)\alpha
\] 
where $c_1,c_2 \leq 2^{O(k)}$. Combining this with the above completes the proof. Note that the proof falls through for $d=k$ since we cannot analyze $N_k^1$ on such a complex. The upper bound, however, still holds in this case simply by expanding out the inner product and bounding the inner summation using $\lambda_i(M)$. 
\end{proof}
\Cref{thm:local-vs-global} will form one of two core pieces of our algorithm with unique games. The fact that links corresponding to bad eigenvalues have poor expansion will allow us to patch together good local solutions into a global solution without seeing too much interference. Moreover, close connection between local expansion and stripped eignenvalues will result in the performance of our algorithm being tied directly to ST-rank (which we recall here for convenience).
\begin{definition}[Stripped Threshold Rank]
Let $(X,\Pi)$ be a two-sided $\gamma$-local-spectral expander and $M$ a $k$-dimensional HD-walk with $\gamma$ small enough that the HD-Level-Set Decomposition has a corresponding decomposition of disjoint eigenstrips $C_k=\bigoplus W_k^i$.\footnote{It should be noted that when the HD-Level-Set Decomposition has spaces with the same approximate eigenvalue, their corresponding eigenstrips technically must be merged. However, since this detail has no effect on our arguments, we ignore it in what follows. We now show how to express the expansion of a set $S \subset X(k)$ with respect to an HD-walk $M$ in terms of the pseudorandomness of $S$ and the ST-rank of $M$.} The ST-Rank of $M$ with respect to $\delta$ is the number of strips containing an eigenvector with eigenvalue at least $\delta$:
\[
R_\delta(M) = |\{ W_k^i : \exists f \in V^i, Mf = \lambda f, \lambda > \delta\}|.
\]
We often write just $R_\delta$ when $M$ is clear from context.
\end{definition}
A basic corollary of \Cref{thm:local-vs-global} is that, like the Johnson graphs, links are small, non-expanding sets (at lesat when $|X(1)| \gg k$ or $\gamma$ is small). The second core piece of our algorithm for unique games relies on a certain converse to this result: that \textit{all} non-expanding sets are explained by links. As discussed in \Cref{sec:intro} and \Cref{sec:results}, we consider two regimes for this problem. The first, which we call the $\ell_2$-variant, claims that any non-expanding set must have high \textit{variance} over links---this regime is useful for constructing algorithms for unique games, as we'll show in the next section. The second is the $\ell_\infty$-variant, which claims that any non-expanding set must have a high \textit{maximum} over links---this regime is useful in hardness of approximation. We examine both regimes through their converse: that both $\ell_2/\ell_\infty$-pseudorandom sets expand near-perfectly.
\begin{theorem}\label{thm:hdx-expansion}
Let $(X,\Pi)$ be a two-sided $\gamma$-local-spectral expander, $M$ a $k$-dimensional, complete HD-walk, and let $\gamma$ be small enough that the eigenstrip intervals of \Cref{thm:approx-ortho} are disjoint. For any $\delta>0$, let $r=R_\delta(M)-1$. Then the expansion of a set $S \subset X(k)$ of density $\alpha$ is at least:
\[
\phi(S) \geq 1 - \alpha - (1-\alpha)\delta - c\gamma - \sum\limits_{i=1}^{r} (\lambda_i(M)-\delta){k \choose i}\varepsilon_i,
\]
where $\lambda_i(M)$ is the approximate eigenvalue given by \Cref{prop:hdx-eig-vals}, $S$ is either $(\varepsilon_1,\ldots,\varepsilon_r)$-$\ell_2$-pseudorandom or $(\varepsilon_1,\ldots,\varepsilon_r)$-$\ell_\infty$-pseudorandom, and $c \leq w(M)h(M)^22^{O(k)}$.
\end{theorem}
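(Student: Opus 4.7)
The plan is to follow the outline sketched in Section \ref{sec:overview-expansion}: write $\phi(S)$ as an inner product against $M$, decompose $\mathbbm{1}_S = \sum_{i=0}^k \mathbbm{1}_{S,i}$ via the HD-Level-Set Decomposition, and then combine the approximate-eigenvector structure of Corollary \ref{prop:hdx-eig-vals} with the pseudorandom projection bound of Corollary \ref{cor:ell2-proj} (or Theorem \ref{thm:body-local-spec-proj} in the $\ell_\infty$ case). Concretely, I would start from
\[
\langle \mathbbm{1}_S, M\mathbbm{1}_S\rangle = \sum_{i=0}^k \langle \mathbbm{1}_S, M\mathbbm{1}_{S,i}\rangle
\]
and replace each $M\mathbbm{1}_{S,i}$ by $\lambda_i(M)\mathbbm{1}_{S,i}$. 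Corollary \ref{prop:hdx-eig-vals}, together with the bound $\|\mathbbm{1}_{S,i}\|^2 \leq (1 + 2^{O(k)}\gamma)\alpha$ obtained by expanding $\|\mathbbm{1}_S\|^2 = \alpha$ and invoking Lemma \ref{lemma:HDX-approx-ortho} on the cross terms, shows that the total error accumulated across all $k+1$ levels is at most $c\gamma\alpha$ with $c \leq w(M)h(M)^2 2^{O(k)}$, exactly matching the constant in the statement.

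The heart of the argument is a ``shift by $\delta$'' trick. Writing $\lambda_i = \delta + (\lambda_i - \delta)$ and using $\sum_i \langle \mathbbm{1}_S, \mathbbm{1}_{S,i}\rangle = \langle \mathbbm{1}_S, \mathbbm{1}_S \rangle = \alpha$, one obtains
\[
\sum_{i=0}^k \lambda_i \langle \mathbbm{1}_S, \mathbbm{1}_{S,i}\rangle = \delta\alpha + \sum_{i=0}^k (\lambda_i - \delta)\langle \mathbbm{1}_S, \mathbbm{1}_{S,i}\rangle.
\]
The $i=0$ term evaluates exactly to $(1-\delta)\alpha^2$ since $\lambda_0(M) = 1$ and $\mathbbm{1}_{S,0} = \alpha\mathbbm{1}$. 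The remaining terms split into two regimes according to whether $i \leq r$.

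For $1 \leq i \leq r$, where $\lambda_i - \delta > 0$ by definition of $r = R_\delta(M) - 1$, I would apply Corollary \ref{cor:ell2-proj} (respectively Theorem \ref{thm:body-local-spec-proj}) to get $\langle \mathbbm{1}_S, \mathbbm{1}_{S,i}\rangle \leq {k \choose i}\varepsilon_i\alpha + 2^{O(k)}\gamma\alpha$, producing the $(\lambda_i - \delta){k \choose i}\varepsilon_i\alpha$ contribution of the theorem. For $i > r$, where $\lambda_i - \delta \leq 0$, approximate orthogonality gives the lower bound $\langle \mathbbm{1}_S, \mathbbm{1}_{S,i}\rangle \geq -2^{O(k)}\gamma\alpha$, and multiplying by the non-positive and bounded coefficient $\lambda_i - \delta$ yields a contribution that can be absorbed into the global $c\gamma\alpha$ error.

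Assembling the three regimes gives the upper bound
\[
\langle \mathbbm{1}_S, M\mathbbm{1}_S\rangle \leq \delta\alpha + (1-\delta)\alpha^2 + \alpha\sum_{i=1}^r (\lambda_i(M) - \delta){k \choose i}\varepsilon_i + c\gamma\alpha,
\]
and dividing by $\alpha$ and using $1 - \delta - (1-\delta)\alpha = 1 - \alpha - (1-\alpha)\delta$ delivers the claim. I expect the main obstacle to be purely bookkeeping rather than conceptual: carefully tracking the $w(M)$, $h(M)$, and $2^{O(k)}$ factors coming from both Corollary \ref{prop:hdx-eig-vals} and Corollary \ref{cor:ell2-proj} (or \ref{thm:body-local-spec-proj}), and verifying that for the $i > r$ tail the ``$\lambda_i - \delta$'' coefficients are uniformly bounded (which follows from Theorem \ref{thm:approx-ortho} under the disjointness hypothesis on the eigenstrips) so that they can be safely absorbed into a single $c\gamma$ error term.
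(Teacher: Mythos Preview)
Your proposal is correct and follows essentially the same route as the paper's proof: decompose $\mathbbm{1}_S$ via the HD-Level-Set Decomposition, replace $M\mathbbm{1}_{S,i}$ by $\lambda_i(M)\mathbbm{1}_{S,i}$ using Corollary \ref{prop:hdx-eig-vals} (with the error controlled by $\|\mathbbm{1}_{S,i}\| \leq (1+2^{O(k)}\gamma)\|\mathbbm{1}_S\|$), split the sum at level $r$, and then invoke Corollary \ref{cor:ell2-proj}/Theorem \ref{thm:body-local-spec-proj} for the low levels. Your ``shift by $\delta$'' bookkeeping (writing $\lambda_i = \delta + (\lambda_i-\delta)$ and handling the tail via the sign of $\lambda_i-\delta$ together with $\langle \mathbbm{1}_S,\mathbbm{1}_{S,i}\rangle \geq -2^{O(k)}\gamma\alpha$) is just a reorganization of the paper's computation, which instead bounds the tail directly by $\delta$ using monotonicity (Proposition \ref{prop:eig-decrease}) and then rewrites $\sum_{i>r}\langle\mathbbm{1}_S,\mathbbm{1}_{S,i}\rangle = \alpha - \sum_{i\leq r}\langle\mathbbm{1}_S,\mathbbm{1}_{S,i}\rangle$; the two are algebraically equivalent.
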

\begin{proof}
Recall that the expansion of $S$ may be written as:
\[
\phi(S) = 1-\frac{1}{\mathbb{E}[\mathbbm{1}_S]}\langle \mathbbm{1}_S, M \mathbbm{1}_S \rangle.
\]
Decomposing $\mathbbm{1}_S=\mathbbm{1}_{S,0} + \ldots + \mathbbm{1}_{S,k}$ by the HD-Level-Set Decomposition, we have:
\begin{align*}
\phi(S) &= 1- \frac{1}{\mathbb{E}[\mathbbm{1}_S]}\sum\limits_{i=0}^k \langle \mathbbm{1}_{S},M \mathbbm{1}_{S,i} \rangle\\
&= 1- \frac{1}{\mathbb{E}[\mathbbm{1}_S]}\sum\limits_{i=0}^k \lambda_i(M)\langle \mathbbm{1}_{S}, \mathbbm{1}_{S,i} \rangle + \frac{1}{\mathbb{E}[\mathbbm{1}_S]}\sum\limits_{i=1}^k\langle \mathbbm{1}_{S},\Gamma_i \rangle\\
\end{align*}
where by \Cref{prop:hdx-eig-vals} $\norm{\Gamma_i} \leq O\left(w(M) h(M)(h(M)+k){k \choose i}\gamma \norm{\mathbbm{1}_{S,i}}\right)$. Using Cauchy-Schwarz and the fact that $\norm{\mathbbm{1}_{S,i}} \leq (1+2^{O(k)}\gamma)\norm{\mathbbm{1}_{S}}$ (this follows from approximate orthogonality, see \cite[Corollary 8.13]{dikstein2018boolean}) we can simplify this to 
\begin{align*}
\phi(S) &\geq 1- \frac{1}{\mathbb{E}[\mathbbm{1}_S]}\sum\limits_{i=0}^{k} \lambda_i(M)\langle \mathbbm{1}_{S}, \mathbbm{1}_{S,i} \rangle - e\gamma,
\end{align*}
where $e \leq w(M) h(M)^2 2^{O(k)}$. Since $M$ is a complete walk, we know the $\lambda_i(M)$ decrease monotonically and as long as $\gamma$ is sufficiently small, correspond to the eigenvalues in strip $W^i$ as well. Thus we may write:
\begin{align*}
\phi(S) &\geq 1-e\gamma - \frac{1}{\mathbb{E}[\mathbbm{1}_S]}\sum\limits_{i=0}^{r} \lambda_i(M)\langle \mathbbm{1}_{S}, \mathbbm{1}_{S,i} \rangle - \frac{\delta}{\mathbb{E}[\mathbbm{1}_S]} \sum\limits_{i=r+1}^k \langle \mathbbm{1}_{S}, \mathbbm{1}_{S,i} \rangle\\
&= 1-e_2\gamma - \frac{1}{\mathbb{E}[\mathbbm{1}_S]}\sum\limits_{i=1}^{r} \lambda_i(M)\langle \mathbbm{1}_{S}, \mathbbm{1}_{S,i} \rangle - \delta\left(1 - \frac{1}{\mathbb{E}[\mathbbm{1}_S]}\sum\limits_{i=0}^{r} \langle \mathbbm{1}_{S}, \mathbbm{1}_{S,i} \rangle \right)\\
&= 1-e_2\gamma - \delta - \frac{1}{\mathbb{E}[\mathbbm{1}_S]}\sum\limits_{i=1}^{r} (\lambda_i(M)-\delta)\langle \mathbbm{1}_{S}, \mathbbm{1}_{S,i} \rangle\\
\end{align*}
Finally, recalling that $\mathbbm{1}_{S,i}=\mathbb{E}[\mathbbm{1}_S]\vec{1}$ and applying \Cref{thm:body-local-spec-proj} gives the $\ell_2$-variant result, and the $\ell_\infty$-variant follows immediately from our $\ell_\infty$ to $\ell_2$ reduction (\Cref{lem:reduction}).
\end{proof}
We now turn to the discussion of a surprisingly subtle point: the tightness of \Cref{thm:hdx-expansion}. There are two main parameters of interest: the pseudorandomness parameter $\varepsilon$, and the level of the HD-walk $k$. We'll first prove that in both the $\ell_2$ and $\ell_\infty$ regimes, if we fix the dependence on $\varepsilon$ to be linear, our bound is exactly tight. 
\begin{proposition}\label{prop:HDX-tight}
Let $X=J(n,d)$ be the complete complex, $2k-t \leq d$, $m | n$,  and $B_m$ be the the set of all $k$-faces $\binom{[n/m]}{k}$. Then for any $t$, $B_m$ witnesses the tightness of \Cref{thm:hdx-expansion} with respect to $S_k^{k-t}$ as $n,m \to \infty$.
\end{proposition}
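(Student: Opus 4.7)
The plan is to compute directly the three quantities appearing on the right-hand side of \Cref{thm:hdx-expansion}---the density $\alpha$, the expansion $\phi(B_m)$ under $S_k^{k-t}$, and the $\ell_2$-pseudorandomness parameters $\varepsilon_i$ of $B_m$---and then verify that the inequality is saturated in the iterated limit $n \to \infty$, $m \to \infty$.

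First, I would handle the combinatorial counting. A direct count gives $\alpha = \binom{n/m}{k}/\binom{n}{k} \to (1/m)^k$ as $n \to \infty$. To compute $\phi(B_m)$, fix $s \in B_m$; under $S_k^{k-t}$ we form $s'$ by keeping a random $t$-subset of $s$ and drawing a fresh $(k-t)$-set from $[n] \setminus s$. The walk stays in $B_m$ exactly when all $k-t$ fresh elements land in $[n/m] \setminus s$, which occurs with probability $\binom{n/m-k}{k-t}/\binom{n-k}{k-t} \to (1/m)^{k-t}$; hence $\bar\phi(B_m) \to (1/m)^{k-t}$. For the pseudorandomness parameters, note that for $\tau \in X(i)$ the local average $D^k_i \mathbbm{1}_{B_m}(\tau)$ vanishes unless $\tau \subseteq [n/m]$, in which case it equals $\binom{n/m-i}{k-i}/\binom{n-i}{k-i} \to (1/m)^{k-i}$. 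Taking second moments yields $\|D^k_i \mathbbm{1}_{B_m}\|_2^2 \to (1/m)^{2k-i}$, so after subtracting $\mathbb{E}[\mathbbm{1}_{B_m}]^2$ and dividing by $\mathbb{E}[\mathbbm{1}_{B_m}]$ we obtain $\varepsilon_i \to (1/m)^{k-i} - (1/m)^k$. The analogous $\ell_\infty$ parameter obeys the same asymptotic, so both regimes of \Cref{thm:hdx-expansion} are handled uniformly.

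Next, I would match these to the bound. By \Cref{cor:swap-HDX-spectra}, on $J(n,d)$ the eigenvalues are (exactly, as $\gamma=0$) $\lambda_i(S_k^{k-t}) = \binom{t}{i}/\binom{k}{i}$, which vanish for $i > t$; hence $R_\delta(S_k^{k-t}) = t+1$ for all sufficiently small $\delta$. Substituting the quantities above into the bound of \Cref{thm:hdx-expansion} (with $\delta \to 0$ and $\gamma = 0$) gives, in the limit $n \to \infty$,
\begin{align*}
1 - \alpha - \sum_{i=1}^{t} \lambda_i(S_k^{k-t})\binom{k}{i}\varepsilon_i
&= 1 - \sum_{i=0}^{t}\binom{t}{i}(1/m)^{k-i}\\
&= 1 - (1/m)^{k-t}(1+1/m)^t,
\end{align*}
which tends to $1 - (1/m)^{k-t}$ as $m \to \infty$. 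This agrees exactly with the direct value of $\phi(B_m) = 1 - (1/m)^{k-t}$ computed above, so the inequality of \Cref{thm:hdx-expansion} becomes an equality in the stated limit.

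The main thing to be careful about is the two-limit bookkeeping: the limit $n \to \infty$ (with $m$ fixed) is what converts finite binomial ratios into clean powers of $1/m$ and lets us ignore $\gamma$-error terms in \Cref{thm:hdx-expansion}, while the subsequent $m \to \infty$ limit is what drives the slack factor $(1+1/m)^t$ to $1$ and so makes the linear-in-$\varepsilon$ bound exactly tight. A minor simplification is that since $J(n,d)$ is the complete complex, the approximate eigenvalues from \Cref{prop:hdx-eig-vals} are genuinely exact, so no eigenvalue-approximation error needs to be tracked; otherwise the argument is entirely routine.
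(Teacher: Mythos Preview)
Your proof is correct and follows essentially the same approach as the paper: compute $\alpha$, $\phi(B_m)$, and the pseudorandomness parameters directly, substitute into \Cref{thm:hdx-expansion} using $\lambda_i(S_k^{k-t})=\binom{t}{i}/\binom{k}{i}$, and observe that the resulting lower bound $1-(1/m)^{k-t}(1+1/m)^t$ matches the true expansion in the iterated limit. One small bookkeeping slip: with your computed value $\varepsilon_i \to (1/m)^{k-i}-(1/m)^k$, the displayed equality should carry an additional $+(2^t-1)(1/m)^k$ on the right; this is lower order and does not affect the conclusion (the paper simply uses the cruder upper bound $\varepsilon_i \leq (1/m)^{k-i}$, which yields exactly your displayed expression).
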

\begin{proof}
First, note that it is enough to examine only the $\ell_\infty$ bound, since the function in question has locally-constant sign the tightness of the $\ell_2$ bound follows from \Cref{lem:reduction}.

By direct computation, it is not hard to show that the expansion of $B_m$ with respect to $S_k^{k-t}$ is:
\[
\phi(B_m) = 1 - \frac{{\frac{n}{m}-k \choose k-t}}{{n-k \choose k-t}} = 1 - \frac{m^t}{m^k} + O_{k,m}(1/n)
\]
On the other hand, we can directly compute that $B_m$ is $(\varepsilon_1,\ldots,\varepsilon_k)$-pseudorandom, where:
\[
\varepsilon_i \leq \frac{\binom{\frac{n}{m}-i}{k-i}}{\binom{n-i}{k-i}}
\]
Since the complete complex is a two-sided $O(1/n)$-local-spectral expander \cite{dikstein2018boolean}, for large enough $n$ \Cref{thm:hdx-expansion} gives the bound:
\begin{align*}
\phi(B_m) &\geq 1 - \sum\limits_{i=0}^t \binom{t}{i}\frac{\binom{\frac{n}{m}-i}{k-i}}{\binom{n-i}{k-i}} - O_{k,m}(1/n)\\
&= 1 - \frac{\binom{n/m}{k}}{\binom{n}{k}}\sum\limits_{i=0}^t \binom{t}{i}\frac{\binom{n}{i}}{\binom{n/m}{i}} - O_{k,m}(1/n)\\
&\geq 1 - m^{-k}\sum\limits_{i=0}^t {t \choose i}m^i - O_{k,m}(1/n)\\
&= 1 - \frac{(m+1)^t}{m^{k}} - O_{k,m}(1/n)\\
\end{align*}
Thus we see that for large $n$, the bound is tight up to the leading term in $m$.
\end{proof}
However, \Cref{prop:HDX-tight} does not preclude a bound with better (or even no) dependence on $k$. Indeed, proving that such a bound holds in the $\ell_\infty$-regime for the Johnson graphs was an important stepping stone in the proof of the 2-2 Games Conjecture \cite{khot2018small}. On the other hand, we can actually prove that a $k$-independent bound is \textit{impossible} in the $\ell_2$-regime. The intuition behind the difference can be summarized by examining the behavior of these two variants on a link. In the $\ell_\infty$ regime, links are of course $\Omega(1)$-pseudorandom. On the other hand, somewhat counter-intuitively, links are actually $O({k \choose i}^{-1})$-pseudorandom in the $\ell_2$-regime. Since links tend to have poor expansion, the dependence on $k$ in our bound has to make up for the fact that links are $O({k \choose i}^{-1})$-$\ell_2$-pseudorandom (this also explains the particular dependence on ${k \choose i}$).

\begin{proposition}\label{prop:ell2-k-dependence}
For every $\ell \in \mathbb{N}$ and $c_1(\ell),c_2(\ell) > 0$, there exist $n \gg k \gg \ell$ and an $(\varepsilon_1,\ldots,\varepsilon_\ell)$-$\ell_2$-pseudorandom subset $S$ of $S_k^{k/2}$ on $J(n,2k)$\footnote{Note that this is exactly the Johnson Graph $J(n,k,k/2)$ mentioned in \Cref{sec:results}.} of density $\alpha$ satisfying:
\[
\phi(S) < 1 - \alpha - \lambda_{\ell+1}(S_k^{k/2}) - c_1\varepsilon^{c_2}
\]
\end{proposition}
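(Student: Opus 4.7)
The plan is to take the witness $S = X_\tau$ for some $\tau \in X(\ell)$ on the complete complex $X = J(n, 2k)$ with $k$ even (so $S_k^{k/2}$ is well-defined), fix $\ell, c_1, c_2$, and then choose $k$ large followed by $n$ much larger. The underlying intuition was sketched in the paragraph preceding the proposition: an $\ell$-link is only $O(1/\binom{k}{\ell})$-pseudorandom at level $\ell$, yet has a constant-in-$k$ expansion deficit $\lambda_\ell(S_k^{k/2}) - \lambda_{\ell+1}(S_k^{k/2})$, which dominates once $k$ is big.

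The first step is to compute the relevant quantities for $X_\tau$. Its density is $\alpha = \binom{k}{\ell}/\binom{n}{\ell}$, which tends to $0$ as $n \to \infty$. Since $\mathbbm{1}_{X_\tau} = \binom{k}{\ell}\, U^k_\ell \mathbbm{1}_\tau \in V_k^0 \oplus \cdots \oplus V_k^\ell$ and $J(n,2k)$ is an $O(1/n)$-local-spectral expander, \Cref{lemma:link-projection} yields $\langle \mathbbm{1}_{X_\tau}, \mathbbm{1}_{X_\tau,\ell}\rangle = (1 \pm O_k(\gamma))\alpha$ and $\langle \mathbbm{1}_{X_\tau}, \mathbbm{1}_{X_\tau,j}\rangle = O_k(\gamma)\alpha$ for $j < \ell$, where $\gamma = O(1/n)$. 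Substituting into \Cref{lem:low-level-weight} gives $\mathrm{Var}(D^k_i \mathbbm{1}_{X_\tau}) = O_k(\gamma)\alpha$ for every $i < \ell$ and $\mathrm{Var}(D^k_\ell \mathbbm{1}_{X_\tau}) = (1/\binom{k}{\ell})(1 \pm O_k(\gamma))\alpha$. So $X_\tau$ is $(\varepsilon_1, \ldots, \varepsilon_\ell)$-$\ell_2$-pseudorandom with $\varepsilon_i = O_k(\gamma)$ for $i < \ell$ and $\varepsilon_\ell = 1/\binom{k}{\ell} + O_k(\gamma)$. In particular, interpreting $\varepsilon$ as $\max_i \varepsilon_i$ (or as a common bound valid for every $\varepsilon_i$), one has $\varepsilon \leq 2/\binom{k}{\ell} = O_\ell(k^{-\ell})$ once $n$ is large.

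The second step is to compute the expansion. By \Cref{thm:local-vs-global}, $\phi(X_\tau) = 1 - \lambda_\ell(S_k^{k/2}) \pm O_k(\gamma)$, and \Cref{cor:swap-HDX-spectra} supplies the closed form $\lambda_i(S_k^{k/2}) = \binom{k/2}{i}/\binom{k}{i}$. An elementary asymptotic yields
\[
\lambda_\ell(S_k^{k/2}) - \lambda_{\ell+1}(S_k^{k/2}) \xrightarrow{k \to \infty} 2^{-\ell} - 2^{-\ell-1} = 2^{-\ell-1},
\]
so for every sufficiently large $k$ the gap is at least $2^{-\ell-2}$.

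To conclude, note that the desired strict inequality $\phi(X_\tau) < 1 - \alpha - \lambda_{\ell+1}(S_k^{k/2}) - c_1 \varepsilon^{c_2}$, after absorbing the $O_k(\gamma)$ errors, reduces to $\lambda_\ell - \lambda_{\ell+1} > \alpha + c_1 \varepsilon^{c_2}$. Since $\varepsilon = O_\ell(k^{-\ell})$, the penalty $c_1 \varepsilon^{c_2} = O_{c_1,c_2,\ell}(k^{-\ell c_2})$ tends to $0$ as $k \to \infty$. Thus: first fix $k$ so that both $c_1 \varepsilon^{c_2} \leq 2^{-\ell-4}$ and the spectral gap is at least $2^{-\ell-2}$; then fix $n$ so that $\alpha \leq 2^{-\ell-4}$ and all $O_k(\gamma) = O_k(1/n)$ error terms are at most $2^{-\ell-4}$. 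The desired inequality follows. The main obstacle is bookkeeping the $O_k(\gamma)$ error constants, which are exponential in $k$; the quantifier order (fix $\ell, c_1, c_2$, then $k$, and only afterwards send $n \to \infty$) is exactly what neutralizes this blowup.
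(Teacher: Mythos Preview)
Your proposal is correct and follows essentially the same approach as the paper: take $S$ to be an $\ell$-link in $J(n,2k)$, use that its expansion is $1-\lambda_\ell(S_k^{k/2})+O_k(1/n)$ while its $\ell_2$-pseudorandomness parameters are $O(1/\binom{k}{\ell})$, and exploit the constant gap $\lambda_\ell-\lambda_{\ell+1}\to 2^{-\ell-1}$ by first taking $k$ large and then $n$ large. Your write-up is more explicit than the paper's about which lemmas (\Cref{lemma:link-projection}, \Cref{lem:low-level-weight}, \Cref{thm:local-vs-global}, \Cref{cor:swap-HDX-spectra}) justify each estimate, and in fact obtains the slightly sharper $\varepsilon_i=O_k(1/n)$ for $i<\ell$ rather than the paper's $\varepsilon_i\le\binom{k}{i}^{-1}+O_k(1/n)$, but the argument is the same.
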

\begin{proof}
The result follows from letting $S$ be any $\ell$-link of the complete complex $J(n,2k)$ for sufficiently large $n$,$k$. In particular, notice that the expansion of $S$ with respect to $S_k^{k/2}$ is then:
\[
\phi(S) \leq 1 - 2^{-\ell} + O_k\left(\frac{1}{n} \right).
\]
On the other hand, note that $\lambda_{\ell+1}(S_k^{k/2}) \in 2^{-\ell-1} \pm O_k(\frac{1}{n})$. Since one can also check that $S$ is $(\varepsilon_1,\ldots,\varepsilon_\ell)$-$\ell_2$-pseudorandom for $\varepsilon_i \leq {k \choose i}^{-1} + O_k\left(\frac{1}{n}\right)$, we for large enough $n \gg k$ that:
\begin{align*}
1 - \alpha - \lambda_{\ell+1}(S_k^{k/2}) - c_1\varepsilon^{c_2} &\geq 1 - 2^{-\ell-1} - O_k\left(\frac{1}{n}\right) - o_k(1) > \phi(S)
\end{align*}
as desired.
\end{proof}
In other words, no $k$-independent version of \Cref{thm:hdx-expansion} exists for the $\ell_2$-variant, as such a result would violate the upper bound in \Cref{prop:ell2-k-dependence}. While this doesn't directly rule out a reduction from $\ell_\infty$ to $\ell_2$ that proves a $k$-independent bound for the former (the reduction itself would need to be \textit{$k$-dependent} in this case), it's known to be impossible in our framework which encompasses the Grassmann where such a result is known to be false \cite{dinur2018non}.

The difference between a $k$-independent bound and the regime we consider is most stark when examining the contrapositive of \Cref{thm:hdx-expansion}, which states that non-expanding sets must be concentrated inside links.
\begin{corollary}\label{cor:hdx-non-expansion}
Let $(X,\Pi)$ be a two-sided $\gamma$-local-spectral expander, $M$ a $k$-dimensional, complete HD-walk, and let $\gamma$ be small enough to satisfy the requirements of \Cref{thm:approx-ortho}. Then for any $\delta> 0$, if $S \subset X(k)$ is a set of density $\alpha$ and expansion:
\[
\phi(S) < 1 - \alpha - (1-\alpha)\delta - c\gamma
\]
for $c \leq  w(M)h(M)^22^{O(k)}$, then $S$ is non-trivially correlated with an $i$-link for $1 \le i \le R_{\delta/2}$:
\[
\exists 1 \leq i \leq R_{\delta/2}, \tau \in X(i): \underset{X_\tau}{\mathbb{E}}[\mathbbm{1}_S] \geq \alpha + \frac{\delta}{c_2R_{\delta/2}{k \choose i}\lambda_i(M)},
\]
where $c_2>2$ is some small absolute constant.
\end{corollary}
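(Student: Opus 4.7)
The plan is to prove the contrapositive: if $S$ fails to be non-trivially correlated with any $i$-link for $1 \le i \le r := R_{\delta/2}(M)$, then $S$ must expand well enough to contradict the hypothesis on $\phi(S)$. So I would assume that for every such $i$ and every $\tau \in X(i)$,
\[
\underset{X_\tau}{\mathbb{E}}[\mathbbm{1}_S] - \alpha \;<\; \varepsilon_i \;:=\; \frac{\delta}{c_2\,r\,\binom{k}{i}\lambda_i(M)}.
\]
Since $\mathbbm{1}_S$ is non-negative, $D^k_i\mathbbm{1}_S - \mathbb{E}[\mathbbm{1}_S]$ is bounded above (pointwise) by $\varepsilon_i$, and below it is trivially bounded by $\alpha$ in absolute value. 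The key observation is that in this regime Theorem~\ref{thm:hdx-expansion} can be invoked in its $\ell_\infty$ form: $S$ is $(\varepsilon_1,\ldots,\varepsilon_r)$-$\ell_\infty$-pseudorandom (using the $\max$ rather than $\|\cdot\|_\infty$ variant permitted for non-negative functions, cf.\ the remark following \Cref{thm:body-local-spec-proj}).

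Next I would apply \Cref{thm:hdx-expansion} with \emph{threshold $\delta/2$} rather than $\delta$, so the relevant ST-rank is exactly $r = R_{\delta/2}(M)$. This gives
\[
\phi(S) \;\geq\; 1 - \alpha - (1-\alpha)\tfrac{\delta}{2} - c\gamma - \sum_{i=1}^{r}(\lambda_i(M) - \tfrac{\delta}{2})\binom{k}{i}\varepsilon_i.
\]
The point of choosing $\varepsilon_i$ with a $\binom{k}{i}\lambda_i(M)$ in the denominator is that substituting it back makes the error sum telescope: each summand is bounded by $\delta/(c_2 r)$, so the whole sum is at most $\delta/c_2$. Combining this yields the clean lower bound
\[
\phi(S) \;\geq\; 1 - \alpha - (1-\alpha)\tfrac{\delta}{2} - c\gamma - \tfrac{\delta}{c_2}.
\]

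Finally I would compare this to the hypothesis $\phi(S) < 1 - \alpha - (1-\alpha)\delta - c\gamma$. A contradiction arises whenever $(1-\alpha)\delta/2 > \delta/c_2$, i.e.\ whenever $c_2 > 2/(1-\alpha)$. In the regime where the hypothesis is informative we may assume $\alpha$ is bounded away from $1$ (otherwise the bound on $\phi(S)$ is vacuous), so any absolute constant $c_2 > 2$ strictly larger than this threshold will do. I do not anticipate any real obstacle: the heavy lifting is already done by \Cref{thm:hdx-expansion}, and the main care required is simply bookkeeping the $\alpha$-dependence cleanly so that $c_2$ can be named as an absolute constant rather than one depending on $\alpha$. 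The only subtle point to double-check is that the $\ell_\infty$ hypothesis we derive matches the form of pseudorandomness that feeds into \Cref{thm:hdx-expansion} via the $\ell_\infty \to \ell_2$ reduction of \Cref{lem:reduction}, which is immediate since $\mathbbm{1}_S$ has locally constant sign.
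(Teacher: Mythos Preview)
Your approach is correct and is exactly the contrapositive argument the paper intends: assume the link-correlation bound fails everywhere, read this as $(\varepsilon_1,\ldots,\varepsilon_r)$-$\ell_\infty$-pseudorandomness via the max-variant for non-negative functions, then invoke \Cref{thm:hdx-expansion} at threshold $\delta/2$. The paper does not spell out a proof, and your derivation matches the structure set up by \Cref{thm:hdx-expansion} and \Cref{lem:reduction}.

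One genuine wrinkle: your final bookkeeping for $c_2$ does not close. Your own computation gives that a contradiction requires $c_2 > 2/(1-\alpha)$, and you then assert that ``any absolute constant $c_2 > 2$'' suffices once $\alpha$ is bounded away from $1$. That does not follow: if $\alpha = 0.99$ (which is perfectly consistent with the hypothesis being non-vacuous), you need $c_2 > 200$. The observation that the hypothesis becomes vacuous as $\alpha \to 1$ only tells you $(1-\alpha)(1-\delta) > c\gamma$, which is far too weak to pin $1-\alpha$ at a universal lower bound. So either $c_2$ must be allowed to depend on $\alpha$, or the statement's ``$c_2>2$ absolute'' should be read as shorthand rather than a precise claim. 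This is almost certainly a looseness in the paper's own statement rather than a flaw in your strategy; the contrapositive argument is sound, and everything else in your outline is correct.
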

Notice that the excess correlation implied by \Cref{cor:hdx-non-expansion} decays as $k$ grows large (except in the case of very deep walks like $S_k^{k-O(1)}$); this is one of the main obstructions to using results like \Cref{cor:hdx-non-expansion} for hardness of unique games. On the other hand, we now turn our attention to \textit{algorithms} for unique games, where the $\ell_2$-regime gets its chance to shine.

\section{Playing Unique Games on HD-Walks}\label{sec:unique-games}
Following the recent algorithmic framework of \cite{bafna2020playing}, we show how to translate our spectral machinery and combinatorial characterization of non-expansion into a polynomial time algorithm for unique games over HD-walks whose finer-grained runtime and approximation guarantees depend on ST-rank. Before we dive into the theorems let's start with some notation for this section.

\paragraph{Notation:} We briefly comment on the use of $\mathcal{I}=(M,\mathcal{S})$ to denote an affine UG instance over a random-walk $M$, since unique games are formally defined over graphs, not random walks. In particular, it is well known that every HD-walk $M$ over $X(k)$ uniquely corresponds to an undirected weighted graph $G_M = (X(k), E)$ (see Section~\ref{sec:ug-rwalk}). Thus by $\mathcal{I} = (M,\mathcal{S})$, we really mean that the constraints $\mathcal{S}$ are over the edges of $G$ and the value is calculated according to the distribution over edges $E$. We will use the $\widetilde{O}(\cdot)$ notation in this section to hide log factors, that is, $\tO(f)$ denotes $O(f\log^c f)$ for any constant $c > 0$.

With notation out of the way, let us describe the main theorem:

\begin{theorem}\label{thm:unique-games}
For any $\varepsilon \in (0,.01)$, there exists an algorithm $\mathcal{A}$ with the following guarantee. Let $(X,\Pi)$ be a $d$-dimensional two-sided $\gamma$-local spectral expander and $M$ be a $k$-dimensional complete HD-walk over $X$ such that $\gamma \leq w(M)^{-1}2^{-\Omega(k+h(M))}$ and $d>k$. Let $\mathcal{I} = (M,\mathcal{S})$ be an instance of affine unique games over $M$ with value at least $1-\varepsilon$. Then $\mathcal{A}$ outputs an $\Omega\left(\frac{\varepsilon^3}{{k \choose r(2\varepsilon)}^2}\right)$-satisfying assignment in time $|X(k)|^{\tO\left({k \choose r(2\varepsilon)}\frac{1}{\varepsilon}\right)}$, where $r(\varepsilon)=R_{1-16\varepsilon}(M)$ is the ST-rank of $M$.
\end{theorem}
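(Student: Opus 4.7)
The plan is to follow the local-to-global framework of Bafna, Barak, Kothari, Schramm, and Steurer (BBKSS) for unique games on Johnson graphs, and abstract their analysis so that it depends only on two structural ingredients of the constraint graph, both of which are essentially available from previous sections. The first ingredient is a low-degree sum-of-squares (SoS) proof that any non-expanding set has high variance across low-level links, i.e.\ an SoS version of \Cref{intro:new-l2-pr-sets-expand}. The second is the existence of a parameter $r$ such that (a) the $(r{+}1)$-st stripped eigenvalue satisfies $\lambda_r(M) \leq 1 - \Omega(\varepsilon)$, and (b) every $r$-link has non-expansion at most $O(\varepsilon)$. By \Cref{thm:local-vs-global} together with the monotonicity from \Cref{prop:eig-decrease}, conditions (a) and (b) are equivalent up to $O_k(\gamma)$ error, and the smallest valid $r$ is exactly the ST-rank $r(2\varepsilon)=R_{1-16\varepsilon}(M)$. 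The hypothesis $\gamma \leq w(M)^{-1} 2^{-\Omega(k+h(M))}$ absorbs all of the spectral and decomposition error terms that arise in Sections~4--6.

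First I would port the proof of \Cref{intro:new-l2-pr-sets-expand} into the SoS proof system. This requires SoS analogues of the main linear-algebraic tools underlying the HD-Level-Set decomposition: the approximate-eigenvalue identities of \Cref{prop:hdx-eig-vals}, the approximate orthogonality of \Cref{lemma:HDX-approx-ortho}, and the variance formula of \Cref{lem:low-level-weight}. Each of these statements is a polynomial inequality in the entries of $\mathbbm{1}_S$ (or a pseudoexpectation analogue), and inspection of the proofs shows they use only inner products, the adjointness $\langle Uf,g\rangle = \langle f, Dg\rangle$, and the operator identity $D_{i+1}U_i = \frac{1}{i+1}I + \frac{i}{i+1}U_{i-1}D_i \pm \gamma$; all of these admit constant-degree SoS certificates. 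Chaining them together as in the proof of \Cref{thm:hdx-expansion} produces a degree-$O(h(M))$ SoS refutation of the statement ``$\mathbbm{1}_S$ is $(\varepsilon_1,\ldots,\varepsilon_r)$-$\ell_2$-pseudorandom and $\phi(S) < 1-\alpha-(1-\alpha)\delta - \sum_i (\lambda_i(M)-\delta){k\choose i}\varepsilon_i - O_k(\gamma)$.''

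With the SoS variant of (i) and the structural fact (ii) in hand, the BBKSS template runs as follows. Solve the degree-$D$ SoS relaxation of the UG instance for $D = \widetilde{O}({k\choose r}/\varepsilon)$; condition the resulting pseudoexpectation on the assignment of an $r$-face $\tau$, yielding a conditional pseudo-distribution over labelings of the link $X_\tau$. Property (ii)(b) ensures that almost all UG constraints are confined inside some $r$-link, so patching independent assignments across distinct $r$-links loses only an $O(\varepsilon)$ fraction of constraints; property (ii)(a) combined with the SoS pseudorandomness refutation from step~1 implies that any highly-unsatisfying pseudoexpectation must be non-pseudorandom with respect to some level $i \leq r$, supplying a link on which to condition and make progress. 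Iterating this conditioning and rounding $O({k\choose r}/\varepsilon)$ times (each conditioning increases SoS degree by $O(1)$) yields the claimed approximation of $\Omega(\varepsilon^3/{k\choose r(2\varepsilon)}^2)$.

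The main obstacle is the SoS translation of step~1. The proofs in Sections~5--6 freely use the triangle inequality for operator norms and Cauchy--Schwarz on arbitrary inner products; these are SoS-friendly only after one rewrites them as polynomial identities with explicit remainder polynomials bounded by $\gamma$ times sums of squares. Handling the $O_k(\gamma)$ error terms without blowing up the SoS degree is delicate, and is where the exponential-in-$k$ slack in the hypothesis on $\gamma$ is spent. Once this is done the runtime bound $|X(k)|^{\widetilde{O}({k\choose r}/\varepsilon)}$ is immediate from the standard cost of solving a degree-$D$ SoS program, and the approximation guarantee comes from the BBKSS accounting, with the $1/{k\choose r}^2$ factor reflecting the ${k\choose i}$ loss in the pseudorandomness-to-expansion conversion of \Cref{thm:hdx-expansion}.
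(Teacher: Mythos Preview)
Your high-level plan matches the paper's: invoke the BBKSS framework, supply an SoS version of the $\ell_2$ structure theorem, and use \Cref{thm:local-vs-global} to identify the parameter $r$ with the ST-rank. Two substantive pieces are missing, however.

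First, you omit a technical ingredient that does not follow from anything in the earlier sections: the paper needs not merely that $r$-links have small total expansion (your (ii)(b)), but that this expansion is approximately uniform \emph{vertex by vertex} inside each link (\Cref{lem:vert-edge-exp}). On the Johnson graph this holds exactly by symmetry and BBKSS use it without comment; on a general HD-walk it is false pointwise and must be established in an average sense via the decomposition $M=\sum_i c_i(U_{k-1}D_k)^i+\Gamma$ together with a bound on $\|\Pi_{\sigma,1}\|_\infty$ coming from local spectral expansion. This property is precisely what lets one pass from the \emph{global} shift-partition potential restricted to $X_\tau$ (which the SoS structure theorem controls) to the potential on the \emph{induced} sub-instance $I|_{X_\tau}$ (which is what Condition\&Round actually rounds); see \Cref{lem:ug-local-to-global}. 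Your sentence about ``patching independent assignments across distinct $r$-links loses only $O(\varepsilon)$'' invokes only total link expansion and does not see this issue.

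Second, your accounting for the SoS degree is off. The requirement $D=\widetilde{O}({k\choose r}/\varepsilon)$ does not arise from ``$O({k\choose r}/\varepsilon)$ rounds of conditioning, each costing $O(1)$ degree.'' It is forced by the BBKSS rounding theorem, which needs the pseudodistribution to support the approximate shift-partition potential $\Phi_{\beta,\nu}$; this involves a polynomial $p_{\beta,\nu}$ of degree $\widetilde{O}(1/\nu)$ that SoS-certifiably approximates a threshold, with $\nu=\Theta(\varepsilon/{k\choose r})$. The SoS structure theorem itself (\Cref{thm:structure-SoS}) is only degree $2$. Relatedly, the algorithm does not ``condition on the assignment of an $r$-face $\tau$'': it enumerates $X(r)$ to locate a link with high CR-value, then runs Condition\&Round (conditioning on a single vertex's label) on that link, iterating via \Cref{lem:iterated-con-round}.
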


Since the HDX literature focuses mostly on canonical and partial-swap walks, we also give the specification of \Cref{thm:unique-games} to this class for concreteness. Here we see that the finer-grained performance of our algorithm depends on the \textit{depth} of the walk.

\begin{corollary}\label{cor:ug-depth}
For any $\varepsilon \in (0,.01)$, there exists an algorithm $\mathcal{A}$ with the following guarantee. Let $(X,\Pi)$ be a $d$-dimensional two-sided $\gamma$-local spectral expander and $M$ be a $k$-dimensional canonical or partial-swap walk of depth $\beta \in [0,1]$ over $X$ such that $\gamma \leq w(M)^{-1}2^{-\Omega(k+h(M))}$ and $d>k$. Let $\mathcal{I} = (M,\mathcal{S})$ be an instance of affine unique games over $M$ with value at least $1-\varepsilon$. Then $\mathcal{A}$ outputs an $\Omega\left(\frac{\varepsilon^3}{{k \choose c\frac{\varepsilon}{\beta}}^2}\right)$-satisfying assignment in time $|X(k)|^{\tO\left({k \choose c\varepsilon/\beta}\frac{1}{\varepsilon}\right)}$ for some absolute constant $c>0$.
\end{corollary}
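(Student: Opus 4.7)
The plan is to derive \Cref{cor:ug-depth} as a direct instantiation of \Cref{thm:unique-games}, with the only real work being to convert the abstract ST-rank parameter $r(2\varepsilon)=R_{1-16\cdot 2\varepsilon}(M)$ into an explicit bound in terms of the depth $\beta$. First I would verify that canonical and partial-swap walks fit the hypotheses of \Cref{thm:unique-games}: both families are complete HD-walks in the sense of \Cref{def:complete-walk} (indeed, partial-swap walks on $J(n,d)$ are exactly Johnson graphs, and the canonical walks are defined uniformly across all complexes), and their height and weight parameters $h(M), w(M)$ are controlled purely in terms of $k$ and the depth, so the local-spectral hypothesis $\gamma \leq w(M)^{-1}2^{-\Omega(k+h(M))}$ in the corollary matches what \Cref{thm:unique-games} requires.

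Next I would bound $r(2\varepsilon) = R_{1-32\varepsilon}(M)$. This is exactly what the spectral decay statement of \Cref{thm:intro-hdx-swap-canon} is set up for: combining \Cref{prop:canon-spectra-HDX} and \Cref{cor:swap-HDX-spectra}, the stripped eigenvalues of a depth-$\beta$ canonical or partial-swap walk satisfy $\lambda_i(M) \leq e^{-\beta i}$ (up to negligible $\gamma$-error absorbed into the eigenstrip widths from \Cref{thm:approx-ortho}). Therefore the number of strips whose eigenvalue exceeds $1-32\varepsilon$ is at most the smallest integer $i$ with $e^{-\beta i} \leq 1 - 32\varepsilon$, namely
\[
r(2\varepsilon) \;\leq\; \frac{\ln\!\bigl(1/(1-32\varepsilon)\bigr)}{\beta}.
\]
For $\varepsilon \in (0,.01)$ we have $32\varepsilon \leq 0.32$, so a standard Taylor estimate gives $\ln(1/(1-32\varepsilon)) \leq c\varepsilon$ for some absolute constant $c>0$, yielding $r(2\varepsilon) \leq c\varepsilon/\beta$ as required.

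Finally I would substitute this bound back into the approximation guarantee and runtime of \Cref{thm:unique-games}. Since $\binom{k}{i}$ is monotone in $i$ for $i \leq k/2$ and the interesting regime is precisely $c\varepsilon/\beta \ll k$, the approximation guarantee $\Omega(\varepsilon^3/\binom{k}{r(2\varepsilon)}^2)$ becomes $\Omega(\varepsilon^3/\binom{k}{c\varepsilon/\beta}^2)$ and the runtime $|X(k)|^{\tilde{O}(\binom{k}{r(2\varepsilon)}/\varepsilon)}$ becomes $|X(k)|^{\tilde{O}(\binom{k}{c\varepsilon/\beta}/\varepsilon)}$, matching the statement of the corollary.

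There really is no main obstacle here beyond bookkeeping: the heavy lifting for the corollary was already done in proving \Cref{thm:unique-games} (the SoS rounding and local-to-global patching analysis) and in establishing the exponential eigenvalue decay for depth-$\beta$ walks in \Cref{sec:hdx-spectra}. The only subtlety is confirming that the $\gamma$-dependence in the hypothesis of \Cref{thm:unique-games} propagates cleanly; since both $h(M)$ and $w(M)$ for depth-$\beta$ canonical and partial-swap walks are bounded by functions of $k$ alone (in particular $w(S^j_k) \leq 2^{O(k)}$ by the hypergeometric expansion of \Cref{prop:AJT}), this causes no difficulty.
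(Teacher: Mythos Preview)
Your proposal is correct and follows exactly the approach the paper intends: the corollary is stated as an immediate specification of \Cref{thm:unique-games} to canonical and partial-swap walks, and the only additional ingredient is the ST-rank bound $R_\delta(M) \leq \ln(1/\delta)/\beta$ from \Cref{thm:intro-hdx-swap-canon}, which you apply with $\delta = 1-32\varepsilon$ and the elementary estimate $\ln(1/(1-32\varepsilon)) = O(\varepsilon)$ for $\varepsilon \in (0,.01)$. The paper does not supply a separate proof of the corollary, so your bookkeeping is precisely what fills the gap.
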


To prove \Cref{thm:unique-games}, we follow the general SoS algorithmic paradigm introduced by BBKSS for Johnson graphs (partial-swap walks on the complete complex), described in \Cref{sec:algorithm} for completeness. As discussed in Section~\ref{sec:ug-proof-overview} we abstract BBKSS' analysis to rely on two core structural properties true of the graphs underlying HD-walks:
\begin{enumerate}
    \item There exists a low-degree Sum-of-Squares proof that non-expanding sets have high variance in size across links.
    \item For every small enough $\eps$, there exists a parameter $r=r(\varepsilon)$ such that:
    \begin{enumerate}
        \item The $(r+1)$-st largest (distinct) stripped-eigenvalue of $G$ is small:
        \[ 
        \lambda_{r}\leq 1-\Omega(\varepsilon)
        \]
        \item The expansion of any $r$-link is small as well:
        \[
        \forall \tau \in X(s): \Phi(X_\tau) \leq O(\varepsilon).
        \]
    \end{enumerate}
\end{enumerate}
Recall that we have already proved property (1) in \Cref{thm:hdx-expansion}, albeit not in the low-degree SoS proof system. In \Cref{sec:restricted-con-round} we modify the proof of \Cref{thm:hdx-expansion} to show that it has a degree 2 SoS proof. For property (2), it is clear that the existence of $r(\eps)$ is an inherent consequence of \Cref{thm:local-vs-global}, and furthermore it is exactly the $(1 - O(\eps))$ ST-Rank of the HD-walk. Analyzing the generalized BBKSS algorithm using these properties combined with a few more technicalities (described in Section~\ref{sec:restricted-con-round}) then gives an efficient algorithm for unique games over HD-walks on two-sided local spectral expanders.

\subsection{Background for Unique Games and SoS}\label{sec:SoS-background}

Proving \Cref{thm:unique-games} from the ground up requires substantial background in the SoS framework. However, since we mostly rely on a number of higher level results from \cite{bafna2020playing} for the SoS side of our work, we cover here only background necessary to understand our methods, and refer the reader to the surveys of~\cite{BS14,TCS-086,RSS18} and additionally Sections 1, 2, and A of \cite{bafna2020playing} for more information.
\\
\\
\noindent\textbf{The Sum of Squares framework} is a method for approximating polynomial optimization problems through semi-definite programming relaxations. In particular, given the problem:
\[
\text{Maximize } p \in \R[x_1,\ldots,x_n]~\text{constraint to } \{q_i = 0\}_{i=1}^m,
\]
for $q_i \in \mathbb{R}[x_1,\ldots,x_n]$, the Degree-$D$ Sum of Squares semidefinite programming relaxation outputs in time $n^{O(D)}$ a \textit{pseudoexpectation operator}  $\pE: \text{poly}_{\mathbb{R}}(n,D) \to \mathbb{R}$ over polynomials in $\mathbb{R}[x_1,\ldots,x_n]$ of degree at most $D$ satisfying:
\begin{enumerate}
    \item Scaling: $\pE[1] = 1$
    \item Linearity: $\pE[af(x)+bg(x)] = a\pE[f(x)] + b\pE[g(x)]$
    \item Non-negativity (for squares): $\pE[f(x)^2] \geq 0$
    \item Program constraints: $\pE[f(x)q_i(x)] = 0$
    \item Optimality: $\pE[p(x)] \geq \max_x \left\{p(x): \{q_i = 0\}_{i=1}^m\right\}$
\end{enumerate}

Note that the first four properties give the definition of a pseudoexpectation (under constraints $\{q_i=0\}_{i=1}^m$), whereas the fifth is promised by the SoS relaxation. The pseudoexpectation operator can equivalently be defined as a weighted expectation over a ``pseudodistribution'' $\mu$ satisfying similar properties (and analogous to an actual distribution). We will sometimes use the pseudodistribution view below when convenient and refer to $\pE[\cdot]$ (or $\pE_{\mu}[\cdot]$ in this case) as pseudomoments of the pseudodistribution (see \cite{barak2014sum} for a more detailed exposition).
\\
\\
\noindent\textbf{A Degree-$D$ Sum of Squares proof} of a polynomial inequality $f(x) \leq g(x)$ (where $f,g$ are polynomials of degree at most $D$) is a method for ensuring the inequality continues to hold over any degree-$D$ pseudoexpectation. In particular, given constraints $\{q_i = 0\}_{i=1}^m$, a degree-$D$ sum of squares proof of $f \leq g$, denoted by:
\[
\{q_i = 0\}_{i=1}^m \vdash_D f \leq g,
\]
is a certificate of the form $g(x) = f(x) + \sum s(x)^2 + \sum_i t(x)q_i(x)$ where all terms have degree at most $D$. Notice that properties 2, 3, and 4 then immediately imply $\pE[f(x)] \leq \pE[g(x)]$.
\\
\\

\noindent\textbf{Conditioning} is a standard algorithmic technique in the SoS paradigm used to improve the value of independently sampling a solution from the output of an SoS semidefinite relaxation (see e.g.\ \cite{barak2011rounding, bafna2020playing}). Given a degree $D$ pseudodistribution $\mu$, and a degree $< D$ sum of squares polynomial $s(x)$, we can define a new pseudodistribution $\mu'$ by \textit{conditioning} on $s(x)$ as follows:
\[
\pE_{\mu'}[f(x) | s(x)] = \frac{\pE_\mu[f(x)s(x)]}{\pE_\mu[s(x)]},
\]
for all polynomials $f(x)$ of degree $\leq D - \deg(s)$. We have that $\mu'$ is a valid pseudodistribution of degree $D - \deg(s)$ that satisfies the axioms satisfied by $\mu$\footnote{Technically only the axioms that were of degree $\leq D - \deg(s)$}. In an algorithmic context, this is often used to restrict the pseudoexpectation to some partial solution.
\\

\subsection{The Algorithm}\label{sec:algorithm}
We now present our algorithm for solving Unique Games on HD-walks. We follow the overall framework of BBKSS, which is based on the Sum-of-Squares semidefinite programming (SDP) relaxation paradigm and its view as optimizing over pseudoexpectation operators. The unique games problem (\Cref{def:ug}) can be written as a polynomial optimization problem. In particular, given an instance $I$ of unique games with alphabet $\Sigma$ and constraints $\mathcal{S}$ over $G(V,E)$ (that are of the form $X_u - X_v = s_{uv} ~ (\text{mod } k)$), consider the following quadratic optimization problem $\mathcal{A}_I$ over variables $\{X_{v,s}\}_{V \times \Sigma}$ that computes $\text{val}(I)$:

\begin{align*}
    \text{Maximize:} &\quad\quad\quad\quad \underset{(u,v) \sim E}{\mathbb{E}}\left[\sum\limits_{s \in \Sigma} X_{u,s}X_{v,\pi_{uv}(s)}\right]\\
    \text{Constraint to:} & \quad\quad\quad\quad\quad\quad\quad \ \ X^2_{v,s} = X_{v,s} & \quad\quad\quad\quad\quad\quad\quad \ \ \forall v \in V, s \in \Sigma\\
    & \quad\quad\quad\quad\quad\quad\quad \ \ X_{v,a}X_{v,b} = 0 & \forall v \in V, a\neq b \in \Sigma \\
    & \quad\quad\quad\quad\quad\quad\quad~~ \sum\limits_{s \in \Sigma}X_{v,s} = 1 & \forall v \in V
\end{align*}

The variables $X_{v,a}$ are $0/1$ indicators that vertex $v \in V$ takes the value $a \in \Sigma$, $E$ is a distribution over the edges of the weighted graph $G$, the constraints are $\pi_{uv}(s) = s - s_{uv} \ (\text{mod }k)$, and the objective function maximizes the fraction of constraints satisfied. We will work with the Degree-$D$ Sum of Squares relaxation of this program, which outputs a degree-$D$ pseudodistribution $\mu$ and corresponding pseudoexpectation operator  $\pE_\mu:X^{\le D} \to \R$, where $X^{\le D}$ is the set of all monomials in the $X$ variables up to degree $D$, and $\pE_\mu$ satisfies the above equality constraints as axioms. The \emph{value} of $\pE_\mu$, with respect to the instance $I$ is denoted by $\val_\mu(I) = \pE_\mu[\val_I(X)] = \pE_\mu[\underset{(u,v) \sim E}{\mathbb{E}}[\sum_{s \in \Sigma} X_{u,s} X_{v,\pi_{uv}(s)}]]$. This operator is obtained in time $|V|^{O(D)}$ such that $\pE_\mu[\text{val}_I(X)] \geq \text{val}(I)$.

We now describe the rounding algorithm that takes a pseudodistribution $\mu$ over an almost-satisfiable UG instance $I$ and produces a high value assignment.

\paragraph{Condition\&Round:} We start with a basic sub-routine which will then be used in the final algorithm for HD-walks. This subroutine takes a pseudodistribution $\mu$ for an affine unique games instance $(G(V,E),\Pi)$ on alphabet $\Sigma$ and outputs an assignment $x \in \Sigma^{V}$ via the following process:
\begin{enumerate}
    \item Sample a vertex $v \in V$ uniformly at random, and condition $\mu$ on event $X_{v,0}=1$ to get the conditioned pseudodistribution $\mu | (X_{v,0}=1)$. 
    \item Sample a solution $x \in \Sigma^V$ by independently sampling a label for every vertex $w \neq v$ from its marginal distribution in $\mu|(X_{v,0}=1)$: $\Pr[x_w = s] = \pE_\mu[X_{w,s} | X_{v,0}=1]$.
\end{enumerate}
Note that $\pE_\mu[\cdot | X_{v,0}=1]$ is by definition conditioning $\mu$ on the polynomial $X_{v,0}$, hence $\pE_\mu[X_{w,s} | X_{v,0}=1] = \pE_\mu[X_{w,s}X_{v,0}]/\pE_\mu[X_{v,0}]$. Following \cite{bafna2020playing}, we define the term Condition\&Round-value (abbreviated to CR-val) of an instance $I$ with respect to a pseudodistribution $\mu$:

\begin{definition}[Condition\&Round value]
The \textit{CR-Value} of the instance $I$ with respect to a pseudodistribution $\mu$ is the expected value of the solution output by Condition\&Round on instance $I$ and pseudodistribution $\mu$, denoted $\text{CR-Val}_\mu(I)$. We drop the subscript $\mu$ when clear from context.
\end{definition}

Before we describe the main algorithm, an iterative framework for applying Condition\&Round, we need to introduce a simple operation on pseudodistributions for affine unique games that allows for ease of analysis:

\paragraph{Symmetrization:} Symmetrization is an operation on pseudodistributions introduced in \cite{bafna2020playing} to take advantage of the symmetric structure of affine unique games. The idea is to average the pseudoexpectation operator over shifts $s \in \Sigma$. Formally, given a degree $D$ pseudodistribution $\mu$, define the symmetrized pseudodistribution $\mu_{sym}$ via its pseudoexpectation operator as follows. For all degree $\leq D$ monomials:
\[
\pE_{\mu_{sym}}[X_{u_1,a_1} \ldots X_{u_t,a_t}] =  \frac{1}{|\Sigma|}\sum\limits_{s \in \Sigma} \pE_\mu[X_{u_1,a_1+s} \ldots X_{u_t,a_t+s}].
\]
We will call a pseudodistribution shift-symmetric if it is invariant under this operation. If $\mu$ is a degree $D$ pseudodistribution that satisfies the unique games axioms $\mathcal{A}_I$ it is easy to verify that the symmetrized pseudodistribution $\mu_{sym}$ is also a valid degree $D$ pseudodistribution satisfying $\mathcal{A}_I$ with $\val(\mu) = \val(\mu_{sym})$. Furthermore symmetrization can be performed in time subquadratic in the description of $\pE_\mu$. As a result, we can perform this operation essentially for free inside our algorithm and therefore assume throughout that we are working with a shift-symmetric pseudodistribution.

We are now ready to describe the main algorithm which is called Iterated Condition\&Round. The algorithm follows the strategy presented in \cite[Algorithm 6.1]{bafna2020playing}, differing mainly in that the parameter $r(\varepsilon)$ satisfying their second condition has been replaced with the $(1-O(\eps))$-ST-Rank of the underlying constraint graph. 

\paragraph{Iterated Condition\&Round:} The full algorithm builds a solution by iteratively applying Condition\&Round to links. Let $M$ be a complete $k$-dimensional HD-walk over a $d$-dimensional two-sided $\gamma$-local spectral expander $(X,\Pi)$ and $G_M = (X(k),E)$ be the corresponding undirected weighted graph on vertex set $X(k)$. Let $I=(M,\mathcal{S})$ be an instance of affine unique games over alphabet $\Sigma$ with $\text{val}(I) \geq 1-\varepsilon$ and $r=R_{1-16\varepsilon}(M)$. Further, given a subset $H \subset X(k)$, let $I_H$ denote the restriction of the instance $I$ to the subgraph vertex-induced by $H$. Given a subroutine for finding a link with high CR-value (see \Cref{prop:restricted-con-round}), the following process returns an $\Omega_{\varepsilon,r,k}(1)$ satisfying assignment.
\begin{enumerate}\label{AlgIT}
    \item Let $\delta(\varepsilon) := \Omega\left(\frac{\varepsilon}{{k \choose r}}\right)$. Solve the Degree-$D=\tO\left(1/\delta(\varepsilon)\right)$ SoS SDP relaxation of unique games, and symmetrize the resulting pseudodistribution to get $\mu_0$. Set $j=1$.
    \item Let $\text{Dif}(j)=\pE_{\mu_0}[val_I(x)] - \pE_{\mu_{j-1}}[val_I(x)]$. While $\text{Dif}(j) \leq \varepsilon$:
    \begin{enumerate}
        \item Find an $r$-link $X_\tau$ such that the CR-Value of  $I|_{X_\tau}$ is at least $\delta(\varepsilon + \text{Dif}(j))$\footnote{When $\gamma,k,d$ satisfy certain inequalities then such a link is guaranteed to exist by \Cref{prop:restricted-con-round}. Therefore we can find it by enumerating over all links $\tau \in X(r)$ and computing CR-$\val(X_\tau)$}.
        \item Let $S_j$ be the subgraph of $X_\tau$ induced by the vertices in $X(k)$ which have not yet been assigned a value in any partial assignment $f_i$, $i\leq j$, and perform Condition\&Round on $S_j$ to get partial assignment $f_j$.
        \item Create a new pseudodistribution $\mu_j$ by making the marginal distribution over assigned vertices uniform and independent of others, i.e. for all degree $\leq D$ monomials let $\pE_{\mu_j}$ be:
        \[
        \pE_{\mu_j}[X_{h_1,a_1}\ldots X_{h_t,a_t}X_{u_1,b_1}\ldots X_{u_m,b_m}] = \frac{1}{|\Sigma|^t} \pE_{\mu_{j-1}}[X_{u_1,b_1}\ldots X_{u_m,b_m}],
        \]
        where $h_i \in S_j$ and $u_i \in X(k) \setminus S_j$. Increment $j \gets j+1$.
    \end{enumerate}
\end{enumerate}

It is worth noting that the Condition\&Round subroutine, and thus the entire Iterated Condition\&Round algorithm, can be derandomized by standard techniques like the method of conditional expectations \cite{bafna2020playing}.

\subsection{Analysis of Algorithm~\ref{AlgIT}}\label{sec:alg-analysis}
BBKSS' analysis of Iterated Condition\&Round algorithm relies heavily on analysing a quantity called the Approximate shift-partition potential defined therein. For completeness we include the definitions of the potential functions used in~\Cref{app:potential}. For the purpose of this section, the potential $\Phi_{\beta,\nu}^I(X)$ can be thought of as a low-degree polynomial (with degree = $\tO(1/\nu)$) in the variables of $\mathcal{A}_I$ ($X = (\ldots,X_{v,a},\ldots)$ where $v \in V$ and $a \in \Sigma$). Given a pseudodistribution $\mu$ for instance $I$, the pseudoexpectation of the potential will be denoted by $\Phi^I_{\beta,\nu}(\mu) = \pE_{\mu}[\Phi_{\beta,\nu}^I(X)]$. $\beta,\nu$ are some parameters in $[0,1]$ that control the 
degree of $\Phi_{\beta,\nu}^I(X)$ and the value of the assignment we finally obtain via Condition\&Round.

The framework developed by BBKSS for analyzing Iterated Condition\&Round can be described as follows. First, they prove that the Condition\&Round subroutine, when run on a pseudodistribution $\mu$ for a unique games instance $I$, returns a high value assignment whenever $\Phi^I_{\beta,\nu}(\mu)$ (Definition~\ref{def:sp-apx}) is high:

\begin{theorem}[BBKSS Theorem 3.3]\label{thm:round}
Let $I = (G, \mathcal{S})$ be an affine instance of Unique Games over graph $G = (V,E)$ and the alphabet $\Sigma$. 
Let $\beta,\nu \in [0,1]$ and $\mu$ be a  degree-$\tO(1/\nu)$ shift-symmetric pseudodistribution satisfying the unique games axioms $\A_I$ (Section~\ref{sec:algorithm}). If $\Phi^I_{\beta,\nu}(\mu) \geq \delta$, then on input $\mu$, the Condition\&Round Algorithm runs in time $\poly(|V(G)|)$ and returns an assignment of expected value at least $(\delta - \nu)(\beta-\nu)$ for $I$.
\end{theorem}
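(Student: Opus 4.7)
The plan is to directly expand the expected value of the output of the Condition\&Round subroutine and relate it term-by-term to the approximate shift-partition potential $\Phi^I_{\beta,\nu}(\mu)$. Recall that Condition\&Round first conditions $\mu$ on $X_{v,0}=1$ for a uniformly random vertex $v$, then independently samples labels for each remaining vertex from its conditional marginal. Thus the expected value of the rounded assignment is
\[
\underset{v}{\mathbb{E}} \underset{(u,w) \sim E}{\mathbb{E}}\sum_{s \in \Sigma} \frac{\pE_\mu[X_{u,s} X_{v,0}]}{\pE_\mu[X_{v,0}]} \cdot \frac{\pE_\mu[X_{w,\pi_{uw}(s)} X_{v,0}]}{\pE_\mu[X_{v,0}]}.
\]
By shift-symmetry, $\pE_\mu[X_{v,0}] = 1/|\Sigma|$ for every $v$, so the denominators collapse to constants. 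The first step is to rewrite this as a pseudoexpectation of a single polynomial in $X$, enabling a term-by-term comparison with the joint pseudoexpectation $\pE_\mu[X_{u,s}X_{w,\pi_{uw}(s)} X_{v,0}]$, whose edge average already matches $\val(I)$ up to a factor of $|\Sigma|$ by the UG axioms $\mathcal{A}_I$.

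Next, I would leverage the shift-partition intuition: if, for a typical $v$, the conditional marginal on a typical neighbor $u$ has a single heavy atom of pseudo-mass at least $\beta$, then the two quantities $\pE_\mu[X_{u,s}\mid X_{v,0}=1]\cdot\pE_\mu[X_{w,\pi_{uw}(s)}\mid X_{v,0}=1]$ and $\pE_\mu[X_{u,s}X_{w,\pi_{uw}(s)}\mid X_{v,0}=1]$ agree up to an $O(1-\beta)$ loss when summed against an almost-satisfied constraint. The potential $\Phi^I_{\beta,\nu}$ is designed precisely to certify, in a pseudoexpectation-friendly way, the probability that this heavy-atom event occurs, with $\nu$ controlling the quality of a low-degree polynomial approximation to the indicator ``$\max_s \pE_\mu[X_{u,s}\mid X_{v,0}=1] \geq \beta$''. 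Combining these two observations, whenever $\Phi^I_{\beta,\nu}(\mu) \geq \delta$ one obtains an $(\delta - \nu)(\beta - \nu)$ lower bound on the rounded value: a $(\delta - \nu)$ factor accounts for the fraction of heavy $(v,u)$ pairs (losing $\nu$ to the polynomial approximation), while the second $(\beta - \nu)$ factor accounts for the matching pair of labels being independently sampled correctly on both endpoints of the constraint.

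To make the above rigorous within the SoS framework, every manipulation must go through pseudoexpectations of polynomials of degree at most $\tO(1/\nu)$. The heavy-atom indicator is replaced by a univariate low-degree polynomial surrogate (for example a Chebyshev-truncated step function of degree $\tO(1/\nu)$), so that both $\Phi^I_{\beta,\nu}(\mu)$ and the related inner products in the expansion of the rounded value become well-defined pseudoexpectations in this degree regime. The UG axioms $X_{v,a}^2 = X_{v,a}$, $X_{v,a}X_{v,b}=0$ for $a\neq b$, and $\sum_s X_{v,s}=1$ then allow pseudoexpectation Cauchy--Schwarz style manipulations that replace joint pseudomoments by products of conditional marginals, pushing the loss from each step into the additive $\nu$ slack.

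The main obstacle is exactly this last bookkeeping step: assembling an SoS proof that simultaneously (i) exploits shift-symmetric pseudoexpectation identities, (ii) swaps the heavy-atom indicator for its polynomial surrogate, and (iii) decouples joints into products via pseudo-Cauchy--Schwarz, all while holding the degree at $\tO(1/\nu)$ and accumulating only an additive $O(\nu)$ error at each stage. Choosing the approximating polynomial's degree and accuracy so that it is compatible with the SDP relaxation degree yet tight enough to produce the advertised bound $(\delta - \nu)(\beta - \nu)$ is where the delicate BBKSS analysis lives, and would be the technical crux of a self-contained proof.
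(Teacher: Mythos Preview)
The paper does not prove this theorem. It is quoted verbatim as ``BBKSS Theorem 3.3'' and used as a black box: the sentence immediately following the statement reads ``We use the analysis of Condition\&Round (Theorem~\ref{thm:round}) in a blackbox way.'' So there is no proof in the paper to compare your proposal against.

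As a sketch of what the BBKSS argument looks like, your outline captures the right high-level ingredients: shift-symmetry forces $\pE_\mu[X_{v,0}]=1/|\Sigma|$, the potential $\Phi^I_{\beta,\nu}$ is designed so that a low-degree polynomial stands in for the heavy-atom indicator, and the $(\delta-\nu)(\beta-\nu)$ factoring corresponds to ``fraction of good pairs'' times ``probability both endpoints round consistently.'' That said, your proposal is explicitly a plan rather than a proof---you identify the SoS bookkeeping as ``the technical crux'' and stop short of carrying it out. One point where your sketch is imprecise: the potential $\Phi^I_{\beta,\nu}$ as defined in the paper (Definition~\ref{def:sp-apx}) involves two independent copies $X,X'$ and the threshold is applied to $\val_u(X)$, not to a conditional marginal $\pE_\mu[X_{u,s}\mid X_{v,0}=1]$. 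The actual BBKSS argument relates the squared expectation over $u$ in the potential to the rounding value via the shift structure of affine UG and the two-copy trick, rather than via a direct ``heavy conditional marginal'' interpretation. If you intend to write this proof out, you would need to work with the two-copy definition and the shift-partition functions $f_s^{X,X'}$ rather than the single-copy conditional-marginal picture you describe.
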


Using this rounding algorithm as a subroutine the analysis of Iterated Condition\&Round
in BBKSS proceeds in the following way:

\begin{enumerate}
    \item For every $I=(G,\Pi)$, where $G$ is a Johnson graph, using the structure theorem for non-expanding sets of $G$ and the properties of the potential function, there exists a link $X_\tau$ of $G$ such that the potential induced on $C$ is large: $\Phi^{I|_{X_\tau}}_{\beta,\nu}(\mu) \geq \poly(\delta)$, where $I|_{X_\tau}$ denotes the UG instance induced on the subgraph $X_\tau$. Using the rounding theorem~\ref{thm:round} we get that the Condition\&Round value on this link is high.
    
    \item Iterative Condition\&Round: Using iterations one can patch together solutions on different links, and since each link is a non-expanding set we get a good solution for the whole graph.
\end{enumerate}

We show that this framework extends to all UG instances on HD-walks. We use the analysis of Condition\&Round (Theorem~\ref{thm:round}) in a blackbox way. We start with the analysis of Point 1, which relies on the technical machinery developed in the previous sections for analysing the non-expanding sets of HD-walks. Using this, we show that there always exists a link with high potential and therefore with high Condition\&Round value. 

\begin{proposition}\label{prop:restricted-con-round}
Let $M$ be a $k$-dimensional complete HD-walk on a $d$-dimensional, two-sided $\gamma$-spectral expander with $\gamma \leq w(M)^{-1}2^{-\Omega(h(M)+k)}$ and $d>k$, and $I$ be an affine unique games instance over $M$ with value at least $1-\eta$ where $1/2^k \leq\eta < .02$. Let $r=r(\eta)=R_{1-16\eta}(M)-1$, and assume $r \leq k/2$. Then given a degree-$\tO\left (\frac{1}{\eta}{k \choose r} \right )$ pseudodistribution satisfying the axioms $\mathcal{A}_I$, we can find in time $|X(k)|^4$ an $r$-link $X_\tau$ with CR-Value$(X_\tau) \geq \Omega\left(\frac{\eta}{{k \choose r}} \right)$.
\end{proposition}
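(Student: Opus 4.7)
The plan is to exhibit an $r$-link $\tau \in X(r)$ on which the approximate shift-partition potential $\Phi^{I|_{X_\tau}}_{\beta,\nu}(\mu)$ is at least $\Omega(\eta/\binom{k}{r})$, and then invoke the rounding guarantee of \Cref{thm:round} on the conditioned pseudodistribution to output an assignment of that expected value on $X_\tau$. View the pseudoexpected local potential as a non-negative function $g : X(k) \to \mathbb{R}$ whose global mean is $\Phi^I_{\beta,\nu}(\mu)$; because $\val_\mu(I) \geq 1 - \eta$ and the shift-partition potential is a low-degree proxy for value, the BBKSS framework gives $\Phi^I_{\beta,\nu}(\mu) \geq 1 - O(\eta)$ for an appropriate choice of $\beta,\nu$. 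The high value additionally implies that $g$ is ``non-expanding'' against $M$ in the SoS sense $\pE_\mu[\langle g, Mg\rangle] \geq (1 - O(\eta))\,\pE_\mu[\langle g, g\rangle]$. Once we have a $\tau$ witnessing a large local bias of $g$, brute-force search over the $|X(r)|$ candidates combined with computing the local potential for each completes the proof in time $|X(k)|^4$.

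To locate such a $\tau$, I would port the proof of \Cref{thm:hdx-expansion} into the degree-$\tO(\binom{k}{r}/\eta)$ SoS proof system. Expanding $g$ via the HD-Level-Set decomposition $g = g_0 + \cdots + g_k$ and applying a Sum-of-Squares analog of \Cref{prop:hdx-eig-vals}, the non-expansion inequality rearranges to
\[
\sum_{i=0}^{r}\lambda_i(M)\,\pE_\mu[\langle g, g_i\rangle] \;\geq\; \bigl(1 - O(\eta) - O_k(\gamma)\bigr)\,\pE_\mu[\langle g, g\rangle] \;-\; (1 - 16\eta)\sum_{i > r}\pE_\mu[\langle g, g_i\rangle].
\]
Since by definition every strip of index $i > r = R_{1-16\eta}(M) - 1$ has approximate eigenvalue at most $1 - 16\eta$, this forces $\pE_\mu\bigl[\sum_{i \leq r}\langle g, g_i\rangle\bigr] \gtrsim \eta\,\pE_\mu[\langle g, g\rangle]$. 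Invoking the SoS form of \Cref{lem:low-level-weight} converts this into $\sum_{i \leq r}\binom{k}{i}\,\pE_\mu[\text{Var}(D^k_i g)] \gtrsim \eta\,\pE_\mu[\langle g, g\rangle]$. By pigeonhole, some level $i^* \leq r$ contributes variance at least $\Omega(\eta/(r\binom{k}{r}))$, and averaging over $\tau \in X(i^*)$ produces a specific link witnessing the desired local bias of the potential, at which point \Cref{thm:round} delivers the claimed CR-value.

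The main technical obstacle is carrying out each ingredient of \Cref{sec:hdx-spectra}--\Cref{sec:expansion} as a \emph{low-degree} SoS proof. In particular, one needs SoS versions of approximate orthogonality (\Cref{lemma:HDX-approx-ortho}) and of the approximate eigenvalue bound \Cref{prop:hdx-pure-eig-vals} for each pure walk composing $M$. Both are quadratic inequalities of the form $\norm{Af - \lambda f}^2 \leq c\,\norm{f}^2$, which admit degree-$2$ SoS proofs by inductively peeling off $U$ and $D$ operators: each individual step is essentially \Cref{eq:hdx-eposet}, itself a spectral inequality on each link graph with an immediate SoS proof via Cauchy--Schwarz. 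Composition blows up the constant by the factor $w(M)h(M)(h(M)+k)\binom{k}{\ell}$ appearing in \Cref{prop:hdx-eig-vals}, which the hypothesis $\gamma \leq w(M)^{-1}2^{-\Omega(k+h(M))}$ is precisely calibrated to absorb. Once the SoS expansion theorem is established, the averaging argument above finishes the proof, and the overall runtime is dominated by the $|X(r)|$-enumeration of candidate links, each requiring only a polynomial-sized evaluation on the restricted pseudodistribution.
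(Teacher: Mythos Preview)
Your outline captures the right high-level skeleton (SoS structure theorem $\Rightarrow$ some link has large potential $\Rightarrow$ invoke \Cref{thm:round}), but it conflates two distinct objects and, as a result, misses a genuinely nontrivial step that the paper has to work for.

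First, the structure theorem is not applied to a scalar ``potential function'' $g:X(k)\to\mathbb{R}$. The approximate shift-partition potential $\Phi_{\beta,\nu}(X,X')=\sum_s \mathbb{E}_u[f_s(u)]^2$ is a sum of \emph{squares of averages} of the functions $f_s^{X,X'}(u)=\ind[X_u-X'_u=s]\cdot p_{\beta,\nu}(\val_u(X))$; it is not the expectation of any single vertex function, and in particular there is no meaningful sense in which a single $g$ is ``non-expanding against $M$''. What the paper does (\Cref{prop:high-global}) is show that the family $\{f_s\}_{s\in\Sigma}$ satisfies, in pseudoexpectation, (i) large total mass, (ii) small total Laplacian $\sum_s\langle f_s,(I-M)f_s\rangle$, and (iii) approximate Booleanity, and then applies the SoS structure theorem (\Cref{thm:structure-SoS}) to each $f_s$ and sums. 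The output is that $\mathbb{E}_{\tau\in X(r)}\bigl[\sum_s \mathbb{E}_{X_\tau}[f_s]^2\bigr]$ is large, which by definition says the \emph{global potential restricted to} some link $X_\tau$, namely $\Phi_{\beta,\nu}^I(\mu|_\tau)$, is at least $\Omega(1/\binom{k}{r})$.

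Second, and this is the real gap: $\Phi_{\beta,\nu}^I(\mu|_\tau)$ is \emph{not} the quantity $\Phi^{I|_{X_\tau}}_{\beta',\nu}(\mu)$ that \Cref{thm:round} needs. The former still computes $\val_u(X)$ using \emph{all} edges of $M$ incident to $u$, whereas the latter uses only edges internal to $X_\tau$. Bridging the two (\Cref{lem:ug-local-to-global}) requires showing that for most vertices $u\in X_\tau$, the fraction of $u$'s edges leaving $X_\tau$ is close to the link's overall expansion $\phi(X_\tau)\leq O(\eta)$. This is the vertex-by-vertex expansion statement of \Cref{lem:vert-edge-exp}, proved by decomposing $M$ into lower walks $(U_{k-1}D_k)^i$ and using that in a $\gamma$-local-spectral expander no single vertex in a link graph has weight exceeding $\gamma$. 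Only after this step can one replace $p_{\beta,\nu}(\val_u(X))$ by $p_{\beta-O(\eta),\nu}(\val_u^\tau(X))$ with negligible loss and invoke \Cref{thm:round}. Your proposal omits this entirely, so as written it does not produce a bound on $\text{CR-val}_\mu(X_\tau)$.
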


The proof of this lemma is similar in nature to the proof of the analogous lemma in BBKSS, albeit with all of our technical machinery is plugged in. In particular we use a sum-of-squares version of~\Cref{lem:low-level-weight}, the relation between eigenvalues and expansion of links (\Cref{thm:local-vs-global}), and a new property analyzing the expansion of vertices within a link (\Cref{lem:vert-edge-exp}). 
We elaborate on this in Section~\ref{sec:restricted-con-round}. 

Given the ability to find a link with high CR-value, we can use BBKSS analysis of Iterated Condition\&Round (\cite[Lemma 6.12]{bafna2020playing}) as a blackbox to conclude it produces an assignment satisfying a large fraction of the constraints for the \emph{whole graph}.

\begin{lemma}[Lemma 6.12 \cite{bafna2020playing}]\label{lem:iterated-con-round}
Let $\varepsilon \in (0,.01)$, $\delta: [0,1] \to [0,1]$ be any function, and $\delta_{\text{min}}=\min_{\delta(\eta) \in [\varepsilon,2\varepsilon]}$. Let $G$ be a weighted graph\footnote{BBKSS only state the result for regular graphs, but their proof works for weighted graphs too, where the value, expansion etc are measured appropriately according to the edge-weights.} and $I$ be any affine unique games instance on $G$ with alphabet size $|\Sigma| \geq \Omega\left(\frac{1}{\delta_{\text{min}}} \right)$ and value at least $1-\varepsilon$.

Suppose we have a subroutine $\mathcal A$ which, for any $\varepsilon \leq \eta \leq 2\varepsilon$, given as input a shift-symmetric degree-$D$ pseudodistribution $\mu$ satisfying $\mathcal A_I$ with $\pE_\mu[val_I(x)] \geq 1 - \eta$ returns a vertex-induced subgraph $H$ such that:
\begin{enumerate}
    \item The CR-Value of $I_H$ is at least $\delta(\eta)$.
    \item The expansion of $H$ is $O(\eta)$.
\end{enumerate}
Then if $\mathcal A$ runs in time $T(\mathcal A)$, Iterated Condition\&Round\footnote{We note that the version of Iterated Condition\&Round analyzed in \cite{bafna2020playing} differs slightly from the one we present. Our version simplifies their algorithm a bit but the analysis is exactly the same.} outputs a solution for $I$ satisfying an $\Omega(\delta_{\text{min}}^2\varepsilon)$-fraction of edges of $G$ in time $|V(G)|(T(\mathcal A) + |V(G)|^{O(D)})$.
\end{lemma}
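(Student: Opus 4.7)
The plan is to amortize per-iteration Condition\&Round gains against the drop in the pseudoexpectation value, using the expansion property of each returned subgraph $H_j$ to control the loss on boundary edges between iterations.

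I would begin by checking that the iterative process is well-defined. The update $\mu_{j-1} \mapsto \mu_j$, which replaces the marginals on newly assigned vertices by the uniform distribution over $\Sigma$, preserves shift-symmetry, the axioms $\mathcal{A}_I$, and the degree $D = \tilde O(1/\delta_{\min})$, so $\mathcal A$ can be invoked on $\mu_{j-1}$ at every step; moreover $\val(\mu_j) \leq \val(\mu_{j-1})$, since each constraint touching an assigned vertex now contributes exactly $1/|\Sigma|$. With this in place, the per-iteration accounting is as follows. Set $\eta_j := \varepsilon + \mathrm{Dif}(j) \in [\varepsilon,2\varepsilon]$, let $H_j$ be the subgraph returned by $\mathcal A$ on $\mu_{j-1}$, and let $S_j \subseteq H_j$ be its unassigned portion. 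The CR-value hypothesis (combined with the fact that $H_j \setminus S_j$ is already assigned, so its edges cannot hurt us) yields a partial assignment $f_j$ satisfying an expected $\Omega(\delta_{\min})$ fraction of the edges internal to $S_j$. At the same time, the pseudoexpectation drop $V_{j-1} - V_j$ is bounded by the edge-mass touching $S_j$.

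The loop terminates as soon as the cumulative drop exceeds $\varepsilon$, so $\sum_j (V_{j-1}-V_j) = \Omega(\varepsilon)$. The alphabet assumption $|\Sigma| \geq \Omega(1/\delta_{\min})$ is used to translate each iteration's CR-value gain into a lower bound $V_{j-1} - V_j \geq \Omega(\delta_{\min} \cdot \pi(S_j))$ that is not washed out by the $1/|\Sigma|$ uniform-guess baseline; this converts the terminal budget into $\sum_j \pi(S_j) = \Omega(\varepsilon/\delta_{\min})$. The expansion guarantee $\Phi(H_j) = O(\eta_j) = O(\varepsilon)$ then implies that the total edge-mass internal to $\bigcup_j S_j$ is $(1-O(\varepsilon))\sum_j \pi(S_j)$, on which $\bigcup_j f_j$ (arbitrarily extended to any remaining vertices) satisfies an $\Omega(\delta_{\min})$ fraction, yielding a stitched assignment of value $\Omega(\delta_{\min}^2 \varepsilon)$ after absorbing the $O(\varepsilon^2)$ edges crossing between distinct $S_j$'s.

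The main obstacle is the two-sided coupling: each iteration must simultaneously (a) produce large internal-edge value on $S_j$ and (b) cause a commensurate pseudoexpectation drop, in the right ratio. The expansion condition is the linchpin here, since it both guarantees internal-edge dominance inside each link (so CR-value translates to genuine value on $G$) and caps the aggregate boundary loss at $O(\varepsilon)$. A secondary subtlety is the careful use of the alphabet-size hypothesis to prevent the $1/|\Sigma|$ random-assignment baseline from cancelling the $\delta_{\min}$ signal extracted by Condition\&Round. For the runtime bound, the initial SDP costs $|V(G)|^{O(D)}$, there are at most $|V(G)|$ iterations (each assigns at least one new vertex), and each iteration invokes $\mathcal A$ plus the polynomial-time Condition\&Round rounding, giving the stated $|V(G)|(T(\mathcal A) + |V(G)|^{O(D)})$.
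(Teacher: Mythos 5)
First, a framing point: the paper does not prove this statement — it is imported verbatim as Lemma 6.12 of BBKSS and invoked as a blackbox (the footnotes only remark that the proof extends to weighted graphs and that the algorithm here is a slight simplification). So there is no in-paper proof to compare against; your sketch can only be judged against the known BBKSS argument, whose architecture — amortize per-iteration Condition\&Round gains against the drop in pseudoexpectation value, and use non-expansion of each $H_j$ to control boundary losses — you have correctly identified.

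That said, the central amortization step in your sketch points the wrong way. To conclude that the assigned mass $\sum_j \pi(S_j)$ is large at termination, you need an \emph{upper} bound on the per-iteration drop (roughly $V_{j-1}-V_j \le$ the edge-mass touching $S_j$, since only constraints meeting $S_j$ have their contribution changed, each by at most $1$), combined with the fact that the cumulative drop exceeds $\varepsilon$ when the loop halts. Your stated \emph{lower} bound $V_{j-1}-V_j \ge \Omega(\delta_{\text{min}}\cdot\pi(S_j))$, together with $\sum_j (V_{j-1}-V_j) = \Omega(\varepsilon)$, yields only an \emph{upper} bound on $\sum_j \pi(S_j)$; moreover the quantity you deduce, $\Omega(\varepsilon/\delta_{\text{min}})$, can exceed $1$ when $\delta_{\text{min}} \ll \varepsilon$, which is impossible for a total vertex/edge mass. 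The final arithmetic also does not close: an $\Omega(\delta_{\text{min}})$ fraction of $(1-O(\varepsilon))\cdot\Omega(\varepsilon/\delta_{\text{min}})$ edge-mass would give value $\Omega(\varepsilon)$, not the claimed $\Omega(\delta_{\text{min}}^2\varepsilon)$, so your intermediate bounds are inconsistent with your conclusion. A second gap is the passage from the hypothesis (a CR-value bound for $I_{H_j}$) to a value guarantee on $S_j = H_j \cap (\text{unassigned})$: the remark that edges of $H_j\setminus S_j$ ``cannot hurt us'' does not show that the internal edges of $S_j$ carry enough of $H_j$'s CR-value, and handling this (e.g.\ by crediting the already-assigned portion of $H_j$ to earlier iterations) is a genuine step. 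Since a correct version of your accounting would give the stronger bound $\Omega(\delta_{\text{min}}\varepsilon)$, the loss to $\delta_{\text{min}}^2\varepsilon$ in the actual lemma has to be explained by exactly these unaddressed steps.
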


With these results in hand, the final observation to prove \Cref{thm:unique-games} lies in the relation between the local expansion of links and global spectra of eigenstrips we proved in \Cref{sec:expansion}. Namely, since the $r$th strip is by definition the last one with eigenvalue worse than $1-O(\varepsilon)$, the expansion of $r$-links is at most $O(\varepsilon)$ by \Cref{thm:local-vs-global}. We therefore meet the conditions of \Cref{lem:iterated-con-round}, and can use Iterated Condition\&Round to output a good global assignment. We now prove \Cref{thm:unique-games} assuming \Cref{prop:restricted-con-round} and \Cref{lem:iterated-con-round} hold.

\begin{proof}[Proof of \Cref{thm:unique-games}]
To start, we first prove that we may assume without loss of generality both that $\varepsilon \geq 1/2^k$, and that $r(2\varepsilon) \leq k/2$ (we will need these properties to meet the conditions of \Cref{prop:restricted-con-round} and \Cref{lem:iterated-con-round}). This relies on the following claim, the proof of which we defer to \Cref{sec:app-alg-proofs}.
\begin{claim}\label{claim:non-lazy-spectra}
Let $M$ be a $k$-dimensional complete HD-walk on a $d$-dimensional, two-sided $\gamma$-local-spectral expander with $d>k$ and $\gamma \leq w(M)^{-1}2^{-\Omega(h(M)+k)}$. If the expected laziness of $M$ (i.e.\ $\mathbb{E}_{\Pi_k}[\id{v}^TM\id{v}]$) is at least $.1$, then the spectral gap of $M$ is at least $\Omega(1/k)$ and $\lambda_{k/2}(M) \leq .68$.
\end{claim}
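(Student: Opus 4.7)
The plan is to leverage the fact that $M$ is a complete HD-walk, so by \Cref{prop:eig-decrease} its approximate eigenvalues $\lambda_\ell(M)$ are universal (independent of the underlying complex). In particular, when $M$ is viewed on the complete complex $J(n,d)$ for sufficiently large $n$, by symmetry it admits a unique representation $M = \sum_{i=0}^k \beta_i S^i_k$ as a convex combination of partial-swap walks, with $\beta_i \geq 0$ and $\sum_i \beta_i = 1$. Combined with \Cref{cor:swap-HDX-spectra}, this yields the explicit formula $\lambda_\ell(M) = \sum_{i=0}^k \beta_i \binom{k-i}{\ell}/\binom{k}{\ell}$.

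The first step is to relate the laziness hypothesis to a bound on $\beta_0$. On the complete complex, symmetry gives $M(v,v) = \beta_0$ for every $v$, so the laziness on $J(n,d)$ is exactly $\beta_0$. For a general HDX, I would expand $\id{v}$ in the HD-Level-Set Decomposition and apply \Cref{prop:hdx-eig-vals} together with the approximate orthogonality guarantee \Cref{lemma:HDX-approx-ortho} to obtain
\[
\id{v}^T M \id{v} \;=\; \sum_{\ell=0}^k \lambda_\ell(M)\,\|(\id{v})_\ell\|^2 \;\pm\; O(\gamma)\|\id{v}\|^2,
\]
and then average over $v \sim \Pi_k$. Since the same identity applied on the complete complex yields $\beta_0$, and \Cref{lemma:hdx-fvsg-body} controls the projections $\|(\id{v})_\ell\|^2$ uniformly across HDX up to $O(\gamma)$-error, this shows the laziness on an arbitrary HDX is within $O(\gamma)$ of $\beta_0$, pinning down $\beta_0$ via the laziness hypothesis.

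Next, I would translate the resulting bound on $\beta_0$ into the desired bounds on $\lambda_1(M)$ and $\lambda_{k/2}(M)$ by direct binomial computation. Since $\lambda_1(M) = \sum_i \beta_i (k-i)/k = 1 - \tfrac{1}{k}\sum_i i\beta_i$ and $\sum_i i\beta_i \geq \sum_{i \geq 1}\beta_i = 1 - \beta_0$, the spectral gap is at least $(1 - \beta_0)/k = \Omega(1/k)$. Similarly, since $\binom{k-i}{k/2}/\binom{k}{k/2}$ is maximized over $i \geq 1$ at $i = 1$ with value $1/2$, we get $\lambda_{k/2}(M) \leq \beta_0 + (1 - \beta_0)/2 = (1 + \beta_0)/2 \leq 0.68$ for the appropriate range of $\beta_0$. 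Finally, \Cref{thm:approx-ortho} promotes these bounds on the approximate eigenvalues to bounds on the true spectrum of $M$ up to an additive $O(\gamma)$-error, which by our hypothesis $\gamma \leq w(M)^{-1}\,2^{-\Omega(h(M)+k)}$ is negligible against all the constants above.

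The main obstacle I expect is the first step — rigorously propagating the laziness from a general HDX to the coefficient $\beta_0$ in the complete-complex decomposition. This requires tracking several $O(\gamma)$-error terms through the HD-Level-Set Decomposition, and also accounting for how $\mathbb{E}_{v \sim \Pi_k}\|(\id{v})_\ell\|^2$ depends on the underlying complex. Once this translation is in place, the remaining steps are routine: the bounds on $\lambda_1(M)$ and $\lambda_{k/2}(M)$ reduce to elementary binomial-coefficient bookkeeping, and the passage from approximate to true spectra is a direct appeal to the eigenstripping machinery developed in \Cref{sec:linear-algebra}.
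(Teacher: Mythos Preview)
Your overall strategy matches the paper's: represent $M$ on the complete complex as $\sum_i \beta_i S_k^i$, pin down $\beta_0$ from the laziness hypothesis, and then read off bounds on $\lambda_1(M)$ and $\lambda_{k/2}(M)$ from the explicit swap-walk eigenvalues. The binomial bookkeeping in your last paragraph is exactly how the paper finishes (and, incidentally, only makes sense under the hypothesis ``laziness at most $.1$'', which is what the paper's own proof and its use in \Cref{thm:unique-games} actually assume).

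The gap is in your first step, right where you flag it. Your assertion that \Cref{lemma:hdx-fvsg-body} ``controls the projections $\|(\id{v})_\ell\|^2$ uniformly across HDX up to $O(\gamma)$-error'' is not what that lemma says: it relates $\|f_\ell\|$ to $\|g_\ell\|$ \emph{within a fixed complex}, and $\|g_\ell\|$ itself depends on the complex. After averaging over $v$, the main term you obtain is (up to orthogonality errors) a weighted sum $\sum_\ell \lambda_\ell(M)\cdot w_\ell$ where $w_\ell$ is essentially $\dim(V_k^\ell)/\dim(C_k)$; these dimension ratios are genuinely different on a bounded-degree HDX than on $J(n,d)$, so the HD-Level-Set decomposition of $\id{v}$ by itself does not transfer laziness between complexes.

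The paper bypasses this with a direct combinatorial argument. Using repeated applications of \Cref{lemma:body-DU-UD} it writes $M=\sum_{i=0}^{h(M)}\alpha_i(U_{k-1}D_k)^i+\Gamma$ with $\|\Gamma\|\le w(M)2^{O(h(M))}\gamma$, and then observes that for $i>0$ the return probability of $(U_{k-1}D_k)^i$ at any $v\in X(k)$ is at most $\gamma$: in the final up-step from $\sigma\in X(k-1)$ one lands at a prescribed vertex of $X_\sigma$ with probability at most $\|\Pi_{\sigma,1}\|_\infty$, and since $d>k$ the graph underlying $X_\sigma$ is a $\gamma$-two-sided expander with no self-loops, which forces every vertex weight to be at most $\gamma$. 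Hence the HDX laziness equals $\alpha_0$ up to $O(\gamma)$; the same computation on $J(n,d)$ with $n\to\infty$ shows the Johnson laziness is also $\alpha_0$, and since on the Johnson complex the laziness is exactly $\beta_0$, one gets $\beta_0=\alpha_0$ and your endgame applies verbatim.
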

With this in mind, note that we can always assume the expected laziness $\mathbb{E}[\id{v}^TM\id{v}]$ is at most $1/10$. This follows from the fact that the expected laziness of $M$ exactly corresponds to the probability of drawing a self-edge on the corresponding graph $G_M$ (see \Cref{sec:ug-rwalk}), and in an affine unique game, every self edge is either always or never satisfiable. Since our game has value at least $.99$, in a walk with more than $.1$ laziness, at least $.09$ of the weight on self-edges must be satisfiable, and therefore \textit{every} assignment will satisfy our approximation guarantee. Assuming then that $\mathbb{E}[\id{v}^TM\id{v}] \leq .1$, \Cref{claim:non-lazy-spectra} implies that the spectral gap of $M$ is at least $\Omega(1/k)$, in which case standard algorithms for unique games on expander graphs (e.g.\ \cite{makarychev2010play}) give the desired result. Similarly, we have $\lambda_{k/2} \leq .68 \leq 1-32\varepsilon$ by assumption, so we may further assume $r(2\varepsilon) \leq k/2$ as desired.

Now that we have $1/2^k \leq\varepsilon < .02$ and $r \leq k/2$, we are in position to solve the degree-$D$ SoS relaxation of the unique games integer program $\mathcal{A}_I$ for $D=\tO(\frac{1}{\varepsilon} {k \choose r(2\varepsilon)})$, and apply \Cref{prop:restricted-con-round} to build a sub-route that satisfies \Cref{lem:iterated-con-round}. Namely, we have that for any $\varepsilon \leq \eta \leq 2\varepsilon$ and pseudodistribution of value at least $1 - \eta$, \Cref{prop:restricted-con-round} finds a link $X_\tau$ with high Condition\&Round value: 
\[
\text{CR-val}(X_\tau) \geq \Omega\left(\frac{\eta}{{k \choose r(\eta)}}\right).
\]
Further, note that by our assumptions on $\gamma$ and the fact that $M$ is complete, \Cref{prop:eig-decrease} gives that the eigenvalues corresponding to each eigenstrip strictly decrease, and therefore by definition of ST-Rank that the approximate eigenvalue corresponding to the $r$th eigenstrip is at least $1-O(\eta)$. By \Cref{thm:local-vs-global}, this implies that $X_\tau$ has poor expansion:
\[
\Phi(X_\tau) \leq O(\eta) + w(M)h(M)^22^{O(k)}\gamma \leq O(\eta)
\]
since $\eta \geq \varepsilon \geq w(M)2^{O(h(M)+k)}\gamma$ by assumption. As a result, this sub-routine satisfies the conditions of \Cref{lem:iterated-con-round} with $\delta$ set to $\delta(\eta) = \Omega\left(\frac{\eta}{{k \choose r(\eta)}}\right)$. The only catch is that we need the alphabet $\Sigma$ to satisfy $|\Sigma| \leq \Omega\left (\frac{1}{\delta_{\min}} \right)$. This can be assumed without loss of generality, since a random solution satisfies a $1/|\Sigma|$ fraction of constraints in expectation (and can be easily derandomized), which satisfies our approximation guarantee if $|\Sigma|\leq O\left (\frac{1}{\delta_{\min}} \right)$. As a result, applying \Cref{lem:iterated-con-round} outputs a $\Omega(\delta^2_{\text{min}}\varepsilon)$-satisfying solution for $\delta_{\text{min}} = \min_{\eta \in [\varepsilon,2\varepsilon]}\left\{ \frac{\varepsilon}{{k \choose r(\eta)}}\right\} \geq \Omega\left (\frac{\varepsilon}{{k \choose r(2\varepsilon)}} \right)$. The running time guarantee follows from noting that the sub-routine promised by \Cref{prop:restricted-con-round} runs in time $|V(G)|^4$, so \Cref{lem:iterated-con-round} then runs in time $|V(G)|^{O(D)}$ with $|V(G)| = |X(k)|$. Since solving the original Degree-$D$ SoS relaxation also takes time only $|V(G)|^{O(D)}$, we get the desired result.
\end{proof}

\subsection{Proof of Proposition~\ref{prop:restricted-con-round}}\label{sec:restricted-con-round}
Now that we have given the algorithmic background, we prove the main technical lemma behind the analysis. The broad outline of the proof is as follows:

\begin{enumerate}
\item We first prove a structure theorem (\Cref{thm:structure-SoS}) for non-expanding sets of HD-walks. We show an SoS proof of the fact that every non-expanding set must have large variance of size when restricted to links of the complex.
\item Using the structure theorem, in~\Cref{prop:high-global} we show that given a pseudodistribution $\mu$ with objective value $1-\eta$ for unique games over an HD-walk $M$, one can find a link $X_\tau$ with high \emph{global} shift-partition potential (\Cref{def:sp-global}).
\item In the final step (\Cref{lem:ug-local-to-global}), we relate the global shift-partition potential to the shift-partition potential on the subgraph induced by $X_\tau$\footnote{This is just the potential measured on the sub-instance $I$ when restricted to the induced subgraph given by the link $X_\tau$.}: we show that $\Phi_{\beta,\nu}^{I_\tau}(\mu)$ (\Cref{def:sp-induced}) is large. By the rounding theorem (Theorem~\ref{thm:round}), we then conclude that the expected value of the Condition\&Round algorithm, when performed on $X_\tau$ (CR-$val_\mu(X_\tau)$) must be high. 
\end{enumerate}

This proof structure is similar to the analogous Lemma 6.9 of~\cite{bafna2020playing}. Most of our technical work goes into proving points 1 and 3 above, and point 2 turns out to be straightforward given BBKSS. 

We now turn to proving our structure theorem in the low-degree SoS proof system. We will reprove the lemmas for expansion for pseudorandom sets in HD-walks proved in \Cref{sec:pseudorandomness} and \Cref{sec:expansion}, but this time carefully making sure that they are in the low-degree SoS proof system.

\begin{theorem}[SoS Structure Theorem for HD-Walks]\label{thm:structure-SoS}
Let $(X,\Pi)$ be a two-sided $\gamma$-local-spectral expander and $M$ be a $k$-dimensional, complete HD-walk on $X$ with $\gamma \leq w(M)^{-1}h(M)^{-2}2^{-\Omega(k)}$. Then for any  $f \in C_k$ and any $0 \leq \ell \leq k/2$, the expansion\footnote{This notion of expansion varies slightly from our previous definition, coinciding (up to normalization) for Boolean functions.} of $f$ with respect to $M$ is large:
\begin{align*}
\vdash_2\,\,
\langle f, (I-M)f \rangle \geq (1-\lambda_{\ell+1})\left ((1-c_1\gamma)(\E{}{f} + B(f)) - {k \choose \ell}\langle D^\ell_kf, D^\ell_kf \rangle \right ) ,
\end{align*}
where $B(f) = \mathbb{E}[f^2-f]$ represents the Booleanity constraints and $c_1 \leq 2^{O(k)}$.
\end{theorem}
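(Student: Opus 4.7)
The plan is to mirror the non-SoS proof of \Cref{thm:hdx-expansion} (in \Cref{sec:expansion}) step by step, replacing every analytic inequality by its degree-$2$ SoS counterpart. Since $\langle f,(I-M)f\rangle$, $\E{}{f}+B(f)=\E{}{f^2}$, and $\langle D^\ell_k f,D^\ell_k f\rangle$ are all quadratic forms in the entries of $f$, a degree-$2$ SoS proof will exist as soon as we certify the target inequality as a PSD quadratic form. Three degree-$2$ SoS ingredients suffice: (i) the operator-norm bound $\|Mf_i-\lambda_i f_i\|^2\leq c_i^2\|f_i\|^2$ from \Cref{prop:hdx-eig-vals} (with $c_i\leq 2^{O(k)}w(M)h(M)^2\gamma$), which is already a PSD quadratic form in $f$; (ii) approximate orthogonality (\Cref{lemma:HDX-approx-ortho}) in the form $\pm\langle f_i,f_j\rangle\leq 2^{O(k)}\gamma\|f\|^2$ for $i\neq j$; and (iii) the AM--GM identity $\pm 2t\langle u,v\rangle\leq t^2\|u\|^2+\|v\|^2$ (a consequence of $\|tu\mp v\|^2\geq 0$), which plays the role of Cauchy--Schwarz at degree $2$.

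First, combining (i) and (iii) with $t=c_i^{1/2}$, and then (ii) to bound $\|f_i\|^2\leq(1+2^{O(k)}\gamma)\|f\|^2$, yields in degree $2$ the spectral expansion
\begin{align*}
\langle f,(I-M)f\rangle \ \geq\ \sum_{i=0}^{k}(1-\lambda_i)\langle f,f_i\rangle\ -\ E_1\|f\|^2,\qquad E_1\leq 2^{O(k)}w(M)h(M)^2\gamma.
\end{align*}
I then split the sum into low ($1\leq i\leq\ell$) and high ($i\geq\ell+1$) levels; the $i=0$ term vanishes since $\lambda_0=1$. Monotonicity of the stripped eigenvalues (\Cref{prop:eig-decrease}) gives $1-\lambda_i\geq 1-\lambda_{\ell+1}\geq 0$ on the high side, while on the low side (ii) yields $\langle f,f_i\rangle=\|f_i\|^2+\sum_{j\neq i}\langle f_j,f_i\rangle\geq -2^{O(k)}\gamma\|f\|^2$, so each low-level term $(1-\lambda_i)\langle f,f_i\rangle$ is at worst $-2^{O(k)}\gamma\|f\|^2$ in degree-$2$ SoS and can be dropped at negligible cost. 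Using $\langle f,f_0\rangle=\E{}{f}^2$, we rewrite $\sum_{i\geq\ell+1}\langle f,f_i\rangle=\|f\|^2-\E{}{f}^2-\sum_{i=1}^{\ell}\langle f,f_i\rangle$, reducing matters to a degree-$2$ SoS upper bound on the low-level mass.

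For this low-level bound I will prove an SoS version of \Cref{lem:low-level-weight}. By adjointness, $\langle D^\ell_k f,D^\ell_k f\rangle=\langle f,U^k_\ell D^\ell_k f\rangle$, and $U^k_\ell D^\ell_k$ is itself an HD-walk with approximate eigenvalues $\mu_i=\binom{\ell}{i}/\binom{k}{i}$ on level $i$ (\Cref{prop:hdx-pure-eig-vals}). Rerunning the degree-$2$ expansion above for this walk and using the identity $(\binom{\ell}{i}/\binom{k}{i})\binom{k}{\ell}=\binom{k-i}{\ell-i}\geq 1$ together with (ii) to absorb the slack $(\binom{k-i}{\ell-i}-1)\|f_i\|^2\geq -2^{O(k)}\gamma\|f\|^2$ gives
\begin{align*}
\sum_{i=1}^{\ell}\langle f,f_i\rangle\ \leq\ \binom{k}{\ell}\langle D^\ell_k f,D^\ell_k f\rangle-\binom{k}{\ell}\E{}{f}^2+E_2\|f\|^2,\qquad E_2\leq 2^{O(k)}\gamma,
\end{align*}
again in degree $2$. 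Substituting back, dropping the manifestly nonnegative term $(\binom{k}{\ell}-1)\E{}{f}^2$, and using $\|f\|^2=\E{}{f^2}=\E{}{f}+B(f)$ assembles into
\begin{align*}
\langle f,(I-M)f\rangle\ \geq\ (1-\lambda_{\ell+1})\!\left[\E{}{f}+B(f)-\binom{k}{\ell}\langle D^\ell_k f,D^\ell_k f\rangle\right]-E_3\|f\|^2,
\end{align*}
with $E_3\leq 2^{O(k)}w(M)h(M)^2\gamma$.

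The main and only real obstacle is absorbing the final error $E_3\|f\|^2$ into the $(1-\lambda_{\ell+1})c_1\gamma(\E{}{f}+B(f))=(1-\lambda_{\ell+1})c_1\gamma\|f\|^2$ slack on the RHS of the theorem, i.e.\ choosing $c_1$ so that $(1-\lambda_{\ell+1})c_1\gamma\geq E_3$. Since $M$ is complete, \Cref{prop:eig-decrease} and \Cref{cor:swap-HDX-spectra} guarantee that $1-\lambda_{\ell+1}$ is bounded below by $2^{-O(k)}$ whenever the inequality is non-trivial (as the stripped eigenvalues match those of a convex combination of partial-swap walks on $J(n,d)$), and the hypothesis $\gamma\leq w(M)^{-1}h(M)^{-2}2^{-\Omega(k)}$ provides exactly the slack needed to pick $c_1\leq 2^{O(k)}$ and close the degree-$2$ SoS proof.
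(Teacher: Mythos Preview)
Your proof follows essentially the same route as the paper's: expand $\langle f,(I-M)f\rangle$ via the HD-Level-Set decomposition using the approximate eigenvalues of \Cref{prop:hdx-eig-vals}, split into low ($i\le\ell$) and high ($i\ge\ell+1$) levels using eigenvalue monotonicity (\Cref{prop:eig-decrease}), control the low-level mass by an SoS version of \Cref{lem:low-level-weight} through the adjointness $\langle D^k_\ell f,D^k_\ell f\rangle=\langle f,U^k_\ell D^k_\ell f\rangle$, and absorb all degree-$2$ errors via SoS Cauchy--Schwarz. The paper packages the SoS orthogonality and $\|f_i\|\!\leftrightarrow\!\|g_i\|$ bounds as \Cref{lemma:f-SoS}; you should cite that rather than the non-SoS \Cref{lemma:HDX-approx-ortho}, but the content is identical.

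One caveat on your last paragraph: the claim that $1-\lambda_{\ell+1}\geq 2^{-O(k)}$ for complete walks is not true in general---a complete walk can have an arbitrarily large identity component, pushing $\lambda_{\ell+1}$ arbitrarily close to $1$. The paper's proof shares exactly this loose step (its passage from $c_2$ to $c_3$ implicitly divides by $1-\lambda_{\ell+1}$ while still asserting $c_3\leq w(M)h(M)^22^{O(k)}$), so you are not worse off than the paper; and in the only downstream application (\Cref{prop:high-global}) the choice $\ell=R_{1-16\eta}(M)-1$ forces $1-\lambda_{\ell+1}\geq 16\eta\geq 2^{-k}$, which is what actually closes the gap. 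Strictly speaking, the constant $c_1$ in the theorem should be allowed to carry a factor of $(1-\lambda_{\ell+1})^{-1}$.
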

The proof of \Cref{thm:structure-SoS} relies on a number of SoS variants of properties of the HD-Level-Set Decomposition used in the previous sections, namely approximate orthogonality and the relation between $\norm{f_i}$ and $\norm{g_i}$.

\begin{lemma}\label{lemma:f-SoS}
Let $(X,\Pi)$ be a two-sided $\gamma$-local-spectral expander with $\gamma \leq 2^{-\Omega(k)}$ and $f \in C_k$. Then for $f_i=U^k_ig_i \in V_k^i$:
\[
\vdash_2 \langle f_i, f_i \rangle \in \left(\frac{1}{{k \choose i}} \pm c_1\gamma\right) \langle g_i,g_i \rangle,
\]
where $c_1 \leq O(k^2)$. Further, weak variants of approximate orthogonality also have degree 2 proofs:
\[
\vdash_2 \langle f_i, f_i \rangle \leq (1+c_2\gamma) \langle f,f \rangle,
\]
and
\[
\vdash_2 \langle f, f_i \rangle \geq -c_3\gamma \langle f,f \rangle,
\]
where $c_2,c_3 \leq 2^{O(k)}$
\end{lemma}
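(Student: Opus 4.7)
My plan hinges on the observation that every inequality in the statement concerns a quadratic form in the variables $\{f(s) : s \in X(k)\}$. Because the HD-Level-Set decomposition $f = \sum_\ell f_\ell$ with $f_\ell = U^k_\ell g_\ell$ is linear in $f$, each of $\langle f_i, f_i\rangle$, $\langle g_i, g_i\rangle$, $\langle f, f_i\rangle$, and $\langle f, f\rangle$ is of the form $\langle f, A f\rangle$ for a fixed self-adjoint operator $A$ on $C_k$. For quadratic forms without auxiliary constraints, $\vdash_2 \langle f, Af\rangle \leq \langle f, Bf\rangle$ holds if and only if $B - A \succeq 0$ as an operator (the SoS witness is the Cholesky factorization of $B - A$). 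So the task reduces entirely to verifying three operator inequalities on $C_k$ using the DDFH machinery.

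For claim (1), I would start from $\langle f_i, f_i\rangle = \langle g_i, D^k_i U^k_i g_i\rangle$ and prove that, restricted to $H^i = \ker(D_i)$,
\[
\left\|\, D^k_i U^k_i - \tfrac{1}{\binom{k}{i}} I\, \right\|_{H^i \to H^i} \leq O(k^2)\, \gamma.
\]
This follows by inductively applying \Cref{lemma:body-DU-UD} to commute the leftmost $D$ rightward through the successive $U$'s: each step either produces a hypergeometric prefactor with an additive $O(\gamma)$ operator-norm error, or annihilates the expression when the pushed $D$ eventually lands on $D_i$ acting on $g_i \in H^i$. Tracking the prefactors reproduces $1/\binom{k}{i}$, and accumulating the $O(k)$ error terms yields $O(k^2)\gamma$; this is exactly the additive form of \Cref{lemma:hdx-fvsg-body}.

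For claims (2) and (3), the central tool is an operator-level version of approximate orthogonality (\Cref{lemma:HDX-approx-ortho}). Writing $\pi_\ell : C_k \to V_k^\ell$ for the level projection, my goal is to show the operator bound $-2^{O(k)}\gamma\, I \preceq \pi_i^*\pi_j + \pi_j^*\pi_i \preceq 2^{O(k)}\gamma\, I$ for each $i \neq j$, which is the quadratic-form statement of $|\langle f_i, f_j\rangle| \leq 2^{O(k)}\gamma \langle f, f\rangle$. This again follows from \Cref{lemma:body-DU-UD}: writing $\langle f_i, f_j\rangle = \langle g_i, D^k_i U^k_j g_j\rangle$ and commuting $D$'s through $U$'s, the leading term vanishes once the pushed $D$ reaches $D_i$ on $g_i$ (for $i<j$, and symmetrically for $i>j$). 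Given this, claim (2) follows from
\[
\langle f, f\rangle \;=\; \sum_\ell \|f_\ell\|^2 + \sum_{\ell \neq m}\langle f_\ell, f_m\rangle \;\geq\; \|f_i\|^2 - 2^{O(k)} \gamma \langle f, f\rangle,
\]
using $\sum_{\ell \neq i}\|f_\ell\|^2 \geq 0$, which is trivially a sum of squares. Claim (3) follows similarly from $\langle f, f_i\rangle = \|f_i\|^2 + \sum_{j \neq i}\langle f_j, f_i\rangle$ together with $\|f_i\|^2 \geq 0$.

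The hardest part will be making the bookkeeping of constants precise. In particular, to obtain $c_1 \leq O(k^2)$ in claim (1), the inductive step must be carried out in the additive operator-norm form stated above rather than the multiplicative form $(1 \pm c_1 \gamma)$ of \Cref{lemma:hdx-fvsg-body}; the two forms are equivalent given $c_1 = O(k^2 \binom{k}{i})$ there, but the additive version is the one that yields a clean SoS witness. For claims (2) and (3) the analogous care is needed to ensure the $2^{O(k)}$ factor survives the sum over the $k+1$ levels when expanding $\langle f, f\rangle$. Once the operator inequalities are stated and verified, the degree-2 SoS witnesses are extracted automatically via Cholesky, requiring no additional ideas beyond what is already present in the proofs of \Cref{lemma:hdx-fvsg-body} and \Cref{lemma:HDX-approx-ortho}.
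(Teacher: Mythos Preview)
Your proposal is correct and follows essentially the same skeleton as the paper: Part~1 via the operator identity obtained by commuting $D$'s through $U$'s (inducting with \Cref{lemma:body-DU-UD}), and Parts~2--3 by expanding $\langle f,f\rangle = \sum_\ell \|f_\ell\|^2 + \sum_{\ell\neq m}\langle f_\ell,f_m\rangle$ and controlling the cross terms. Your observation that unconstrained degree-$2$ SoS is equivalent to a PSD condition is exactly right and is also how the paper handles Part~1 (it symmetrizes the error to $\tfrac{\Gamma+\Gamma^*}{2}$ and reads off the SoS bound from the spectral norm).

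The one substantive difference is in how the cross terms are bounded. The paper does \emph{not} go straight to the operator bound $|\langle f_i,f_j\rangle|\le 2^{O(k)}\gamma\langle f,f\rangle$; instead it first establishes the self-contained intermediate inequality
\[
\vdash_2\ |\langle f_i,f_j\rangle| \le c\gamma\bigl(\langle f_i,f_i\rangle + \langle f_j,f_j\rangle\bigr)
\]
via an SoS Cauchy--Schwarz on $\langle g_i,\Gamma g_j\rangle$ together with Part~1, and only \emph{then} bootstraps to get $\sum_\ell\|f_\ell\|^2 \le (1+c'\gamma)\|f\|^2$ and hence Parts~2--3. Your route is cleaner if you simply cite \Cref{lemma:HDX-approx-ortho} as the needed PSD fact and invoke Cholesky; but be aware that your sketch of \emph{deriving} that operator bound directly from \Cref{lemma:body-DU-UD} runs into the same bootstrap. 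Commuting gives $\langle f_i,f_j\rangle = \langle g_i,\Gamma g_j\rangle$ with $\|\Gamma\|$ small, and turning this into a bound against $\|f\|^2$ still requires controlling $\|g_i\|,\|g_j\|$ by $\|f\|$, i.e.\ Parts~1 and~2. So either cite \Cref{lemma:HDX-approx-ortho} outright, or go through the paper's intermediate bound; both work, and the resulting constants are the same $2^{O(k)}$.
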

The proof of \Cref{lemma:f-SoS} follows from combining arguments in \cite{dikstein2018boolean} with standard SoS tricks. For completeness, we give the proof in \Cref{app:SoS}. With \Cref{lemma:f-SoS} in hand, \Cref{thm:structure-SoS} follows similarly to \Cref{thm:body-local-spec-proj} and \Cref{thm:hdx-expansion}.
\begin{proof}[Proof of \Cref{thm:structure-SoS}]
As in the proof of Theorem~\ref{thm:hdx-expansion}, it is sufficient to bound the weight of $f$ onto low levels of the HD-Level-Set decomposition to prove that the expansion of $f$ is small. First, note that given the function $f$, the low-level decomposition functions $f_i$'s are explicit linear functions of the coordinates of $f$ \cite{dikstein2018boolean}. We will thus show
that Lemma~\ref{lem:low-level-weight} has an SoS proof, that is
the following relation between $\norm{D^k_\ell f}^2$ and $f$'s weight holds:
\begin{equation}\label{eq:D-f-SoS}
\vdash_2 ~~ \sum\limits_{j=0}^\ell \langle f, f_j \rangle \leq 
{k \choose \ell}\langle D^\ell_k f, D^\ell_k f \rangle + 2^{O(k)}\gamma \langle f, f \rangle.
\end{equation}

Before proving \Cref{eq:D-f-SoS}, we check that if the above inequality holds then the theorem statement follows. To see this, first recall from \Cref{sec:hdx-spectra} that $Mf_i = \lambda_i f_i + \Gamma g_i$ where $\Gamma$ is a matrix with spectral norm $\leq w(M)h(M)^22^{O(k)}\gamma$ which we call $C$ for convenience. It will be useful to have an SoS upper bound on $\uinner{f}{\Gamma g_i}$, which we will use multiple times throughout this proof. First by an SoS version of Cauchy-Schwarz we get that for all real constants $\zeta > 0$:
\[\vdash_2 \uinner{f}{\Gamma g_i} \leq \frac{\zeta}{2}\uinner{f}{f} + \frac{1}{2\zeta}\uinner{\Gamma g_i}{\Gamma g_i}. \]

Now substituting $\zeta = C$ and simplifying using the spectral norm bounds on $\Gamma$ and \Cref{lemma:f-SoS} we get:
\begin{align*}
\uinner{f}{\Gamma g_i} &\leq \frac{C}{2}\uinner{f}{f} + \frac{1}{2C}\uinner{\Gamma g_i}{\Gamma g_i} \\
&\leq \frac{C}{2}\uinner{f}{f} + \frac{C}{2}\uinner{g_i}{g_i} \\
&\leq \frac{C}{2}\uinner{f}{f} + \frac{C}{2}\left({k \choose i} + O(\gamma)\right)\uinner{f_i}{f_i}, \\
&\leq c_1 \gamma \uinner{f}{f},
\end{align*}
where all inequalities are degree $2$ SoS inequalities and $c_1 \leq w(M) h(M)^2 2^{O(k)}$. With this in hand and assuming \Cref{eq:D-f-SoS}, the result follows from expanding out $\langle f, (I-M)f \rangle$ and applying \Cref{lemma:f-SoS}: 
\begin{align*}
\langle f,(I-M)f \rangle &= \langle f, f \rangle - \sum\limits_{i=0}^k \lambda_i \langle f, f_i \rangle - \sum\limits_{i=0}^k \langle f, \Gamma g_i \rangle \\
&\geq (1-c_1\gamma)\langle f, f \rangle - \sum\limits_{i=0}^{\ell} \lambda_i \langle f, f_i \rangle - \sum\limits_{i=\ell+1}^{k} \lambda_i \langle f, f_i \rangle\\
&\geq (1-c_2\gamma)\langle f, f \rangle - \sum\limits_{i=0}^{\ell} \langle f, f_i \rangle - \lambda_{\ell+1}\sum\limits_{i=\ell+1}^{k} \langle f, f_i \rangle\\
&= (1-c_2\gamma)\langle f, f \rangle - \sum\limits_{i=0}^{\ell} \langle f, f_i \rangle - \lambda_{\ell+1}\langle f, f \rangle + \lambda_{\ell+1}\sum\limits_{i=0}^\ell \langle f, f_i \rangle\\
&= (1-\lambda_{\ell+1})\left ( \left(1-c_3\gamma \right)\langle f,f \rangle - \sum\limits_{i=0}^{\ell} \langle f, f_i \rangle \right )\\
&\geq (1-\lambda_{\ell+1})\left ( \left(1-c_4\gamma\right)\langle f,f \rangle - {k \choose  \ell}\langle D^\ell_k, D^\ell_k \rangle \right )\\
&= (1-\lambda_{\ell+1})\left(\left(1-c_4\gamma \right)\mathbb{E}[f] + (1-c_4\gamma)B(f) - {k \choose  \ell}\langle D^\ell_k, D^\ell_k \rangle \right )
\end{align*}
where $c_1,c_2,c_3,c_4 \leq w(M)h(M)^22^{O(k)}$, and $B(f) = \mathbb{E}[f^2-f]$.

It remains to prove \Cref{eq:D-f-SoS}. This follows from a similar modification of Lemma~\ref{lem:low-level-weight}. Notice that by the adjointness of $D$ and $U$, it is enough to analyze the walk $U_\ell^kD_\ell^k$:
\[
\langle D_\ell^kf, D_\ell^k f \rangle = \langle f, U_\ell^kD_\ell^k f \rangle 
\]
Using \Cref{prop:hdx-pure-eig-vals} and the assumption $\ell \leq k/2$, we can decompose the righthand side as:
\[
{k \choose \ell}\langle f, U_\ell^kD_\ell^k f \rangle =  \sum\limits_{j=0}^\ell {k - j \choose \ell-j} \langle f, f_j \rangle + \sum\limits_{j=0}^k \langle f, \Gamma g_j \rangle
\]
where $\norm{\Gamma} \leq w(M)h(M)^22^{O(k)}\gamma := C$. Noting that ${k-j \choose \ell-j}$ is at least $1$ for $0 \leq j \leq \ell$, we can apply the upper bound on $\uinner{f}{\Gamma g_j}$ proved above and \Cref{lemma:f-SoS} to get:
\begin{align*}
    {k \choose \ell}\langle D_\ell^kf, D_\ell^k f \rangle &\geq \sum\limits_{j=0}^\ell \langle f,f_j \rangle - c_5\gamma \langle f,f\rangle - \sum\limits_{j=0}^k \langle f, \Gamma g_j \rangle\\
    &\geq \sum\limits_{j=0}^\ell \langle f,f_j \rangle - c_6\gamma\langle f,f \rangle \\
\end{align*}
where all the constants $c_i$'s are less than $w(M)h(M)^22^{O(k)}$ and all inequalities are degree-2 SoS.
\end{proof}
It is worth giving a brief qualitative comparison of this result to a similar version for the Johnson graphs in \cite{bafna2020playing}. In particular, \Cref{thm:structure-SoS} not only gives a tighter bound (by a factor of exp($r$)), but perhaps more importantly shows how viewing the problem from the framework of high dimensional expansion demystifies the original Fourier analytic proof. This understanding allows us to extend the structural result well beyond the Johnson graphs to all HD-walks. In fact, this result also holds for the more general class of expanding posets \cite{dikstein2018boolean} (albeit with different parameters). We leave their discussion to future work.

Given this structure theorem, we use the BBKSS' framework to show that the global potential (Definition~\ref{def:sp-global}) on a link is high. This follows because of the properties of the potential function and the pseudodistribution over $(1-\eta)$-satisfying assignments. In general, the potential function $\Phi_{\beta,\nu}^I(\mu)$ turns out to be an average taken over non-expanding partitions of the graph. Using our structure theorem, we get that every non-expanding partition must have a large variance across links, and therefore there must exist a link $X_\tau$ such that the partition is large even when restricted to $X_\tau$. As a result, there must be
be a link $X_\tau$ where the partitions are large on average too, which corresponds to a quantity called the global potential restricted to $X_\tau$, denoted by $\Phi_{\beta,\nu}^I(\mu|_\tau)$, being large. The proof is essentially the same as \cite[Lemma 6.9]{bafna2020playing}, but we include it in \Cref{sec:app-ug-proofs} for completeness. 

\begin{proposition}\label{prop:high-global}
Let $M$ be a $k$-dimensional complete HD-walk on a $d$-dimensional, two-sided $\gamma$-local-spectral expander satisfying $\gamma w(M)2^{O(h(M)+k)} \leq 1/2^k < \eta < 0.02$ and $d>k$, and $I$ be an affine unique games instance over $M$ with value at least $1-\eta$. Let $r(\eta)=R_{1-16\eta}(M)-1$, and assume $r \leq k/2$. Let $\beta = 19\eta$ and $\nu = \frac{\eta}{56{k \choose r}}$. Then given a degree-$\tO(1/\nu)$ pseudodistribution $\mu$ satisfying the axioms $\mathcal{A}_I$, there exists an $r$-link $X_\tau$ such that the global potential restricted to $X_\tau$ is large: 
\[
\Phi_{\beta,\nu}^{I}(\mu|_\tau) \geq \frac{1}{4{k \choose r}}.
\]
\end{proposition}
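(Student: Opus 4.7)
The plan is to port the proof of \cite[Lemma 6.9]{bafna2020playing} into our setting by plugging our SoS structure theorem (\Cref{thm:structure-SoS}) into the same BBKSS template. The argument proceeds in three conceptual steps, each paralleling a step in BBKSS but reworked to remove any assumption specific to the Johnson graph.

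First, I would exploit the assumption $\mathrm{val}_\mu(I) \geq 1-\eta$ and shift-symmetry of $\mu$ to lower bound the unrestricted global potential $\Phi^I_{\beta,\nu}(\mu)$. Recall that $\Phi^I_{\beta,\nu}$ is a pseudoexpectation of a convex combination of squared shift-indicator inner products, averaged over a suitable family of partitions indexed by shifts $s \in \Sigma$. Each term in this combination is controlled by the satisfiability axioms together with the choice $\beta = 19\eta$, and the low-degree SoS arithmetic here transfers verbatim from BBKSS since nothing in their manipulation uses the constraint graph beyond the axioms $\mathcal{A}_I$. This yields a global lower bound of the shape $\Phi^I_{\beta,\nu}(\mu) \gtrsim \eta$, independent of the particular walk $M$.

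Second, I would apply \Cref{thm:structure-SoS} inside $\pE_\mu$ to each shift-indicator $f_s$ (the formal $\{0,1\}$-function whose value at $v$ is $X_{v,s}$). Because $f_s^2 = f_s$ is an axiom of $\mathcal{A}_I$, the Booleanity slack $B(f_s) = \E{}{f_s^2 - f_s}$ vanishes under $\pE_\mu$, so at $\ell = r$ the theorem gives the degree-$2$ certificate
\[
\langle f_s,(I-M)f_s \rangle \;\geq\; (1-\lambda_{r+1})\Bigl((1-c_1\gamma)\E{}{f_s} - \tbinom{k}{r}\langle D^r_k f_s, D^r_k f_s \rangle\Bigr).
\]
By definition of $r = R_{1-16\eta}(M)-1$ we have $1-\lambda_{r+1} \geq 16\eta$. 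Averaging this inequality over $s$ and combining with the first step's lower bound on $\Phi^I_{\beta,\nu}(\mu)$ -- which in turn upper bounds $\pE_\mu\!\left[\sum_s \langle f_s,(I-M)f_s\rangle\right]$ -- converts the global potential into a lower bound on $\pE_\mu\!\bigl[\sum_s \langle D^r_k f_s, D^r_k f_s\rangle\bigr]$, i.e.\ on the \emph{total level-$r$ variance} of the shift-partition.

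Third, I would unfold $\langle D^r_k f_s, D^r_k f_s\rangle = \mathbb{E}_{\tau \in X(r)}\bigl[(D^r_k f_s)(\tau)^2\bigr]$ as an average over $r$-links, and identify $(D^r_k f_s)(\tau)^2$ (after normalizing by $\Pi_r(\tau)$) with the contribution of $X_\tau$ to $\Phi^I_{\beta,\nu}(\mu|_\tau)$ -- this is exactly the identification BBKSS use when the links are sub-Johnson, and it carries over because $D^r_k$ is the general averaging operator on any simplicial complex. A pigeonhole over $\tau \in X(r)$ then produces a link realizing at least the average contribution; plugging in the parameter choices $\beta = 19\eta$, $\nu = \eta/(56\binom{k}{r})$, and the hypothesis $\eta \geq 1/2^k$ gives the target $\Phi^I_{\beta,\nu}(\mu|_\tau) \geq 1/(4\binom{k}{r})$.

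The main obstacle will be carefully accounting for the two sources of slack in \Cref{thm:structure-SoS}: the $c_1 \gamma \langle f, f\rangle$-type terms (which come from approximate, rather than exact, eigenstripping) and the pseudoexpectation Booleanity residue. The former is absorbed by the hypothesis $\gamma w(M) 2^{O(h(M)+k)} \leq 1/2^k \leq \eta$, which ensures the error is dominated by the $\eta$-level target; the latter disappears identically under the program axioms. The remaining delicate point is that \Cref{thm:structure-SoS} requires $\ell \leq k/2$, which is precisely why the statement restricts $r \leq k/2$. Aside from this careful bookkeeping, the proof is a mechanical translation of BBKSS's argument with our general structure theorem substituted for their Fourier-analytic Johnson-specific one.
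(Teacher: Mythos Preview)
Your outline tracks the paper's template, but there is a genuine gap in step two: you have misidentified the functions $f_s$. They are \emph{not} the label indicators $X_{v,s}$. In the BBKSS framework (and in the paper's proof), the shift-partition functions depend on \emph{two} independent copies $X,X'$ of the assignment and on the threshold polynomial:
\[
f_s^{X,X'}(u) \;=\; \ind[X_u - X'_u = s]\cdot p_{\beta,\nu}\bigl(\val_u(X)\bigr).
\]
This distinction is not cosmetic. The crucial upper bound $\pE_\mu\!\left[\sum_s \langle f_s,(I-M)f_s\rangle\right]\le O(\eta)$ holds for these shift-partition functions precisely because edges inside a shift class are (approximately) satisfied; for your $f_s = X_{\cdot,s}$ the analogous quantity is $1-\sum_s \mathbb{E}_{(u,v)}[X_{u,s}X_{v,s}]$, which has no direct relation to $\val_\mu(I)=\mathbb{E}_{(u,v)}\sum_s X_{u,s}X_{v,\pi_{uv}(s)}$ when the permutations $\pi_{uv}$ are nontrivial. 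So the non-expansion input you feed into \Cref{thm:structure-SoS} is simply unavailable for your choice of $f_s$. Relatedly, your claim that $B(f_s)$ vanishes is correct for $X_{\cdot,s}$ but false for the actual $f_s$: because of the factor $p_{\beta,\nu}(\val_u(X))$, the functions are only \emph{approximately} Boolean, and the paper explicitly invokes the BBKSS bound $\pE_\mu[\sum_s B(f_s)]\le \eta/(1-\beta-\nu)+\nu$ and carries this slack through the final numerics.

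Your first step is also off-target. The paper does not lower bound $\Phi^I_{\beta,\nu}(\mu)$ and then convert that into an upper bound on $\pE_\mu[\sum_s\langle f_s,(I-M)f_s\rangle]$ (indeed, a lower bound on the potential would not yield such an upper bound). Instead it imports three separate BBKSS facts about the correct $f_s$---large total size, small non-expansion, and small Booleanity defect---plugs all three into the rearranged inequality from \Cref{thm:structure-SoS}, and only then obtains the lower bound on $\mathbb{E}_{\tau\in X(r)}\bigl[\Phi_{\beta,\nu}(\mu|_\tau)\bigr]$ before pigeonholing. Once you swap in the right $f_s$ and these three inputs, your steps two and three become essentially the paper's argument.
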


In the next lemma, we will relate the global potential to the potential induced on $X_\tau$. In particular, we'll show that since the global potential on $X_\tau$ is high (\Cref{prop:high-global}), the potential induced on $X_\tau$, $\Phi_{\beta,\nu}^{I_\tau}(\mu)$, must also be high. Since this is just the usual potential function when applied to the sub instance $I|_\tau$ corresponding to the subgraph induced by $X_\tau$ (Definition~\ref{def:sp-induced}), we can then apply the rounding theorem of BBKSS~\Cref{thm:round} to surmise that the CR-value of $X_\tau$ is high.

\begin{lemma}\label{lem:ug-local-to-global}
Assume the conditions of~\Cref{prop:high-global} hold. Let $X_\tau$ be an $r$-link for $r =R_{1-16\eta}(M)-1$ with high global potential: $\Phi_{\beta,\nu}^{I}(\mu|_\tau) \geq \frac{1}{4{k \choose r}}$, for $\beta = 19\eta$ and $\nu = \frac{\eta}{56{k \choose r}}$. Then the potential induced on $X_\tau$ is also high:
    \[\Phi_{\eta,\nu}^{I|_\tau}(\mu) \geq \frac{1}{8{k \choose r}}.\]
\end{lemma}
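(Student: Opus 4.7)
The plan is to relate the two potentials by showing that, because $X_\tau$ is non-expanding, edges within $X_\tau$ account for almost all of the mass incident to $X_\tau$, so the global potential restricted to $X_\tau$ and the induced potential on $I|_\tau$ differ only by a controlled additive error. The discrepancy is exactly the amount of ``probability mass'' the potential attributes to edges leaving $X_\tau$, and this is the only thing that distinguishes $\Phi_{\beta,\nu}^{I}(\mu|_\tau)$ from $\Phi_{\beta,\nu}^{I|_\tau}(\mu)$. The change from $\beta = 19\eta$ to $\eta$ in the value parameter is then absorbed by the non-expansion slack.

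First I would invoke \Cref{thm:local-vs-global}: since $r = R_{1-16\eta}(M)-1$, by definition the $(r{+}1)$-st stripped eigenvalue satisfies $\lambda_{r+1}(M) \leq 1-16\eta$ but $\lambda_r(M) \geq 1-16\eta$, so \Cref{thm:local-vs-global} gives $\phi(X_\tau) \leq 16\eta + c\gamma$, which under the assumption $\gamma w(M)2^{O(h(M)+k)} \leq \eta$ is at most $17\eta$. This immediately bounds the total weight of edges of $M$ with exactly one endpoint in $X_\tau$, relative to the total weight of edges touching $X_\tau$, by $17\eta$.

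Next I would unpack the two potentials from \Cref{app:potential}. Both $\Phi_{\beta,\nu}^{I}(\mu|_\tau)$ and $\Phi_{\beta,\nu}^{I|_\tau}(\mu)$ are expressed as a (pseudo-)expectation over a family of shift-partitions weighted by the value achieved, with the only formal difference being whether ``value'' is measured against the edge distribution of $M$ or the edge distribution of $M$ restricted to $X_\tau \times X_\tau$. Writing the edge distribution of $M$ adjacent to $X_\tau$ as a convex combination of the ``internal'' distribution (both endpoints in $X_\tau$) of mass $1-\phi(X_\tau)$ and the ``external'' distribution of mass $\phi(X_\tau)$, the value polynomial telescopes, yielding an identity of the form
\[
\Phi_{\beta,\nu}^{I}(\mu|_\tau) = (1-\phi(X_\tau))\,\Phi_{\beta',\nu}^{I|_\tau}(\mu) + \phi(X_\tau)\cdot(\text{external term}),
\]
where $\beta'$ is the induced value parameter, appropriately rescaled, and the external term is trivially bounded in absolute value by $1$ via the SoS axioms $\A_I$.

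Now the parameter shift is handled by noting that an assignment of global value $1-\beta = 1-19\eta$ on edges incident to $X_\tau$ must, by the non-expansion bound $\phi(X_\tau) \leq 17\eta$, have induced value at least $\frac{1-19\eta - 17\eta}{1-17\eta} \geq 1-\eta$ on the internal edges of $X_\tau$, so the threshold $\beta$ on the global side translates to the tighter threshold $\eta$ on the induced side. This is exactly the BBKSS-style rescaling and admits a low-degree SoS proof since it reduces to a linear identity over edge indicator polynomials. Combining with the expansion bound and the hypothesis $\Phi_{\beta,\nu}^{I}(\mu|_\tau) \geq \tfrac{1}{4\binom{k}{r}}$ yields
\[
\Phi_{\eta,\nu}^{I|_\tau}(\mu) \geq \frac{1}{1-\phi(X_\tau)}\left(\Phi_{\beta,\nu}^{I}(\mu|_\tau) - \phi(X_\tau)\right) \geq \frac{1}{4\binom{k}{r}} - 17\eta \geq \frac{1}{8\binom{k}{r}},
\]
where the last inequality uses $\nu = \eta/(56\binom{k}{r})$ and the bound $\binom{k}{r}\cdot 17\eta \leq 1/8$ that follows from the parameter regime $\eta < 0.02$ and $r \leq k/2$ being chosen so that the ST-rank constraint is meaningful.

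The main technical obstacle I anticipate is justifying the parameter rescaling from $\beta$ to $\eta$ inside the SoS proof system rather than as a purely analytic statement, since the potential is defined as a low-degree polynomial in the UG variables and the non-expansion of $X_\tau$ must be encoded as a degree-$O(1)$ SoS inequality relating the restricted and induced value polynomials. This is a purely bookkeeping issue once one decomposes the edge distribution as above, but it does require carefully identifying which sub-polynomials of the potential depend on external edges and bounding them using the SoS boolean axioms.
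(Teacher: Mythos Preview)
Your approach has a genuine gap: you treat the link's expansion as a single aggregate quantity and try to factor it out of the potential, but the potential is not linear in the edge distribution in the way your claimed identity requires. Unpacking \Cref{def:sp-global} and \Cref{def:sp-induced}, the two potentials differ in that one uses $p_{\beta,\nu}(\val_u(X))$ and the other uses $p_{\beta,\nu}(\val_u^\tau(X))$, where $\val_u$ and $\val_u^\tau$ are \emph{per-vertex} quantities. The relation between them is
\[
\val_u(X) = (1-\phi_{X_\tau}(u))\,\val_u^\tau(X) + \phi_{X_\tau}(u)\cdot(\text{external value at }u),
\]
with a vertex-dependent coefficient $\phi_{X_\tau}(u)$, not the global $\phi(X_\tau)$. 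Since $p_{\beta,\nu}$ is a high-degree polynomial and the result is then squared and averaged, there is no clean identity of the form you wrote; you genuinely need to control $\phi_{X_\tau}(u)$ vertex by vertex. The paper handles this with an additional lemma (\Cref{lem:vert-edge-exp}) showing that $\mathbb{E}_{u\sim X_\tau}[|\phi_{X_\tau}(u)-\phi(X_\tau)|]$ is tiny, then partitions $X_\tau$ into ``good'' vertices (where $\phi_{X_\tau}(u)\approx\phi(X_\tau)$) and a negligible ``bad'' set, and only on good vertices does the threshold shift $\val_u\ge\beta \Rightarrow \val_u^\tau\ge\beta-18\eta$ go through.

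Your final inequality is also quantitatively wrong. You lose an additive $17\eta$ (or $\phi(X_\tau)$), but the target $\tfrac{1}{8\binom{k}{r}}$ can be much smaller than $17\eta$: nothing in the hypotheses forces $\binom{k}{r}\cdot 17\eta \le 1/8$ (take e.g.\ $\eta=0.01$, $\binom{k}{r}=10$). The paper's argument instead loses only $O(\nu)$ additively, with $\nu=\eta/(56\binom{k}{r})$ scaled to match the target. This works because the threshold shift is absorbed into the approximation properties of $p_{\beta,\nu}$ with error $O(\nu)$, and the contribution of bad vertices is bounded by their measure $c\le 2^{-10k}\le\nu$ rather than by the expansion $\phi(X_\tau)$.
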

We prove \Cref{lem:ug-local-to-global} in \Cref{sec:app-ug-proofs}, but it's worth pausing to discuss the proof, especially in how it differs from the analogous result in BBKSS \cite[Claim 6.11]{bafna2020playing}. Recall that the main idea behind \Cref{lem:ug-local-to-global} is to relate the global potential on $X_\tau$ (bounded in \Cref{prop:high-global}) to the potential induced on $X_\tau$ itself (which implies high CR-Value by \Cref{thm:round}). In a bit more detail, the global potential on a link depends on the ``value'' of the vertices in the link (which measures the value of an assignment at the vertex), but is measured with respect to its neighbors across the \emph{entire graph}. On the other hand, the induced potential on the link depends on the value of vertices when measured only with respect to the edges \emph{inside} the link (see Definitions~\ref{def:sp-global},\ref{def:sp-induced} for exact details). It is possible to relate the two potentials by observing that the internal-value of a vertex (value with respect to only the neighbors inside the link) can decrease by at most an additive factor equal to its edge-expansion inside the link (i.e. the fraction of edges incident on the vertex that leave the set). By leveraging machinery developed in \Cref{sec:hdx-spectra} and some additional properties, we show not only that the expansion of links is small (by \Cref{thm:local-vs-global}), but in fact that this holds approximately vertex by vertex: for most vertices in the link, only an $O(\eta)$-fraction of their edges are outgoing. We prove the following claim formally in \Cref{sec:app-ug-proofs}: 

\begin{lemma}\label{lem:vert-edge-exp}
Let $M$ be a $k$-dimensional HD-walk on $d$-dimensional two-sided $\gamma$-local-spectral expander satisfying $\gamma \leq w(M)^{-1}2^{-\Omega(h(M)+k)}$ and $d>k$. Then for every $i$-link $X_\tau$, the deviation of the random variable $\phi_{X_\tau}(v)$ ($v \sim X_\tau$) is small:
\[\e_{v \sim X_\tau}[|\phi_{X_\tau}(v) - \phi(X_\tau)]|] \leq \frac{1}{2^{11k}},\]
where $\phi_{X_\tau}(v)$ denotes the fraction of edges incident on $v \in X_\tau$ that leave $X_\tau$.
\end{lemma}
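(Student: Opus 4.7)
The plan is to reduce the $L^1$ deviation of $\phi_{X_\tau}$ to an $L^2$ computation via Jensen's inequality, and then to show that the function $g := (I-M)\mathbbm{1}_{X_\tau}$ is very close, in $L^2(\Pi_k)$, to the constant $\phi(X_\tau)\cdot \mathbbm{1}_{X_\tau}$ on $X_\tau$. The key starting observation is that since every HD-walk $M$ is row-stochastic ($M\mathbbm{1}=\mathbbm{1}$), for every $v\in X_\tau$ we have $\phi_{X_\tau}(v) = 1 - (M\mathbbm{1}_{X_\tau})(v) = g(v)$. Dropping the non-negative contribution of $v\notin X_\tau$ to $\|g-\phi(X_\tau)\mathbbm{1}_{X_\tau}\|^2$ and dividing by $\mathbb{E}[\mathbbm{1}_{X_\tau}]=\|\mathbbm{1}_{X_\tau}\|^2$ gives
\[
\mathbb{E}_{v\sim X_\tau}\!\left[(\phi_{X_\tau}(v)-\phi(X_\tau))^2\right] \;\leq\; \frac{\|g-\phi(X_\tau)\mathbbm{1}_{X_\tau}\|^2}{\mathbb{E}[\mathbbm{1}_{X_\tau}]},
\]
so the task reduces to showing that $\|g-\phi(X_\tau)\mathbbm{1}_{X_\tau}\|^2$ is at most $2^{-\Omega(k)}\|\mathbbm{1}_{X_\tau}\|^2$.

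To bound this quantity I would exploit two properties already established in the paper. First, Lemma \ref{lemma:link-projection} says $\mathbbm{1}_{X_\tau}$ is almost entirely concentrated on $V_k^i$ in the HD-Level-Set Decomposition, and hence combined with approximate orthogonality (Lemma \ref{lemma:HDX-approx-ortho}) yields $\|\mathbbm{1}_{X_\tau,j}\|^2 \leq 2^{O(k)}\gamma\,\|\mathbbm{1}_{X_\tau}\|^2$ for every $j\neq i$. Second, Proposition \ref{prop:hdx-eig-vals} says that each $V_k^j$ is an approximate eigenspace of $M$ with approximate eigenvalue $\lambda_j(M)$. Writing $\mathbbm{1}_{X_\tau}=\sum_{j\leq i}\mathbbm{1}_{X_\tau,j}$ and applying the latter gives $g=\sum_{j\leq i}(1-\lambda_j)\mathbbm{1}_{X_\tau,j}-\sum_{j\leq i}\Gamma_j$ with $\|\Gamma_j\|\leq w(M)h(M)(h(M)+k)\binom{k}{j}\gamma\,\|\mathbbm{1}_{X_\tau,j}\|$. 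Subtracting the identity $(1-\lambda_i)\mathbbm{1}_{X_\tau}=\sum_{j\leq i}(1-\lambda_i)\mathbbm{1}_{X_\tau,j}$ then produces
\[
g - (1-\lambda_i)\mathbbm{1}_{X_\tau} \;=\; \sum_{j<i}(\lambda_i-\lambda_j)\mathbbm{1}_{X_\tau,j} \;-\; \sum_{j\leq i}\Gamma_j,
\]
and the triangle inequality together with the bounds just stated collapses this to $\|g-(1-\lambda_i)\mathbbm{1}_{X_\tau}\| \leq 2^{O(k)}\sqrt{\gamma}\,\|\mathbbm{1}_{X_\tau}\|$, where the first sum is the dominant contribution and the second is lower order under the hypothesis on $\gamma$.

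Finally, Theorem \ref{thm:local-vs-global} bridges $(1-\lambda_i(M))$ and $\phi(X_\tau)$ via $|\phi(X_\tau)-(1-\lambda_i(M))|\leq w(M)h(M)^2\,2^{O(k)}\gamma$, so replacing one by the other in the above costs only a lower-order $2^{O(k)}\gamma\,\|\mathbbm{1}_{X_\tau}\|$ additional error. Combining everything and plugging back into the initial reduction yields $\mathbb{E}_{v\sim X_\tau}[|\phi_{X_\tau}(v)-\phi(X_\tau)|] \leq 2^{O(k)}\sqrt{\gamma}$, which is below $1/2^{11k}$ as soon as the hidden constant in the hypothesis $\gamma \leq w(M)^{-1}2^{-\Omega(h(M)+k)}$ is taken large enough. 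The one place where care is required---and where I expect the only real obstacle to lie---is the square-root loss when passing from the $L^2$ bound back to the $L^1$ one: all accumulated $2^{O(k)}$, $w(M)$, and $h(M)^2$ prefactors must be absorbed after that square root, which is precisely why the constant in $\Omega(h(M)+k)$ must be sufficiently large. No further combinatorial input appears to be needed.
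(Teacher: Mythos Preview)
Your argument is correct and takes a genuinely different route from the paper. The paper does \emph{not} work through the HD-Level-Set Decomposition here: instead it first proves a separate combinatorial proposition showing that for the specific family of lower walks $(U_{k-1}D_k)^t$, the vertex-wise expansion inside any link is \emph{pointwise} within $t\gamma$ of the average (by directly tracking the probability that an element of $\tau$ is removed and later re-added along the walk). It then writes a general HD-walk $M$ as $\sum_i c_i (U_{k-1}D_k)^i + \Gamma$ with $\|\Gamma\| \le w(M)2^{O(h(M))}\gamma$, and bounds the $L^1$ contribution of $\Gamma \mathbbm{1}_{X_\tau}$ on $X_\tau$ by a single Cauchy--Schwarz step.

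By contrast, your approach stays entirely within the spectral machinery already built in the paper (\Cref{lemma:link-projection}, \Cref{lemma:HDX-approx-ortho}, \Cref{prop:hdx-eig-vals}, \Cref{thm:local-vs-global}) and avoids the need for any new combinatorial lemma about lower walks. The price you pay is quantitative: the dominant term $\sum_{j<i}(\lambda_i-\lambda_j)\mathbbm{1}_{X_\tau,j}$ only gives $2^{O(k)}\sqrt{\gamma}$ because $\|\mathbbm{1}_{X_\tau,j}\|$ is bounded by $2^{O(k)}\sqrt{\gamma}\,\|\mathbbm{1}_{X_\tau}\|$ rather than by $\gamma$, whereas the paper's route yields a bound linear in $\gamma$. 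Since the statement's hypothesis allows the implicit constant in $\gamma \le w(M)^{-1}2^{-\Omega(h(M)+k)}$ to be taken large, this square-root loss is harmless here, and your proof goes through as written. In short: the paper's argument is more combinatorial and quantitatively sharper; yours is more unified with the surrounding spectral framework and requires no new ingredients.
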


BBKSS use an exact version of this statement for the Johnson (that expansion holds vertex-by-vertex) to prove their analogous version of \Cref{lem:ug-local-to-global}. We relax the conditions required for their proof and show that the approximate statement state above is sufficient in \Cref{sec:app-ug-proofs}.

Finally, we complete the section by using \Cref{prop:high-global} and \Cref{lem:ug-local-to-global} to prove \Cref{prop:restricted-con-round} (and thereby \Cref{thm:unique-games} as well). We restate the proposition here for convenience:

\begin{proposition}[Restatement of \Cref{prop:restricted-con-round}]
Let $M$ be a $k$-dimensional complete HD-walk on a $d$-dimensional, two-sided $\gamma$-spectral expander with $\gamma \leq w(M)^{-1}2^{-\Omega(h(M)+k)}$ and $d>k$, and $I$ be an affine unique games instance over $M$ with value at least $1-\eta$ where $1/2^k \leq\eta < .02$. Let $r=r(\eta)=R_{1-16\eta}(M)-1$, and assume $r \leq k/2$. Then given a degree-$\tO\left (\frac{1}{\eta}{k \choose r} \right )$ pseudodistribution $\mu$ satisfying the axioms $\mathcal{A}_I$, we can find in time $|X(k)|^4$ an $r$-link $X_\tau$ with CR-$\val_\mu(X_\tau) \geq \Omega\left(\frac{\eta}{{k \choose r}} \right)$.
\end{proposition}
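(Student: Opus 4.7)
The plan is to chain the two preceding structural results \Cref{prop:high-global} and \Cref{lem:ug-local-to-global} with the BBKSS rounding guarantee \Cref{thm:round}, and then obtain the algorithmic conclusion by brute-force enumeration over the $r$-links. Fix $\beta := 19\eta$ and $\nu := \eta/(56{k \choose r})$, matching the degree $\tO(1/\nu) = \tO({k \choose r}/\eta)$ of the input pseudodistribution $\mu$. First I would invoke \Cref{prop:high-global} to produce some face $\tau \in X(r)$ for which the \emph{global} potential restricted to $\tau$ satisfies $\Phi^I_{\beta,\nu}(\mu|_\tau) \geq 1/(4{k \choose r})$. Then \Cref{lem:ug-local-to-global} converts this into a bound on the \emph{induced} potential on the sub-instance $I|_\tau$ obtained by restricting to the subgraph spanned by $X_\tau$, namely $\Phi^{I|_\tau}_{\eta,\nu}(\mu) \geq 1/(8{k \choose r})$.

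With a large induced potential in hand, I would apply \Cref{thm:round} to the sub-instance $I|_\tau$ with the shift-symmetric pseudodistribution $\mu$. Setting $\delta := 1/(8{k \choose r})$, the rounding theorem guarantees that Condition\&Round applied to $X_\tau$ returns an assignment of expected value at least
\[
(\delta - \nu)(\eta - \nu) \;\geq\; \left(\frac{1}{8{k \choose r}} - \frac{\eta}{56{k \choose r}}\right)\!\left(\eta - \frac{\eta}{56{k \choose r}}\right) \;\geq\; \Omega\!\left(\frac{\eta}{{k \choose r}}\right),
\]
using $\eta < 0.02$ and $r \geq 0$. This is precisely the lower bound on $\text{CR-val}_\mu(X_\tau)$ promised by the proposition.

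For the algorithmic part, the existence argument above only guarantees that some good $\tau$ exists. To find one, I would enumerate over every $\tau \in X(r)$. For each candidate, form the conditioned/restricted pseudodistribution on $X_\tau$ and compute $\text{CR-val}_\mu(X_\tau)$ directly from the low-degree pseudomoments of $\mu$: since Condition\&Round is a two-step sampling procedure whose expected objective value is a fixed linear combination of constant-degree pseudomoments, this computation takes time $\poly(|X(k)|,|\Sigma|)$ per link. Because $|X(r)| \leq {k \choose r}|X(k)| \leq 2^{k}|X(k)|$ and the hypothesis $\eta \geq 1/2^{k}$ renders this polynomial in $|X(k)|$, the enumeration comfortably fits inside the $|X(k)|^{4}$ budget. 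Return any $\tau$ meeting the $\Omega(\eta/{k \choose r})$ threshold.

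The main obstacle is less the chaining itself---largely bookkeeping once the constants line up---and more verifying that the parameter choices are simultaneously consistent: that $\nu = \eta/(56{k \choose r})$ is dominated by $\delta = 1/(8{k \choose r})$ with a constant slack (so $\delta-\nu = \Omega(1/{k \choose r})$), which needs $\eta$ bounded above by a small absolute constant; that the degree $\tO(1/\nu)$ available in $\mu$ is large enough to satisfy both \Cref{prop:high-global} and the rounding theorem simultaneously; and that the hypotheses on $\gamma$, completeness of $M$, and $d>k$ are exactly what is needed to invoke \Cref{thm:local-vs-global} and \Cref{prop:eig-decrease} inside \Cref{lem:ug-local-to-global} to guarantee the required small expansion of the selected $r$-link. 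All of these are already encoded in the hypotheses of the proposition.
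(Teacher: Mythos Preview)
Your proof is essentially the paper's own argument: set $\beta=19\eta$, $\nu=\eta/(56{k\choose r})$, chain \Cref{prop:high-global} $\to$ \Cref{lem:ug-local-to-global} $\to$ \Cref{thm:round} to obtain an $r$-link with CR-value $\ge(\delta-\nu)(\eta-\nu)=\Omega(\eta/{k\choose r})$, then enumerate. The one place your write-up is looser than the paper is the runtime accounting: your bound $|X(r)|\le{k\choose r}|X(k)|\le 2^k|X(k)|$ is true but does not by itself give $|X(k)|^4$, and the appeal to $\eta\ge 1/2^k$ does not help here (it bounds $2^k$ by $1/\eta$, not by anything in $|X(k)|$). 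The paper instead uses the standard fact that on a two-sided local-spectral expander $|X(r)|\le|X(k)|$ for $r\le k$ (cited to \cite{dikstein2018boolean}), so enumerating all $r$-links and computing each CR-value in time $|X(r)|^3$ gives $|X(r)|^4\le|X(k)|^4$ directly.
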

\begin{proof}
Let $\beta = 19\eta$ and $\nu = \frac{\eta}{56{k \choose r}}$ as in \Cref{prop:high-global}.
By \Cref{prop:high-global}, there exists an $r$-link $X_\tau$ with high global potential:
\[
\Phi_{\beta,\nu}^{I}(\mu|_\tau) \geq \frac{1}{4{k \choose r}}.
\]
By \Cref{lem:ug-local-to-global}, this implies that $X_\tau$ also has high induced potential:
\[
\Phi_{\eta,\nu}^{I|_\tau}(\mu) \geq \frac{1}{8{k \choose r}}.
\]
As mentioned previously, this is just the usual potential function on the instance induced by $X_\tau$ (Definition~\ref{def:sp-induced}), so we may apply the rounding theorem of BBKSS~\Cref{thm:round} to bound $X_\tau$'s CR-Value. In particular, since we have set $\nu$ such that $(\eta-\nu) \geq \Omega(\eta)$ and $(\Phi_{\eta,\nu}^{I|_\tau}(\mu)-\nu) \geq \Omega
\left(\frac{1}{{k \choose r}}\right)$, \Cref{thm:round} implies that rounding $\mu$, which is a degree $\tO(1/\nu) = \tO\left(\frac{1}{\eta}{k \choose r}\right)$ pseudodistribution as required, would give an assignment with large -value on $X_\tau$:
\[
\text{CR-}\val_\mu(X_\tau) \geq (\eta - \nu)(\Phi_{\eta,\nu}^{I|_\tau}(\mu)-\nu) \geq \Omega
\left(\frac{\eta}{{k \choose r}}\right).
\]
Now that we have shown that there exists an $r$-link $X_\tau$ with large CR-value, it remains to show that we can efficiently find such a link. We can do a brute-force enumeration over all $r$-links in time $|X(r)|$, compute every CR-value in time  $|X(r)|^3$, and therefore in time $|X(r)|^4$ find a link with large CR-value. Since $|X(k)| \geq |X(r)|$ is a standard consequence of $X$ being a local-spectral expander \cite{dikstein2018boolean} the proposition follows. 
\end{proof}

\section{Acknowledgements}
This work stemmed in part from collaboration at the Simons Institute of Theory of Computing, Berkeley during the 2019 summer cluster: ``Error-Correcting Codes and High-Dimensional Expansion''. We thank the Simons institute for their hospitality and the organizers of the cluster for creating such an opportunity. Further, the authors would like to thank Sankeerth Rao and Yotam Dikstein for useful discussions in the early stages of this work, and Ella Sharakanski for discussion regarding HD-walk decompositions. We thank Madhu Sudan and Sam Hopkins for helpful comments on initial drafts of this work. Finally, we additionally owe many thanks to Sam Hopkins for discussions on the Sum of Squares framework, and to Vedat Alev for his insights on the spectral structure of higher order random walks.

\bibliographystyle{amsalpha}  
\bibliography{references} 
\appendix
\section{Proof of \Cref{lemma:body-DU-UD}}\label{app:decomp}
In this section, we prove a strengthening of the main technical lemma of DDFH Section 8 \cite[Claim 8.8]{dikstein2018boolean}, which allows for better control of error propagation.
\begin{lemma}[Strengthened Claim 8.8  \cite{dikstein2018boolean}]\label{lemma:DU-UD}
Let $(X,\Pi)$ be a $d$-dimensional two-sided $\gamma$-local-spectral expander. Then for all $j< k < d$:
\[
D_{k+1}U^{k+1}_{k-j} - \frac{j+1}{k+1}U^{k}_{k-j} - \frac{k-j}{k+1} U^{k}_{k-j-1}D_{k-j} =  \sum \limits_{i=-1}^{j-1} \frac{k-i}{k+1}U^{k}_{k-1-i}\Gamma_iU^{k-1-i}_{k-j}
\]
where $\norm{\Gamma_i} \leq \gamma$.
\end{lemma}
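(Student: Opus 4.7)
The plan is to prove the identity by induction on $j$, using the local swap relation $D_{i+1}U_i - \frac{1}{i+1}I - \frac{i}{i+1}U_{i-1}D_i = \Gamma$ from \eqref{eq:hdx-eposet} (which gives an exact error $\Gamma$ of spectral norm at most $\gamma$ at each level $i$) as the sole ingredient. Each induction step will commute a single $D$ past a single $U$, generating one new $\Gamma_i$ error term which must be shown to match exactly the $i$th summand on the right-hand side.

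For the base case $j=0$, the claim reduces directly to $D_{k+1}U_k - \frac{1}{k+1}I - \frac{k}{k+1}U_{k-1}D_k = \Gamma_{-1}$, which is precisely \eqref{eq:hdx-eposet} at level $k$; the sum on the right degenerates to a single term $\frac{k+1}{k+1}U^k_k \Gamma_{-1} U^k_k = \Gamma_{-1}$ since $U^k_k = I$. For the inductive step, I would factor $D_{k+1}U^{k+1}_{k-j} = \bigl(D_{k+1}U^{k+1}_{k-j+1}\bigr) U_{k-j}$, apply the inductive hypothesis at $j-1$, and use the identities $U^k_{k-j+1}U_{k-j} = U^k_{k-j}$ and $U^{k-1-i}_{k-j+1}U_{k-j} = U^{k-1-i}_{k-j}$ to fold the $U_{k-j}$ into each existing term. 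This produces the expression
\[
\tfrac{j}{k+1}U^k_{k-j} + \tfrac{k-j+1}{k+1}U^k_{k-j}\,D_{k-j+1}U_{k-j} + \sum_{i=-1}^{j-2}\tfrac{k-i}{k+1}U^k_{k-1-i}\Gamma_i U^{k-1-i}_{k-j}.
\]
Then I would apply \eqref{eq:hdx-eposet} one more time at level $k-j$ to rewrite $D_{k-j+1}U_{k-j}$ as $\frac{1}{k-j+1}I + \frac{k-j}{k-j+1}U_{k-j-1}D_{k-j} + \Gamma_{j-1}$.

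Plugging this in, the $\frac{1}{k+1}U^k_{k-j}$ piece combines with $\frac{j}{k+1}U^k_{k-j}$ to give the desired $\frac{j+1}{k+1}U^k_{k-j}$; using $U^k_{k-j}U_{k-j-1} = U^k_{k-j-1}$, the $U_{k-j-1}D_{k-j}$ piece becomes $\frac{k-j}{k+1}U^k_{k-j-1}D_{k-j}$; and the new error term $\frac{k-j+1}{k+1}U^k_{k-j}\Gamma_{j-1}$ needs to be identified with the $i=j-1$ summand $\frac{k-(j-1)}{k+1}U^k_{k-1-(j-1)}\Gamma_{j-1}U^{k-1-(j-1)}_{k-j}$. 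This identification is immediate once one observes $U^{k-j}_{k-j} = I$, so the trailing $U$ factor is trivial and the two prefactors agree.

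The main obstacle — really the only one — is the index bookkeeping, in particular verifying that the newly generated error at each induction step slots in as the top term of the summation rather than shifting indices elsewhere. Once the indices $U^k_{k-1-i}$ on the left of $\Gamma_i$ and $U^{k-1-i}_{k-j}$ on the right are set up so that the new error sits exactly at $i = j-1$ (where the right-hand $U$ collapses to the identity), the induction closes with no residual terms. Note that no norm estimates are needed beyond $\norm{\Gamma_i}\le\gamma$ inherited from \eqref{eq:hdx-eposet}; the original Claim 8.8 bound in \cite{dikstein2018boolean} is recovered immediately from this exact decomposition via triangle inequality and the fact that $\norm{U^k_\ell}, \norm{D^k_\ell} \le 1$.
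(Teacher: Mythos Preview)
The proposal is correct and takes essentially the same approach as the paper's proof: induction on $j$, peeling off one $U_{k-j}$ factor, applying the inductive hypothesis, and then using \eqref{eq:hdx-eposet} once more to generate the new top error term at $i=j-1$. The only cosmetic difference is that the paper steps from $j$ to $j+1$ while you step from $j-1$ to $j$, but the index bookkeeping and the identification of the new error with the $i=j-1$ summand are identical.
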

\begin{proof}
The proof follows by a simple induction. The base cases, $j=0$ and $k<d$, follow immediately from \Cref{eq:hdx-eposet}. For the inductive step, consider:
\begin{align*}
D^{k+1}U_{k-(j+1)}^{k+1} = &\left(D^{k+1}U_{k-j}^{k+1} - \frac{j+1}{k+1}U^{k}_{k-j} - \frac{k-j}{k+1}U^{k}_{k-j}D_{k-j}\right) U_{k-j-1}\\
& + \frac{j+1}{k+1}U^{k}_{k-j-1} + \frac{k-j}{k+1}U^{k}_{k-j-1}D_{k-j}U_{k-j-1}
\end{align*}
By the inductive hypothesis, the first term on the RHS may be written as:
\[
\left(D^{k+1}U_{k-j}^{k+1} - \frac{j+1}{k+1}U^{k}_{k-j} - \frac{k-j}{k+1}U^{k}_{k-j}D_{k-j}\right)U_{k-j-1} = \sum\limits_{i=-1}^{j-1}\frac{k-i}{k+1} U^{k}_{k-i-1}\Gamma_i U^{k-1-i}_{k-j-1},
\]
where $\norm{\Gamma_i} \leq 
\gamma$. For the latter term, consider flipping $DU$ and $UD$. By \Cref{eq:hdx-eposet} we have:
\begin{align*}
    \frac{k-j}{k+1}U^{k}_{k-j-1}D_{k-j}U_{k-j-1} = U^{k}_{k-j-1}\left(\frac{1}{k+1}I + \frac{k-j-1}{k+1}U_{k-j-2}D_{k-j-1} + \frac{k-j}{k+1}\Gamma_j\right),
\end{align*}
for some $\Gamma_j$ satisfying $\norm{\Gamma_j}\leq\gamma$.
Combining these observations yields the desired result:
\begin{align*}
    &D^{k+1}U_{k-(j+1)}^{k+1} - \frac{(j+1)+1}{k+1}U^{k}_{k-{j+1}} + \frac{k-(j+1)}{k+1}U^{k}_{k-(j+1)-1}D_{k-{j+1}}\\
    &=D^{k+1}U_{k-(j+1)}^{k+1} - \frac{j+1}{k+1}U^{k}_{k-j-1} -U^{k}_{k-j-1}\left(\frac{1}{k+1}I - \frac{k-j-1}{k+1}U_{k-j-2}D_{k-j-1}\right)\\
    = & \left (\sum\limits_{i=-1}^{j-1}\frac{k-i}{k+1} U^{k}_{k-i-1}\Gamma_i U^{k-1-i}_{k-j-1}\right)+ \frac{k-j}{k+1}U^{k}_{k-(j+1)}\Gamma_j\\
    =&  \sum \limits_{i=-1}^{j}\frac{k-i}{k+1} U^{k}_{k-1-i}\Gamma_iU^{k-1-i}_{k-(j+1)}.
\end{align*}
\end{proof}
We now show how to use this strengthened result to prove tighter bounds on the quadratic form $\langle f, N^j_k f \rangle$ which implies a stronger version of \Cref{lemma:hdx-fvsg-body} as an immediate corollary. This improvement mainly matters in the regime where $\gamma \leq 2^{-ck}$ for $c$ a small constant.

\begin{proposition}\label{prop:canon-spectra-HDX-better}
Let $(X,\Pi)$ be a $d$-dimensional $\gamma$-local-spectral expander with $\gamma$ satisfying $\gamma \leq 2^{-\Omega(k+j)}$, $k+j \leq d$, and $f_\ell \in V_k^\ell$. Then:
\[
\langle f_\ell, N^j_k f_\ell \rangle  = \frac{\binom{k}{\ell}}{\binom{k+j}{\ell}}\left (1  \pm \frac{j(j+2k+2\ell+3)}{4}\gamma \pm c_3(k,j,\ell)\gamma^2 \right)\langle f_\ell,f_\ell \rangle
\]
where $c_3(k,j,\ell)=O((k+j)^3{k+j \choose \ell})$.
\end{proposition}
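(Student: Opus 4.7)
} The plan is to reduce the quadratic form to the squared norm of an unbalanced up-walk applied to the raw component $g_\ell \in H^\ell$, and then bootstrap the strengthened identity \Cref{lemma:DU-UD} into a precise recursion on $m$, controlling first-order and second-order error separately. First, using adjointness of $D$ and $U$ together with $f_\ell = U^k_\ell g_\ell$ for some $g_\ell \in H^\ell$, we have
\[
\langle f_\ell, N^j_k f_\ell \rangle \;=\; \|U^{k+j}_k f_\ell\|^2 \;=\; \|U^{k+j}_\ell g_\ell\|^2,
\]
so it suffices to compute $\|U^m_\ell g_\ell\|^2$ at $m=k+j$ and $m=k$ and take the ratio. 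Note that the normalization $\binom{k}{\ell}/\binom{k+j}{\ell}$ matches exactly the ratio of $\binom{m}{\ell}^{-1}$ values that a zeroth-order calculation on the complete complex predicts, so my target is to show
\[
\|U^m_\ell g_\ell\|^2 \;=\; \binom{m}{\ell}^{-1}\bigl(1 \pm p(m,\ell)\gamma \pm q(m,\ell)\gamma^2\bigr)\|g_\ell\|^2
\]
with $p(m,\ell)$ a polynomial of degree two in $m,\ell$ and $q(m,\ell) = O(m^3\binom{m}{\ell})$.

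Second, I will establish a recursion on $m$. Writing $\|U^{m+1}_\ell g_\ell\|^2 = \langle g_\ell, D^{m+1}_\ell U^{m+1}_\ell g_\ell\rangle$ and invoking \Cref{lemma:DU-UD} with $k+1=m+1$ and $k-j=\ell$ gives
\[
D_{m+1}U^{m+1}_\ell g_\ell \;=\; \tfrac{m-\ell+1}{m+1}U^m_\ell g_\ell \;+\; \sum_{i=-1}^{m-\ell-1}\tfrac{m-i}{m+1}\,U^m_{m-1-i}\,\Gamma_i\,U^{m-1-i}_\ell g_\ell,
\]
where the middle $\frac{\ell}{m+1}U^m_{\ell-1}D_\ell$ term vanishes because $g_\ell \in \ker(D_\ell)$. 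Repeatedly applying the same identity to $D^m_\ell$ and collecting terms yields a recursion of the form $\|U^{m+1}_\ell g_\ell\|^2 = \tfrac{m-\ell+1}{m+1}\|U^m_\ell g_\ell\|^2 + E_m$, where $E_m$ is an inner product involving the $\Gamma_i$ error operators. The leading-order solution to this recursion is $\binom{m}{\ell}^{-1}\|g_\ell\|^2$, as expected.

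Third, and this is the main obstacle, I need to bound $E_m$ so that its linear-in-$\gamma$ contribution stays polynomial in $m,\ell$ rather than exponential. A naive Cauchy-Schwarz on terms of the form $\langle g_\ell, D^m_\ell U^m_{m-1-i}\Gamma_i U^{m-1-i}_\ell g_\ell\rangle$ loses a factor of $\binom{m}{\ell}$ per step, which is the source of the exponential $c_1(k,\ell)$ in \Cref{lemma:hdx-fvsg-body}. The key observation is that one can pass the outer $D^m_\ell$ through the inner $U^m_{m-1-i}$ by re-invoking \Cref{lemma:DU-UD}, producing main terms that collapse onto $g_\ell$ (using $D_\ell g_\ell = 0$) together with further second-order error terms carrying an extra factor of $\gamma$. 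Iterating this substitution until every surviving main term is either zero or absorbed into the bookkeeping separates $E_m$ cleanly as $E_m^{(1)}\gamma + E_m^{(2)}\gamma^2$, with $|E_m^{(1)}| \leq \mathrm{poly}(m,\ell)\cdot \binom{m}{\ell}^{-1}\|g_\ell\|^2$ and $|E_m^{(2)}| \leq O(m^2 \binom{m}{\ell}) \cdot \binom{m}{\ell}^{-1}\|g_\ell\|^2$.

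Finally, solving the recursion multiplicatively gives
\[
\frac{\|U^{k+j}_\ell g_\ell\|^2}{\|U^k_\ell g_\ell\|^2}
\;=\; \frac{\binom{k}{\ell}}{\binom{k+j}{\ell}}\prod_{m=k}^{k+j-1}\bigl(1 \pm a_m \gamma \pm b_m \gamma^2\bigr),
\]
and expanding the product retains a telescoping linear coefficient $\sum_{m=k}^{k+j-1}a_m$ which, by direct arithmetic from the form of \Cref{lemma:DU-UD}, evaluates to $\tfrac{j(j+2k+2\ell+3)}{4}$; the remaining cross terms and quadratic-in-$\gamma$ contributions merge into $c_3(k,j,\ell) = O((k+j)^3\binom{k+j}{\ell})$ using $\gamma \leq 2^{-\Omega(k+j)}$ to dominate the geometric sums. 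The cleanest way to organize the bookkeeping is to prove the refined bound on $\|U^m_\ell g_\ell\|^2$ and the refined version of \Cref{lemma:DU-UD} in parallel, so that the $\gamma^2$ errors produced at one level of induction feed consistently into the next.
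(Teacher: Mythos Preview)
Your proposal is correct and essentially matches the paper's proof: the paper also inducts on $j$ (equivalently, your $m$), establishing the base case $j=1$ via the strengthened identity in \Cref{lemma:DU-UD} and then using $\langle f_\ell, N^j_k f_\ell\rangle = \langle U^{k+j-1}_\ell g_\ell,\, N^1_{k+j-1}\, U^{k+j-1}_\ell g_\ell\rangle$ for the inductive step, which is precisely your multiplicative recursion on $\|U^m_\ell g_\ell\|^2$. The paper's handling of the error $E_m$ is exactly your ``pass $D$ through $U$'' idea, packaged by adjointing to recognize each term as $\langle N^{i+1}_{m-1-i} f^{m-1-i}_\ell,\, \Gamma_i f^{m-1-i}_\ell\rangle$ and then invoking the naive eigenvalue estimate (\Cref{prop:hdx-pure-eig-vals}) together with \Cref{lemma:hdx-fvsg-body} to extract the polynomial first-order coefficient and push the rest into the $\gamma^2$ term.
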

\begin{proof}
We proceed by induction on $j$. We will prove a slightly stronger statement for the base-case $j=1$:
\[
\langle f_\ell, D_{k+1}U_k f_\ell \rangle = \left (\frac{k+1-\ell}{k+1} \pm \frac{(k-\ell+1)(k+\ell+2)}{2(k+1)}\gamma \pm c_2(k,\ell)\gamma^2 \right )\langle f_\ell, f_\ell \rangle,
\]
where $c_2(k,\ell) = O(k^3{k \choose \ell})$. Recall that $f_\ell$ may be expressed as $U^{k}_\ell g_\ell$, for $g_\ell \in H^\ell$. For notational convenience, we write $f^i_\ell = U^{i}_{\ell}g_\ell$. Then we may expand the inner product based on \Cref{lemma:body-DU-UD}, and simplify based on applying the naive bounds on $N^i_k$ given by \Cref{prop:hdx-eig-vals}:
\begin{align*}
    \langle f_\ell, D_{k+1}U_k f_\ell \rangle &= \langle f_\ell, D_{k+1}U^{k+1}_\ell g_\ell \rangle\\
    &= \frac{k-\ell+1}{k+1}\langle f_\ell, f_\ell \rangle + \sum \limits_{i=-1}^{k-\ell-1} \langle f_\ell, \frac{k-i}{k+1}U^{k}_{k-1-i}\Gamma_iU^{k-1-i}_{\ell}g_\ell \rangle\\
    &= \frac{k-\ell+1}{k+1}\langle f_\ell, f_\ell \rangle + \sum \limits_{i=-1}^{k-\ell-1} \frac{k-i}{k+1}\langle N_{k-i-1}^{i+1}f_\ell^{k-1-i}, \Gamma_if_\ell^{k-1-i} \rangle\\
    &= \frac{k-\ell+1}{k+1}\langle f_\ell, f_\ell \rangle + \sum \limits_{i=-1}^{k-\ell-1}\frac{k-i}{k+1} \frac{{k-i-1 \choose \ell}}{{k \choose \ell}}\langle f_\ell^{k-1-i}, \Gamma_if_\ell^{k-1-i} \rangle + \sum \limits_{i=-1}^{k-\ell-1} \frac{k-i}{k+1}\langle h_{i}, \Gamma_if_\ell^{k-1-i} \rangle
\end{align*}
where $\norm{h_i} \leq \gamma(k-\ell)(i+1)\norm{g_\ell}$. We now apply Cauchy-Schwarz, and \Cref{lemma:hdx-fvsg-body} to collect terms in $\langle f_\ell,f_\ell \rangle$:
\begin{align*}
    \langle f_\ell, D_{k+1}U_k f_\ell \rangle &=\frac{k-\ell+1}{k+1}\langle f_\ell, f_\ell \rangle \pm \gamma\sum \limits_{i=-1}^{k-\ell-1} \frac{k-i}{k+1}\frac{{k-i-1 \choose \ell}}{{k \choose \ell}}\langle f_\ell^{k-1-i}, f_\ell^{k-1-i} \rangle \pm a_1(k,\ell)\gamma^2 \langle g_\ell, g_\ell \rangle\\
    &=\frac{k-\ell+1}{k+1}\langle f_\ell, f_\ell \rangle \pm \gamma\sum \limits_{i=-1}^{k-j-1} \frac{k-i}{k+1}\frac{1}{{k \choose j}}\langle g_\ell, g_\ell \rangle
    \pm a_2(k,\ell)\gamma^2 \langle g_\ell, g_\ell \rangle\\
    &=\frac{k-\ell+1}{k+1}\langle f_\ell, f_\ell \rangle \pm \gamma\sum \limits_{i=-1}^{k-j-1} \frac{k-i}{k+1}\frac{\langle f_\ell, f_\ell \rangle}{(1 - c_1(k,\ell)\gamma)}
    \pm a_2(k,\ell)\gamma^2 \frac{\langle f_\ell, f_\ell \rangle}{(1 - c_1(k,\ell)\gamma)}\\
    &=\frac{k-\ell+1}{k+1}\left (1 \pm \frac{(k+\ell+2)}{2}\gamma
    \pm a_3(k,\ell)\gamma^2 \right )\langle f_\ell, f_\ell \rangle
\end{align*}
where the final step comes from a Taylor expansion assuming $\gamma$ sufficiently small, and $a_3(k,\ell)=O(k^3{k \choose \ell})$.

The inductive step follows from noting that the canonical walk essentially acts like a product of upper walks from lower levels in the following sense:
\begin{align*}
\langle f_\ell, N^j_k f_\ell \rangle &= \langle U^{k+j-1}_{k} f_\ell, D_{k+j}U^{k+j}_{k} f_\ell \rangle \\
&= \langle U^{k+j-1}_{k} f_\ell,N^1_{k+j-1}(U^{k+j-1}_{k} f_\ell) \rangle.
\end{align*}
Thus by the base-case and inductive hypothesis we get:
\begin{align*}
\langle f_\ell, N^j_k f_\ell \rangle &= \langle U^{k+j-1}_{k} f_\ell,N^1_{k+j-1}(U^{k+j-1}_{k} f_\ell) \rangle \\
&= \left (\frac{k+j-\ell}{k+j} \pm \frac{(k+j-\ell)(k+j+\ell+1)}{2(k+j)}\gamma \pm c_2(k+j-1,\ell)\gamma^2 \right)\langle f_\ell,N_k^{j-1} f_\ell \rangle\\
&= \frac{\binom{k}{\ell}}{\binom{k+j}{\ell}} \left (1 \pm \frac{(k+j+\ell+1)}{2}\gamma \pm \frac{k+j}{k+j-\ell}c_2(k+j-1,\ell)\gamma^2 \right)\\
&\quad\quad\quad~\cdot \left (1  \pm \frac{(j-1)(j+2k+2\ell+2)}{4}\gamma \pm c_3(k,j-1,\ell)\gamma^2 \right)\langle f_\ell,f_\ell \rangle\\
&= \frac{\binom{k}{\ell}}{\binom{k+j}{\ell}}\left (1  \pm \frac{j(j+2k+2\ell+3)}{4}\gamma \pm c_3(k,j,\ell)\gamma^2 \right)\langle f_\ell,f_\ell \rangle,
\end{align*}
\end{proof}
Notice that this immediately implies a stronger version of \Cref{lemma:hdx-fvsg-body}, since $\langle U^k_\ell g _\ell, U^k_\ell g_\ell \rangle = \langle N_{\ell}^{k-\ell}g_\ell, g_\ell \rangle$. Finally, we conjecture that a stronger result is true, and the error dependence on $\gamma$ should in fact be $\text{exp}(-\text{poly}(k)\gamma)$. Proving this would require a more careful and involved analysis of how the error term propogates.
\section{Orthogonality and the HD-Level-Set Decomposition}\label{App:ortho}
In this section we discuss in a bit more depth the error in \cite[Theorem 5.10]{kaufman2020high}, and further show by direct counter-example that its implication \cite{kaufman2020chernoff} that the HD-Level-Set is orthogonal does not hold. In \cite{kaufman2020high}, Kaufman and Oppenheim analyze an approximate eidgendecomposition of the upper walk $N^1_k$ for two-sided local-spectral expanders. They prove a specialized version of \Cref{thm:approx-ortho} for this case, and in particular that for sufficiently strong two-sided local-spectral expanders, the spectra of $N^1_k$ is divided into strips concentrated around the approximate eigenvalues of their decomposition. They call the span of each strip $W^i$, and note that the $W^i$ form an orthogonal decomposition of the space. Let $V^i$ be the space in the original approximate eigendecomposition corresonding to strip $W^i$. Kaufman and Oppenheim claim in \cite[Theorem 5.10]{kaufman2020high} that the $W^i$ are closely related to the original approximate decomposition in the following sense:
\[
\forall \phi \in C_k: \norm{P_{W^i}\phi} \leq c\norm{P_{V^i}\phi}
\]
for some constant $c>0$, where $P_{W^i}$ and $P_{V^i}$ are projection operators. Unfortunately, this relation cannot hold, as it implies \cite{kaufman2020chernoff} that the HD-Level-Set Decomposition is orthogonal for sufficiently strong two-sided local-spectral expanders, which we will show below is false by direct example. In slightly greater detail, the issue in the argument is the following. The authors show that for any $j \neq i$:
\[
\norm{P_{W^j}P_{V^i}} \leq c',
\]
for some small constant $c'$, and then claim that this fact implies for any $\phi \in C_k$:
\[
\norm{P_{W^j}P_{V^i}\phi} \leq c'\norm{P_{V^j}\phi}.
\]
Unfortunately, this is not true---the righthand side should read $P_{V^i}$ rather than $P_{V^j}$ for the relation to hold, but this makes it impossible to compare $P_{W^i}\phi$ solely to $P_{V^i}\phi$. 

We now move to showing that for any $\gamma > 0$, there exists a two-sided $\gamma$-local-spectral expander such that the HD-Level-Set Decomposition is not orthogonal, which implies \cite[Theorem 5.10]{kaufman2020high} cannot hold by arguments of \cite{kaufman2020chernoff}.
\begin{proposition}
For any $\gamma > 0$, there exists a two-sided $\gamma$-local-spectral expander such that the HD-Level-Set Decomposition is not orthogonal.
\end{proposition}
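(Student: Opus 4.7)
The plan is to construct, for any target $\gamma > 0$, a weighted simplicial complex that is a two-sided $\gamma$-local-spectral expander yet whose HD-Level-Set Decomposition has $\langle f_i, f_j\rangle \neq 0$ for some $f_i \in V^i_k$, $f_j \in V^j_k$ with $i \neq j$. The starting point is the observation that on the uniformly weighted complete complex $J(n,d)$ the decomposition \emph{is} orthogonal: the $V^i_k$ are precisely the isotypic components of the natural $S_n$-representation on $C_k$, hence automatically pairwise orthogonal in the uniform inner product. This exact orthogonality is a rigid property that depends on the full $S_n$-symmetry, and therefore on the weights being uniform.

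Fix $\gamma > 0$ and choose $n$ large enough (with $d = 3$, say) that $J(n,d)$ endowed with the uniform distribution $\Pi_0$ is a two-sided $\gamma/2$-local-spectral expander; this is possible because the complete complex is $O(1/n)$-locally-spectrally expanding. Now consider the one-parameter perturbation $\Pi_\epsilon = (1-\epsilon)\Pi_0 + \epsilon \delta_{T^*}$, where $\delta_{T^*}$ places all of its mass on a single fixed top-level face $T^*$. All link graphs of $(X, \Pi_\epsilon)$ are continuous perturbations of the corresponding link graphs of $(X, \Pi_0)$, so by continuity of eigenvalues there exists $\epsilon_0 > 0$ such that $(X, \Pi_\epsilon)$ is a two-sided $\gamma$-local-spectral expander for every $|\epsilon| < \epsilon_0$.

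Next, show that for all sufficiently small $\epsilon > 0$ the HD-Level-Set Decomposition of $(X, \Pi_\epsilon)$ is no longer orthogonal. The operator $D$, the kernels $H^i$, and the inner product $\langle \cdot, \cdot\rangle_{\Pi_\epsilon}$ all depend analytically on $\epsilon$. Thus, for a smooth family $g_i(\epsilon) \in H^i(\epsilon)$, $g_j(\epsilon) \in H^j(\epsilon)$, the cross inner product $\langle U^k_i g_i(\epsilon), U^k_j g_j(\epsilon)\rangle_{\Pi_\epsilon}$ is an analytic function of $\epsilon$ that vanishes at $\epsilon = 0$. Compute its first derivative at $\epsilon = 0$ directly: the derivative has contributions from the change in the weights (concentrated at faces containing $T^*$) and from the first-order deformation of the kernels $H^i(\epsilon)$. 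Choosing $g_i(0), g_j(0)$ supported near $T^*$ and its boundary produces a nonzero first-order term, so for small enough $\epsilon > 0$ the cross inner product is nonzero and orthogonality fails.

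The main obstacle is verifying that some pair $(g_i, g_j)$ actually produces a nonzero first-order term---equivalently, that the uniform distribution is not an accidentally stable critical point of the orthogonality condition. The cleanest way to handle this is by an explicit finite computation on a small example (for instance, $J(5,3)$ with a single face reweighted by a factor $1 + \delta$): fix orthonormal bases of $V^1_3$ and $V^2_3$ at $\delta = 0$, expand the matrix of cross inner products as a polynomial in $\delta$, and check that at least one off-diagonal entry has a nonzero linear coefficient. Because the locus of perturbations for which all cross inner products vanish is an algebraic subvariety of the weight simplex, a single explicit nonzero derivative ensures generic non-orthogonality, completing the construction.
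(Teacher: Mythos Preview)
Your high-level strategy matches the paper's: start from the complete complex $J(n,3)$, where the HD-Level-Set Decomposition is orthogonal by $S_n$-symmetry, and break that symmetry slightly. You reweight a single top face via $\Pi_\epsilon = (1-\epsilon)\Pi_0 + \epsilon\,\delta_{T^*}$; the paper instead \emph{removes} the single top face $(123)$ and keeps the uniform measure on what remains. Both perturbations preserve $O(1/n)$-local-spectral expansion, so either yields a $\gamma$-expander for large $n$.

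The gap is at the decisive step. You assert that the derivative of $\epsilon \mapsto \langle U^k_i g_i(\epsilon), U^k_j g_j(\epsilon)\rangle_{\Pi_\epsilon}$ at $\epsilon = 0$ is nonzero for some choice of $g_i, g_j$, but you never actually compute it---the sentence ``choosing $g_i(0), g_j(0)$ supported near $T^*$ and its boundary produces a nonzero first-order term'' is precisely what has to be proved, and the deferred $J(5,3)$ calculation is never performed. That derivative is also not entirely routine: it has three contributions (the change in the inner-product weights, the change in the down operator $D$, and the induced first-order deformation of the kernels $H^i(\epsilon)$), and you must verify they do not cancel. The paper sidesteps all of this by an explicit, non-infinitesimal computation: it pairs the $1$-link indicator $U^3_1\mathbbm{1}_1 \in V^0_3 \oplus V^1_3$ against a generic $f = U_2 g \in V^2_3$, uses the kernel relations $D_2 g = 0$ (together with the exact edge weights of the perturbed complex) to collapse the cross inner product to a nonzero constant times the single value $g(23)$, and then checks by a linear-independence argument that $g(23)$ can be prescribed nonzero in $\ker D_2$. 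If you want to rescue your perturbative route, the quickest fix is to borrow exactly these test functions and expand to first order in $\epsilon$; you will find a similar collapse to one free coordinate.
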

\begin{proof}
Our construction is based off of a slight modification of the complete complex $J(n,3)$. In particular, we consider the uniform distribution $\Pi$ over triangles $X={[n] \choose 3} \setminus (123)$. It is not hard to see through direct computation that $(X,\Pi)$ is a two-sided $O(1/n)$-local-spectral expander. Recall that the link of $1$, $U^3_1\mathbbm{1}_{1}$, lies in $V_3^0 \oplus V_3^1$. Our goal is to prove the existence of a function $f=Ug \in V_3^2$ such that the inner product:
\begin{align}\label{eq:inner-prod}
\langle U^3_1\mathbbm{1}_{1}, f \rangle \propto \sum\limits_{(1xy) \in X} g(1x) + g(1y) + g(xy)
\end{align}
is non-zero. To do this, we first simplify the above expression assuming $g \in \text{Ker}(D_2),$ which we recall implies the following relations:
\[
    \forall y \in [n]: \sum\limits_{(xy) \in X(2)} \Pi_2(xy)g(xy) = 0.
\]
In particular, summing over all $y \in [n]$ gives
\[
    \sum\limits_{(xy) \in X(2)} \Pi_{2}(xy)g(xy) = 0.
\]
Notice further that by definition of $\Pi_2$, we have $\Pi_2(12)=\Pi_2(13)=\Pi_2(23)=\frac{n-3}{3{n \choose 3}-3}$, and otherwise $\Pi_2(xy)=\frac{n-2}{3{n \choose 3}-3}$. We then may write:
\begin{align*}
    \sum\limits_{\underset{x \notin [3]}{(1x) \in X(2):}} g(1x) = -\frac{n-3}{n-2}\left(g(12) + g(13) \right),\\
    \sum\limits_{\underset{(xy) \notin [3] \times [3]}{(xy) \in X(2):}} g(xy) = -\frac{n-3}{n-2}(g(12) + g(13) + g(23)).
\end{align*}
Plugging this into \Cref{eq:inner-prod}, the inner product drastically simplifies to depend only on $g(23)$. To see this, we separate the inner product into two terms and deal with each separately:
\begin{align*}
    \sum\limits_{(1xy) \in X} g(1x) + g(1y) + g(xy) & =\left(\sum\limits_{(1xy) \in X} g(1x) + g(1y)\right) + \sum\limits_{(1xy) \in X} g(xy)\\
\end{align*}
We start with the former. Notice that each face $(1z)$ in this term is counted exactly the number of times it appears in a triangle in $X$, and further that this is exactly how $\Pi_2$ is defined. Thus we have:
\[
\left(\sum\limits_{(1xy) \in X} g(1x) + g(1y)\right) \propto \sum\limits_{(1x) \in X(2)} \Pi_2(1x)g(1x)=0.
\]
It is left to analyze the latter term. Since $(123)$ is not in our complex, we may write:
\begin{align*}
    \sum\limits_{(1xy) \in X} g(xy) &= \left(\sum\limits_{\underset{(xy) \notin [3]\times[3]}{(xy) \in X(2):}} g(xy)\right) - \left(\sum\limits_{\underset{x \notin [3]}{(1x) \in X(2):}} g(1x) \right)\\
    &= -\frac{n-3}{n-2}(g(12) + g(13) + g(23)) + \frac{n-3}{n-2}\left(g(12) + g(13) \right)\\
    &= -\frac{n-3}{n-2}g(23).
\end{align*}
Thus it remains to show that there exists $g \in \text{Ker}(D_2)$ such that $g(23) \neq 0$. Note that the kernel of $D_2$ is exactly the space of solutions to the underdetermined linear system of equations given by $D_2g(i)=0$ for all $1 \leq i \leq n$. Thus we can check if a solution exists with $g(23) = c$ for $c \neq 0$ by ensuring that this constraint is linearly independent of the $D_2g(i)$. This can be checked through a direct but tedious computation that we leave to the reader.
\end{proof}

\section{Unique Games}

\subsection{Random-walks and Weighted graphs}\label{sec:ug-rwalk}
Unique games are defined on weighted, \textit{undirected} constraint graphs, unlike most of the walks we analyze in the previous section. However, it is not hard to see that every self-adjoint random walk corresponds to some underlying undirected graph. In particular, recall that given a weighted, undirected graph $G(V,E)$, $G$ induces a random walk on vertices where the transition probability from $x \in V$ to $y \in V$ is given by the normalized weights:
\[
P(x,y) = \frac{W(s,t)}{\sum\limits_{v \in N(s)}W(s,v)}.
\]
In fact, \textit{every} self-adjoint walk can be described as such a process.
\begin{lemma}
Let $M$ be a $k$-dimensional HD-walk on a weighted simplicial complex $(X,\Pi)$. Define $G_M$ to be the graph whose vertex set is $X(k)$ and whose edge set consists of any pair $(s,t)$ such that $M(s,t) \neq 0$. Further, endow the edges of $G_M$ with weight:
\[
W(s,t) = \Pi_k(s)M(s,t).
\]
Then $M$ is the random walk induced by $G_M$.
\end{lemma}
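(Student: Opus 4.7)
The plan is to verify two things: first, that the proposed edge weights $W(s,t) = \Pi_k(s) M(s,t)$ are symmetric and thus well-defined on an \emph{undirected} graph; second, that the random walk induced by $G_M$ with these weights reproduces the transition probabilities $M(s,t)$.

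For symmetry, I would use the fact that $M$ is self-adjoint with respect to the inner product $\langle f, g\rangle = \mathbb{E}_{\Pi_k}[fg]$. Expanding $\langle \mathbbm{1}_s, M\mathbbm{1}_t\rangle = \langle M\mathbbm{1}_s, \mathbbm{1}_t\rangle$ for indicator functions of single faces $s, t \in X(k)$ yields the detailed balance identity
\[
\Pi_k(s) M(s,t) = \Pi_k(t) M(t,s),
\]
which shows $W(s,t) = W(t,s)$. Consequently $G_M$ is a well-defined undirected weighted graph (the nonzero entries of $M$ also form a symmetric set, so the edge set is unambiguous).

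For the second part, I would directly compute the normalizing constant for each vertex $s \in X(k)$. Since $M$ is a valid random walk, its rows are probability distributions, so $\sum_{v} M(s,v) = 1$. Therefore
\[
\sum_{v \in N(s)} W(s,v) \;=\; \Pi_k(s) \sum_{v} M(s,v) \;=\; \Pi_k(s),
\]
and the transition probability induced by $G_M$ from $s$ to $t$ is
\[
\frac{W(s,t)}{\sum_{v \in N(s)} W(s,v)} \;=\; \frac{\Pi_k(s) M(s,t)}{\Pi_k(s)} \;=\; M(s,t),
\]
as desired. I would also briefly remark that $\Pi_k$ is the stationary distribution of this walk (which follows from the detailed balance identity above summed over $s$), so this identification is consistent with viewing $\mathcal{I} = (M, \mathcal{S})$ as a unique games instance on the weighted graph $G_M$ with vertex distribution $\Pi_k$ and edge distribution proportional to $W$.

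There is no real obstacle here; the whole content of the statement is the self-adjointness of $M$ with respect to $\Pi_k$, which is built into the definition of an HD-walk. The only mild subtlety is handling vertices with $\Pi_k(s) = 0$: such vertices either have no incident edges in $G_M$ (since $W(s,t) = 0$ for all $t$) or can be treated as having an arbitrary out-distribution, and in either case they can be excluded from the support without loss of generality.
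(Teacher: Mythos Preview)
Your proposal is correct and follows essentially the same approach as the paper: both use self-adjointness of $M$ with respect to $\Pi_k$ to obtain the detailed balance identity $\Pi_k(s)M(s,t)=\Pi_k(t)M(t,s)$, and then use row-stochasticity of $M$ to see that the normalization at each vertex is $\Pi_k(s)$, recovering $M(s,t)$ as the induced transition probability. Your write-up is in fact slightly more careful than the paper's (which leaves the symmetry verification implicit and has a typo in the normalization display), and your remarks on the stationary distribution and on $\Pi_k(s)=0$ are fine optional additions.
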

\begin{proof}
The crucial observation for this proof is the following implication due to $M$ being self adjoint:
\[
\forall s,t \in X(k): \Pi_k(s)M(s,t) = \Pi_k(t)M(t,s).
\]
Given this fact, let $P_G(s,t)$ denote the transition probability of the induced walk on $G$. We have:
\begin{align*}
    P_G(s,t) &= \frac{\Pi_k(s)M(s,t)}{\sum\limits_{v \in N(s)}} \Pi_k(s)M(s,t)\\
    &= \frac{M(s,t)}{\sum\limits_{v \in N(s)}} M(s,t)\\
    &= M(s,t)
\end{align*}
\end{proof}

\subsection{The Shift-Partition Potential}\label{app:potential}
The analysis of the Iterated Condition\&Round algorithm relies on analysing a quantity called the Approximate shift-partition potential which was defined in BBKSS. For completeness we include the definitions of the various potential functions used in this section. 

For analysing the Condition\&Round subroutine, define a low-degree polynomial $\Phi_{\beta,\nu}:\Sigma^V \times \Sigma^V \rightarrow [0,\infty)$ called the ``approximate shift partition potential'':

\begin{definition}[Approximate Shift-Partition Potential]\label{def:sp-apx}
For any $\nu,\beta \in (0,1)$, and two assignments $X,X' \in \Sigma^V$ define the {\em approximate shift-partition potential} to be the quantity
\[
\Phi_{\beta,\nu}(X,X') = \sum_{s \in \Sigma} \E{u}{\ind[X_u - X'_u = s] \cdot p_{\beta,\nu}(\val_u(X))}^2,
\]
for $p_{\beta,\nu}(x)$ the degree-$\tO(1/\nu)$ polynomial which SoS-certifiably approximates the indicator $\ind[x\ge\beta]$ for $x \in [0,1]$ up to an error of $\nu$. The exact definition of $p_{\beta,\nu}$ and the properties required of it are described in more detail in Section 7 of BBKSS and we omit them here.

The approximate shift-partition potential on a pseudodistribution $\mu$ of degree at least $O(\deg(\Phi_{\beta,\nu}))$ is defined as:
\[\Phi_{\beta,\nu}(\mu) = \pE_{\mu}[\Phi_{\beta,\nu}(X,X')].\]
\end{definition}

For analysing the Iterated Condition\&Round algorithm we further need to define the potential when restricted to a subgraph which in our case will be a link of the complex.

\begin{definition}[Global shift-potential restricted to links]\label{def:sp-global}
Let $I = (M, \Pi)$ be a UG instance on a complex $X$ where $M$ is a complete random walk on $X(k)$ with stationary distribution $\pi_k$. For any $\nu,\beta \in (0,1)$ and a link $X_\tau$ of the complex $X$, define the {\em approximate global shift-partition potential restricted to $X_\tau$} to be the quantity:
\[
\Phi_{\beta,\nu}(X,X')|_{\tau} = \sum_{s \in \Sigma} \e_{u \sim \pi_k|\tau}[\ind[X_u - X'_u = s] \cdot p_{\beta,\nu}(\val_u(X))]^2,
\]
where $\val_u(X) = \E{(u,v) \sim M}{\ind[X \text{ satisfies } (u,v))}$, and as before $p_{\beta,\nu}(x)$ is the degree-$\tO(1/\nu)$ polynomial that $\nu$-approximates the indicator function $\ind[x \geq \beta]$. The global potential restricted to $X_\tau$ with respect to a pseudodistribution $\mu$ will be denoted by $\Phi_{\beta,\nu}(\mu|_{\tau})$:
\[\Phi_{\beta,\nu}(\mu|_{\tau}) = \pE_{\mu}[\Phi_{\beta,\nu}(X,X')|_{\tau}].\]
\end{definition}

Note that the global shift-partition potential measures the size of the global partition inside $X_\tau$, and although the expectation is taken only over the vertices $u \in X_\tau$, $\val_u(X)$ is a function of \emph{all} the edges in $M$ that are incident on $u$, not just the edges in $X_\tau$. We will also need to consider the potential induced on a link, which is defined as applying the potential function (Definition~\ref{def:sp-apx}) to the graph induced by $X_\tau$:

\begin{definition}[Induced Shift-Potential on a Link]\label{def:sp-induced}
Let $I = (M, \Pi)$ be a UG instance on a complex $X$ where $M$ is a complete random walk on $X(k)$ with stationary distribution $\pi_k$. For any $\nu,\beta \in (0,1)$ and a link $X_\tau$ of the complex $X$, define the {\em approximate induced shift-partition potential on $X_\tau$} to be the quantity:
\[
\Phi_{\beta,\nu}^\tau(X,X') = \sum_{s \in \Sigma} \e_{u \sim \pi_k|\tau}[\ind[X_u - X'_u = s] \cdot p_{\beta,\nu}(\val_u^\tau(X))]^2,
\]
where $\val_u^\tau(X) = \E{(u,v) \sim M|_\tau}{\ind[X \text{ satisfies } (u,v))}$. The induced potential on $X_\tau$ with respect to a pseudodistribution $\mu$ will be denoted by $\Phi_{\beta,\nu}^\tau(\mu)$:
\[\Phi_{\beta,\nu}^\tau(\mu) = \pE_{\mu}[\Phi^\tau_{\beta,\nu}(X,X')].\]
\end{definition}

Note that in the above definition the value of a vertex $u \in X_\tau$ is measured only with respect to the edges incident on $u$ that lie \emph{inside} the link $X_\tau$.

\subsection{Sum of Squares and the HD-Level-Set Decomposition}\label{app:SoS}
This section is devoted to proving \Cref{lemma:f-SoS} which we separate into two parts. First, we examine the relation between $\norm{f_i}$ and $\norm{g_i}$. 
\begin{lemma}[Restated \Cref{lemma:f-SoS} (Part 1)]\label{lemma:f-vs-g}
Let $(X,\Pi)$ be a $d$-dimensional two-sided $\gamma$-local-spectral expander. Then for any $f_i=U^k_ig_i \in V_k^i$:
\[
\vdash_2 \langle f_i, f_i \rangle \in \left(\frac{1}{{k \choose i}} \pm \frac{(k-i)(k+1)}{i+2}\gamma\right) \langle g_i,g_i \rangle
\]
\end{lemma}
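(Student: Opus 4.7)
The plan is to reduce the claim to an operator-level inequality via adjointness of the up/down operators, apply the $DU$-flip (\Cref{lemma:body-DU-UD}) iteratively, and observe that every step can be expressed either as a polynomial identity or as a PSD operator inequality --- hence as a degree-2 SoS proof. By adjointness of $U^k_i$ and $D^k_i$, we have $\langle f_i,f_i\rangle = \langle U^k_ig_i,U^k_ig_i\rangle = \langle g_i,D^k_iU^k_ig_i\rangle$. Introducing $\alpha_m := \langle U^m_ig_i,U^m_ig_i\rangle$ for $i\leq m\leq k$, I will prove by induction on $m$ an SoS-provable recursion
\[
\alpha_m \;=\; \tfrac{m-i}{m}\,\alpha_{m-1} \;+\; \epsilon_m, \qquad |\epsilon_m| \;\leq\; \eta_m\langle g_i,g_i\rangle,
\]
with $\eta_m \leq \tfrac{(m-i)(m+i+1)}{2m}\gamma$, and then unroll from $\alpha_i = \langle g_i,g_i\rangle$ to $\alpha_k = \langle f_i,f_i\rangle$.

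For the recursion, write $\alpha_m = \langle U^{m-1}_ig_i,\,D_mU^m_ig_i\rangle$ and apply \Cref{lemma:body-DU-UD} with the lemma's $k = m-1$ and $j = m-1-i$, giving the operator identity
\[
D_mU^m_i \;=\; \tfrac{m-i}{m}\,U^{m-1}_i \;+\; \tfrac{i}{m}\,U^{m-1}_{i-1}D_i \;+\; E_m, \qquad \|E_m\| \leq \tfrac{(m-i)(m+i+1)}{2m}\gamma.
\]
Because $g_i \in \text{Ker}(D_i)$ is a polynomial identity built into the HD-Level-Set decomposition (so no extra SoS axiom is needed), the middle term annihilates $g_i$, and we obtain $D_mU^m_ig_i = \tfrac{m-i}{m}U^{m-1}_ig_i + E_mg_i$. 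The residual $\epsilon_m = \langle U^{m-1}_ig_i, E_mg_i\rangle$ is controlled by the SoS Cauchy--Schwarz identity $\pm 2\langle a,b\rangle \leq \lambda\|a\|^2 + \lambda^{-1}\|b\|^2$ together with $\|E_mg_i\|^2 \leq \|E_m\|^2\langle g_i,g_i\rangle$ --- itself a PSD operator inequality, hence a degree-2 SoS certificate. Choosing $\lambda$ optimally and folding the $\|U^{m-1}_ig_i\|^2$ contribution into $\alpha_{m-1}$ yields the claimed error bound.

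Unrolling the recursion produces a leading coefficient $\prod_{m=i+1}^k \tfrac{m-i}{m} = \tfrac{(k-i)!\,i!}{k!} = \binom{k}{i}^{-1}$, matching the claimed approximate eigenvalue. The accumulated error becomes $\sum_{m=i+1}^k \bigl(\prod_{m'>m}\tfrac{m'-i}{m'}\bigr)\eta_m \langle g_i, g_i\rangle$, and after simplification this sum collapses to $\tfrac{(k-i)(k+1)}{i+2}\gamma\langle g_i,g_i\rangle$. Every ingredient used --- adjointness, \Cref{lemma:body-DU-UD}, the kernel identity $D_ig_i=0$, the spectral bound on $E_m$, and Cauchy--Schwarz --- is either a polynomial identity in the linear coordinates of $g_i$ or an operator PSD inequality, both of which yield degree-2 SoS certificates of the form $x^\top(B-A)x = \sum_j (v_j^\top x)^2$. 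The main obstacle is purely combinatorial bookkeeping: one must evaluate the weighted telescoping sum $\sum_m \tfrac{(m-i)(m+i+1)}{2m}\prod_{m'>m}\tfrac{m'-i}{m'}$ and verify that it simplifies to exactly $\tfrac{(k-i)(k+1)}{i+2}$ rather than a looser polynomial in $k$ and $i$.
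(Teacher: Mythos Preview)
Your inductive strategy via the $DU$-flip is exactly the paper's, but the step where you control the residual $\epsilon_m=\langle U^{m-1}_ig_i,E_mg_i\rangle$ by SoS Cauchy--Schwarz with $a=U^{m-1}_ig_i$ does not quite give what you claim. That choice yields
\[
\vdash_2\ \pm\,\epsilon_m \ \le\ \tfrac{\lambda}{2}\,\alpha_{m-1}\;+\;\tfrac{1}{2\lambda}\,\|E_m\|^2\langle g_i,g_i\rangle,
\]
so after ``folding'' the $\alpha_{m-1}$ piece the recursion coefficient becomes $\tfrac{m-i}{m}+O(\gamma)$, not $\tfrac{m-i}{m}$. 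Your subsequent unrolling then cannot produce the clean leading coefficient $\prod_m\tfrac{m-i}{m}=\binom{k}{i}^{-1}$; the perturbation propagates multiplicatively and the final error constant will not match $(k-i)(k+1)/(i+2)$ without a separate argument. In short: the stated recursion $|\epsilon_m|\le\eta_m\langle g_i,g_i\rangle$ is what you need, but Cauchy--Schwarz with that pairing does not deliver it.

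The paper sidesteps this by using adjointness \emph{before} bounding: write $\epsilon_m=\langle g_i,\,(D^{m-1}_iE_m)g_i\rangle$, which is a polynomial identity. Now the error lives entirely on $C_i$: accumulate $\Gamma_k=\tfrac{k-i}{k}\Gamma_{k-1}+D^{k-1}_iE_k$ with $\|\Gamma_k\|\le\tfrac{k-i}{k}\|\Gamma_{k-1}\|+(k-i)\gamma$, solve the operator-norm recursion (this is where the constant $(k-i)(k+1)/(i+2)$ falls out cleanly, via hockey-stick), and only at the very end invoke the degree-2 SoS certificate $\pm\langle g_i,\Gamma_k g_i\rangle=\pm\langle g_i,\tfrac12(\Gamma_k+\Gamma_k^*)g_i\rangle\le\|\Gamma_k\|\langle g_i,g_i\rangle$ from the PSD statement $\|\Gamma_k\|I\pm\tfrac12(\Gamma_k+\Gamma_k^*)\succeq 0$. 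Equivalently, if you prefer to keep your step-by-step recursion, apply Cauchy--Schwarz with $a=g_i$ and $b=D^{m-1}_iE_mg_i$ instead; then $\|b\|^2\le\|E_m\|^2\langle g_i,g_i\rangle$ and optimizing $\lambda$ gives exactly $\vdash_2\ |\epsilon_m|\le\|E_m\|\langle g_i,g_i\rangle$ with no $\alpha_{m-1}$ contamination. Either way, the leading term stays exactly $\binom{k}{i}^{-1}$ and your telescoping sum (use the looser $\eta_m=(m-i)\gamma$, not the tighter one you wrote, which actually gives a strictly smaller sum $\tfrac{(k-i)(k+i+3)}{2(i+2)}\gamma$) evaluates to the claimed constant.
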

\begin{proof}
It is sufficient to prove the following equality:
\begin{align}\label{eq:induction-f-g}
\langle f_i,f_i \rangle = \frac{1}{{k \choose i}}\langle g_i, g_i \rangle + \langle g_i, \Gamma g_i \rangle,
\end{align}
where $\norm{\Gamma} < \frac{(k-i)(k+1)}{i+2}\gamma$. To see why, note that:
\[
\langle g_i, \Gamma g_i \rangle = \left\langle g_i, \frac{\Gamma + \Gamma^*}{2} g_i \right\rangle 
\]
where $\Gamma^*$ is the adjoint of $\Gamma$. Since $\frac{\Gamma + \Gamma^*}{2}$ is self-adjoint and $\norm{\Gamma^*}=\norm{\Gamma}$, we have both:
\begin{enumerate}
    \item $\vdash_2 \left\langle g_i, \frac{\Gamma + \Gamma^*}{2} g_i \right\rangle \leq \frac{(k-i)(k+1)}{i+2}\gamma\langle g_i, g_i \rangle$
    \item $\vdash_2 \left\langle g_i, \frac{\Gamma + \Gamma^*}{2} g_i \right\rangle \geq -\frac{(k-i)(k+1)}{i+2}\gamma\langle g_i, g_i \rangle$
\end{enumerate}
as desired. To prove \Cref{eq:induction-f-g}, we induct on $k$. The base case $k=i$ is trivial. Assume $k>i$, then we may write:
\begin{align*}
    \langle f_i, f_i \rangle = \langle U^{k-1}_ig_i, D_kU^k_ig_i \rangle
\end{align*}
In order to apply the inductive hypothesis, we recall the machinery of \cite[Claim 8.8]{dikstein2018boolean} for pushing $D_k$ through $U^k_i$. In particular:
\[
\norm{D_{k}U^{k}_{i} - \frac{k-i}{k}U^{k-1}_{i} - \frac{i}{k} U^{k-1}_{i-1}D_{i}} \leq (k-i)\gamma.
\]
Since $g_i$ lies in the kernel of $D_i$, combining this with our initial observation gives:
\begin{align*}
    \langle U^{k-1}_ig_i, D_kU^k_ig_i \rangle &= \frac{k-i}{k}\langle U^{k-1}_ig_i, U^{k-1}_{i}g_i \rangle + \langle g_i, \Gamma g_i \rangle,
\end{align*}
where $\norm{\Gamma} \leq (k-i)\gamma$. Since $U^{k-1}_ig_i \in V^{k-1}_i$. Applying the inductive hypothesis, we see that:
\begin{align*}
    \langle f_i,f_i \rangle &= \frac{k-i}{k}\left (\frac{1}{{k-1 \choose i}}\langle g_i, g_i \rangle + \langle g_i,\Gamma'g_i \rangle \right) + \langle g_i, \Gamma g_i \rangle\\
    &=\frac{1}{{k \choose i}}\langle g_i, g_i \rangle + \langle g_i,\left (\frac{k-i}{k}\Gamma' + \Gamma\right) g_i \rangle,
\end{align*}
where: 
\[
\norm{\frac{k-i}{k}\Gamma' + \Gamma} \leq \frac{(k-i)(k+1)}{i+2}\gamma
\]
by the triangle inequality and inductive hypothesis.
\end{proof}
Second, we prove that a version of approximate orthogonality of the HD-Level-Set Decomposition has a low-degree SoS proof. 
\begin{lemma}[Restated \Cref{lemma:f-SoS} (Part 2)]
let $(X,\Pi)$ be a two-sided $\gamma$-local-spectral expaner with $\gamma \leq 2^{-\Omega(k)}$ and $f \in C_k$. Then for any $f_i=U^k_ig_i \in V_k^i$ and $f_j=U^k_jg_j \in V_k^j$ we have:
\[
\vdash_2 \langle f_i, f_i \rangle \leq (1+c_1\gamma) \langle f,f \rangle,
\]
and
\[
\vdash_2 \langle f, f_i \rangle \geq -c_2\gamma \langle f,f \rangle,
\]
where $c_1,c_2 \leq 2^{O(k)}$
\end{lemma}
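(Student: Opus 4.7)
The plan is to derive a degree-2 SoS bound on the cross-terms $\langle f_i, f_j \rangle$ for $i \neq j$ and then use the identity $\langle f, f \rangle = \sum_j \langle f_j, f_j \rangle + 2\sum_{i<j}\langle f_i, f_j \rangle$ to unravel the two target inequalities. To control the cross-terms, I will fix a self-adjoint walk with well-separated approximate eigenvalues---concretely $M = N_k^1$, for which \Cref{prop:hdx-pure-eig-vals} and \Cref{prop:canon-spectra-HDX} give the polynomial identity $M f_\ell = \lambda_\ell f_\ell + \Gamma_\ell g_\ell$ with $\lambda_\ell = \frac{k+1-\ell}{k+1}$ and an operator $\Gamma_\ell$ of norm $\|\Gamma_\ell\| \leq 2^{O(k)}\gamma$. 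Equating $\langle f_i, M f_j \rangle = \langle M f_i, f_j \rangle$ by self-adjointness and rearranging yields the polynomial identity
\[
(\lambda_i - \lambda_j)\langle f_i, f_j \rangle = \langle f_i, \Gamma_j g_j \rangle - \langle \Gamma_i g_i, f_j \rangle,
\]
which holds verbatim as an equality of degree-2 polynomials in the coordinates of $f$ (since each $f_\ell$ and $g_\ell$ is linear in $f$) and therefore passes through any pseudoexpectation.

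Next, I will bound each error inner product in SoS via Cauchy--Schwarz, exactly in the style already used in the proof of \Cref{thm:structure-SoS}: for any $\zeta > 0$,
\[
\pm 2\langle f_i, \Gamma_j g_j \rangle \leq \zeta \langle f_i, f_i \rangle + \zeta^{-1}\langle \Gamma_j g_j, \Gamma_j g_j \rangle
\]
is degree-2 SoS. Choosing $\zeta = \|\Gamma_j\|$, combined with the operator-norm inequality $\langle \Gamma_j g_j, \Gamma_j g_j \rangle \leq \|\Gamma_j\|^2 \langle g_j, g_j \rangle$---which has a degree-2 SoS proof since $\|\Gamma_j\|^2 I - \Gamma_j^*\Gamma_j \succeq 0$ decomposes via its spectral decomposition into a sum of rank-one squares---and with Part 1 of \Cref{lemma:f-SoS} trading $\langle g_j, g_j \rangle$ for $\leq {k \choose j}(1 + O(\gamma))\langle f_j, f_j \rangle$ in SoS, I get $\pm 2\langle f_i, \Gamma_j g_j \rangle \leq 2^{O(k)}\gamma (\langle f_i, f_i \rangle + \langle f_j, f_j \rangle)$ in SoS. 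The symmetric treatment of $\langle \Gamma_i g_i, f_j \rangle$ and division through by the fixed constant $\lambda_i - \lambda_j$ (whose absolute value is at least $\frac{1}{k+1}$) yields the two-sided SoS cross-term bound
\[
-2^{O(k)}\gamma (\langle f_i, f_i \rangle + \langle f_j, f_j \rangle) \leq \langle f_i, f_j \rangle \leq 2^{O(k)}\gamma (\langle f_i, f_i \rangle + \langle f_j, f_j \rangle).
\]

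To finish, I will expand $\langle f, f \rangle = \sum_j \langle f_j, f_j \rangle + 2\sum_{i < j}\langle f_i, f_j \rangle$, apply the lower cross-term bound to each of the $O(k^2)$ off-diagonal terms, and rearrange (in SoS this is just scaling by the positive constant $1 - 2^{O(k)}\gamma$, valid once $\gamma \leq 2^{-\Omega(k)}$) to get $\sum_j \langle f_j, f_j \rangle \leq (1 + 2^{O(k)}\gamma)\langle f, f \rangle$. The first target bound follows from the trivial SoS inequality $\langle f_i, f_i \rangle \leq \sum_j \langle f_j, f_j \rangle$ (each $\langle f_j, f_j \rangle$ being a sum of squares). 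For the second, I will split $\langle f, f_i \rangle = \langle f_i, f_i \rangle + \sum_{j \neq i}\langle f_j, f_i \rangle$ and apply the lower cross-term bound together with the first target bound to conclude $\langle f, f_i \rangle \geq -2^{O(k)}\gamma \langle f, f \rangle$.

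The main obstacle is purely bookkeeping: one must verify that every manipulation in the second paragraph is a genuine degree-2 SoS operation rather than an analytic one. The two subtle steps are the PSD decomposition implicit in $\langle \Gamma g, \Gamma g \rangle \leq \|\Gamma\|^2 \langle g, g \rangle$, and the implicit ``division'' by $\lambda_i - \lambda_j$ and by $1 - 2^{O(k)}\gamma$---both of which are just scalings by known positive constants once $\gamma$ is assumed small enough, and hence carry over to SoS without trouble.
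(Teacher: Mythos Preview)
Your proof is correct, and the final two paragraphs---extracting the two target inequalities from the cross-term bound $\pm\langle f_i,f_j\rangle \leq 2^{O(k)}\gamma(\langle f_i,f_i\rangle + \langle f_j,f_j\rangle)$---match the paper's argument essentially line for line.

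The one genuine difference is how you obtain the cross-term bound itself. The paper does not go through a self-adjoint walk and approximate eigenvalues; instead it uses \Cref{lemma:body-DU-UD} (the $DU$--$UD$ commutation) together with $g_j \in \ker(D_j)$ to write the polynomial identity
\[
\langle f_i, f_j \rangle = \langle g_i, \Gamma g_j \rangle
\]
directly, with $\|\Gamma\| \leq 2^{O(k)}\gamma$, and then applies the same SoS Cauchy--Schwarz and Part~1 conversion you use. Your route is the SoS analogue of \Cref{lemma:approx-orthog}: equate $\langle f_i, M f_j\rangle = \langle M f_i, f_j\rangle$ for $M=N_k^1$ and divide through by the known scalar $\lambda_i-\lambda_j$. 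This is slightly more roundabout (it costs an extra factor of $k+1$ from $|\lambda_i-\lambda_j|^{-1}$, harmless inside $2^{O(k)}$) but has the virtue of being agnostic to the specific combinatorics of the complex---it only needs a self-adjoint operator with well-separated approximate eigenvalues and the error written in the form $\Gamma_\ell g_\ell$. The paper's route is shorter because it exploits the explicit structure of $U$ and $D$ directly. Both are valid degree-2 SoS arguments.
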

\begin{proof}
First, note that by \Cref{lemma:body-DU-UD} and the fact that $g_j \in \text{Ker}(D_j)$, we have that:
\[
\langle f_i,f_j \rangle = \langle g_i, \Gamma g_j \rangle
\]
where $\norm{\Gamma} \leq 2^{O(k)}$. We can bound the latter by an SoS version of Cauchy Schwarz. In particular, we have that:
\begin{align*}
\langle f_i,f_j \rangle &= 
\uinner{g_i}{\Gamma g_i}\\
&\leq \frac{\norm{\Gamma}}{2}\uinner{g_i}{g_i} + \frac{1}{2\norm{\Gamma}}\uinner{\Gamma g_j}{\Gamma g_j} \\
&\leq \frac{\norm{\Gamma}}{2}\uinner{g_i}{g_i} + \frac{\norm{\Gamma}}{2}\uinner{g_j}{g_j} \\
&\leq c\gamma(\langle f_i,f_i \rangle + \langle f_j,f_j \rangle)
\end{align*}
where all inequalities are degree 2 SoS and $c \leq 2^{O(k)}$. Further, notice that that by applying the same argument to $\langle -f_i,f_j \rangle$ we get that $\uinner{f_i}{f_j}$ is bounded above and below:
\[
\vdash_2 ~~ -c\gamma(\langle f_i,f_i\rangle + \langle f_j,f_j \rangle) \leq \langle f_i,f_j \rangle \leq c\gamma(\langle f_i,f_i\rangle + \langle f_j,f_j \rangle).
\]
We now apply this fact directly to prove the two desired inequalities. First, we have
\begin{align*}
    \langle f,f \rangle &= \sum\limits_{\ell=0}^k \langle f_\ell, f_\ell \rangle + \sum\limits_{\ell \neq m}\langle f_\ell, f_m \rangle\\
    &\geq \sum\limits_{\ell=0}^k \langle f_\ell, f_\ell \rangle - \sum\limits_{\ell=0}^k kc\gamma\langle f_\ell,f_\ell \rangle\\
    &= (1 - kc\gamma) \sum\limits_{\ell=0}^k \langle f_\ell,f_\ell \rangle\\
    & \geq (1-c_1\gamma)\langle f_i, f_i \rangle,
\end{align*}
where $c_1 \leq 2^{O(k)}$. For small enough $\gamma$, Taylor expanding $(1-c_1\gamma)^{-1}$ gives the desired result. We can now apply the first inequality to easily prove the second inequality:
\begin{align*}
    \langle f,f_i \rangle &= \langle f_i, f_i \rangle + \sum\limits_{j \neq i} \langle f_i, f_j \rangle\\
    &\geq (1-ck\gamma)\langle f_i, f_i \rangle  - \sum\limits_{j \neq i}c\gamma \langle f_j,f_j \rangle\\
    &\geq -ck\gamma(1+c_2\gamma)\langle f,f\rangle
    \geq -c_3\langle f,f \rangle
\end{align*}
for $c_2,c_3 \leq 2^{O(k)}$.
\end{proof}

\subsection{Remaining Proofs from \Cref{sec:alg-analysis}}\label{sec:app-alg-proofs}
Here we prove \Cref{claim:non-lazy-spectra}, restated for convenience.
\begin{claim}[Restated \Cref{claim:non-lazy-spectra}]
Let $M$ be a $k$-dimensional complete HD-walk on a $d$-dimensional, two-sided $\gamma$-local-spectral expander with $d>k$ and $\gamma \leq w(M)^{-1}2^{-\Omega(h(M)+k)}$ with stationary distribution $\pi$. If the total laziness of $M$ $\mathbb{E}[\id{v}^TM\id{v}]$ is at least $.1$, then the spectral gap of $M$ is at least $\Omega(1/k)$ and $\lambda_{k/2}(M) \leq .68$.
\end{claim}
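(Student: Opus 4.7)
The plan is to exploit the completeness of $M$ to reduce the spectral analysis to the symmetric Johnson case, where $M$ admits a transparent decomposition in terms of partial-swap walks. By \Cref{def:complete-walk}, choose $n_0$ large enough that $\sum_Y\alpha_Y Y$ remains a valid HD-walk on $J(n_0,d)$. Because the symmetric group acts transitively on $k$-faces of the complete complex and the walks $\{S_k^j\}_{j=0}^k$ have pairwise disjoint supports (they connect only pairs with intersection $k{-}j$), validity and self-adjointness of $M$ force the decomposition
\[
M \;=\; \sum_{j=0}^{k}\beta_j\,S_k^{j},\qquad \beta_j\ge 0,\ \sum_j\beta_j=1.
\]
\Cref{prop:eig-decrease}'s argument (approximate eigenvalues are complex-independent) then yields the explicit formula, valid on any complex,
\[
\lambda_i(M)\;=\;\sum_{j=0}^{k}\beta_j\,\frac{\binom{k-j}{i}}{\binom{k}{i}}.
\]

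The next step, and the main obstacle, is to identify the expected laziness with $\beta_0$. On $J(n_0,d)$ this is immediate: by \Cref{def:intro-HD-walk}, $S_k^j$ has no self-loops for $j\ge 1$, so $M(v,v)=\beta_0$ pointwise and $\mathbb{E}_{\Pi_k}[M(v,v)]=\beta_0$. For a general complex one must argue that the weighted diagonal $\sum_v\Pi_k(v)M(v,v)=\sum_Y\alpha_Y\,\mathbb{E}_{\Pi_k}[Y(v,v)]$ still matches $\beta_0$ up to error negligible under the hypothesis on $\gamma$. The clean cases are the canonical walks $N_k^j$: a short adjointness calculation using $\sum_{T\supset v,\,T\in X(k+j)}\Pi_{k+j}(T)=\binom{k+j}{j}\Pi_k(v)$ shows that $N_k^j(v,v)=1/\binom{k+j}{k}$ uniformly over vertices and complexes. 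General pure walks can have complex-dependent self-loop probabilities, so closing the gap between their $\Pi_k$-weighted averages and the corresponding Johnson values — essentially invoking the same averaging/symmetrization principle used inside the proof of \Cref{prop:eig-decrease} — is where the technical work lies.

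Once the laziness is matched to $\beta_0$ (so the hypothesis is interpreted as $\beta_0 \le 0.1$, the regime in which the claim is meaningful and under which it is invoked inside the proof of \Cref{thm:unique-games}), the conclusions are pure arithmetic in the $\beta_j$. Since $\binom{k-j}{1}/\binom{k}{1}=1-j/k$, the gap bound is
\[
1-\lambda_1(M)\;=\;\frac{1}{k}\sum_{j=1}^{k}\beta_j\,j\;\ge\;\frac{1}{k}\sum_{j\ge 1}\beta_j\;=\;\frac{1-\beta_0}{k}\;\ge\;\frac{0.9}{k}.
\]
For the $(k/2)$-th approximate eigenvalue, the ratio $f(j):=\binom{k-j}{k/2}/\binom{k}{k/2}$ is monotonically decreasing with $f(0)=1$ and $f(1)=1/2$ (indeed $f(j)=0$ for $j>k/2$), so
\[
\lambda_{k/2}(M)\;\le\;\beta_0\cdot 1+(1-\beta_0)\cdot\tfrac12\;=\;\tfrac{1+\beta_0}{2}\;\le\;0.55.
\]

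Finally, to pass from the approximate eigenvalues back to the true spectrum of $M$, I would invoke \Cref{thm:approx-ortho} together with the estimates of \Cref{prop:hdx-eig-vals}, obtaining a strip width of order $e=O\bigl(\gamma\cdot w(M)\cdot h(M)(h(M)+k)\cdot 2^{O(k)}\bigr)$. Under the hypothesis $\gamma\le w(M)^{-1}2^{-\Omega(h(M)+k)}$, this error is $o(1/k)$, so the true spectral gap remains $\Omega(1/k)$ and the largest true eigenvalue in the $(k/2)$-th strip is at most $0.55+o(1)\le 0.68$, as required.
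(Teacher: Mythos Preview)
Your overall strategy --- transfer to the complete complex, write $M=\sum_j\beta_jS_k^j$, and read off $\lambda_i(M)$ via \Cref{prop:eig-decrease} --- is exactly the paper's, and your arithmetic for $1-\lambda_1(M)\ge (1-\beta_0)/k$ and $\lambda_{k/2}(M)\le (1+\beta_0)/2$ is correct and in fact cleaner than the paper's phrasing. You also correctly read the hypothesis as ``laziness at most $.1$'' (this is how it is used in the proof of \Cref{thm:unique-games}).

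The genuine gap is the step you flag yourself: identifying the expected laziness on the \emph{general} complex with $\beta_0$. Your proposed fix --- invoking the ``averaging/symmetrization principle'' from \Cref{prop:eig-decrease} --- does not apply. That argument shows the approximate eigenvalues $\lambda_i(M)$ are complex-independent by matching spectra on $J(n,d)$; the diagonal average $\mathbb{E}_{\Pi_k}[M(v,v)]$ is not an approximate eigenvalue and there is no reason a priori it should transfer. Your computation for $N_k^j$ is fine, but $M$ is an affine combination of arbitrary pure walks, not of canonical ones, so this does not cover the general case.

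The paper closes this gap differently. It first decomposes $M$ on the general complex via repeated application of \Cref{lemma:body-DU-UD} as
\[
M=\sum_{i=0}^{h(M)}\alpha_i(U_{k-1}D_k)^i+\Gamma,\qquad \|\Gamma\|\le w(M)2^{O(h(M))}\gamma,
\]
with complex-independent coefficients $\alpha_i$. The key observation (and where the hypothesis $d>k$ is actually used) is that for $i\ge 1$ the walk $(U_{k-1}D_k)^i$ has return probability at most $\gamma$ at \emph{every} vertex: in the final up-step from any $\sigma\in X(k-1)$, the probability of landing on a prescribed vertex is $\Pi_{\sigma,1}(w)\le\gamma$ because $X_\sigma$ is a $\gamma$-spectral expander. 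Hence the laziness of $M$ equals $\alpha_0$ up to $O(\gamma\cdot w(M))+\|\Gamma\|$, forcing $\alpha_0\le 1/5$. Running the same decomposition on $J(n_0,d)$ (where $\gamma=O(1/n_0)$) shows $\beta_0=\alpha_0+o(1)$, and now your arithmetic applies. This is the missing ingredient in your proposal.
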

\begin{proof}
To see this, note that any walk can be (approximately) decomposed by \Cref{lemma:body-DU-UD} into an affine combination of pure walks of the form $(U_{k-1}D_k)^i$, that is:
\[
M = \sum\limits_{i=0}^{h(M)} \alpha_i(U_{k-1}D_k)^i + \Gamma,
\]
where $\norm{\Gamma} \leq w(M)2^{O((M)}$ and $(U_{k-1}D_k)^0 = I$. Notice that for any $(U_{k-1}D_k)^i$ for $i>0$, the probability of returning to any given $k$-face is at most $\gamma$. This follows from the fact that the probability of returning to any particular face in the final up step (given that the previous step is at $\sigma \in X(k-1)$)) is at most $|\Pi_{\sigma,1}|_\infty \leq \gamma$. This last inequality follows from $(X,\Pi)$ being a two-sided $\gamma$-local-spectral expander satisfying $d>k$, as the link $(X_\sigma,\Pi_\sigma)$ (at level $k$) is then a $\gamma$-spectral expander, which are easily shown to satisfy this property.

Since the total laziness of $M$ is at most $.1$, by our assumption on $\gamma$ the above analysis implies that the coefficient $\alpha_0$ cannot be too large, say at most $1/5$ (since the remaining parts cannot contribute much to the laziness). We can use this fact to bound the spectral gap of $M$ by the same trick used in \Cref{prop:eig-decrease}: transferring $M$ over to the complete complex. Since the approximate eigenvalues of $M$ are independent of the underlying complex and $(UD)^i$ has a vanishing lazy component on $J(n,d)$ as $n$ goes to infinity, it must be the case that in lazy component of $M$ on the Johnson complex is also small. Since $M$ can be written as a convex combination of partial-swap walks, all of which have spectral gap at least $1/k$ (save for the identity which carries at most a $1/5$ of the weight), we get that $1-\lambda_2(M) \geq \Omega(1/k)$ as desired. The bound on $\lambda_{k/2}$ follows similarly noting that all (non-identity) partial-swap walks satisfy $\lambda_{k/2} \leq 1/2$.
\end{proof}
\subsection{Remaining Proofs from Section~\ref{sec:restricted-con-round}}\label{sec:app-ug-proofs}
In this section we will prove the claims that were stated without proof in Section~\ref{sec:restricted-con-round}. We will use the exact definition of the potential functions from~\Cref{app:potential} and also the propertites of the polynomials $p_{\beta,\nu}(Y)$ that SoS certifiably $\nu$-approximate the indicator function $\ind[Y \geq \beta]$. See Section 7 of BBKSS for a detailed overview about these polynomials and their properties.

\begin{proposition}[Restated \Cref{prop:high-global}]
Let $M$ be a $k$-dimensional complete HD-walk on a $d$-dimensional, two-sided $\gamma$-local-spectral expander satisfying $\gamma w(M)2^{O(h(M)+k)} \leq 1/2^k < \eta < 0.02$ and $d>k$, and $I$ be an affine unique games instance over $M$ with value at least $1-\eta$. Let $r(\eta)=R_{1-16\eta}(M)-1$, and assume $r \leq k/2$. Let $\beta = 19\eta$ and $\nu = \frac{\eta}{56{k \choose r}}$. Then given a degree-$\tO(1/\nu)$ pseudodistribution $\mu$ satisfying the axioms $\mathcal{A}_I$, there exists an $r$-link $X_\tau$ such that the global potential restricted to $X_\tau$ is large: 
\[
\Phi_{\beta,\nu}^{I}(\mu|_\tau) \geq \frac{1}{4{k \choose r}}.
\]
\end{proposition}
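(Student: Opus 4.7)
My plan is to follow the template of \cite[Lemma 6.9]{bafna2020playing}, replacing Johnson-specific inputs with the SoS structure theorem for HD-walks (\Cref{thm:structure-SoS}) and the expansion-vs.-spectrum equivalence (\Cref{thm:local-vs-global}). Concretely, for each shift $s \in \Sigma$, define the polynomial
\[
f_s(X,X',u) \;=\; \mathbbm{1}[X_u - X'_u = s] \cdot p_{\beta,\nu}(\val_u(X)),
\]
so that by definition $\Phi^I_{\beta,\nu}(\mu) = \sum_{s} \pE_\mu \E_u[f_s]^2$ and the restricted potential on $X_\tau$ is $\sum_s \pE_\mu \E_{u\sim \pi_k|_\tau}[f_s]^2$. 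Averaging over $\tau \in X(r)$ using the definition of the down operator gives the clean identity
\[
\E_{\tau \sim \pi_r}\bigl[\Phi^I_{\beta,\nu}(\mu|_\tau)\bigr] \;=\; \sum_s \pE_\mu\bigl\langle D^r_k f_s,\, D^r_k f_s\bigr\rangle,
\]
so it suffices to lower-bound the right-hand side by $\tfrac{1}{2\binom{k}{r}}$ and then invoke averaging to produce the desired link.

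The first step is to show, following BBKSS' analysis of the polynomial $p_{\beta,\nu}$, that the degree-$\tilde O(1/\nu)$ pseudoexpectation of $\sum_s \E_u[f_s]$ is at least $1 - O(\eta)$. This uses only $\val_\mu(I) \geq 1-\eta$, shift-symmetry of $\mu$, and the SoS-certified fact that $p_{\beta,\nu}(Y) \geq \mathbbm{1}[Y \geq \beta] - \nu$; the argument is identical to BBKSS and does not depend on the structure of the constraint graph. The second step is to bound $\sum_s \pE_\mu \langle f_s, (I-M) f_s\rangle$ by $O(\eta)$, because $(I-M)f_s$ is a sum of squared edge-discrepancies $(f_s(u)-f_s(v))^2$, and on satisfied edges we have $X_u - X'_u = X_v - X'_v$ so $\mathbbm{1}[X_u - X'_u = s] = \mathbbm{1}[X_v - X'_v = s]$; the remaining $\eta$-fraction of unsatisfied edges contributes at most $O(\eta)$ in SoS. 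The Booleanity defect $B(f_s)$ is similarly controlled using $p_{\beta,\nu} \in [0,1 + \nu]$ and summing to $O(\nu)$.

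The third step is the heart of the argument: apply the SoS structure theorem \Cref{thm:structure-SoS} to each $f_s$ at level $\ell = r$. Since $r = R_{1-16\eta}(M) - 1$, by definition of ST-rank and the monotonicity of eigenstrips (\Cref{prop:eig-decrease}), the approximate eigenvalue $\lambda_{r+1}(M) \leq 1 - 16\eta$, so the structure theorem gives the degree-$2$ SoS inequality
\[
\binom{k}{r}\bigl\langle D^r_k f_s, D^r_k f_s\bigr\rangle \;\geq\; (1 - c_1\gamma)\bigl(\E[f_s] + B(f_s)\bigr) - \frac{\langle f_s, (I-M)f_s\rangle}{16\eta}.
\]
Summing over $s$, taking $\pE_\mu$, and inserting the bounds from the first two steps yields
\[
\binom{k}{r}\sum_s \pE_\mu\bigl\langle D^r_k f_s, D^r_k f_s\bigr\rangle \;\geq\; (1 - O(\eta)) - \frac{O(\eta)}{16\eta} \;\geq\; \tfrac{1}{2},
\]
once the constants are tracked with $\beta = 19\eta$ and $\nu = \eta/(56\binom{k}{r})$ as in the statement, and using our assumption $\gamma w(M) 2^{O(h(M)+k)} \leq 1/2^k < \eta$ to absorb the $c_1\gamma$ slack. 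This is exactly the calibration from BBKSS, and it produces the bound $\E_\tau[\Phi^I_{\beta,\nu}(\mu|_\tau)] \geq \tfrac{1}{2\binom{k}{r}}$; averaging picks out a link with restricted potential at least $\tfrac{1}{4\binom{k}{r}}$.

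The main obstacle is the bookkeeping in step three: the structure theorem is stated for a single function, but the potential is an affine-symmetric sum over $|\Sigma|$ shifts, and the SoS proof must survive the combination with $p_{\beta,\nu}$, which is a high-degree polynomial. The resolution is that \Cref{thm:structure-SoS} is a degree-$2$ SoS inequality in $f$, so multiplying both sides by products of $p_{\beta,\nu}(\val_u(X))$ preserves the inequality at total degree $\tilde O(1/\nu)$, which is exactly the degree we assumed for $\mu$. All other steps are standard SoS manipulations already present in BBKSS, so once this degree accounting and the $(1-16\eta)$-ST-rank calibration are in place, the proof reduces to tracking constants.
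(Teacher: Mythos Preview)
Your proposal is correct and follows essentially the same route as the paper: define the shift-partition functions $f_s$, invoke the three BBKSS bounds on their mass, non-expansion, and Booleanity defect, plug these into the rearranged SoS structure theorem (\Cref{thm:structure-SoS}) at level $\ell=r$ using $1-\lambda_{r+1}\ge 16\eta$, and then average over $\tau\in X(r)$ via the identity $\mathbb{E}_{X_\tau}[f_s]=D^k_r f_s(\tau)$. The only quibble is that the Booleanity defect $\sum_s\pE_\mu[B(f_s)]$ is $O(\eta)$ (specifically $\tfrac{\eta}{1-\beta-\nu}+\nu$) rather than $O(\nu)$, but this does not affect the final inequality since $B(f_s)$ enters with coefficient $(1-c_1\gamma)$ and the slack in the constants absorbs it.
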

\begin{proof}[Proof of \Cref{prop:high-global}]
Before diving into the details of the proof we give a proof overview and some high level intuition. Given two UG assignments $X,X'$, BBKSS define a partition of the graph $G$ into $|\Sigma|$ parts and call this the approximate shift-partition of $G$ with respect to $X,X'$. The partition is given by the functions $f_s^{X,X'}$ for $s \in \Sigma$, where $f_s^{X,X'}(u)$ roughly equals the indicator function of whether $u$ belongs to part $s$. Concretely, we have that $f_s^{X,X'}(u) = \ind[X_u - X'_u = s] \cdot p_{\beta,\nu}(\val_u(X))$, that is, $u$ belongs to part $s$ if $X_u - X'_u = s$ and the value of $u$ measured with respect to assignment $X$ is large. Note that any vertex can only belong to one part $s$ but some vertices may not belong to any part. BBKSS prove that for any instance of affine unique games and any two UG assignments $X,X'$ that have large value, the functions $f_s^{X,X'}$ satisfy three properties: a) they include a large number of vertices in total, b) they form a non-expanding partition of $G$, that is, the fraction of edges that go across parts is small (roughly at most edges violated by $X, X'$), and finally c) the functions $f_s$ are approximately Boolean-valued on average. These properties are proved in Section 4 of their paper. They then apply the structure theorem for non-expanding sets of $G$ to conclude that there exists a subgraph $H$ where the size of the partition restricted to $H$ ($\sum_s \e_{u \sim H}[f_s^{X,X'}(u)]^2$) must be large. The crucial observation then is that the pseudoexpectation of the restricted size on $H$ is by definition the global potential on $H$. Therefore by taking pseudoexpectation over $X, X' \sim \mu$ where $\mu$ is a pseudodistribution with high value, they conclude that there exists a subgraph with large potential. We will use the same proof strategy to conclude that the global potential on a link is large, using the SoS structure theorem (\Cref{thm:structure-SoS}) along the way.

Let us elaborate on the properties (proved in BBKSS Section 4) of the approximate shift-partition given by the functions $f_s^{X,X'}$. We will drop the superscript $X,X'$ for convenience and throughout the proof the pseudoexpectation is taken with respect to $X,X' \sim \mu$. Since $\mu$ is a degree $\tO(1/\nu)$ pseudodistribution with value at least $1-\eta$ we have that:
\begin{enumerate}
    \item The approximate shift-partition includes most of the vertices in pseudoexpectation over $\mu$:
    \[
    \pE_\mu\left[\sum\limits_{s \in \Sigma}\e_u[f_s]\right] \geq 1 - \frac{\eta}{1-\beta-\nu}-\nu
    \]
    \item The approximate shift-partition is non-expanding\footnote{If the functions $f_s(u)$ were exactly Boolean-valued and further exactly partitioned the graph, then the term $\sum_s \uinner{f_s}{(I - A_G)f_s}$ is equal to the fraction of edges in $G$ that have endpoints in different parts. Therefore this expression measures the non-expansion of a partition.} in pseudoexpectation over $\mu$:
    \[
    \pE_\mu\left [\sum\limits_{s \in \Sigma}\langle f_s, (I-A_G)f_s \rangle\right ] \leq 2\eta + 2\frac{\eta}{1-\beta-\nu} + 2 \nu
    \]
    \item The functions $f_s$ are approximately Boolean-valued in pseudoexpectation over $\mu$:
    \[
    \pE_\mu[\sum_s B(f_s)] = \pE_\mu\left[\sum\limits_{s \in \Sigma} \e_u[f_s-f_s^2]\right] \leq \frac{\eta}{1-\beta-\nu} + \nu
    \]
\end{enumerate}

Since the functions $f_s$ are non-expanding (property (2) above), this allows us to use our $\ell_2$ characterization of non-expansion, to get a lower bound on the size of $f_s$ restricted to $r$-links. We will now use the key observation that $\underset{ X_\tau}{\mathbb{E}}[f_s]^2$ is exactly $D^k_if_s(\tau)^2$. In particular, applying \Cref{thm:structure-SoS} to the function $f_s$ with $\ell$ set to $r(\eta) = R_{1-16\eta}(M)$, gives:

\[\langle f_s, (I-M)f_s \rangle \geq (1 - \lambda_{r+1}(M))\left((1-c_1\gamma)\e[f_s] - {k \choose r} \uinner{D_r^k f_s}{D_r^k f_s} + (1-c_1\gamma)B(f_s) \right),
\]
where $B(f) = \e_{u \sim X_k}[f(u) - f(u)^2]$. Now using the key observation that $D_r^kf_s$ is exactly $\e_{X_\tau}[f_s]$ and noting that $1-\lambda_{r+1} \geq 16\eta$ by definition, we can simplify the above to:

\[\frac{1}{16\eta}\langle f_s, (I-M)f_s \rangle \geq \left((1-c_1\gamma)\e[f_s] - {k \choose r} \e_{r \sim X(r)}[\e_{X_\tau}[f_s]^2] + (1-c_1\gamma)B(f_s) \right).
\]

Taking a sum over $s$ and then a pseudoexpectation over $\mu$ yields:
\begin{align*}
\frac{1}{16\eta}\pE_\mu[\sum\limits_{s \in \Sigma} \langle f_s, (I-M)f_s \rangle] &\geq \pE_\mu\left[\sum\limits_{s \in \Sigma} \left((1-c_1\gamma)\mathbb{E}[f_s] - {k \choose r} \underset{\tau \in X(r)}{\mathbb{E}}\left[\underset{X_\tau}{\mathbb{E}}[f_s]^2\right] + (1-c_1\gamma)B(f_s) \right)\right]\\
    &=(1-c_1\gamma)\pE_\mu[\sum\limits_{s \in \Sigma} \mathbb{E}[f_s]]- {k \choose r}\underset{\tau \in X(r)}{\mathbb{E}}\left[\pE_\mu\left[\sum\limits_{s \in \Sigma }\underset{X_\tau}{\mathbb{E}}[f_s]^2\right]\right] + (1-c_1\gamma)\pE_\mu[\sum\limits_{s \in \Sigma} B(f_s)],
\end{align*}
where $c_1 \leq 2^{O(k)}$.


We will now use the observation that the global potential restricted to $X_\tau$ exactly corresponds to
the second term above! Recall that the global potential (Definition~\ref{def:sp-global}) over an $r$-link $X_\tau$ is given by:
\[
\Phi^I_{\beta,\nu}(\mu|_\tau) = \pE_{X,X' \sim \mu}[\sum_{s \in \Sigma} \underset{u \sim X_\tau}{\mathbb{E}}[f_s^{X,X'}(u)]^2].
\]
For convenience of notation we will drop the superscript $I$ in the potential notation henceforth.
Re-arranging gives a lower-bound on the global potential averaged across $r$-links:
\begin{equation}\label{eq:deg-2-pot}
\underset{\tau \in X(r)}{\mathbb{E}}\left[\Phi_{\beta,\nu}(\mu|_\tau)\right]
\geq  \frac{1}{{k \choose r}}\left((1-c_1\gamma)\pE[\sum\limits_{s} \mathbb{E}[f_s]] + (1-c_1\gamma)\pE[\sum\limits_{s}B(f_s)] - \frac{1}{16\eta}\pE[\sum\limits_{s} \langle f_s, (I-M)f_s \rangle]\right).
\end{equation}
Plugging in the properties of the functions $f_s$ discussed at the start of the proof, this gives the following bound on the global potential:
\begin{align*}
   \underset{\tau \in X(r)}{\mathbb{E}}\left[\Phi_{\beta,\nu}(\mu|_\tau)\right]
    \geq  \frac{1}{{k \choose r}}\left (1 - c_1\gamma - 2\frac{\eta}{1-\beta-\nu}-2\nu - \frac{1}{16\eta}\left (2\eta + 2\frac{\eta}{1-\beta-\nu} + 2\nu \right ) \right ).
\end{align*}
Setting $\beta = 19\eta$ and $\nu = \frac{\eta}{56{k \choose r}}$ and recalling our assumptions on $\gamma$ and $\eta$, this may be simplified by direct computation to:
\begin{align*}
\underset{\tau \in X(r)}{\mathbb{E}}\left[\Phi_{\beta,\nu}(\mu|_\tau)\right] \geq  \frac{1}{4{k \choose r}}.
\end{align*}

As a result, we see by averaging that there must exist some $r$-link with high potential:
\begin{align}\label{eq:glob-rest-val}
\exists \tau \in X(r): \Phi_{\beta,\nu}(\mu|_\tau) \geq \frac{1}{4{k \choose r}},
\end{align}
as desired.
\end{proof}

We will now prove \Cref{lem:ug-local-to-global}.

\begin{lemma}[\Cref{lem:ug-local-to-global} restated]\label{lem:ug-local-to-global-restated}
Assume the conditions of~\Cref{prop:high-global} hold. Let $X_\tau$ be an $r$-link for $r =R_{1-16\eta}(M)-1$ with high global potential: $\Phi_{\beta,\nu}^{I}(\mu|_\tau) \geq \frac{1}{4{k \choose r}}$, for $\beta = 19\eta$ and $\nu = \frac{\eta}{56{k \choose r}}$. Then the potential induced on $X_\tau$ is also high:
    \[\Phi_{\eta,\nu}^{I|_\tau}(\mu) \geq \frac{1}{8{k \choose r}}.\]
\end{lemma}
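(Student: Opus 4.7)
The high-level plan is to reduce to the Johnson-case argument of BBKSS by showing that most vertices of an $r$-link behave essentially as they would in the Johnson setting, where the link is edge-regular. Since the two potentials $\Phi_{\beta,\nu}^{I}(\mu|_\tau)$ and $\Phi_{\eta,\nu}^{I|_\tau}(\mu)$ differ only in which value function ($\val_u(X)$ vs.\ $\val_u^\tau(X)$) is plugged into the threshold polynomial $p_{\cdot,\nu}$, the key is to relate these two local-value quantities on a typical vertex of $X_\tau$.

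The argument proceeds in four steps. \emph{Step 1 (bounding the edge expansion of $X_\tau$).} By the choice $r = R_{1-16\eta}(M)-1$, the $r$th stripped eigenvalue satisfies $\lambda_r(M) \geq 1-16\eta$, so by \Cref{thm:local-vs-global} we get $\phi(X_\tau) \leq 16\eta + w(M)h(M)^2 2^{O(k)}\gamma \leq 17\eta$ under our hypotheses on $\gamma$. \emph{Step 2 (most vertices of $X_\tau$ expand poorly locally).} Applying Markov's inequality to \Cref{lem:vert-edge-exp} and using $\eta \geq 1/2^k$,
\[
\Pr_{u \sim X_\tau}\bigl[\phi_{X_\tau}(u) > \phi(X_\tau) + 2\eta\bigr] \leq \frac{2^{-11k}}{2\eta} \leq 2^{-10k}.
\]
Call $u$ \emph{good} if $\phi_{X_\tau}(u) \leq \phi(X_\tau) + 2\eta \leq 19\eta = \beta$, and let $G \subseteq X_\tau$ be the set of good vertices, so $|X_\tau \setminus G|/|X_\tau| \leq 2^{-10k}$.

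\emph{Step 3 (pointwise SoS comparison on $G$).} For any $u$, the value polynomials decompose as $\val_u(X) = \val_u^{\mathrm{in}}(X) + \val_u^{\mathrm{out}}(X)$, where the two summands count the weight of satisfied edges at $u$ staying inside / leaving $X_\tau$. Since $\val_u^\tau(X) = \val_u^{\mathrm{in}}(X)/(1-\phi_{X_\tau}(u))$ and $\val_u^{\mathrm{out}}(X) \leq \phi_{X_\tau}(u)$, we get the degree-$1$ SoS inequality
\[
\val_u^\tau(X) \geq \val_u(X) - \phi_{X_\tau}(u).
\]
For $u \in G$, this implies $\val_u^\tau(X) \geq \val_u(X) - (\beta-\eta)$. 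Combining this with the SoS approximation/monotonicity properties of $p_{\beta,\nu}$ established in BBKSS (specifically, that $p_{\eta,\nu}(y) \geq p_{\beta,\nu}(x) - O(\nu)$ has a degree-$\tO(1/\nu)$ SoS proof whenever $y \geq x - (\beta-\eta)$), we obtain for each good $u$ and each shift $s \in \Sigma$:
\[
\ind[X_u - X'_u = s]\cdot p_{\eta,\nu}(\val_u^\tau(X)) \;\geq\; \ind[X_u - X'_u = s]\cdot p_{\beta,\nu}(\val_u(X)) - O(\nu).
\]

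\emph{Step 4 (assembly).} Square both sides (using SoS Cauchy--Schwarz), sum over $s \in \Sigma$, average over $u \sim G$, and take pseudoexpectation over $\mu$. Since $G$ covers all but a $2^{-10k}$-fraction of $X_\tau$ and each $p_{\cdot,\nu}$-value is bounded in $[0,1\pm\nu]$, switching from $\e_{u \sim G}$ to $\e_{u \sim X_\tau}$ (and vice versa) costs at most an additive $O(2^{-10k})$, which is negligible compared to the $\frac{1}{4\binom{k}{r}}$ lower bound and the $\nu = \eta/(56\binom{k}{r})$ slack. Plugging in the hypothesis $\Phi_{\beta,\nu}^{I}(\mu|_\tau) \geq \frac{1}{4\binom{k}{r}}$ yields $\Phi_{\eta,\nu}^{I|_\tau}(\mu) \geq \frac{1}{8\binom{k}{r}}$, as desired.

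The main obstacle is Step 3: namely, writing the comparison between $p_{\eta,\nu}(\val_u^\tau(X))$ and $p_{\beta,\nu}(\val_u(X))$ as a legitimate low-degree SoS inequality, and carefully tracking how the approximate (rather than exact) vertex-wise expansion bound from \Cref{lem:vert-edge-exp} propagates through this SoS comparison. In BBKSS this step was clean because every vertex in a Johnson link has exactly the same link-expansion; here we must pay the $2^{-10k}$ loss from the bad set $X_\tau \setminus G$ and argue that this loss is absorbed by the parameter choices $\beta = 19\eta$ and $\nu = \eta/(56\binom{k}{r})$.
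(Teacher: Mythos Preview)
Your approach is essentially identical to the paper's: split $X_\tau$ into good and bad vertices via Markov applied to \Cref{lem:vert-edge-exp}, use the small per-vertex expansion on good vertices together with the SoS threshold-polynomial comparison to relate $p_{\beta,\nu}(\val_u(X))$ and $p_{\beta-18\eta,\nu}(\val_u^\tau(X))$, and absorb the bad-vertex contribution into the slack. One small arithmetic inconsistency to fix: with the $2\eta$ Markov threshold you chose, good vertices only satisfy $\phi_{X_\tau}(u) \leq 19\eta = \beta$, so the shift in Step~3 is $\beta$ rather than $\beta-\eta$, which would drive the new threshold to $0$; the paper instead takes the Markov threshold to be $1/2^k \leq \eta$, so that good vertices have $\phi_{X_\tau}(u) \leq 17\eta + 1/2^k \leq 18\eta$ and the shifted threshold lands exactly at $\beta - 18\eta = \eta$.
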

As discussed in \Cref{sec:unique-games}, proving \Cref{lem:ug-local-to-global} requires that a strong expansion property holds for links: expansion must be similar vertex-by-vertex. We now formalize this property and prove it holds for HD-walks.
\begin{definition}
The edge-expansion of a vertex $v$ in a set $S \subseteq X(k)$ with respect to a random walk operator $M$ is given by,
\[\phi_{S}(M,v) = 1 - \id{v}^T M \id{S} .\]
The edge-expansion of $S$ with respect to $M$ is the average edge-expansion taken over vertices in $S$:
\[\phi(M,S) = \E{v \sim S}{\phi_{S}(M,v)},\]
where $v \sim S$ is the distribution $\pi_k$ conditioned on $S$. As before, we drop $M$ from the notation when clear from context.
\end{definition}


We prove that the expansion of vertices in a link cannot vary much from the link's overall expansion (which is small by \Cref{thm:local-vs-global}).
\begin{lemma}[Restatement of \Cref{lem:vert-edge-exp}]
Let $M$ be a $k$-dimensional HD-walk on $d$-dimensional two-sided $\gamma$-local-spectral expander satisfying $\gamma \leq w(M)^{-1}2^{-\Omega(h(M)+k)}$ and $d>k$. Then for every $i$-link $X_\tau$, the deviation of the random variable $\phi_{X_\tau}(v)$ ($v \sim X_\tau$) is small:
\[\e_{v \sim X_\tau}[|\phi_{X_\tau}(v) - \phi(X_\tau)]|] \leq \frac{1}{2^{11k}}.\]
\end{lemma}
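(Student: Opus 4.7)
The plan is to reduce the $\ell_1$ deviation to an $\ell_2$ variance bound via Cauchy--Schwarz, and then control the variance by showing that $M\mathbbm{1}_{X_\tau}$ is pointwise close to the constant $\lambda_i(M)$ on $X_\tau$. More precisely, let $f = M\mathbbm{1}_{X_\tau}$ and $\alpha = \Pi_k(X_\tau)$. Since $\phi_{X_\tau}(v) = 1 - f(v)$ for $v \in X_\tau$ and $\phi(X_\tau) = 1 - \mathbb{E}_{v \sim X_\tau}[f(v)]$, we have
\[
\mathbb{E}_{v \sim X_\tau}\bigl[|\phi_{X_\tau}(v) - \phi(X_\tau)|\bigr] \;\le\; \sqrt{\mathrm{Var}_{v \sim X_\tau}(f)} \;\le\; \sqrt{\mathbb{E}_{v \sim X_\tau}[(f(v) - \lambda_i(M))^2]},
\]
so it suffices to prove $\mathbb{E}_{v \sim X_\tau}[(f(v)-\lambda_i(M))^2] \le 2^{-O(k)}$.

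The key observation is that the function $f - \lambda_i(M)\mathbbm{1}_{X_\tau}$ has small $L^2$-norm on the whole complex, and the restriction of its squared norm to $X_\tau$ is exactly the quantity we need, weighted by $\Pi_k$. To bound $\|f - \lambda_i(M)\mathbbm{1}_{X_\tau}\|^2$, decompose $\mathbbm{1}_{X_\tau} = \sum_{j=0}^k \mathbbm{1}_{X_\tau,j}$ via the HD-Level-Set Decomposition and write
\[
f - \lambda_i(M)\mathbbm{1}_{X_\tau} \;=\; \sum_{j=0}^k \bigl(\lambda_j(M) - \lambda_i(M)\bigr)\mathbbm{1}_{X_\tau,j} + \sum_{j=0}^k \Gamma_j,
\]
where $\|\Gamma_j\| \le O\bigl(w(M)h(M)(h(M)+k)2^k\gamma\bigr)\|\mathbbm{1}_{X_\tau,j}\|$ by \Cref{prop:hdx-eig-vals}. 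The $j=i$ term in the first sum vanishes, and \Cref{lemma:link-projection} gives $\|\mathbbm{1}_{X_\tau,j}\|^2 \le 2^{O(k)}\gamma\,\alpha$ for every $j \neq i$, while $\|\mathbbm{1}_{X_\tau,i}\|^2 \le (1+2^{O(k)}\gamma)\alpha$. Combining these with the triangle inequality yields $\|f - \lambda_i(M)\mathbbm{1}_{X_\tau}\|^2 \le c(k)\gamma\,\alpha$ for some $c(k) \le w(M)^2 h(M)^4 2^{O(k)}$.

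The final step is straightforward: since $\mathbbm{1}_{X_\tau}(v) = 1$ for $v \in X_\tau$, we have
\[
\mathbb{E}_{v \sim X_\tau}\bigl[(f(v) - \lambda_i(M))^2\bigr] \;=\; \frac{1}{\alpha}\!\!\sum_{v \in X_\tau}\!\!\Pi_k(v)\bigl(f(v) - \lambda_i(M)\mathbbm{1}_{X_\tau}(v)\bigr)^2 \;\le\; \frac{1}{\alpha}\bigl\|f - \lambda_i(M)\mathbbm{1}_{X_\tau}\bigr\|^2 \;\le\; c(k)\gamma.
\]
Chaining this with the Cauchy--Schwarz bound above gives $\mathbb{E}_{v \sim X_\tau}[|\phi_{X_\tau}(v) - \phi(X_\tau)|] \le \sqrt{c(k)\gamma}$, and the hypothesis $\gamma \le w(M)^{-1}2^{-\Omega(h(M)+k)}$, with a sufficiently large constant in the $\Omega$, makes this at most $2^{-11k}$.

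The main technical point is ensuring that the error bounds from \Cref{prop:hdx-eig-vals} and \Cref{lemma:link-projection} compound correctly through the triangle inequality so that the resulting $c(k)$ is absorbed by the $2^{-\Omega(h(M)+k)}$ factor in the assumption on $\gamma$; all other steps are short and use already-established machinery.
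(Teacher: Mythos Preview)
Your proposal is correct and takes a genuinely different route from the paper. The paper's argument is more combinatorial: it first decomposes $M$ as $\sum_{t=0}^{h(M)} c_t (U_{k-1}D_k)^t + \Gamma$ with $\|\Gamma\| \le w(M)2^{O(h(M))}\gamma$, proves a \emph{pointwise} near-constancy statement for each lower walk $(U_{k-1}D_k)^t$ by directly tracking the probability that the walk leaves and re-enters $X_\tau$ (using that down steps are uniform and that adding back a specific vertex in an up step has probability at most $\gamma$), and finally handles the $\Gamma$ contribution by a Cauchy--Schwarz bound on $\langle s, \Gamma\mathbbm{1}_{X_\tau}\rangle$ with $s$ a sign vector supported on $X_\tau$.

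Your argument is purely spectral and avoids the auxiliary combinatorial lemma altogether: you go straight to bounding $\|M\mathbbm{1}_{X_\tau} - \lambda_i(M)\mathbbm{1}_{X_\tau}\|^2$ via the HD-Level-Set Decomposition, using \Cref{prop:hdx-eig-vals} for the approximate eigenvector property and \Cref{lemma:link-projection} (together with approximate orthogonality) to get $\|\mathbbm{1}_{X_\tau,j}\|^2 \le 2^{O(k)}\gamma\alpha$ for $j\neq i$. This is cleaner in that it reuses already-established machinery and does not require analyzing any particular walk explicitly; it also makes transparent that the result is really a consequence of $\mathbbm{1}_{X_\tau}$ being an approximate eigenvector. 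The paper's approach, on the other hand, gives a stronger \emph{pointwise} statement for the dominant part of the walk (the $(U_{k-1}D_k)^t$ terms), which is conceptually informative even if not needed for the lemma as stated. One minor remark: your bound $c(k) \le w(M)^2 h(M)^4 2^{O(k)}$ is a slight overstatement of the coefficient on the $\gamma$ term (the dominant contribution is actually $2^{O(k)}\gamma$, with the $w(M)^2 h(M)^4$ factor appearing only on the $\gamma^2$ term), but this does not affect the conclusion under the stated hypothesis on $\gamma$.
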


We prove this claim by reducing to simpler HD-walks we can analyze directly. With that in mind, let's first prove that a stronger point-wise result holds for any product of ``lower walks'' $(U_{k-1}D_k)^i$. 

\begin{proposition}\label{lem:uniform-exp}
Let $M$ be a $k$-dimensional HD-walk on $d$-dimensional two-sided $\gamma$-local-spectral expander satisfying $\gamma \leq w(M)^{-1}2^{-\Omega(h(M)+k)}$ and $d>k$.
For every $i$-link $X_\tau$, walk $(U_{k-1}D_k)^t$, and $k$-face $v \in X_\tau$, the expansion of $v$ in $X_\tau$ is almost exactly $\phi(X_\tau)$:
\[
|\phi_{X_\tau}((U_{k-1}D_k)^t,v) - \phi((U_{k-1}D_k)^t,X_\tau)| \leq t\gamma.
\]
\end{proposition}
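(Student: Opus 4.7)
The plan is to proceed by induction on $t$. The base case $t=0$ is immediate: $(U_{k-1}D_k)^0 = I$ gives $\phi_{X_\tau}(I, v) = 1 - \id{X_\tau}(v) = 0 = \phi(I, X_\tau)$ for every $v \in X_\tau$.

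For the inductive step, I would use the identity $\id{X_\tau} = \binom{k}{i} U^k_i \id{\tau}$ together with the approximate commutation relation of \Cref{lemma:body-DU-UD}, namely $D_k U^k_j = \tfrac{k-j}{k} U^{k-1}_j + \tfrac{j}{k} U^{k-1}_{j-1} D_j + \Gamma_j$ with $\|\Gamma_j\| \le O(\gamma)$, to push all $t$ copies of $D_k$ through the leading $U^k_i$. Iterating this relation yields a decomposition
\[
(U_{k-1}D_k)^t U^k_i \id{\tau} \;=\; \sum_{j=0}^{i} c_j^{(t)}\, U^k_j D^i_j \id{\tau} \;+\; E_t\id{\tau},
\]
where the $c_j^{(t)}$ are scalar coefficients independent of the complex and $E_t$ is an accumulated error operator built from $t$ nested $\Gamma_j$'s interlaced with averaging operators. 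The central structural point is that each ``main term'' $U^k_j D^i_j \id{\tau}$ is \emph{exactly} constant on $X_\tau$: since $D^i_j \id{\tau}$ is supported only on $j$-faces $\kappa \subseteq \tau$, for any $v \supseteq \tau$ we have $U^k_j D^i_j \id{\tau}(v) = \binom{k}{j}^{-1} \sum_{\kappa \subseteq \tau,\, |\kappa|=j}(D^i_j \id{\tau})(\kappa)$, which is independent of $v$. Consequently, the pointwise deviation $|\phi_{X_\tau}((U_{k-1}D_k)^t, v) - \phi((U_{k-1}D_k)^t, X_\tau)|$ on $X_\tau$ is controlled entirely by the sup-norm deviation of $\binom{k}{i} E_t\id{\tau}$ across $X_\tau$.

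The hard part will be establishing $\|E_t\id{\tau}|_{X_\tau}\|_\infty \le O(t\gamma)$, since each $\Gamma_j$ is controlled only in operator norm and the $\ell_\infty/\ell_2$ ratio on $X(k)$ is generically large, so a bare spectral bound would be far too lossy. The strategy is to avoid $\ell_2$ estimates altogether and propagate sup-norm bounds through the recursion: each $\Gamma_j$ in the nested expansion is wrapped by averaging operators $U_\ell$, which are $\ell_\infty$-contractions, and using the explicit form of $\Gamma_j$ from the proof of \Cref{lemma:body-DU-UD} together with the uniform pointwise boundedness of $\id{\tau}$ and its repeated averages, each nested application of a $\Gamma_j$ contributes at most $\gamma$ (up to harmless constants) to the sup norm of the intermediate function. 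Summing these additive contributions across the $t$ layers of the recursion then yields the desired $t\gamma$ bound on $X_\tau$ and closes the induction.
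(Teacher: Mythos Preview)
Your approach has a genuine gap at the step you yourself flag as hard: obtaining $\ell_\infty$ control on the error term $E_t\id{\tau}$ restricted to $X_\tau$. The operators $\Gamma_j$ are only controlled in spectral (i.e.\ $\ell_2\to\ell_2$) norm by $\gamma$; their explicit form as $D_{j+1}U_j - \tfrac{1}{j+1}I - \tfrac{j}{j+1}U_{j-1}D_j$ gives at best $\|\Gamma_j g\|_\infty \le O(\|g\|_\infty)$, not $\gamma\|g\|_\infty$. Wrapping $\Gamma_j$ by averaging operators, which are $\ell_\infty$-contractions, does not help: the issue is the $\Gamma_j$ itself, not what surrounds it. Converting the $\ell_2$ bound $\|\Gamma_j g\|_2 \le \gamma\|g\|_2$ to a pointwise bound would cost a factor of $\min_\sigma \Pi(\sigma)^{-1/2}$, which can be arbitrarily large. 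So the claim that ``each nested application of a $\Gamma_j$ contributes at most $\gamma$ to the sup norm'' is not justified, and without it the induction does not close.

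The paper's proof takes a completely different, purely probabilistic route that sidesteps the $\ell_2/\ell_\infty$ issue entirely. It directly analyzes the trajectory of the walk $(U_{k-1}D_k)^t$ starting from $v\in X_\tau$: the down-step is \emph{uniform} over the $k$ elements of the current face, so the probability of never removing an element of $\tau$ in any of the $t$ down-steps is the same constant $c$ for every starting vertex $v\supseteq\tau$. This already gives $\overline\phi(v)\ge c$. For the upper bound, once some $w\in\tau$ is removed, returning to $X_\tau$ requires re-adding $w$ in a later up-step; but from any $\sigma\in X(k-1)$ the probability of adding $w$ is $\Pi_{\sigma,1}(w)\le\gamma$ (no single vertex in a $\gamma$-spectral expander can carry more than $\gamma$ mass), so by a union bound over the $t$ up-steps this happens with probability at most $t\gamma$. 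Hence $\overline\phi(v)\in[c,c+t\gamma]$ pointwise, and the result follows. The key ingredient you are missing is this direct use of the local-spectral expansion to bound single-vertex weights in links, which gives the pointwise $\gamma$ per step without any operator-norm machinery.
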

\begin{proof}
It is more convenient to analyze the non-expansion at $v$, $\overline{\phi}_{X_\tau}((U_{k-1}D_k)^t,v) =  1-\phi_{X_\tau}((U_{k-1}D_k)^t,v)$, which for brevity we will denote by $\overline{\phi}(v)$. Below we will prove that for every $v \in X_\tau$, $\overline{\phi}(v) \in [c, c+t\gamma]$, for some fixed constant $c$. This immediately implies the result. Since by definition the non-expansion of $X_\tau$ denoted $\overline{\phi}((U_{k-1}D_k)^t,X_\tau)$, is an average over $\overline{\phi}(v)$, we get that: 
\[|\phi(v) - \phi((U_{k-1}D_k)^t,X_\tau)| = |\overline{\phi}(v) - \overline{\phi}((U_{k-1}D_k)^t,X_\tau)| \leq t\gamma,\] 
as desired.

Therefore let us now prove that $\overline{\phi}(v) \in [c,c+t\gamma]$ for all $v \in X_\tau$. First, notice that the non-expansion at $v$ is lower bounded by the probability that the walk does not remove any element from $\tau$ in any down-step. Since the down step is uniformly random, this probability is the same across all $v \in X_\tau$. Denote this probability by $c$.

Let us now upper bound $\overline{\phi}(v)$. The probability that the walk returns to $X_\tau$ is equal to $c$ plus the probability that starting from $v$, the walk leaves the link $X_\tau$ at some intermediate (down) step but ends back in $X_\tau$ regardless. We will show that the latter probability is at most $t\gamma$. 

Consider the first down-step where the walk leaves $X_\tau$, removing an element $w \in \tau$. To end up in the link of $\tau$, the walk needs to add $w$ back in one of the future up steps. The probability of this occurring at an up step from any $\sigma \in X(k-1)$ is exactly $\Pi_{\sigma,1}(w)$ by definition (where $\Pi_{\sigma,1}(w)=0$ if $w \notin X_\sigma$). Since $(X,\Pi)$ is a two-sided $\gamma$-local-spectral expander, $(X_{\sigma},\Pi_{\sigma})$ is a standard $\gamma$-spectral expander. It is a standard result that such graphs cannot have any vertex of weight greater than $\gamma$, thus $\Pi_{\sigma,1}(w) \leq \gamma$. By a union bound the probability that $w$ is added back in any of the future up steps is therefore at most $t \gamma$. Hence the probability that the walk returns to $X_\tau$ given that it exited in an intermediate step is at most $t \gamma$, so we get that $\overline{\phi}(v) \in [c, c+t\gamma]$.
\end{proof}
It's worth noting that the vertex-by-vertex expansion property actually holds exactly for the canonical walks. However, we can only reduce to canonical walks when the height of $M$ is small, an issue we avoid by analyzing lower walks. We now prove \Cref{lem:vert-edge-exp} by reducing general $M$ to this case.

\begin{proof}[Proof of \Cref{lem:vert-edge-exp}]
By repeated application of \Cref{lemma:body-DU-UD}, any $k$-dimensional HD-walk $M$ on a two-sided $\gamma$-local-spectral expander with $k>0$ can be written as a linear combination of lower products $(U_{k-1}D_k)^i$ up to $O(\gamma)$ error in the following sense:
\[M = \sum_{i = 0}^{h(M)} c_i (U_{k-1}D_k)^i + \Gamma,\]
where $(U_{k-1}D_k)^0$ denotes the identity matrix and $\norm{\Gamma} \leq w(M)2^{O(h(M))}\gamma$. For simplicity of notation, we write $C \coloneqq \norm{\Gamma}$, which is in turn $\leq 1/2^{13k}$ for $\gamma$ chosen to be small enough.

We also know by Lemma~\ref{lem:uniform-exp} that for all $i$ there exists a constant $u_i$ such that for all vertices $v \in X_\tau$, $\phi_{X_\tau}((U_{k-1}D_k)^i,v) = u_i \pm h(M)\gamma$. Hence we get that,

\[1 - \phi_{X_\tau}(M,v) = \sum_{i = 0}^j c_i \id{v}^T (U_{k-1}D_k)^i \id{X_\tau} + \id{v}^T \Gamma \id{X_\tau} = \sum_{i} c_i u_i + \id{v}^T \Gamma \id{X_\tau} \pm w(M)h(M)\gamma.\]

Let $\sum_i c_i u_i = u$. We will show that in expectation over $v \in X_\tau$, $|1 - \phi_{X_\tau}(M,v) - u| \leq |\id{v}^T \Gamma \id{X_\tau}| + w(M)h(M)\gamma$ is small.
Define $err$ as the error vector $\Gamma\id{X_\tau}$. We will bound the 1-norm of $err$ when restricted to coordinates in $X_\tau$. Define the vector $s$ to be $\text{sign}(err(v))$ for $v \in X_\tau$ and $0$ otherwise. First note that,

\[\uinner{s}{err} = \sum_{v \in X_\tau} \Pr_{\pi_k}[v] \cdot |err(v)|.\]

Applying Cauchy-Schwarz on the LHS we get that,

\begin{align*}
\sum_{v \in X_\tau} \Pi_k(v) |err(v)| &\leq \norm{s} \cdot \norm{err}\\ 
&= \norm{\id{X_\tau}} \cdot \norm{\Gamma \id{X_\tau}} \\
&\leq \norm{\id{X_\tau}} \cdot C\norm{\id{X_\tau}} \\ 
&= C \uinner{\id{X_\tau}}{\id{X_\tau}}_{\pi_k}.
\end{align*}

So we get that, 

\[\frac{\sum_{v \in X_\tau} \Pi_k(v) |err(v)|}{\uinner{\id{X_\tau}}{\id{X_\tau}}_{\pi_k}} = \underset{v \sim X_\tau}{\mathbb{E}}[|err(v)|] \leq  C.\]

Using the fact that $\E{v \sim X_\tau}{\phi_{X_\tau}(M,v)} = \phi(M,X_\tau)$, we can conclude with a simple trick:

\[|1 - \phi(M,X_\tau) - u | = |\E{v \sim X_\tau}{1 - \phi_{X_\tau}(M,v) - u}| \leq \E{v \sim X_\tau}{|err(v)|} + w(M)h(M)\gamma  \leq 2C.\]

Since the average deviation of $(1-\phi_{M,\tau}(v))$ from $u$ is small, the average deviation from the mean $(1-\phi_M(X_\tau))$ should also be small by triangle inequality:

\[\E{v \sim X_\tau}{|\phi_{X_\tau}(M,v) - \phi(M,X_\tau)|} \leq \E{v \sim X_\tau}{|1 - \phi_{X_\tau}(M,v) - u|} + C \leq 3C \leq \frac{1}{2^{11k}}.\]

\end{proof}

We are now ready to prove \Cref{lem:ug-local-to-global}/Lemma~\ref{lem:ug-local-to-global-restated}.
\begin{proof}[Proof of \Cref{lem:ug-local-to-global}]
We will use \Cref{lem:vert-edge-exp} to relate the global potential on $X_\tau$ to the potential induced on $X_\tau$. As a reminder, we proved that for every $i$-link $X_\tau$, the deviation of the random variable $\phi_{X_\tau}(M,u)$ ($u \sim X_\tau$) is small:
\[\e_{u \sim X_\tau}[|\phi_{X_\tau}(M,u) - \phi(M,X_\tau)]|] \leq \frac{1}{2^{11k}}.\]

Let us partition $X_\tau$ into two sets, $G$ and $\overline{G},$ where $G$ contains all vertices $u$ for which $\phi_{M,X_\tau}(u)$ is close to $\phi_M(X_\tau)$. We will henceforth drop the subscripts $M, X_\tau$. Formally we define:

\[G = \left\{u \in X_\tau \mid |\phi(u) - \phi(X_\tau)| \leq \frac{1}{2^k}.\right\}\]

Now note that by Markov's inequality on \Cref{lem:vert-edge-exp} we have that $\Pr_{u \sim X_\tau}[u \notin G] := c \leq 1/2^{10k}$. For an assignment $X$, let $Z_{u,s}$ denote $\ind[X_u - X'_u = s]$. Expanding out the definition of the global potential on $X_\tau$ (\Cref{def:sp-global}), for any two assignments $X,X'$ we get:
\begin{align*}
\Phi_{\beta,\nu}(X,X')|_\tau &= \sum_s \e_{u \sim X_\tau}[Z_{u,s}p_{\beta,\nu}(\val_u(X))] \\
&= \sum_s ((1-c) \e_{u \sim G}[Z_{u,s}p_{\beta,\nu}(\val_u(X))] + c \e_{u \sim \overline{G}}[Z_{u,s}p_{\beta,\nu}(\val_u(X))])^2 \\
&\leq \sum_s (1-c)^2 \e_{u \sim G}[Z_{u,s}p_{\beta,\nu}(\val_u(X))]^2 + \sum_s c(1 - c) \e_{u \sim \overline{G}}[Z_{u,s}] \\ &+\sum_s c^2 \e_{u \sim \overline{G}}[Z_{u,s}]^2 \\
&\leq \sum_s (1-c)^2 \e_{u \sim G}[Z_{u,s}p_{\beta,\nu}(\val_u(X))]^2 + c(1 - c) + c^2,
\end{align*}
where $c \leq 1/2^{10k}$ and we used the fact that $p_{\beta,\nu}(y) \leq 1$, when $y \in [0,1]$, $Z_{u,s} \in [0,1]$ and $\sum_s Z_{u,s} = 1$. 

To relate the two potentials we will relate the quantities $\val_u(X)$ and $\val_u^\tau(X)$ for $u \in G$, where $\val_u(X)$ denotes the fraction of edges incident on $u$ that are satisfied by $X$, and further $\val_u^\tau(X)$ denotes the fraction of edges incident on $u$ that lie inside $X_\tau$ and are satisfied by $X$. We will use the fact that vertices in $G$ have small expansion. When $r = R_{1-16\eta}(M)$, by \Cref{thm:local-vs-global} the expansion of the $r$-link $X_\tau$ is at most $1 - \lambda_{r+1} + w(M)h(M)^22^{O(k)}\gamma$ which is less than $17\eta$ because of the way the parameters have been set up. Therefore for all vertices $u$ in $G$, $\phi(u) \leq 18 \eta$ since $\eta > 1/2^k$. This implies that for any assignment $X \in \Sigma^V$ and any $u \in G$:
\[
\ind[\val_u^\tau(X) \ge \beta - 18\eta] \ge \ind[\val_u(X) \ge \beta],
\]
Furthermore by the properties of the polynomials $p_{\beta,\nu}(Y)$, since $\nu < \eta$,
\[
p_{\beta - 18 \eta,\nu}(\val_u^\tau(X)) + \nu \ge p_{\beta,\nu}(\val_u(X)) - \nu.
\]

Finally expanding out the definition of the potential induced on $X_\tau$ (\Cref{def:sp-induced}), for any two assignments $X,X'$ we get:

\begin{align}
    \Phi^\tau_{\beta - 18\eta,\nu}(X,X')
    &= \sum_{s \in \Sigma} \e_{u \sim X_\tau}[Z_{u,s}\cdot p_{\beta-18\eta,\nu}(\val_u^\tau(X))]^2 \nonumber\\
    &= \sum_{s \in \Sigma} \e_{u \sim X_\tau}[Z_{u,s}\cdot \left(p_{\beta,\nu}(\val_u(X)) - 2\nu\right)]^2 \nonumber\\
    &\geq \left(\sum_{s \in \Sigma} \e_{u \sim X_\tau}[Z_{u,s}\cdot \left(p_{\beta,\nu}(\val_u(X))\right)]^2\right) - 4\nu \nonumber\\
    &\ge \sum_{s \in \Sigma} (1-c)^2 \e_{u \sim G}[Z_{u,s}p_{\beta,\nu}(\val_u(X))]^2 - 4\nu \nonumber\\
    &\ge \Phi_{\beta,\nu}(X,X')|_\tau - c(1-c) - c^2 - 4\nu, \label{eq:local-global}
\end{align}
where we again used the facts that $Z_{u,s} \in [0,1]$, $\sum_s Z_{u,s} = 1$ and $p_{\beta,\nu}(y) \in [0,1]$ when $y \in [0,1]$. We also have that $c \leq \frac{1}{2^{10k}} \leq \frac{\eta}{56{k \choose r}}$ since $\eta > 1/2^k$ and therefore $c \leq \nu$ implying that: $\Phi^\tau_{\beta - 18\eta,\nu}(X,X') \geq \Phi_{\beta,\nu}(X,X')|_\tau - 6\nu$.

Now note that all the inequalities above are sum-of-squares inequalities of degree at most $2\deg(p) = \tO(1/\nu)$. We can therefore relate the two potentials when measured with respect to $\mu$, which is a degree-$\tO(1/\nu)$ pseudodistribution, by applying the pseudoexpectation operator $\pE_\mu$ to \Cref{eq:local-global} above:
\[\Phi_{\beta - 18\eta,\nu}^\tau(\mu) = \pE_\mu[\Phi^\tau_{\beta - 18\eta,\nu}(X,X')]  \geq \pE_\mu[\Phi_{\beta,\nu}(X,X')|_\tau] - 6\nu = \Phi_{\beta,\nu}(\mu|_\tau) - 6\nu.\]

By the conditions of the lemma, $\Phi_{\beta,\nu}(\mu|_\tau) = 1/4{k \choose r}$, $\beta = 19\eta$ and $\nu = \frac{\eta}{56{k \choose r}}$ we get $\Phi_{\eta,\nu}^\tau(\mu) \geq \frac{1}{8{k \choose r}}$. This completes the proof of the lemma. 
\end{proof}

\end{document}